\newtheorem{thm}{Theorem}
\newtheorem{cor}[thm]{Corollary}
\newtheorem{lem}[thm]{Lemma}
\newtheorem{prop}[thm]{Proposition}
\newtheorem{defn}{Definition}
\newtheorem{rem}{Remark}
\newtheorem{exmp}{Example}
\begin{document}

%
% paper title
% Titles are generally capitalized except for words such as a, an, and, as,
% at, but, by, for, in, nor, of, on, or, the, to and up, which are usually
% not capitalized unless they are the first or last word of the title.
% Linebreaks \\ can be used within to get better formatting as desired.
% Do not put math or special symbols in the title.
\title{Arbitrarily Varying Wiretap Channel with State Sequence Known or Unknown at the Receiver}
%
%
% author names and IEEE memberships
% note positions of commas and nonbreaking spaces ( ~ ) LaTeX will not break
% a structure at a ~ so this keeps an author's name from being broken across
% two lines.
% use \thanks{} to gain access to the first footnote area
% a separate \thanks must be used for each paragraph as LaTeX2e's \thanks
% was not built to handle multiple paragraphs
%

\author{Dan~He
        and Yuan~Luo% <-this % stops a space
\thanks{Dan He is with the State Key Laboratory of Integrated Networks Services, Xidian University, Xi'an, China. Email: dhe@stu.xidian.edu.cn.}% <-this % stops a space
\thanks{Yuan Luo is with the Department of Computer Science and Engineering, Shanghai Jiao Tong University, Shanghai, China. Email: yuanluo@sjtu.edu.cn.}
% <-this % stops a space
%\thanks{Manuscript received April 19, 2005; revised August 26, 2015.}
}

\IEEEpeerreviewmaketitle

% make the title area
\maketitle

% As a general rule, do not put math, special symbols or citations
% in the abstract or keywords.
\begin{abstract}
The secrecy capacity problems over the general arbitrarily varying wiretap channel, with respect to the maximal decoding error probability and strong secrecy criterion, are considered, where the channel state sequence may be known or unknown at the receiver. In the mean time, it is always assumed that the channel state sequence is known at the eavesdropper and unknown at the transmitter. Capacity results of both stochastic code (with random encoder and deterministic decoder) and random code (with random encoder and decoder) are discussed. This model includes the previous models of classic AVWC as special cases. Single-letter lower bounds on the secrecy capacities are given, which are proved to be the secrecy capacities when the main channel is less noisy than the wiretap channel. The coding scheme is based on Csisz\'{a}r's almost independent coloring scheme and Ahlswede's elimination technique. Moreover, a new kind of typical sequence with respect to states is defined for this coding scheme. It is concluded that the secrecy capacity of stochastic code is identical to that of random code when the receiver knows the state sequence. Meanwhile, random code may achieve larger secrecy capacity when the state sequence is unknown by the receiver.
\end{abstract}

% Note that keywords are not normally used for peerreview papers.
\begin{IEEEkeywords}
Arbitrarily varying wiretap channel, strong secrecy criterion, secrecy capacity.
\end{IEEEkeywords}

\section{Introduction}\label{sec:introduction}
Arbitrarily varying channel (AVC) is the most general and difficult channel model in two-terminal discrete memoryless systems (DMSs). Unlike other well-known DMSs, the capacity of an AVC may be affected by the criteria of decoding error probabilities (average or maximal) and the classes of coding schemes (deterministic, stochastic or random, see Subsection \ref{sec:avc}). To be precise, when studying AVC, the following six kinds of capacities are considered \cite{Ahlswede-1978}:
\begin{enumerate}
  \item capacity of deterministic code with respect to the average decoding error probability,
  \item capacity of deterministic code with respect to the maximal decoding error probability,
  \item capacity of stochastic code with respect to the average decoding error probability,
  \item capacity of stochastic code with respect to the maximal decoding error probability,
  \item capacity of random code with respect to the average decoding error probability,
  \item capacity of random code with respect to the maximal decoding error probability.
\end{enumerate}
The idea of applying random code to the AVC originated from Blackwell et al \cite{BBT-1960}, see also Lemma 2.6.10 in \cite{CT_DMS}. A random code is a pair of random encoder and decoder \((F,\Phi)\), distributed over a collection of deterministic codes, where \(F\) and \(\Phi\) are correlated. Before each transmission, there exists a third party telling the transmitter and receiver the exact pair of deterministic encoder and decoder to be used. That kind of information is called the common randomness (CR) of the random code.
CR is critical to random code, but it costs extra bandwidth. To solve that problem, Ahlswede \cite{Ahlswede-1978} developed an elimination technique to decrease the rate of CR to arbitrarily small, and determined all the other five capacities of the AVC except that of deterministic code with the maximal decoding error, which is still open. Some results can be found in \cite{Csiszar-1988}.

AVC can be treated as the model of channels with states. However, unlike the models introduced in \cite{Shannon-1958,Gelfand-1980}, the probability mass function of the channel states is unknown in the case of AVC. The problem of AVC with non-causal state sequence known at the encoder was studied by Ahlswede \cite{Ahlswede-1986}, which was actually an extension of \cite{Gelfand-1980}. An extension of Ahlswede's work can be found in \cite{Winshtok-2006}. Moreover, Csisz\'{a}r and Narayan \cite{Csiszar-1988-2} considered the AVC whose average costs of input and state sequence are constrained, and provided the capacity of random code.

The importance of arbitrarily varying wiretap channel (AVWC), first introduced in \cite{bloch-2008}, is self-evident. When the AVWC has constrained state sequence, it includes discrete memoryless wiretap channels \cite{wiretap,BCC}, wiretap channel of type II \cite{wiretap2,wiretap2_1994,luo-2006,wiretap2_2,dan-ISIT,dan-entropy,dan-entropy2,wiretap2_1} and compound wiretap channels \cite{liang-2008,Bjelakovic-2013a,schaefer-2014,schaefer-2015} as special cases. Moreover, the coding scheme in AVWC can also be applied to wiretap channels with stochastic channel states \cite{Mitrpant-2004,dai-2012,dai-2017}. When considering an AVWC, we can assume that the eavesdropper knows exactly the channel state sequence. Moreover, we often assume that the eavesdropper is able to control the state sequence during the communication \cite{He-2011,Bjelakovic-2013}.

Bjelakovi\'{c} et al \cite{Bjelakovic-2013} considered a special class of AVWC, where the wiretap channel could be regarded as an equivalent discrete memoryless channel (DMC). The lower bounds on the capacities of stochastic code and random code with respect to the strong secrecy criterion \cite{Maurer-1994,AI_SC} were given there. The key idea was migrating the coding scheme for compound wiretap channels \cite{Bjelakovic-2013a} to AVWC with the help of the robustification technique introduced by Ahlswede \cite{Ahlswede-1986}. If a random code is applied to the AVWC, the communication is manipulated with the help CR. If the CR is known by the eavesdropper, he/she may control the state sequence according to the CR. This situation was considered in \cite{Boch-2013}. On account of the elimination technique introduced by Ahlswede \cite{Ahlswede-1978}, it suffices to let the rate of CR be arbitrarily small when the eavesdropper is absent. However, when the eavesdropper exists and the rate of CR is positive, the CR can serve as a secret key to increase the secrecy capacity, as studied in \cite{Boch-2016}.

Goldfand et al \cite{Goldfeld-1} considered another special case of AVWC, where the channel state sequence was constrained in a certain type. The coding scheme there was based on a stronger version of Wyner's soft covering lemma \cite{wyner-1975b}, which was first used to deal with the capacity problem of extended wiretap channel II in \cite{wiretap2_2}. The model discussed in \cite{Goldfeld-1} discarded the assumption that the wiretap channel was a DMC. Meanwhile, the capacity was on semantic secrecy criterion. However, that coding scheme  can not be applied to the general AVWC without constraints on state sequence.

Some other results on AVWC include multi-letter description of the secrecy capacity \cite{wiese-2015,wiese-2016} and the continuity on the secrecy capacity \cite{Boch-2016,boch-2015}.

This paper considers the strong secrecy capacity problems of general AVWC without any limits, whose channel state sequence is supposed to be known at the eavesdropper and unknown at the transmitter. To be concrete, the following three communication models are considered.

\begin{itemize}

\item \emph{Arbitrarily varying wiretap channel with channel state sequence unknown at the receiver (AVWC).} Lower bounds on secrecy capacities of stochastic and random codes over this model with respect to the maximal decoding error probability and the strong secrecy criterion are given. The average decoding error is not considered in the context of wiretap channel since relaxing the criterion of decoding error does not yield a larger lower bound. This model includes the models in \cite{Bjelakovic-2013,Boch-2013} as special cases. The secrecy capacities are determined when the main channel is severely less noisy (defined in Subsection \ref{sec:less}) than the wiretap channel. Moreover, the coding scheme introduced in this paper can also be readily used to the model of AVWC with constrained type of the state sequence \cite{Goldfeld-1}.
  \item \emph{Arbitrarily varying channel with channel state sequence known at the receiver (AVC-CSR).} This communication model was first studied by \cite{stambler-1975}. We present a new proof here as a preliminary to the proofs of the third model. The five kinds of capacities, except that of deterministic code with respect to the maximal decoding error, are determined. Unlike the AVC, it is proved that those five capacities of an AVC-CSR are identical.
  \item \emph{Arbitrarily varying wiretap channel with channel state sequence known at the receiver (AVWC-CSR).} Lower bounds on secrecy capacities of stochastic and random codes over this model with respect to the maximal decoding error probability and the strong secrecy criterion are given. Unlike the general AVWC, the secrecy capacity of stochastic code over an AVWC-CSR is identical to that of random code. Moreover, the secrecy capacity is determined when the main channel is strongly less noisy (defined in Subsection \ref{sec:less}) than the wiretap channel.
\end{itemize}

This paper mainly makes the following two contributions. 1. Introduce a tool of typical sequences with respect to the channel states. This tool can be used to bound the values of both the decoding error probability at the legitimate receiver and the exposed source information to the eavesdropper. 2. Extend the almost independent coloring scheme developed by Csisz\'{a}r. This scheme could construct a partition on a given ``good'' codebook to satisfy the strong secrecy criterion. This scheme has recently been used to deal with the problems of extended wiretap channel II \cite{dan-ISIT,dan-entropy}.

The remainder of this paper is organized as follows. Section \ref{sec:pre} presents the preliminaries for this paper, including the new definitions of typical sequences and some known results on AVC. Section \ref{sec:main} summarizes main results of this paper, including lower bounds on secrecy capacities of AVWC and AVWC-CSR, along with capacities of AVC-CSR. In Section \ref{sec:basic}, we introduce some lemmas derived from Csisz\'{a}r's almost independent coloring scheme, which are basic tools for the analysis of security. The proofs of main results are given in Section \ref{section:proofs}. In section \ref{section:example}, we determine the secrecy capacities of less noisy AVWC and AVWC-CSR, and consider the models of AVWC with constrained state sequence. Finally, Section \ref{section:conclusion} concludes this paper.

\section{Preliminaries}\label{sec:pre}

Throughout this paper, random variables, sample values and alphabets (sets) are denoted by capital letters, lower case letters and calligraphic letters, respectively. A similar convention is applied to random vectors and their sample values. The probability mass function of a given random variable \(X\) is denoted by \(P_X\). Moreover, \(P_{XY}\) and \(P_{Y|X}\) denote the joint and conditional probability mass functions of the random variable pair \((X, Y)\), respectively. For a given AVC or AVC-CSR, \(e\) is used to represent the maximal decoding error probability of a deterministic code, \(\bar{e}\) represents the average decoding error probability, and \(e_m\) represents the decoding error probability of message \(m\). When the code is random or stochastic, \(e, \bar{e}\) and \(e_m\) are random functions of it. In that case, we would use \(\lambda\), \(\bar{\lambda}\) and \(\lambda_m\) to represent the expectations of \(e, \bar{e}\) and \(e_m\), respectively. The capacities with respect to the maximal and average decoding error probabilities are denoted by \(C\) and \(\bar{C}\), respectively. Some important notations are summarized in Table \ref{table1} for reference.

\begin{table}
\centering
\caption{Important Notations Used in This Paper}\label{table1}
\begin{tabular}{|p{4cm}|p{10cm}|}
\hline
NOTATION & DESCRIPTION \\
\hline
\(X\) & Common input of both the main AVC and the wiretap AVC (one time of transmission)\\
\hline
\(Y_s,Z_s\) & Channel outputs of the main AVC and the wiretap AVC when the channel state is \(s\) (one time of transmission)\\
\hline
\(W_s,V_s\) & Transition probability matrices of the main AVC and the wiretap AVC when the channel state is \(s\)\\
\hline
\(\mathcal{S}\) & The finite state set of both the main AVC and the wiretap AVC \\
\hline
\(Y_{\mathcal{S}},Z_{\mathcal{S}}\) & \(Y_{\mathcal{S}}=(Y_s:s\in\mathcal{S})\) and \(Z_{\mathcal{S}}=(Z_s:s\in\mathcal{S})\)\\
\hline
\(\mathcal{W},\mathcal{V}\) & \(\mathcal{W}=\{W_s:s\in\mathcal{S}\}\) and \(\mathcal{V}=\{V_s:s\in\mathcal{S}\}\). \(\mathcal{W}\) is used to specify the main AVC and \(\mathcal{V}\) is used to specify the wiretap AVC. \\
\hline
\(q\) & A specific probability mass function over \(\mathcal{S}\) \\
\hline
\(W_q\) & \(W_q=\sum_{s\in\mathcal{S}}q(s)\cdot W_s\) is a convex combination of the transition matrices from \(\mathcal{W}\)\\
\hline
\(Y_q\) & It follows that \(\Pr\{Y_q=y|X=x\} = W_q(y|x)\) \\
\hline
\(\mathcal{P}(\mathcal{S})\) & The collection of all probability mass functions over \(\mathcal{S}\) \\
\hline
\(\mathcal{M}\) & The message set \\
\hline
\(M\) & The source message uniformly distributed over \(\mathcal{M}\) \\
\hline
\(X^N\) & Common input of both the main AVC and the wiretap AVC (\(N\) times of transmission)\\
\hline
\(Y^N(s^N),Z^N(s^N)\) & Channel outputs of the main AVC and the wiretap AVC when the channel state sequence is \(s^N\) (\(N\) times of transmission)\\
\hline
\((f,\phi)\) & A pair of deterministic encoder and decoder \\
\hline
\((F,\phi)\) & A pair of stochastic encoder and decoder \\
\hline
\((F,\Phi)\) & A pair of random encoder and decoder \\
\hline
\(\lambda(\mathcal{W},f,\phi)\),\(\lambda(\mathcal{W},F,\phi)\), \(\lambda(\mathcal{W},F,\Phi)\) & Maximal decoding error probabilities of different classes of codes over AVC \(\mathcal{W}\) \\
\hline
\(\lambda^{CSR}(\mathcal{W},f,\phi)\),\(\lambda^{CSR}(\mathcal{W},F,\phi)\),  \(\lambda^{CSR}(\mathcal{W},F,\Phi)\) & Maximal decoding error probabilities of different classes of codes over AVC-CSR \(\mathcal{W}\) \\
\hline
\(U\) & An auxiliary random variable to formulate the secrecy capacity results. \(U\rightarrow X \rightarrow (Y_{\mathcal{S}},Z_{\mathcal{S}})\) forms a Markov chain.\\
\hline
\(\mathcal{C}\) & A deterministic codebook \\
\hline
\(\mathbf{C}\) & A random codebook \\
\hline
\(X^N(\mathcal{C})\) & A random sequence uniformly distributed over the deterministic codebook \(\mathcal{C}\) \\
\hline
\(Z^N(\mathcal{C},s^N)\) & Output of the wiretap AVC when the channel input is \(X^N(\mathcal{C})\) and state sequence is \(s^N\) \\
\hline
\end{tabular}
\end{table}

This section provides basic tools for the subsequent sections, including information measurements with respect to channel states, basic results of AVC, and typicality under the state sequence.

\subsection{Information measurements with respect to channel states}

This subsection defines some functions on information measurements with respect to the channel states, which would be used to characterize properties of typical sequences in Subsection \ref{sec:typical}.

 Let \(X\) be a random variable of probability mass function \(P_X\), and \(Y_{\mathcal{S}} = (Y_s: s\in \mathcal{S})\) be a collection of random variables satisfying that
\begin{equation}\label{eq:measure:1}
\operatorname{Pr}\{X=x, Y_s=y\} = P_{XY_s}(x, y) = P_X(x) W_s(y|x)
\end{equation}
for \(s\in \mathcal{S}\), \(x\in \mathcal{X}\) and \(y\in \mathcal{Y}\), where \(\mathcal{S}\) is a finite set and \(\{W_s: s \in \mathcal{S}\}\) is a family of transition matrices. For any probability mass function \(P\) on \(\mathcal{S}\), denote
\begin{equation}\label{eq:pre:14}
\bar{H}_P(Y_{\mathcal{S}}) = \sum_{s\in\mathcal{S}} P(s) H(Y_s),
\bar{H}_P(Y_{\mathcal{S}}|X) = \sum_{s\in\mathcal{S}} P(s) H(Y_s|X)
\text{ and }
\displaystyle \bar{I}_P(X; Y_{\mathcal{S}}) = \sum_{s\in\mathcal{S}} P(s) I(X; Y_s).
\end{equation}
It follows clearly that
\[
\bar{I}_P(X; Y_{\mathcal{S}}) = \bar{H}_P(Y_{\mathcal{S}}) - \bar{H}_P(Y_{\mathcal{S}}|X).
\]
Moreover, we also have
\begin{equation}\label{eq:pre:12}
\min_{s\in\mathcal{S}} I(X; Y_s) \leq \bar{I}_P(X; Y_{\mathcal{S}}) \leq \max_{s\in\mathcal{S}} I(X; Y_s).
\end{equation}

We give an example below, so that the reader can have a clearer idea on those notations.

\begin{exmp}
Let \(\mathcal{X}=\mathcal{Y}=\mathcal{S}=\{0,1\}\), and \(\mathcal{W} = \{W_0, W_1\}\) with
\[
W_0 =
\Bigg[
\begin{matrix}
1-q_0 & q_0\\
q_0 & 1-q_0
\end{matrix}
\Bigg]
\text{ and }
W_1 = \Bigg[
\begin{matrix}
1-q_1 & q_1\\
q_1 & 1-q_1
\end{matrix}
\Bigg]
\]
for some \(0 \leq q_0, q_1 \leq 1\). Suppose that the random variables \(X\) and \(Y_{\mathcal{S}} = \{Y_0, Y_1\}\) satisfy
\[
\Pr\{X=x, Y_s=y\}=\frac{1}{2}W_s(y|x)
\]
for \(x,y,s\in\{0, 1\}\). Then we have
\[
H(Y_s|X)=h(q_s) \text{ and } I(X; Y_s) = 1- h(q_s)
\]
for \(s=0,1\). Furthermore, denote by \(P\) a probability mass function on \(\mathcal{S}\) such that \(P(0)=1-P(1)=p\) for some real number \(0\leq p \leq 1\). Then it follows that
\[
\bar{H}_P(Y_{\mathcal{S}}|X) = pH(Y_0|X)+(1-p)H(Y_1|X)=ph(q_0)+(1-p)h(q_1)
\]
and
\[
\bar{I}_P(X;Y_{\mathcal{S}})=pI(X;Y_0)+(1-p)I(X;Y_1)=1-ph(q_0)-(1-p)h(q_1).
\]
\end{exmp}

\subsection{Some known results of AVC}\label{sec:avc}

This subsection gives the definition of AVC, and lists some capacity results that have been known. In general, an AVC is specified by a family of transition matrices, whose size is allowed to be infinite, according to \cite{CT_DMS,Ahlswede-1978}. However, \textbf{we constrain the size of this family to be finite so as to focus our attention on the security problems of AVWC and AVWC-CSR.}

\begin{figure}
     \centering
     \includegraphics[width=12cm]{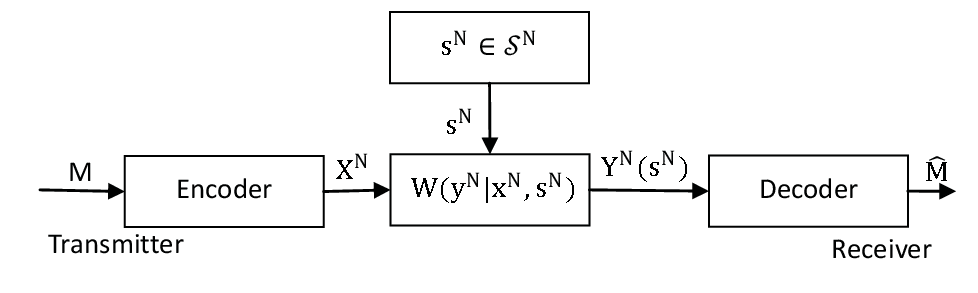}
     \caption{AVC, Arbitrarily varying channel.}\label{Fig:avc}
 \end{figure}

\textbf{The definition}

\begin{defn}\label{def:avc}
\emph{(AVC, Arbitrarily Varying Channel)} An AVC, depicted in Fig. \ref{Fig:avc}, is specified by a finite collection of transition matrices \(\mathcal{W} = \{W_s(y|x): x \in \mathcal{X}, y \in \mathcal{Y} \text{ and } s \in \mathcal{S}\}\), where \(\mathcal{S}\) is the finite state set of the channel. For each transmission, when the channel state \(s\) is given, the transition probability matrix of the channel is determined to be \(W_s\). Denote by \(X^N\) and \(Y^N\) the random sequences of the channel input and output, respectively. Then \(Y^N\) may be any one of the random sequences from \(\{Y^N(s^N): s^N \in \mathcal{S}^N\}\), where \(Y^N(s^N)\) is the channel output under the state sequence \(s^N\) satisfying that
\begin{equation}\label{eq:avc:1}
\operatorname{Pr}\{Y^N(s^N)=y^N| X^N=x^N\} =  W(y^N|x^N, s^N)
\end{equation}
with
\[
W(y^N|x^N, s^N) = \prod_{i=1}^N W_{s_i}(y_i|x_i)
\]
being the probability that channel outputs \(y^N\) when \(x^N\) is transmitted under the state sequence \(s^N\).
\end{defn}

To characterize the capacity results, let \(\mathcal{P}(\mathcal{S}) = \{q: 0 \leq  q(s) \leq 1, \sum_{s\in\mathcal{S}} q(s) = 1\}\) be the convex hull of \(\mathcal{S}\) and
\begin{equation}\label{eq:pre:13}
\bar{\mathcal{W}} = \{W_q: \mathcal{P}(\mathcal{S})\}
\end{equation}
be the convex hull of \(\mathcal{W}\) with
\begin{equation}\label{eq:pre:4}
W_{q}(y|x) = \sum_{s\in\mathcal{S}} q(s) W_s(y|x)
\end{equation}
being a convex combination of the transition matrices from \(\mathcal{W}\), where \(q \in \mathcal{P}(\mathcal{S})\). Notice that the size of \(\mathcal{W}\) is finite, but the size of \(\bar{\mathcal{W}}\) may be infinite.

\textbf{The capacity results}

As mentioned in Section \ref{sec:introduction}, the capacity of an AVC is related the classes of coding schemes (deterministic, stochastic or random) and the criteria of decoding error probability (average or maximal). In the rest of this subsection, we list the results of different classes of coding schemes, given in formulas \eqref{eq:pre:5}, \eqref{eq:pre:7} and \eqref{eq:pre:6}, all of which come from Ahlswede \cite{Ahlswede-1986}.

\emph{Deterministic Code\cite{Ahlswede-1986}.} A deterministic code over an AVC is specified by a pair of mappings \((f, \phi)\) with \(f: \mathcal{M} \mapsto \mathcal{X}^N\) and \(\phi: \mathcal{Y}^N \mapsto \mathcal{M}\). The maximal and average decoding error probabilities are defined as
      \[
      \lambda(\mathcal{W}, f, \phi) = \max_{s^N\in \mathcal{S}^N} \lambda(\mathcal{W}, f, \phi, s^N)
      \text{ and }
      \bar{\lambda}(\mathcal{W}, f, \phi) = \max_{s^N\in \mathcal{S}^N} \bar{\lambda}(\mathcal{W}, f, \phi, s^N),
      \]
      respectively, where
      \[
      \lambda(\mathcal{W}, f, \phi, s^N) = \max_{m \in \mathcal{M}}\lambda_m(\mathcal{W}, f, \phi, s^N),
      \]
      \[
      \bar{\lambda}(\mathcal{W}, f, \phi, s^N) = \frac{1}{|\mathcal{M}|}\sum_{m \in \mathcal{M}}\lambda_m(\mathcal{W}, f, \phi, s^N),
      \]
      \[
      \lambda_m(\mathcal{W}, f, \phi, s^N) =  e_m(\mathcal{W}, f, \phi, s^N) = 1- W(\phi^{-1}(m)|f(m), s^N)
      \]
      and
      \begin{equation}\label{eq:pre:3}
      W(\phi^{-1}(m)|f(m), s^N) = W(\phi^{-1}(m)|x^N, s^N) =\sum_{y^N\in\phi^{-1}(m)}W(y^N|x^N,s^N)
      \end{equation}
      is the probability that the channel output lies in the decoding set \(\phi^{-1}(m)\) of the message \(m\) when the channel input is \(f(m)=x^N\) and the state sequence is \(s^N\).

The transmission rate of the code is given by
\[
R = \frac{1}{N} \log |\mathcal{M}|
\]
and the capacity of the deterministic code over the AVC \(\mathcal{W}\) with respect to the average decoding error probability is given by
\begin{equation}\label{eq:pre:5}
\bar{C}^{DC}(\mathcal{W}) = \max_{X} \min_{q \in \mathcal{P}(\mathcal{S}) }I(X; Y_q)
\end{equation}
if \(\bar{C}^{DC}(\mathcal{W}) > 0\), where
\begin{equation}\label{eq:pre:9}
\Pr\{X=x,Y_q=y\}=P_X(x)W_q(y|x)
\end{equation}
with \(W_q\) given by \eqref{eq:pre:4}. The capacity of the maximal decoding error probability is still unknown.

\emph{Stochastic Code\cite{Ahlswede-1986}.} A stochastic code is specified by a pair \((F, \phi)\), where \(\{F(m), m \in \mathcal{M}\}\) is a collection of random sequences distributed over \(\mathcal{X}^N\) and \(\phi\) is a deterministic mapping \(\mathcal{Y}^N \mapsto \mathcal{M}\). The maximal and average decoding error probabilities are given by
\begin{equation}\label{eq:pre:8}
      \lambda(\mathcal{W}, F, \phi) = \max_{s^N\in \mathcal{S}^N} \lambda(\mathcal{W}, F, \phi, s^N)
      \text{ and }
      \bar{\lambda}(\mathcal{W}, F, \phi) = \max_{s^N\in \mathcal{S}^N} \bar{\lambda}(\mathcal{W}, F, \phi, s^N)
\end{equation}
respectively, where
\begin{equation}\label{eq:pre:10}
      \lambda(\mathcal{W}, F, \phi, s^N) = \max_{m \in \mathcal{M}}\lambda_m(\mathcal{W}, F, \phi, s^N), \bar{\lambda}(\mathcal{W}, F, \phi, s^N) = \frac{1}{|\mathcal{M}|}\sum_{m \in \mathcal{M}}\lambda_m(\mathcal{W}, F, \phi, s^N)
\end{equation}
with
\begin{equation}\label{eq:pre:11}
      \lambda_m(\mathcal{W}, F, \phi, s^N) = E[e_m(\mathcal{W}, F, \phi, s^N)] \text{ and } e_m(\mathcal{W}, F, \phi, s^N) = 1- W(\phi^{-1}(m)|F(m), s^N).
\end{equation}
Notice that \(W(\phi^{-1}(m)|F(m), s^N)\) is a discrete random function of \(F(m)\). To be precise,
\[
\Pr\{W(\phi^{-1}(m)|F(m), s^N) = a\} = \sum_{x^N\in\mathcal{X}^N:W(\phi^{-1}(m)|x^N, s^N)=a}\Pr\{F(m)=x^N\}
\]
for every \(a\in [0,1]\), where \(W(\phi^{-1}(m)|x^N, s^N)\) is given by \eqref{eq:pre:3}.

The capacities of stochastic code over the AVC \(\mathcal{W}\) with respect to the maximal and average decoding probabilities are identical to \(\bar{C}^{DC}(\mathcal{W})\), i.e.
\begin{equation}\label{eq:pre:7}
C^{SC}(\mathcal{W}) = \bar{C}^{SC}(\mathcal{W})= \bar{C}^{DC}(\mathcal{W}).
\end{equation}

\emph{Random code\cite{Ahlswede-1986}.} A random code is specified by a pair of random encoder and decoder \((F, \Phi)\) distributed over a collection of deterministic encoder and decoder pairs \(\{(f_g,\phi_g): g \in \mathcal{G}\}\), where \(\mathcal{G}\) is the index set of the deterministic codes, whose size is related to the codeword length \(N\). The maximal and average decoding error probabilities are given by
\[
      \lambda(\mathcal{W}, F, \Phi) = \max_{s^N\in \mathcal{S}^N} \lambda(\mathcal{W}, F, \Phi, s^N)
      \text{ and }
      \bar{\lambda}(\mathcal{W}, F, \Phi) = \max_{s^N\in \mathcal{S}^N} \bar{\lambda}(\mathcal{W}, F, \Phi, s^N)
\]
respectively, where
\begin{equation}\label{eq:pre:15}
      \lambda(\mathcal{W}, F, \Phi, s^N) = \max_{m \in \mathcal{M}}\lambda_m(\mathcal{W}, F, \Phi, s^N), \bar{\lambda}(\mathcal{W}, F, \Phi, s^N) = \frac{1}{|\mathcal{M}|}\sum_{m \in \mathcal{M}}\lambda_m(\mathcal{W}, F, \Phi, s^N)
\end{equation}
with
\[
      \lambda_m(\mathcal{W}, F, \Phi, s^N) = E[e_m(\mathcal{W}, F, \Phi, s^N)] \text{ and } e_m(\mathcal{W}, F, \Phi, s^N) = 1- W(\Phi^{-1}(m)|F(m), s^N).
\]
Notice that \(W(\Phi^{-1}(m)|F(m), s^N)\) is a discrete random function of \((F,\Phi)\). To be precise,
\[
\Pr\{W(\Phi^{-1}(m)|F(m), s^N) = a\} = \sum_{g\in\mathcal{G}:W(\phi_g^{-1}(m)|f_g(m), s^N)=a}\Pr\{(F,\Phi)=(f_g,\phi_g)\}
\]
for every \(a\in [0,1]\), where \(W(\phi_g^{-1}(m)|f_g(m), s^N)\) is given by \eqref{eq:pre:3}.

The capacities of random code over AVC \(\mathcal{W}\) with respect to the maximal and average decoding probabilities are identical. They follow that
\begin{equation}\label{eq:pre:6}
C^{RC}(\mathcal{W}) = \bar{C}^{RC}(\mathcal{W}) = \max_{X} \min_{q \in \mathcal{P}(\mathcal{S}) }I(X; Y_q),
\end{equation}
where \(\{(X,Y_q): q\in\mathcal{P}(\mathcal{S})\}\) is a collection of random variables satisfying \eqref{eq:pre:9}.

\begin{rem}
We present some remarks here to have a further explanation on the coding schemes and capacity results.
\begin{itemize}
  \item A popular technique to establish the standard channel coding theorem is the random coding scheme, but notice that the random coding scheme is essentially different from the random code discussed here. In fact, the random coding scheme would produce a deterministic code.
  \item A stochastic code is a special case of a random code, with the decoder being deterministic. To show this, let \((F,\phi)\) be a stochastic code. Moreover, since the decoding error probabilities defined in \eqref{eq:pre:8} are only related to the margin distributions of the random codewords \((F(m),m\in\mathcal{M})\), we assume that the random codewords are mutually independent without loss of generality. Let \(\{(f_g,\phi): g\in\mathcal{G}\}\) be the collection of all possible deterministic codes of length \(N\), with the decoder being fixed. Then, the stochastic code is actually a random code distributed over \(\{(f_g,\phi): g\in\mathcal{G}\}\), such that
      \[
      \Pr\{(F,\phi)=(f_g,\phi)\}=\prod_{m\in\mathcal{M}}\Pr\{F(m)=f_g(m)\}.
      \]
  \item Formulas \eqref{eq:pre:5}, \eqref{eq:pre:7} and \eqref{eq:pre:6} give that the five capacities, except the one of deterministic code with respect to the maximal decoding probability, are identical if they are all positive, i.e.
      \[
      \bar{C}^{DC}(\mathcal{W})=\bar{C}^{SC}(\mathcal{W})=C^{SC}(\mathcal{W})=\bar{C}^{RC}(\mathcal{W})=C^{RC}(\mathcal{W})
      \]
      if \(\bar{C}^{DC}(\mathcal{W}) > 0\). Notice that the precondition \(\bar{C}^{DC}(\mathcal{W}) > 0\) is important since it is possible that \(\bar{C}^{DC}(\mathcal{W})=\bar{C}^{SC}(\mathcal{W})=C^{SC}(\mathcal{W})=0\) while \(\bar{C}^{RC}(\mathcal{W})=C^{RC}(\mathcal{W})>0\).
  \item When the random code is applied, there exists a third party providing extra information to tell the transmitter and the legitimate receiver which pair of deterministic encoder and decoder is to be used before each transmission. The information from the third party is called the CR of the random code. Throughout this paper, we assume that the CR contains full information of both encoder and decoder, different from the settings in \cite{Bjelakovic-2013}.
\end{itemize}
\end{rem}

\subsection{Typicality under state sequence}\label{sec:typical}

In the model of arbitrarily varying channel (AVC), the transition probability of the output sequence under a given input sequence is related to the value of the state sequence. Therefore, to define the typical sequences over the AVC, it is reasonable to take the state sequence into consideration. This subsection gives the definitions of typical sequence under the state sequence, and lists some useful properties. These results provide new ideas dealing with the reliable and secure transmission problems of AVC and AVWC. The new definitions originate from the letter typical sequences defined in Chapter 1 of \cite{MU_IT}. We first present the original definitions, followed by the new ones.

\textbf{The original definitions}

For any \(\delta \geq 0\), the \(\delta\)-letter typical set \(T^N_\delta (P_X)\) with respect to the probability mass function \(P_X\) on \(\mathcal{X}\), is the set of \(x^N \in \mathcal{X}^N\) satisfying
\[
|\frac{1}{N}N(a:x^N) - P_X(a)| \leq \delta P_X(a) \text{ for all } a \in \mathcal{X},
\]
where \(N(a:x^N)\) is the number of positions of \(x^N\) having the letter \(a \in \mathcal{X}\). The jointly typical set \(T^N_\delta (P_{XY})\) with respect to the joint probability mass function \(P_{XY}\) and the conditionally typical set \(T^N_\delta(P_{XY}|x^N)\) of \(x^N \in \mathcal{X}^N\), are defined accordingly.

\textbf{The new definitions}

The propositions and corollaries claimed in this part will be briefly proved in Appendix \ref{app:typical}.

For any \(s^N \in \mathcal{S}^N\), \(a \in \mathcal{S}\) and \(\eta >¡¡0\), denote by
\begin{equation}\label{eq:pre:16}
\mathcal{I}(a: s^N) = \{1\leq i \leq N, s_i = a\}
\end{equation}
the collection of indices of components in $s^N$ whose values are $a$, and let
\begin{equation}\label{eq:pre:17}
\mathcal{S}(s^N, \eta) = \{a \in \mathcal{S}, |\mathcal{I}(a: s^N)| > \frac{N\eta}{|\mathcal{S}|}\}.
\end{equation}
Furthermore, define the type of a given state sequence \(s^N\) as a probability mass function \(P_{s^N}\) over \(\mathcal{S}\) such that
\[
P_{s^N}(a) = |\mathcal{I}(a: s^N)|/N
\]
for \(a\in\mathcal{S}\).

\begin{rem}
The collection \(\mathcal{S}(s^N, \eta)\) defined in \eqref{eq:pre:17} contains the states which occur sufficiently many times in the sequence \(s^N\). This collection is critical to the new definition of typical sequence. Roughly speaking, we are only interested in the states from \(\mathcal{S}(s^N, \eta)\), and ignore the states outside.
\end{rem}

\begin{defn}\label{def:strong}
   The letter typical set \(\tilde{T}^N[X,s^N]_{\delta,\eta}\) with respect to the random variable \(X\) under the state sequence \(s^N\), is the set of \(x^N \in \mathcal{X}^N\) such that \(P_X(x_i) > 0\) for all \(1\leq i \leq N\), and \(x_{\mathcal{I}(a: s^N)} \in T^{\mu_a}_{\delta}(P_X)\) for all \(a \in \mathcal{S}(s^N, \eta)\), where \(x_{\mathcal{I}(a: s^N)} = (x_i, i \in \mathcal{I}(a: s^N))\) is a \(\mu_a\)-subvector of \(x^N\) and \(\mu_a = |\mathcal{I}(a: s^N)|\).
\end{defn}

We provide an example below to help the reader have a clearer idea on the definition.

 \begin{exmp}
 Let \(\mathcal{S}=\mathcal{X}=\{0,1\}\), and \(X\) be the random variable satisfying
 \[
 \Pr\{X=0\} = \Pr\{X=1\} = \frac{1}{2}.
 \]
 Set \(\eta=0.5\) and \(\delta = 0.12\). Then we have the following conclusions.
 \begin{itemize}
   \item Suppose that \(N=20\), \(s^N=00000000001111111111\) and \(x^N=00000111110000001111\). It follows that \(\mathcal{S}(s^N,\eta)=\{0,1\}\), \(\mathcal{I}_0=\mathcal{I}(0: s^N)=\{1,2,...,10\}\), \(\mathcal{I}_1=\mathcal{I}(1: s^N)=\{11,12,...,20\}\), \(x_{\mathcal{I}_0}=0000011111\) and \(x_{\mathcal{I}_1}=0000001111\). This indicates that \(x_{\mathcal{I}_0}\in T^{10}_{\delta}(P_X)\) and \(x_{\mathcal{I}_1}\in T^{10}_{\delta}(P_X)\). Therefore, \(x^N\in \tilde{T}^N[X, s^N]_{\delta,\eta}\).
   \item Suppose that \(N=20\), \(s^N=00000000001111111111\) and \(x^N=00000111110000000111\). It follows that \(\mathcal{S}(s^N,\eta)=\{0,1\}\), \(\mathcal{I}_0=\mathcal{I}(0: s^N)=\{1,2,...,10\}\), \(\mathcal{I}_1=\mathcal{I}(1: s^N)=\{11,12,...,20\}\), \(x_{\mathcal{I}_0}=0000011111\) and \(x_{\mathcal{I}_1}=0000000111\). This indicates that \(x_{\mathcal{I}_0}\in T^{10}_{\delta}(P_X)\) and \(x_{\mathcal{I}_1}\notin T^{10}_{\delta}(P_X)\). Therefore, \(x^N\notin \tilde{T}^N[X, s^N]_{\delta,\eta}\).
   \item Suppose that \(N=20\), \(s^N=00000000000000000011\) and \(x^N=00000000011111111111\). It follows that \(\mathcal{S}(s^N,\eta)=\{0\}\), \(\mathcal{I}_0=\mathcal{I}(0: s^N)=\{1,2,...,18\}\) and \(x_{\mathcal{I}_0}=000000000111111111\). This indicates that \(x_{\mathcal{I}_0}\in T^{10}_{\delta}(P_X)\). Therefore, \(x^N\in \tilde{T}^N[X, s^N]_{\delta,\eta}\). \textbf{Notice that \(1\notin \mathcal{S}(s^N,\eta)\), so there is no need to consider the typicality of \(x_{\mathcal{I}_1}\), where \(\mathcal{I}_1=\mathcal{I}(1: s^N)=\{19,20\}\).}
 \end{itemize}
 \end{exmp}

 \begin{prop}\label{remarkTX}
  Suppose that \(X_1, X_2,...,X_N\) are \(N\) i.i.d. random variables with the same generic probability mass function as that of \(X\). For any given \(\eta > 0\) and \(\delta < m_X\), it follows that
  \[\operatorname{Pr}\{X^N \in \tilde{T}^N[X, s^N]_{\delta,\eta}\} > 1 - 2^{-N\nu_1}\]
  for some \(\nu_1 > 0\) \textbf{unrelated to the length \(N\) and the state sequence \(s^N\)}, where \(m_X=\min_{x \in \mathcal{X}:P_X(x)>0}P_X(x)\).
 \end{prop}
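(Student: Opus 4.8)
The plan is to bound the complementary probability \(\Pr\{X^N \notin \tilde{T}^N[X, s^N]_{\delta,\eta}\}\) and to show that it decays exponentially in \(N\) at a rate that can be chosen independently of both \(N\) and the particular state sequence \(s^N\). By Definition \ref{def:strong}, the event \(\{X^N \notin \tilde{T}^N[X, s^N]_{\delta,\eta}\}\) can occur only if either some coordinate \(X_i\) satisfies \(P_X(X_i) = 0\), or the subvector \(X_{\mathcal{I}(a: s^N)}\) fails to lie in \(T^{\mu_a}_{\delta}(P_X)\) for some state \(a \in \mathcal{S}(s^N, \eta)\). The first possibility has probability zero, since each \(X_i\) is distributed according to \(P_X\) and hence \(\Pr\{P_X(X_i)=0\}=0\). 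A union bound over the remaining events then gives
\[
\Pr\{X^N \notin \tilde{T}^N[X, s^N]_{\delta,\eta}\} \leq \sum_{a \in \mathcal{S}(s^N, \eta)} \Pr\{X_{\mathcal{I}(a: s^N)} \notin T^{\mu_a}_{\delta}(P_X)\},
\]
so it remains to control each term on the right.

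The key observation is that, because \(s^N\) is a fixed (deterministic) sequence, the index set \(\mathcal{I}(a: s^N)\) is deterministic, and therefore the subvector \(X_{\mathcal{I}(a: s^N)}\) is a string of \(\mu_a = |\mathcal{I}(a: s^N)|\) i.i.d.\ copies of \(X\). I can thus apply the standard exponential concentration bound for letter typical sets of i.i.d.\ sequences underlying the definitions in Chapter 1 of \cite{MU_IT}: since \(\delta < m_X\), there is a constant \(\beta = \beta(\delta, P_X) > 0\), depending only on \(\delta\) and \(P_X\), such that \(\Pr\{X^n \notin T^n_{\delta}(P_X)\} \leq 2|\mathcal{X}| \cdot 2^{-n\beta}\) for every length \(n\). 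Concretely, this bound follows by writing the count \(N(b: X^n)\) of each letter \(b \in \mathcal{X}\) as a sum of i.i.d.\ Bernoulli(\(P_X(b)\)) indicators, applying a Chernoff bound to each deviation event, and taking a union bound over the at most \(|\mathcal{X}|\) letters. Applying this with \(n = \mu_a\) yields \(\Pr\{X_{\mathcal{I}(a: s^N)} \notin T^{\mu_a}_{\delta}(P_X)\} \leq 2|\mathcal{X}| \cdot 2^{-\mu_a \beta}\).

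The final step exploits the definition of \(\mathcal{S}(s^N, \eta)\) in \eqref{eq:pre:17}, which is precisely engineered to make the argument uniform in \(s^N\): every state \(a \in \mathcal{S}(s^N, \eta)\) satisfies \(\mu_a > N\eta/|\mathcal{S}|\), so \(2^{-\mu_a \beta} < 2^{-N\eta\beta/|\mathcal{S}|}\). Combining this with \(|\mathcal{S}(s^N, \eta)| \leq |\mathcal{S}|\) gives
\[
\Pr\{X^N \notin \tilde{T}^N[X, s^N]_{\delta,\eta}\} \leq 2|\mathcal{S}||\mathcal{X}| \cdot 2^{-N\eta\beta/|\mathcal{S}|},
\]
and choosing any \(\nu_1\) with \(0 < \nu_1 < \eta\beta/|\mathcal{S}|\) absorbs the constant factor \(2|\mathcal{S}||\mathcal{X}|\) for all sufficiently large \(N\), establishing the claimed bound. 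The exponent \(\eta\beta/|\mathcal{S}|\) depends only on \(\eta\), \(\delta\), \(P_X\) and \(|\mathcal{S}|\), hence is unrelated to \(N\) and \(s^N\), as required.

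The one point demanding care — and the reason the set \(\mathcal{S}(s^N, \eta)\) must discard rarely occurring states — is exactly this uniformity over \(s^N\). A state appearing only \(o(N)\) times would yield a subvector too short to guarantee exponential decay, so without the threshold \(N\eta/|\mathcal{S}|\) the effective rate could degrade to zero for adversarially chosen \(s^N\); restricting attention to states in \(\mathcal{S}(s^N,\eta)\) is what converts the per-state exponent \(2^{-\mu_a\beta}\) into a bound uniform in the arbitrarily varying state sequence.
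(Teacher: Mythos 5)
Your proposal is correct and follows essentially the same route as the paper's proof: both decompose the event over the states in \(\mathcal{S}(s^N,\eta)\) (you via a union bound on the complement, the paper via a product over the independent disjoint subvectors), apply the standard exponential concentration for letter-typical i.i.d.\ strings from \cite{MU_IT} to each subvector, and use the threshold \(\mu_a > N\eta/|\mathcal{S}|\) to make the exponent uniform in \(s^N\). Your closing remark on why the set \(\mathcal{S}(s^N,\eta)\) must discard rare states is exactly the point the paper relies on, so no further comparison is needed.
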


\begin{defn}\label{def:strong2}
   The letter typical set \(\tilde{T}^N[Y_{\mathcal{S}}, s^N]_{\delta,\eta}\) with respect to the random variables \(Y_{\mathcal{S}}\) under the state sequence \(s^N\) is the set of \(y^N \in \mathcal{Y}^N\) such that \(P_{Y_{s_i}}(y_i) > 0\) for all \(1\leq i \leq N\), and \(y_{\mathcal{I}(a: s^N)} \in T^{\mu_a}_{\delta}(P_{Y_a})\) for all \(a \in \mathcal{S}(s^N, \eta)\).
\end{defn}

Interestingly, we have two definitions on typical sequences with respect to state sequence, namely Definitions \ref{def:strong} and \ref{def:strong2}. This, in fact, comes from the property of AVC. Definition \ref{def:strong} is for the typicality of the channel input, whose distribution is independent of the state, while Definition \ref{def:strong2} is for the output, whose distribution is related to the state.

\begin{prop}\label{remarkTY}
Let \(Y^N(s^N)\) be a random vector satisfying that
\begin{equation}\label{eq:pre:2}
\begin{array}{c}
\operatorname{Pr}\{Y^N(s^N) = y^N\} = \prod_{i=1}^N \operatorname{Pr}\{Y_{s_i} = y_i\}.
\end{array}
\end{equation}
Then for any \(y^N \in \tilde{T}^N[Y_{\mathcal{S}}, s^N]_{\delta,\eta}\), it follows that
\[
2^{-N[(1+\delta)\bar{H}_{P_{s^N}}(Y_{\mathcal{S}})-\eta\log m_{Y_{\mathcal{S}}}]} <  \operatorname{Pr}\{Y^N(s^N) = y^N\} < 2^{-N(1-\delta-\eta)\bar{H}_{P_{s^N}}(Y_{\mathcal{S}})},
\]
where \(m_{Y_{\mathcal{S}}} = \min_{(y,s)\in \mathcal{Y}\times\mathcal{S}: P_{Y_s}(y) > 0} P_{Y_s}(y)\) and \(P_{s^N}\) is the type of \(s^N\).
\end{prop}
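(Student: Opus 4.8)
The plan is to exploit the product form of the probability and reduce everything to the ordinary letter-typicality counting bounds applied state by state. First I would write, using \eqref{eq:pre:2} and grouping the $N$ coordinates according to the state that generated them,
\[
\log \Pr\{Y^N(s^N)=y^N\} = \sum_{a\in\mathcal{S}}\ \sum_{i\in\mathcal{I}(a:s^N)}\log P_{Y_a}(y_i),
\]
and then split the outer sum into the ``frequent'' states $a\in\mathcal{S}(s^N,\eta)$ and the ``rare'' states $a\notin\mathcal{S}(s^N,\eta)$. The point of Definition \ref{def:strong2} is that only the frequent states carry a typicality guarantee, so these two groups must be treated by different devices.

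For a frequent state $a\in\mathcal{S}(s^N,\eta)$ I would rewrite the inner sum by counting letters, $\sum_{i\in\mathcal{I}(a:s^N)}\log P_{Y_a}(y_i)=\sum_{b\in\mathcal{Y}}N(b:y_{\mathcal{I}(a:s^N)})\log P_{Y_a}(b)$, and invoke $y_{\mathcal{I}(a:s^N)}\in T^{\mu_a}_{\delta}(P_{Y_a})$ to sandwich each count as $(1-\delta)\mu_a P_{Y_a}(b)\le N(b:y_{\mathcal{I}(a:s^N)})\le (1+\delta)\mu_a P_{Y_a}(b)$. Since $\log P_{Y_a}(b)\le 0$, multiplying through flips these into
\[
-(1+\delta)\mu_a H(Y_a)\ \le\ \sum_{i\in\mathcal{I}(a:s^N)}\log P_{Y_a}(y_i)\ \le\ -(1-\delta)\mu_a H(Y_a).
\]
Summing over the frequent states and recalling $\mu_a=|\mathcal{I}(a:s^N)|=N P_{s^N}(a)$, so that $\sum_{a\in\mathcal{S}}\mu_a H(Y_a)=N\bar{H}_{P_{s^N}}(Y_{\mathcal{S}})$, this already supplies the $(1\pm\delta)$ factors appearing in the two exponents.

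The rare states are handled crudely. Because $P_{Y_{s_i}}(y_i)>0$ for every $i$, which is part of Definition \ref{def:strong2}, each factor satisfies $\log m_{Y_{\mathcal{S}}}\le \log P_{Y_a}(y_i)\le 0$; and since each rare state obeys $\mu_a\le N\eta/|\mathcal{S}|$ with at most $|\mathcal{S}|$ such states, their total multiplicity is $\sum_{a\notin\mathcal{S}(s^N,\eta)}\mu_a\le N\eta$. Hence the entire rare-state contribution to $\log\Pr$ is confined to the interval $[\,N\eta\log m_{Y_{\mathcal{S}}},\,0\,]$. Assembling the two groups then gives the lower bound directly: the frequent part is $\ge-(1+\delta)N\bar{H}_{P_{s^N}}(Y_{\mathcal{S}})$ (extending the entropy sum to all states only drops nonnegative terms) and the rare part is $\ge N\eta\log m_{Y_{\mathcal{S}}}$, which is exactly the exponent $-N[(1+\delta)\bar{H}_{P_{s^N}}(Y_{\mathcal{S}})-\eta\log m_{Y_{\mathcal{S}}}]$. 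For the upper bound I would discard the rare factors via $\le 0$, leaving $\log\Pr\le -(1-\delta)\sum_{a\in\mathcal{S}(s^N,\eta)}\mu_a H(Y_a)$, and then account for the missing rare states in the entropy sum through the slack $\eta$ in the target exponent $-N(1-\delta-\eta)\bar{H}_{P_{s^N}}(Y_{\mathcal{S}})$. Exponentiating both chains of inequalities yields the proposition.

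I expect the genuine difficulty to be the reconciliation between the restricted entropy sum $\sum_{a\in\mathcal{S}(s^N,\eta)}\mu_a H(Y_a)$ and the full-type quantity $N\bar{H}_{P_{s^N}}(Y_{\mathcal{S}})$ in the upper bound, i.e.\ controlling precisely the contribution of the states lying outside $\mathcal{S}(s^N,\eta)$. These coordinates carry no typicality constraint, so they can only be confined through the envelope $[\log m_{Y_{\mathcal{S}}},0]$ together with the multiplicity bound $\sum_{a\notin\mathcal{S}(s^N,\eta)}\mu_a\le N\eta$; checking that this confinement is uniform in both $N$ and $s^N$, and that the $\delta$- and $\eta$-slacks in the two exponents line up, is the crux of the argument, and it is exactly the reason the threshold $N\eta/|\mathcal{S}|$ is built into the definition of $\mathcal{S}(s^N,\eta)$.
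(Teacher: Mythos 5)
Your decomposition is exactly the one the paper uses: its Appendix~\ref{app:typical} proof consists of precisely your two per-state inequalities, the standard letter-typicality sandwich $2^{-\mu_a(1+\delta)H(Y_a)}\le\Pr\{Y_{\mathcal{I}(a:s^N)}(s^N)=y_{\mathcal{I}(a:s^N)}\}\le 2^{-\mu_a(1-\delta)H(Y_a)}$ for $a\in\mathcal{S}(s^N,\eta)$ and the envelope $2^{\mu_a\log m_{Y_{\mathcal{S}}}}\le\Pr\{Y_{\mathcal{I}(a:s^N)}(s^N)=y_{\mathcal{I}(a:s^N)}\}\le 1$ for $a\notin\mathcal{S}(s^N,\eta)$, and your assembly of the lower bound is complete and correct. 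The problem is the step you yourself single out as the crux and then leave unproved. After discarding the rare factors via $\le 1$ you have $\log\Pr\le-(1-\delta)\sum_{a\in\mathcal{S}(s^N,\eta)}\mu_aH(Y_a)$, and to reach the target exponent $-N(1-\delta-\eta)\bar H_{P_{s^N}}(Y_{\mathcal{S}})$ you would need $\eta\sum_{a\in\mathcal{S}(s^N,\eta)}\mu_aH(Y_a)\ge(1-\delta-\eta)\sum_{a\notin\mathcal{S}(s^N,\eta)}\mu_aH(Y_a)$. The multiplicity bound $\sum_{a\notin\mathcal{S}(s^N,\eta)}\mu_a\le N\eta$ does not deliver this, because the left-hand side is weighted by the entropies of the \emph{frequent} states, which can be far smaller than those of the rare ones. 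Concretely, take $\mathcal{S}=\mathcal{Y}=\{0,1\}$, $Y_0$ deterministic, $P_{Y_1}=(1-p,p)$ with $p$ small, $\mu_1=k$ just below $N\eta/2$ (so state $1$ is rare, state $0$ frequent), and $y^N$ all zeros: this $y^N$ lies in $\tilde{T}^N[Y_{\mathcal{S}},s^N]_{\delta,\eta}$, and $\Pr\{Y^N(s^N)=y^N\}=(1-p)^k=2^{-k\log\frac{1}{1-p}}$, whereas the claimed bound is $2^{-(1-\delta-\eta)k\,h(p)}$; since $\log\frac{1}{1-p}\approx p\log e$ is much smaller than $h(p)\approx p\log\frac{1}{p}$ for small $p$, the stated upper bound is actually violated.

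To be fair, this is not a defect you introduced: the paper's own proof bounds the rare blocks by $1$ in exactly the same way and never reconciles $\sum_{a\in\mathcal{S}(s^N,\eta)}\mu_aH(Y_a)$ with $N\bar H_{P_{s^N}}(Y_{\mathcal{S}})$, so what that argument (and yours) actually establishes is $\Pr\{Y^N(s^N)=y^N\}\le 2^{-N[(1-\delta)\bar H_{P_{s^N}}(Y_{\mathcal{S}})-\eta\log|\mathcal{Y}|]}$, using $\sum_{a\notin\mathcal{S}(s^N,\eta)}\mu_aH(Y_a)\le N\eta\log|\mathcal{Y}|$. That corrected exponent, with an additive $\eta$-term mirroring the $-\eta\log m_{Y_{\mathcal{S}}}$ already present in the lower exponent, is all that Corollary \ref{cor1} and the downstream estimates require, since $\eta$ is eventually taken small. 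So: same route as the paper, lower bound fine, but the upper bound in the stated form cannot be closed by this argument; you should either prove the weaker (correct) exponent or note the additional hypothesis under which the multiplicative $(1-\delta-\eta)$ form survives.
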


\begin{cor}\label{cor2}
It is satisfied that \(|\tilde{T}^N[Y_{\mathcal{S}}, s^N]_{\delta,\eta}| < 2^{N[(1+\delta)\bar{H}_{P_{s^N}}(Y_{\mathcal{S}})-\eta\log m_{Y_{\mathcal{S}}}]}\).
\end{cor}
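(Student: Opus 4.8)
The plan is to obtain Corollary \ref{cor2} as an immediate counting consequence of the probability lower bound already established in Proposition \ref{remarkTY}. The underlying principle is the classical one used to bound the size of any typical set: since the law of $Y^N(s^N)$ specified in \eqref{eq:pre:2} is a genuine probability mass function on $\mathcal{Y}^N$, the total mass it assigns to any subset is at most $1$, so if every element of the subset carries at least a fixed amount of probability, the subset cannot contain too many elements.

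Concretely, I would first restrict attention to the typical set and write down the normalization inequality
\[
1 \geq \operatorname{Pr}\{Y^N(s^N) \in \tilde{T}^N[Y_{\mathcal{S}}, s^N]_{\delta,\eta}\} = \sum_{y^N \in \tilde{T}^N[Y_{\mathcal{S}}, s^N]_{\delta,\eta}} \operatorname{Pr}\{Y^N(s^N) = y^N\}.
\]
Next I would invoke the strict lower bound supplied by Proposition \ref{remarkTY}, namely $\operatorname{Pr}\{Y^N(s^N) = y^N\} > 2^{-N[(1+\delta)\bar{H}_{P_{s^N}}(Y_{\mathcal{S}})-\eta\log m_{Y_{\mathcal{S}}}]}$ for every $y^N$ in the typical set, so that the sum above is bounded below by the cardinality of the set times this common value. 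Chaining the two bounds gives
\[
1 > |\tilde{T}^N[Y_{\mathcal{S}}, s^N]_{\delta,\eta}| \cdot 2^{-N[(1+\delta)\bar{H}_{P_{s^N}}(Y_{\mathcal{S}})-\eta\log m_{Y_{\mathcal{S}}}]},
\]
and multiplying through by the positive quantity $2^{N[(1+\delta)\bar{H}_{P_{s^N}}(Y_{\mathcal{S}})-\eta\log m_{Y_{\mathcal{S}}}]}$ yields exactly the asserted cardinality bound.

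There is no genuine obstacle here; the result is a one-step corollary, and the only points that merit a moment of care are the strictness of the inequality and the degenerate case. The strict sign in the conclusion is inherited directly from the strict per-sequence lower bound in Proposition \ref{remarkTY}, which makes the penultimate display strict as soon as the typical set is nonempty; should the typical set be empty, its cardinality is $0$ and the claimed bound holds trivially since the right-hand side is a positive power of two. Everything else amounts to substituting the per-sequence probability estimate into the total-mass constraint, so no further work is required.
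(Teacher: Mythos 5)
Your argument is correct and is exactly the route the paper takes: Appendix \ref{app:typical} states that Corollary \ref{cor2} is a direct consequence of Proposition \ref{remarkTY}, i.e.\ the standard counting argument of combining the per-sequence lower bound on \(\operatorname{Pr}\{Y^N(s^N)=y^N\}\) with the total-mass constraint. Your handling of the strictness and of the empty-set case is a fine (and harmless) elaboration of that one-line proof.
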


\begin{defn}\label{def:joint}
The jointly typical set \(\tilde{T}^N[XY_{\mathcal{S}}, s^N]_\delta\) with respect to \((X,Y_{\mathcal{S}})\) under the state sequence \(s^N\) is the set of \((x^N,y^N) \in \mathcal{X}^N \times \mathcal{Y}^N\) satisfying \(P_{XY_{s_i}}(x_i, y_i) > 0\) for all \(1\leq i \leq N\), and \((x_{\mathcal{I}(a:s^N)}, y_{\mathcal{I}(a:s^N)}) \in T^{\mu_a}_{\delta}(P_{XY_a})\) for all \(a \in \mathcal{S}(s^N, \eta)\), where \(\mu_a = |\mathcal{I}(a:s^N)|\).
\end{defn}

\begin{prop}\label{remarkConsist}
\((x^N,y^N) \in  \tilde{T}^N[XY_{\mathcal{S}},s^N]_{\delta,\eta}\) implies that \(x^N \in  \tilde{T}^N[X,s^N]_{\delta,\eta}\) and \(y^N \in  \tilde{T}^N[Y_{\mathcal{S}},s^N]_{\delta,\eta}\).
\end{prop}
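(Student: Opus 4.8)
The plan is to reduce the statement, block by block over the index collections \(\mathcal{I}(a:s^N)\), to the classical consistency property of letter-typical sequences, namely that \((u^n,v^n) \in T^n_\delta(P_{UV})\) forces \(u^n \in T^n_\delta(P_U)\) and \(v^n \in T^n_\delta(P_V)\). The crucial structural observation is that all three sets \(\tilde{T}^N[XY_{\mathcal{S}},s^N]_{\delta,\eta}\), \(\tilde{T}^N[X,s^N]_{\delta,\eta}\), and \(\tilde{T}^N[Y_{\mathcal{S}},s^N]_{\delta,\eta}\) are defined through the very same active index set \(\mathcal{S}(s^N,\eta)\), since it depends only on \(s^N\) and \(\eta\); hence the defining constraints decouple cleanly across the distinct state values \(a\), and it suffices to argue within a single block.

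First I would dispose of the positivity requirements. Fix \((x^N,y^N) \in \tilde{T}^N[XY_{\mathcal{S}},s^N]_{\delta,\eta}\). By Definition \ref{def:joint} we have \(P_{XY_{s_i}}(x_i,y_i) > 0\) for every \(1 \leq i \leq N\). Marginalizing gives \(P_X(x_i) = \sum_{y} P_{XY_{s_i}}(x_i,y) \geq P_{XY_{s_i}}(x_i,y_i) > 0\), and likewise \(P_{Y_{s_i}}(y_i) > 0\), which are exactly the per-coordinate positivity conditions demanded in Definitions \ref{def:strong} and \ref{def:strong2}. Next I would treat the typicality conditions one state value at a time. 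Fix \(a \in \mathcal{S}(s^N,\eta)\) and set \(n = \mu_a = |\mathcal{I}(a:s^N)|\). By hypothesis the subvector pair \((x_{\mathcal{I}(a:s^N)}, y_{\mathcal{I}(a:s^N)})\) lies in \(T^{n}_\delta(P_{XY_a})\). From \eqref{eq:measure:1} the two marginals of the single-letter law \(P_{XY_a}\) are \(\sum_{y} P_{XY_a}(x,y) = P_X(x)\) and \(\sum_{x} P_{XY_a}(x,y) = P_{Y_a}(y)\). Invoking the consistency property of the classical letter-typical set of \cite{MU_IT} — whose verification is merely summing the defining deviation bound over the suppressed coordinate, so that the marginal count deviations are dominated by \(\delta\) times the marginal probabilities — yields \(x_{\mathcal{I}(a:s^N)} \in T^{n}_\delta(P_X)\) and \(y_{\mathcal{I}(a:s^N)} \in T^{n}_\delta(P_{Y_a})\). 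Since \(a \in \mathcal{S}(s^N,\eta)\) was arbitrary, these are precisely the block conditions appearing in Definitions \ref{def:strong} and \ref{def:strong2}, and combining them with the positivity conditions above gives \(x^N \in \tilde{T}^N[X,s^N]_{\delta,\eta}\) and \(y^N \in \tilde{T}^N[Y_{\mathcal{S}},s^N]_{\delta,\eta}\).

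I do not expect a serious obstacle here: the entire content is that the blockwise decomposition underlying all three definitions shares the same active index set \(\mathcal{S}(s^N,\eta)\), so the standard marginalization argument applies verbatim inside each block. The only point requiring a moment's care is confirming that the marginals of \(P_{XY_a}\) are exactly \(P_X\) and \(P_{Y_a}\), which is immediate from the factorization \(P_{XY_a}(x,y) = P_X(x)W_a(y|x)\) in \eqref{eq:measure:1}; everything else is bookkeeping that matches the defining inequalities of the three sets against one another.
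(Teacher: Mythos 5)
Your proof is correct and matches the paper's treatment: the paper simply states that Proposition \ref{remarkConsist} is ``obtained immediately from definitions,'' and your argument is exactly the routine verification behind that claim (marginalizing the positivity condition coordinate-wise, and applying the standard consistency property of letter-typical sets block by block over the common active set \(\mathcal{S}(s^N,\eta)\)). No gaps.
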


\begin{defn}\label{def:condition}
For any given \(x^N \in \mathcal{X}^N\), the conditionally typical set of \(x^N\) under the state sequence \(s^N\) is defined as
\[
\tilde{T}^N[XY_{\mathcal{S}}, s^N|x^N]_{\delta,\eta}=\{y^N \in \mathcal{Y}^N: (x^N,y^N) \in  \tilde{T}^N[XY_{\mathcal{S}}, s^N]_{\delta,\eta}\}.
\]
\end{defn}

\begin{prop}\label{rem:joint}
Let \((X^N, Y^N(s^N))\) be a pair of random sequences with the conditional mass function
\begin{equation}\label{eq:pre:1}
\begin{array}{c}
\operatorname{Pr}\{Y^N(s^N)=y^N|X^N=x^N\} = \prod_{i=1}^N P_{Y_{s_i}|X}(y_i|x_i).
\end{array}
\end{equation}
Then for any \((x^N, y^N) \in \tilde{T}^N[XY_{\mathcal{S}},s^N]_{\delta,\eta}\), it follows that
\[
2^{-N[(1+\delta)\bar{H}_{P_{s^N}}(Y_{\mathcal{S}}|X) - \eta \log m_{XY_{\mathcal{S}}}]}
\leq \operatorname{Pr}\{Y^N(s^N)=y^N|X^N=x^N\}
\leq 2^{-N(1-\delta-\eta)\bar{H}_{P_{s^N}}(Y_{\mathcal{S}}|X)},
\]
where
\[
m_{XY_{\mathcal{S}}} = \min_{(x,y,s)\in \mathcal{X}\times\mathcal{Y}\times\mathcal{S}: P_{X}(x)W_s(y|x) > 0} P_{X}(x)W_s(y|x).
\]
\end{prop}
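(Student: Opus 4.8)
The plan is to follow exactly the pattern of Proposition \ref{remarkTY}, passing to logarithms and regrouping the coordinates by their state value. Since \eqref{eq:pre:1} factorizes the conditional law as \(\prod_{i=1}^N P_{Y_{s_i}|X}(y_i|x_i)\), taking \(\log_2\) turns the product into a sum over coordinates, which I regroup by state:
\[
\log \operatorname{Pr}\{Y^N(s^N)=y^N|X^N=x^N\} = \sum_{a\in\mathcal{S}} \sum_{i\in\mathcal{I}(a:s^N)} \log P_{Y_a|X}(y_i|x_i).
\]
I then split \(\mathcal{S}\) into the frequent states \(\mathcal{S}(s^N,\eta)\) and the remaining infrequent ones, and estimate the two groups separately. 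Throughout I use that membership \((x^N,y^N)\in\tilde{T}^N[XY_{\mathcal{S}},s^N]_{\delta,\eta}\) forces \(P_{XY_{s_i}}(x_i,y_i)>0\) for every \(i\), so each \(\log P_{Y_a|X}(y_i|x_i)\) is finite and nonpositive.

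For each frequent state \(a\in\mathcal{S}(s^N,\eta)\), Definition \ref{def:joint} gives directly that the subvector \((x_{\mathcal{I}(a:s^N)},y_{\mathcal{I}(a:s^N)})\) lies in \(T^{\mu_a}_{\delta}(P_{XY_a})\). Rewriting the inner sum by joint types and using the letter-typicality bounds \((1-\delta)\mu_a P_{XY_a}(x,y)\le N((x,y):(x_{\mathcal{I}(a:s^N)},y_{\mathcal{I}(a:s^N)}))\le(1+\delta)\mu_a P_{XY_a}(x,y)\) together with \(\log P_{Y_a|X}(y|x)\le 0\) and the identity \(\sum_{x,y}P_{XY_a}(x,y)\log P_{Y_a|X}(y|x)=-H(Y_a|X)\), I obtain the two-sided per-block estimate
\[
-(1+\delta)\mu_a H(Y_a|X) \;\le\; \sum_{i\in\mathcal{I}(a:s^N)} \log P_{Y_a|X}(y_i|x_i) \;\le\; -(1-\delta)\mu_a H(Y_a|X).
\]
For the infrequent states \(a\notin\mathcal{S}(s^N,\eta)\) there is no typicality constraint, so I control each coordinate crudely: from \(P_{XY_a}(x_i,y_i)>0\) and \(P_X(x_i)\le 1\) we get \(W_a(y_i|x_i)\ge P_X(x_i)W_a(y_i|x_i)\ge m_{XY_{\mathcal{S}}}\), hence \(\mu_a\log m_{XY_{\mathcal{S}}}\le \sum_{i\in\mathcal{I}(a:s^N)}\log P_{Y_a|X}(y_i|x_i)\le 0\). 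The crucial counting fact is that the total number of infrequent coordinates is at most \(N\eta\), since by \eqref{eq:pre:17} each infrequent state contributes at most \(N\eta/|\mathcal{S}|\) indices and there are at most \(|\mathcal{S}|\) such states, so \(\sum_{a\notin\mathcal{S}(s^N,\eta)}\mu_a\le N\eta\).

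Assembling the pieces, write \(N\bar{H}_{P_{s^N}}(Y_{\mathcal{S}}|X)=\sum_{a\in\mathcal{S}}\mu_a H(Y_a|X)\). The lower bound falls out cleanly: summing the frequent-state lower estimates over the enlarged index set \(\mathcal{S}\) is harmless because \(H(Y_a|X)\ge 0\), giving at least \(-(1+\delta)N\bar{H}_{P_{s^N}}(Y_{\mathcal{S}}|X)\), and the infrequent coordinates contribute at least \(\big(\sum_{a\notin\mathcal{S}(s^N,\eta)}\mu_a\big)\log m_{XY_{\mathcal{S}}}\ge N\eta\log m_{XY_{\mathcal{S}}}\) (the inequality flips since \(\log m_{XY_{\mathcal{S}}}<0\)), which reproduces exactly the claimed lower bound. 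The upper bound is where the real work lies: discarding the nonpositive infrequent coordinates and summing the frequent-state upper estimates yields only \(\log\operatorname{Pr}\le -(1-\delta)\sum_{a\in\mathcal{S}(s^N,\eta)}\mu_a H(Y_a|X)\), and I must then show that the entropy discarded with the infrequent block, \(\sum_{a\notin\mathcal{S}(s^N,\eta)}\mu_a H(Y_a|X)\), is absorbed by the \(\eta\,\bar{H}_{P_{s^N}}(Y_{\mathcal{S}}|X)\) slack built into the target exponent. This reconciliation — converting the ``at most \(N\eta\) atypical positions'' count into a bound of the form \(\sum_{a\notin\mathcal{S}(s^N,\eta)}\mu_a H(Y_a|X)\le \tfrac{\eta}{1-\delta}\,N\bar{H}_{P_{s^N}}(Y_{\mathcal{S}}|X)\), exploiting that \(\mathcal{S}(s^N,\eta)\) retains all but a \(\eta\)-fraction of the mass — is the step I expect to be delicate, and it is precisely the role played by the cutoff parameter \(\eta\) in Definition \ref{def:strong2}.
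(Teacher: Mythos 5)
Your route is exactly the paper's (the paper proves Proposition \ref{remarkTY} by stating the two block-wise estimates and asserting that the claim follows, and then says Proposition \ref{rem:joint} is proved ``similarly''): split the coordinates by state value, apply letter typicality on the frequent blocks, and use the crude \(m_{XY_{\mathcal{S}}}\) bound on the infrequent ones. Your per-block estimates are correct and your assembly of the lower bound is complete. The gap is the upper bound, and you have diagnosed its location correctly but not closed it. The inequality you defer to, \(\sum_{a\notin\mathcal{S}(s^N,\eta)}\mu_a H(Y_a|X)\le \frac{\eta}{1-\delta}N\bar{H}_{P_{s^N}}(Y_{\mathcal{S}}|X)\), does not follow from the index count \(\sum_{a\notin\mathcal{S}(s^N,\eta)}\mu_a\le N\eta\): that count bounds how many coordinates are discarded, not what fraction of the conditional entropy they carry, and ``retaining all but an \(\eta\)-fraction of the mass'' refers to positions, not to entropy. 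Concretely, take \(|\mathcal{X}|=1\), \(\mathcal{S}=\{0,1\}\), \(W_0(\cdot|x)\) deterministic (so \(H(Y_0|X)=0\)), \(W_1(\cdot|x)=(1/2,1/4,1/4)\) (so \(H(Y_1|X)=3/2\)), and \(s^N\) containing \(\mu_1\approx N\eta/4\) occurrences of state \(1\), so that \(\mathcal{S}(s^N,\eta)=\{0\}\). Then the left side of your inequality equals \(N\bar{H}_{P_{s^N}}(Y_{\mathcal{S}}|X)\) and the inequality fails for small \(\eta\). Worse, with \(y_i\) equal to the most likely output on every state-\(1\) position, the pair \((x^N,y^N)\) lies in \(\tilde{T}^N[XY_{\mathcal{S}},s^N]_{\delta,\eta}\) (infrequent blocks are constrained only by positivity) and has conditional probability \(2^{-\mu_1}\), which exceeds the proposition's claimed bound \(2^{-(3/2)(1-\delta-\eta)\mu_1}\) whenever \(\delta+\eta<1/3\).

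So the step you flag as ``delicate'' is in fact not provable: what your (and the paper's) estimates actually yield is the weaker upper bound \(2^{-N(1-\delta)[\bar{H}_{P_{s^N}}(Y_{\mathcal{S}}|X)+\eta\log m_{XY_{\mathcal{S}}}]}\), obtained by bounding the discarded entropy via \(H(Y_a|X)\le-\log m_{XY_{\mathcal{S}}}\); the correction term in the exponent should involve \(-\eta\log m_{XY_{\mathcal{S}}}\), mirroring the lower bound, rather than \(\eta\bar{H}_{P_{s^N}}(Y_{\mathcal{S}}|X)\). The paper's one-line proof performs the same unjustified absorption silently, so your proposal is faithful to the intended argument and more candid about where it strains; but as written it does not establish the stated upper bound, and no argument along this line can, because that bound fails in the configuration above.
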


\begin{cor}\label{cor1}
Let \(Y^N(s^N)\) be a random vector satisfying Formula \eqref{eq:pre:2}. For any \(x^N \in \tilde{T}^N[X,s^N]_{\delta,\eta}\), it follows that
\[
\operatorname{Pr}\{Y^N(s^N) \in \tilde{T}^N[XY_{\mathcal{S}}, s^N|x^N]_{\delta,\eta}\}
< 2^{-N[\bar{I}_{P_{s^N}}(X; Y_{\mathcal{S}}) - (2\delta +\eta) \bar{H}_{P_{s^N}}(Y_{\mathcal{S}}) + \eta \log m_{XY_{\mathcal{S}}}]}.
\]
\end{cor}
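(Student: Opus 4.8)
The plan is to bound the probability by the standard ``cardinality times maximum per-element weight'' argument, exploiting two different probability laws on $\mathcal{Y}^N$: the product-of-marginals law \eqref{eq:pre:2} that actually governs $Y^N(s^N)$ here, and the conditional law \eqref{eq:pre:1}, which I invoke only as a device to control the size of the conditionally typical set. First I would write
\[
\operatorname{Pr}\{Y^N(s^N) \in \tilde{T}^N[XY_{\mathcal{S}}, s^N|x^N]_{\delta,\eta}\} = \sum_{y^N \in \tilde{T}^N[XY_{\mathcal{S}}, s^N|x^N]_{\delta,\eta}} \operatorname{Pr}\{Y^N(s^N) = y^N\},
\]
where each weight is evaluated under \eqref{eq:pre:2}.

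For the per-element bound, note that every $y^N$ in $\tilde{T}^N[XY_{\mathcal{S}}, s^N|x^N]_{\delta,\eta}$ satisfies $(x^N, y^N) \in \tilde{T}^N[XY_{\mathcal{S}}, s^N]_{\delta,\eta}$ by Definition \ref{def:condition}, so Proposition \ref{remarkConsist} gives $y^N \in \tilde{T}^N[Y_{\mathcal{S}}, s^N]_{\delta,\eta}$. Proposition \ref{remarkTY} then yields $\operatorname{Pr}\{Y^N(s^N) = y^N\} < 2^{-N(1-\delta-\eta)\bar{H}_{P_{s^N}}(Y_{\mathcal{S}})}$, so the sum is strictly bounded above by $|\tilde{T}^N[XY_{\mathcal{S}}, s^N|x^N]_{\delta,\eta}| \cdot 2^{-N(1-\delta-\eta)\bar{H}_{P_{s^N}}(Y_{\mathcal{S}})}$.

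It remains to bound the cardinality, and this is where the second law enters. Regard $Y^N(s^N)$ instead as distributed by the conditional law \eqref{eq:pre:1} with $X^N = x^N$ fixed. Since $x^N \in \tilde{T}^N[X, s^N]_{\delta,\eta}$ and every $y^N$ in the conditionally typical set forms a jointly typical pair with $x^N$, Proposition \ref{rem:joint} gives the uniform lower bound $\operatorname{Pr}\{Y^N(s^N) = y^N | X^N = x^N\} \geq 2^{-N[(1+\delta)\bar{H}_{P_{s^N}}(Y_{\mathcal{S}}|X) - \eta \log m_{XY_{\mathcal{S}}}]}$. Because these conditional probabilities sum to at most $1$ over the typical set, the number of elements obeys $|\tilde{T}^N[XY_{\mathcal{S}}, s^N|x^N]_{\delta,\eta}| \leq 2^{N[(1+\delta)\bar{H}_{P_{s^N}}(Y_{\mathcal{S}}|X) - \eta \log m_{XY_{\mathcal{S}}}]}$.

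Combining the two estimates and collecting exponents leaves $(1+\delta)\bar{H}_{P_{s^N}}(Y_{\mathcal{S}}|X) - (1-\delta-\eta)\bar{H}_{P_{s^N}}(Y_{\mathcal{S}}) - \eta \log m_{XY_{\mathcal{S}}}$ in the exponent. Using $\bar{I}_{P_{s^N}}(X;Y_{\mathcal{S}}) = \bar{H}_{P_{s^N}}(Y_{\mathcal{S}}) - \bar{H}_{P_{s^N}}(Y_{\mathcal{S}}|X)$ rewrites the entropy difference as $-\bar{I}_{P_{s^N}}(X;Y_{\mathcal{S}})$, while the residual $\delta$-terms combine into $\delta[\bar{H}_{P_{s^N}}(Y_{\mathcal{S}}|X) + \bar{H}_{P_{s^N}}(Y_{\mathcal{S}})]$. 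The only genuine step is to relax this via $\bar{H}_{P_{s^N}}(Y_{\mathcal{S}}|X) \leq \bar{H}_{P_{s^N}}(Y_{\mathcal{S}})$ (a termwise consequence of $H(Y_s|X)\leq H(Y_s)$), which upper-bounds it by $2\delta \bar{H}_{P_{s^N}}(Y_{\mathcal{S}})$ and produces exactly the $(2\delta+\eta)\bar{H}_{P_{s^N}}(Y_{\mathcal{S}})$ term in the claim; since we are upper-bounding a probability, enlarging the exponent is the correct direction. I expect the only point requiring care to be keeping the two laws straight---using the marginal law for the per-element weight but the conditional law for the cardinality count---rather than any genuine analytic difficulty.
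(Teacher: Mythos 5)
Your proof is correct and follows the same ``cardinality times maximal per-element probability'' route that the paper intends when it declares Corollary \ref{cor1} a direct consequence of Proposition \ref{remarkTY} together with a cardinality bound on the conditionally typical set; your algebra, including the final relaxation via \(\bar{H}_{P_{s^N}}(Y_{\mathcal{S}}|X)\leq\bar{H}_{P_{s^N}}(Y_{\mathcal{S}})\) to reach the \((2\delta+\eta)\bar{H}_{P_{s^N}}(Y_{\mathcal{S}})\) term, checks out. The only (harmless) divergence is that the appendix attributes the cardinality step to Corollary \ref{cor2}, whereas the correct source is the lower bound of Proposition \ref{rem:joint} applied to the conditional law, exactly as you use it.
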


\begin{prop}\label{remarkTXY}
Let \((X^N, Y^N(s^N))\) be a pair of random sequences satisfying \eqref{eq:pre:1}. For any \(x^N \in \tilde{T}^N[X,s^N]_{\delta,\eta}\) and \(\delta < m_{XY_{\mathcal{S}}}\), we have
\[
\operatorname{Pr}\{Y^N(s^N) \in \tilde{T}^N[XY_{\mathcal{S}}, s^N|x^N]_{2\delta, \eta} | X^N = x^N\} > 1 - 2^{-N\nu_2}
\]
for some \(\nu_2 > 0\) \textbf{unrelated to the length \(N\) and state sequence \(s^N\)}.
\end{prop}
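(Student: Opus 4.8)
The plan is to reduce this state-dependent conditional typicality estimate to a finite collection of ordinary single-channel conditional typicality bounds, one for each state value occurring frequently in $s^N$, and then to combine them by a union bound whose exponent is controlled uniformly in $N$ and $s^N$.

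First I would unpack the target event. By Definitions \ref{def:joint} and \ref{def:condition}, the event $Y^N(s^N) \in \tilde{T}^N[XY_{\mathcal{S}}, s^N|x^N]_{2\delta,\eta}$ holds precisely when $P_{XY_{s_i}}(x_i, Y_i) > 0$ for every $i$ and $(x_{\mathcal{I}(a:s^N)}, Y_{\mathcal{I}(a:s^N)}) \in T^{\mu_a}_{2\delta}(P_{XY_a})$ for every $a \in \mathcal{S}(s^N,\eta)$. Since $x^N \in \tilde{T}^N[X,s^N]_{\delta,\eta}$ forces $P_X(x_i)>0$ for all $i$, and since, conditionally on $X^N=x^N$, each $Y_i$ is drawn from $P_{Y_{s_i}|X}(\cdot|x_i)$ by \eqref{eq:pre:1}, the positivity requirement $P_{XY_{s_i}}(x_i,Y_i)=P_X(x_i)P_{Y_{s_i}|X}(Y_i|x_i)>0$ holds almost surely and may be discarded. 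Writing $E_a$ for the event $\{(x_{\mathcal{I}(a:s^N)}, Y_{\mathcal{I}(a:s^N)}) \in T^{\mu_a}_{2\delta}(P_{XY_a})\}$, the target event therefore coincides, up to a null set, with $\bigcap_{a \in \mathcal{S}(s^N,\eta)} E_a$, so it suffices to bound $\Pr\{\bigcup_a E_a^c \mid X^N = x^N\}$.

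Next I would exploit the memoryless structure by grouping the coordinates of $Y^N(s^N)$ according to the state value $a = s_i$. For fixed $a$, the subvector $Y_{\mathcal{I}(a:s^N)}$ is, conditionally on $X^N = x^N$, a vector of independent coordinates, the $i$-th drawn from $P_{Y_a|X}(\cdot|x_i)$; that is, it is the output of the single DMC $P_{Y_a|X}$ fed with the fixed input block $x_{\mathcal{I}(a:s^N)}$. By Definition \ref{def:strong}, membership $x^N \in \tilde{T}^N[X,s^N]_{\delta,\eta}$ already yields $x_{\mathcal{I}(a:s^N)} \in T^{\mu_a}_{\delta}(P_X)$ for every $a \in \mathcal{S}(s^N,\eta)$, so each input block is $\delta$-typical for $P_X$. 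I can then invoke the ordinary letter-typical conditional typicality lemma for the channel $P_{Y_a|X}$ on a block of length $\mu_a$: each joint letter count $N(u,v : x_{\mathcal{I}(a:s^N)}, Y_{\mathcal{I}(a:s^N)})$ is a sum of independent indicators whose mean lies within relative $\delta$ of $\mu_a P_{XY_a}(u,v)$, and a Chernoff (Hoeffding) bound over the finitely many pairs $(u,v)$ — the doubling $\delta \mapsto 2\delta$ absorbing the input deviation on top of this fluctuation, and the hypothesis $\delta < m_{XY_{\mathcal{S}}}$ keeping the tolerance below the smallest positive joint probability so that the exponent stays strictly positive — gives $\Pr\{E_a^c \mid X^N = x^N\} \leq 2^{-\mu_a \nu(a)}$ with $\nu(a) > 0$ depending only on $P_X$ and $W_a$, not on $\mu_a$.

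Finally I would assemble the blocks with attention to uniformity. Put $\nu_0 = \min_{a \in \mathcal{S}} \nu(a) > 0$, a minimum over the \emph{finite} state set $\mathcal{S}$ and hence independent of $N$ and $s^N$. The defining property $\mu_a > N\eta/|\mathcal{S}|$ of $\mathcal{S}(s^N,\eta)$ in \eqref{eq:pre:17} then bounds each term by $2^{-\mu_a \nu_0} < 2^{-N\eta\nu_0/|\mathcal{S}|}$, and a union bound over the at most $|\mathcal{S}|$ values in $\mathcal{S}(s^N,\eta)$ gives $\Pr\{\bigcup_a E_a^c \mid X^N = x^N\} \leq |\mathcal{S}| \, 2^{-N\eta\nu_0/|\mathcal{S}|}$, which is at most $2^{-N\nu_2}$ for a suitable $\nu_2 > 0$ and all large $N$. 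Crucially, both $\nu_0$ and the block-length lower bound $N\eta/|\mathcal{S}|$ are free of any dependence on the particular $s^N$, which is exactly what makes the resulting $\nu_2$ unrelated to $N$ and $s^N$. I expect the only genuinely delicate step to be the single-channel conditional typicality estimate with an explicit, block-length-independent exponent; once that is secured, the grouping and the union bound are routine.
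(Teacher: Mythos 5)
Your proposal is correct and follows essentially the same route as the paper: the paper proves Proposition \ref{remarkTXY} ``similarly to Proposition \ref{remarkTX}'', i.e., by splitting the coordinates into the blocks $\mathcal{I}(a:s^N)$ for $a\in\mathcal{S}(s^N,\eta)$, applying the standard (conditional) letter-typicality bound of \cite{MU_IT} on each block, and obtaining a uniform exponent from $\mu_a > N\eta/|\mathcal{S}|$ together with the finiteness of $\mathcal{S}$. Your use of a union bound over the complements in place of the paper's product over independent blocks is an immaterial difference, and your explicit handling of the $\delta\mapsto 2\delta$ doubling and of the uniformity in $s^N$ matches what the paper's argument implicitly relies on.
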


\begin{rem}\label{rem:important}
If Proposition \ref{remarkTX} holds for some \(\nu_1 > 0\), then it still holds if we decrease the value of \(\nu_1\). That means the value of \(\nu_1\) can be sufficiently small, so can the value of \(\nu_2\) in Proposition \ref{remarkTXY}. In the subsequent sections, we always treat \(\nu_1\) and \(\nu_2\) as constant real numbers, whose values are sufficiently small. Moreover, the property that \(\nu_1\) and \(\nu_2\) are unrelated to \(s^N\), is of great significance. \textbf{This indicates that Propositions \ref{remarkTX} and \ref{remarkTXY} are possible to hold simultaneously for all state sequences from \(\mathcal{S}^N\)}.
\end{rem}

\section{Communication Models and Main Results}\label{sec:main}

This section presents the formal statements of the three communication models in this paper, and lists the main results.

Subsection \ref{sec:avc:csr} provides the capacity results of AVC-CSR. Unlike the classic AVC, capacities of stochastic and random codes over an AVC-CSR are identical. This model was first studied in \cite{stambler-1975}. We present the results here as a preliminary to the discussion in Subsection \ref{sec:main:avwc:csr}.

Subsection \ref{sec:avwc} discusses the secrecy capacity results of AVWC. Lower bounds on secrecy capacities of random code and stochastic code over the AVWC are given. Stochastic code is a key technique to deal with the problem of secure transmission over the wiretap channel \cite{wiretap}. It is also clear that random code does not provide a larger secrecy capacity of a discrete memoryless wiretap channel than stochastic code. However, it is proved in \cite{Bjelakovic-2013} that the secrecy capacity of random code over AVWC can be strictly larger than that of stochastic code. In fact, that kind of situation happens only when the capacity of stochastic code over the main AVC is 0.

Subsection \ref{sec:avwc:csr} discusses the secrecy capacity results of AVWC-CSR. It is proved that the secrecy capacity of stochastic code over the AVWC-CSR is identical to that of random code. However, the value of secrecy capacity is unknown in general. A lower bound on the secrecy capacity is given in that subsection.

When talking about the secrecy capacity, we always consider it with respect to the maximal decoding error probability since it is more challenging than that of the average decoding error.

\subsection{AVC with channel state sequence known at the receiver}\label{sec:avc:csr}

This subsection introduces the capacity results of AVC-CSR. Just like the discussion of AVC in Subsection \ref{sec:avc}, we will discuss the capacities of different classes of coding schemes with respect to different criteria of decoding error probabilities. Theorem \ref{thm:avc:csr} claims that the capacities of the following five cases are identical: 1) deterministic code with respect to the average decoding error probability, 2) stochastic code with the average and 3) maximal decoding error probabilities, and 4) random code with the average and 5) maximal decoding error probabilities. The capacity of deterministic code with the maximal decoding error probability is unknown. Proposition \ref{thm:avc:csr:dc} discusses the positivity of it.

\textbf{The definitions}

\begin{figure}
     \centering
     \includegraphics[width=12cm]{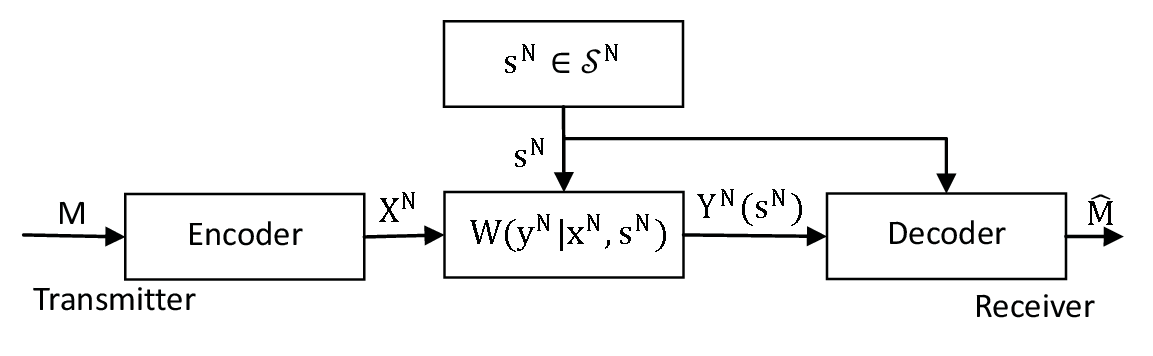}
     \caption{AVC-CSR, Arbitrarily varying channel with channel state sequence known at the receiver.}\label{Fig:avc:csr}
 \end{figure}

\begin{defn}
\emph{(AVC-CSR, Arbitrarily Varying Channel with channel states known at the receiver)} Just like the AVC, an AVC-CSR is also specified by a finite collection \(\mathcal{W}\) of transition probability matrices. The only difference is that the receiver has access to the channel state sequence in the case of AVC-CSR. This would affect the definition of the decoder. See Fig. \ref{Fig:avc:csr}.
\end{defn}

\begin{defn}\label{def:avc:csr:dc}
\emph{(Deterministic code over AVC-CSR)} Let an AVC-CSR \(\mathcal{W}\) be given. A deterministic code over it is specified by a pair of mappings \((f, \phi)\) with \(f: \mathcal{M} \mapsto \mathcal{X}^N\) and \(\phi: \mathcal{Y}^N \times \mathcal{S}^N \mapsto \mathcal{M}\). The maximal and average decoding error probabilities are defined as
      \[
      \lambda^{CSR}(\mathcal{W}, f, \phi) = \max_{s^N\in \mathcal{S}^N} \lambda^{CSR}(\mathcal{W}, f, \phi, s^N)
      \text{ and }
      \bar{\lambda}^{CSR}(\mathcal{W}, f, \phi) = \max_{s^N\in \mathcal{S}^N} \bar{\lambda}^{CSR}(\mathcal{W}, f, \phi, s^N)
      \]
      respectively, where
      \[
      \lambda^{CSR}(\mathcal{W}, f, \phi, s^N) = \max_{m \in \mathcal{M}}\lambda_m^{CSR}(\mathcal{W}, f, \phi, s^N), \bar{\lambda}^{CSR}(\mathcal{W}, f, \phi, s^N) = \frac{1}{|\mathcal{M}|}\sum_{m \in \mathcal{M}}\lambda_m^{CSR}(\mathcal{W}, f, \phi, s^N)
      \]
      and
      \[
      \lambda_m^{CSR}(\mathcal{W}, f, \phi, s^N) =  e_m^{CSR}(\mathcal{W}, f, \phi, s^N) = 1- W(\phi^{-1}(m, s^N)|f(m), s^N)
      \]
with the decoding set \(\phi^{-1}(m, s^N)\) given by
\[
\phi^{-1}(m, s^N) =\{y^N\in\mathcal{Y}^N: \phi(y^N,s^N)=m\}.
\]
\end{defn}

The stochastic code and random code over AVC-CSR are defined accordingly.

\textbf{The capacities}

Let the capacities of the deterministic, stochastic and random codes over a given AVC-CSR \(\mathcal{W}\) with respect to the maximal and average decoding error probabilities be denoted by \(C^{CSR-DC}(\mathcal{W})\), \(\bar{C}^{CSR-DC}(\mathcal{W})\), \(C^{CSR-SC}(\mathcal{W})\) \(\bar{C}^{CSR-SC}(\mathcal{W})\), \(C^{CSR-RC}(\mathcal{W})\) and \(\bar{C}^{CSR-RC}(\mathcal{W})\), respectively. We have the following results.
\begin{thm}\label{thm:avc:csr}
For every AVC-CSR \(\mathcal{W}\), the capacities of the deterministic, stochastic and random codes satisfy that
\[
\bar{C}^{CSR-DC}(\mathcal{W}) = C^{CSR-SC}(\mathcal{W}) = \bar{C}^{CSR-SC}(\mathcal{W}) = C^{CSR-RC}(\mathcal{W}) = \bar{C}^{CSR-RC}(\mathcal{W}) = \max_{X}\min_{s\in\mathcal{S}} I(X;Y_s),
\]
where \(\{(X,Y_s):s\in\mathcal{S}\}\) is a collection of random variables satisfying \eqref{eq:measure:1}.
\end{thm}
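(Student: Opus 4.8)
The plan is to establish the single-letter value $R^\star := \max_X \min_{s\in\mathcal S} I(X;Y_s)$ as a common upper bound (converse) for the \emph{largest} of the five quantities and as a common lower bound (achievability) for the two \emph{smallest} ones, and then to propagate these bounds through the elementary orderings among code classes and error criteria. Among the five, $\bar C^{CSR-RC}$ dominates all others (random $\supseteq$ stochastic $\supseteq$ deterministic, and average-error capacity $\geq$ maximal-error capacity), while $\bar C^{CSR-DC}$ and $C^{CSR-SC}$ are the two minimal elements. Hence it suffices to prove (a) $\bar C^{CSR-RC}\le R^\star$, (b) $\bar C^{CSR-DC}\ge R^\star$, and (c) $C^{CSR-SC}\ge R^\star$; the remaining inequalities then follow for free.

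\emph{Converse.} Fix any random code of rate $R$ whose average error is at most $\epsilon$ for every $s^N$. For an arbitrary $q\in\mathcal P(\mathcal S)$ I would feed an i.i.d. state sequence $S^N\sim q^N$ independent of the message $M$ and the common randomness $G$; since the code is good for every $s^N$, it is good in expectation over $S^N$. Because the decoder observes $(Y^N,S^N)$ and $S^N\perp (M,G)$, Fano's inequality gives $NR\le I(M;Y^N\mid S^N,G)+N\epsilon_N\le I(X^N;Y^N\mid S^N,G)+N\epsilon_N$. Single-letterizing the memoryless-given-state channel and using concavity of the map $P_X\mapsto \bar I_q(P_X;Y_{\mathcal S})=\sum_s q(s)I(P_X;W_s)$, I obtain $R\le \bar I_q(\bar P_X;Y_{\mathcal S})+\epsilon_N$, where $\bar P_X$ is the code's empirical input distribution and, crucially, does not depend on $q$. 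Minimizing the right-hand side over $q$ and noting that a linear functional on the simplex attains its minimum at a vertex yields $R\le \min_s I(\bar P_X;Y_s)\le R^\star$, which bounds all five capacities from above.

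\emph{Achievability.} Fix $P_X$ attaining $R^\star$ and draw a codebook of $2^{NR}$ codewords i.i.d. from the typical shell of $P_X$. Since the legitimate receiver knows $s^N$, it uses the state-aware joint-typicality decoder built on $\tilde T^N[XY_{\mathcal S},s^N\mid\cdot]_{\delta,\eta}$. The transmitted codeword is jointly typical with probability $\ge 1-2^{-N\nu_2}$ by Proposition \ref{remarkTXY}, while each of the competing codewords is falsely typical with probability $<2^{-N[\bar I_{P_{s^N}}(X;Y_{\mathcal S})-(2\delta+\eta)\bar H_{P_{s^N}}(Y_{\mathcal S})+\eta\log m_{XY_{\mathcal S}}]}$ by Corollary \ref{cor1}. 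The decisive observation is that $\bar I_{P_{s^N}}(X;Y_{\mathcal S})\ge \min_s I(X;Y_s)=R^\star$ by \eqref{eq:pre:12}, uniformly in $s^N$, and that by Remark \ref{rem:important} the exponent $\nu_2$ and the slack terms are uniform in $s^N$ as well. Hence, for $R=R^\star-\epsilon$ and $\delta,\eta$ small, a union bound shows the expected maximal error (over codebook, message and $s^N$) is exponentially small, giving a random code of rate $R^\star-\epsilon$ and thus $C^{CSR-RC}\ge R^\star$.

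\emph{Derandomization and the main obstacle.} The hardest step is converting this random code into a deterministic code with small average error and into a stochastic code with small maximal error, because a naive union bound over the $|\mathcal S|^N$ state sequences fails: the per-sequence error exponent need not exceed $\log|\mathcal S|$. I would resolve this exactly by Ahlswede's elimination technique, reducing the random code to a mixture of only polynomially many (in $N$) deterministic codes for which, simultaneously for every $s^N$, almost all constituents are good, so that the shared code index costs only $O(\log N)$ bits of common randomness, i.e. vanishing rate. The point that makes AVC-CSR cleaner than the classical AVC---and the reason all five capacities coincide with no positivity caveat---is that, with $s^N$ known at the decoder, the symmetrizability obstruction that can force $\bar C^{DC}=0<C^{RC}$ does not arise. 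Transmitting the $O(\log N)$ index over a short reliable prefix (possible whenever $R^\star>0$, the case $R^\star=0$ being trivial) yields $\bar C^{CSR-DC}\ge R^\star$, while folding the index into the encoder's private randomness together with the standard average-to-maximal upgrade for stochastic codes yields $C^{CSR-SC}\ge R^\star$. Combining (a)--(c) with the orderings and letting $\epsilon\to0$ forces all five quantities to equal $R^\star$.
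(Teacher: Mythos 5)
Your converse (the compound-channel argument: Fano, single-letterization, concavity in $P_X$ so that the saturating input distribution $\bar P_X$ is independent of $q$, then minimizing the linear functional over the simplex) and your random-coding step for $C^{CSR-RC}\ge R^\star$ match the paper's route; note that for the random code no union bound over $s^N$ is even needed, since $\lambda(\mathcal W,F,\Phi)=\max_{s^N}\max_m E[e_m]$ puts the expectation over the codebook innermost, so per-$(m,s^N)$ exponential decay of the expected error already suffices.

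The genuine gap is in your derandomization to $\bar C^{CSR-DC}\ge R^\star$. You propose to apply Ahlswede's elimination to get $N^2$ constituent deterministic codes and then transmit the $O(\log N)$-bit index over a ``short reliable prefix,'' asserting that this is ``possible whenever $R^\star>0$'' because ``the symmetrizability obstruction does not arise.'' That assertion is exactly (the positivity part of) what you are trying to prove, so the step is circular as written: sending even a polynomially growing number of index values with vanishing average error by a deterministic code is equivalent to $\bar C^{CSR-DC}>0$, and this is not obtainable from a simple pairwise-distinguishability argument --- the paper's example following Proposition \ref{thm:avc:csr:dc} exhibits an AVC-CSR in which no single input pair is distinguishable under every state, yet $\bar C^{CSR-DC}>0$. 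The paper avoids the prefix entirely: Lemma \ref{lem:avc:dc} shows that for each fixed $s^N$ the probability that the random codebook's average error exceeds $\epsilon$ is doubly exponentially small, of order $\exp_2(-L\epsilon/2)$ with $L=2^{NR}$ (the Chernoff bound being pushed through the dependence among the sequentially built decoding sets via Lemma \ref{lem:rv}), so the union bound over all $|\mathcal S|^N$ state sequences succeeds and a single deterministic codebook with small average error for every $s^N$ exists outright. Only after $\bar C^{CSR-DC}\ge R^\star$ is secured this way does the paper invoke permutations and elimination (as in Lemma \ref{lem:avwc:csr:rc}) to reach the stochastic and random-code statements --- the reverse of your ordering. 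To repair your proof you would need either this doubly-exponential concentration argument or an independent proof that the AVC-CSR admits a positive-rate deterministic code under the average error criterion whenever $R^\star>0$.
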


The converse half of Theorem \ref{thm:avc:csr} is established by recalling the facts that the capacity of an AVC-CSR cannot exceed that of the corresponding compound channel, and that the capacity of the compound channel specified by $\mathcal{W}$ is $\max_{X}\min_{s\in\mathcal{S}} I(X;Y_s)$. The direct half of Theorem \ref{thm:avc:csr} was first established in \cite{stambler-1975}, see also Problem 2.6.14 in \cite{CT_DMS}. We present the proof of the direct half in Section \ref{sec:thm:avc:csr} as a preliminary to the proof of Theorem \ref{thm:avwc:csr:rc}.

\begin{rem}\label{rem:avc:csr}
Comparing the results in Section \ref{sec:avc} and that in Theorem \ref{thm:avc:csr}, we see that the capacity of an AVC may be strictly smaller than that of the corresponding AVC-CSR. For example, let
\[
\mathcal{W} = \Bigg\{
\Bigg[
\begin{matrix}
1 & 0\\
0 & 1
\end{matrix}
\Bigg]
,
\Bigg[
\begin{matrix}
0 & 1\\
1 & 0
\end{matrix}
\Bigg]
 \Bigg\}.
\]
It follows clearly that the capacity of AVC \(\mathcal{W}\) is 0 even if random code is applied. On the other hand, the transmission rate \(R =  1\) can be achieved over AVC-CSR \(\mathcal{W}\) by a deterministic code since the channel is totally noiseless when the channel states are known at the receiver.
\end{rem}

The capacity of deterministic code over the AVC-CSR with respect to the maximal decoding error probability, i.e., the value of \(C^{CSR-DC}(\mathcal{W})\), is still unknown. The following proposition gives the necessary and sufficient condition ensuring the positivity of \(C^{CSR-DC}(\mathcal{W})\).

\begin{prop}\label{thm:avc:csr:dc}
For every AVC-CSR \(\mathcal{W}\), the capacity \(C^{CSR-DC}(\mathcal{W})\) of deterministic code with respect to the maximal decoding error probability is positive if and only if there exist a pair of \(x,x' \in \mathcal{X}\), such that for all channel states \(s \in \mathcal{S}\), there exists \(y \in \mathcal{Y}\) satisfying \(W(y|x,s) \neq W(y|x',s).\)
\end{prop}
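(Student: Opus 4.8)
The plan is to prove the two implications separately: the forward (``if'') direction by an explicit code construction, and the converse (``only if'') direction by exhibiting a jamming state sequence that makes two codewords statistically indistinguishable even when $s^N$ is revealed to the decoder. For a pair $x,x'\in\mathcal{X}$ and a state $s\in\mathcal{S}$, the natural measure of distinguishability is the Bhattacharyya coefficient $\rho_s(x,x')=\sum_{y\in\mathcal{Y}}\sqrt{W_s(y|x)W_s(y|x')}$. By the Cauchy--Schwarz equality condition, $\rho_s(x,x')<1$ precisely when $W_s(\cdot|x)\neq W_s(\cdot|x')$, and $\rho_s(x,x')=1$ otherwise; this single quantity drives both halves of the argument.

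For the ``if'' direction, suppose $x,x'$ satisfy the stated condition, so that $\rho^{\ast}:=\max_{s\in\mathcal{S}}\rho_s(x,x')<1$, the maximum being attained since $\mathcal{S}$ is finite. I would restrict the input alphabet to $\{x,x'\}$ and transmit through a binary block code: fix a small $\alpha>0$ and, invoking the Gilbert--Varshamov bound, take a binary code $\mathcal{B}\subseteq\{0,1\}^N$ of rate $R>0$ and minimum Hamming distance at least $\alpha N$, then map the bits $0,1$ to the channel inputs $x,x'$. The receiver, knowing $s^N$, decodes by maximum likelihood. For any two codewords at Hamming distance $d\geq\alpha N$ and any state sequence $s^N$, the likelihood-ratio comparison involves only the coordinates $D$ where they disagree, so a standard Chernoff/Bhattacharyya estimate bounds the pairwise error by $\prod_{i\in D}\rho_{s_i}(x,x')\leq(\rho^{\ast})^{\alpha N}$, a bound that is uniform in $s^N$. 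A union bound over the at most $2^{NR}$ competing messages then gives a maximal error, over all messages and all state sequences, of at most $2^{NR}(\rho^{\ast})^{\alpha N}$, which tends to $0$ once $R<\alpha\log_2(1/\rho^{\ast})$; for $\alpha$ small such an $R>0$ is compatible with the Gilbert--Varshamov constraint. Hence $C^{CSR-DC}(\mathcal{W})\geq R>0$.

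For the ``only if'' direction I would argue the contrapositive. If the condition fails, then for every pair of distinct inputs $x\neq x'$ there is a state $s(x,x')$ with $W_{s(x,x')}(\cdot|x)=W_{s(x,x')}(\cdot|x')$. Given any deterministic code with at least two messages, pick two messages $m_1,m_2$; if $f(m_1)=f(m_2)$ they are trivially confusable, and otherwise, letting $D$ be the set of coordinates where $f(m_1)$ and $f(m_2)$ differ, set $s_i=s(f(m_1)_i,f(m_2)_i)$ for $i\in D$ and arbitrarily elsewhere. Coordinate by coordinate the two product transition laws then coincide, so $W(\cdot|f(m_1),s^N)=W(\cdot|f(m_2),s^N)$. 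Under this $s^N$ the outputs are identically distributed regardless of which message was sent, so for any decoder the disjoint decoding regions $\phi^{-1}(m_1,s^N)$ and $\phi^{-1}(m_2,s^N)$ carry probability mass summing to at most one, forcing at least one of $m_1,m_2$ to incur error $\geq 1/2$. Thus every code with two or more messages has maximal error at least $1/2$, giving $C^{CSR-DC}(\mathcal{W})=0$.

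The main obstacle is the uniformity over the state sequence in the forward direction. It is tempting to deduce positivity from Theorem \ref{thm:avc:csr} (which yields a positive \emph{average}-error capacity, since $\min_{s}I(X;Y_s)>0$ when $X$ is uniform on $\{x,x'\}$) and then expurgate to the maximal-error criterion; but the usual expurgation fails here, because controlling $\frac{1}{|\mathcal{M}|}\sum_m e_m(s^N)$ for each $s^N$ separately does not control $\sum_m\max_{s^N}e_m(s^N)$, and the worst state sequence may vary from codeword to codeword. The Bhattacharyya estimate sidesteps this precisely because the bound $(\rho^{\ast})^{\alpha N}$ depends on $s^N$ only through $\rho^{\ast}$, so a single minimum-distance code is simultaneously good against all state sequences; verifying this uniform bound, together with the compatibility of the rate and distance constraints, is the one place that needs care.
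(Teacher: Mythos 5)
Your proof is correct. The ``only if'' direction is essentially the argument in the paper's Appendix: both select the per-coordinate jamming state $s(x,x')$ under which the two letters are indistinguishable, build $s^N$ coordinatewise from two codewords, and conclude that identical output laws on disjoint decoding sets force an error of at least $1/2$ (the paper phrases this by fixing one ``good'' message $m_0$ and showing every other message then errs with probability $>1/2$, but the content is the same). The ``if'' direction, however, is genuinely different. The paper restricts to the two messages $x_0^{\otimes k'}$ and $x_1^{\otimes k'}$ with $k'=k|\mathcal{S}|$, uses the pigeonhole principle to find a state $s_0$ occurring at least $k$ times, and applies a typicality test on those coordinates to realize a virtual binary channel with crossover probability below $2^{-k\nu_1}$, from which positivity follows. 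Your route instead takes a Gilbert--Varshamov code of positive rate and linear minimum distance over $\{x,x'\}$ and bounds the pairwise maximum-likelihood error by the product of Bhattacharyya coefficients $\prod_{i\in D}\rho_{s_i}(x,x')\leq(\rho^\ast)^{\alpha N}$, which is uniform in $s^N$ because $\rho^\ast=\max_{s}\rho_s(x,x')<1$ by finiteness of $\mathcal{S}$; a union bound then kills the maximal error for any rate below $\alpha\log_2(1/\rho^\ast)$. Your version buys an explicit, quantitative error exponent and a concrete positive rate in a single step, and it avoids the slightly informal final step of the paper (the reduction to ``a binary channel with small crossover probability, whose capacity is positive'' still leaves a residual arbitrarily-varying binary channel to be coded over). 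The paper's version stays within the typicality machinery it develops elsewhere. Your closing remark about why expurgation from Theorem \ref{thm:avc:csr} fails under the maximal-error criterion is accurate and is exactly the reason a direct uniform bound is needed.
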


The proof of Proposition \ref{thm:avc:csr:dc} is given in Appendix \ref{app:thm:avc:csr:dc}.

\begin{rem}
The capacity of deterministic code over an AVC-CSR with respect to the average decoding error can be strictly larger than that of maximal decoding error. As an example, let \(\mathcal{S}=\mathcal{X}=\{0,1,2\}\), \(\mathcal{Y}=\{0,1\}\) and
\[
\mathcal{W} = \Bigg\{
\Bigg[
\begin{matrix}
1/3 & 2/3\\
2/3 & 1/3\\
1/2 & 1/2\\
\end{matrix}
\Bigg]
,
\Bigg[
\begin{matrix}
1/3 & 2/3\\
1/2 & 1/2\\
2/3 & 1/3\\
\end{matrix}
\Bigg]
,
\Bigg[
\begin{matrix}
1/2 & 1/2\\
1/3 & 2/3\\
2/3 & 1/3\\
\end{matrix}
\Bigg]
 \Bigg\}.
\]
Proposition \ref{thm:avc:csr:dc} claims that the capacity of deterministic code with the maximal decoding error is 0, while Theorem \ref{thm:avc:csr} claims that the capacity with respect to the average decoding error is \((4-2\log 3)/3\).
\end{rem}

\subsection{AVWC with state sequence unknown at the receiver}\label{sec:avwc}

This subsection gives a pair of lower bounds on the secrecy capacities of stochastic and random codes over the AVWC with respect to the strong secrecy criterion. The formal definition of the general AVWC is given in Definition \ref{def:avwc}, see also Remark \ref{rem1} for the explanation of generality. Definitions \ref{def:avwc:sc} and \ref{def:avwc:rc} formulate the achievability of stohastic code and random code, respectively. The corresponding lower bounds are given in Theorems \ref{thm:avwc:rc} and \ref{thm:avwc:sc}, respectively.

\begin{figure}
     \centering
     \includegraphics[width=12cm]{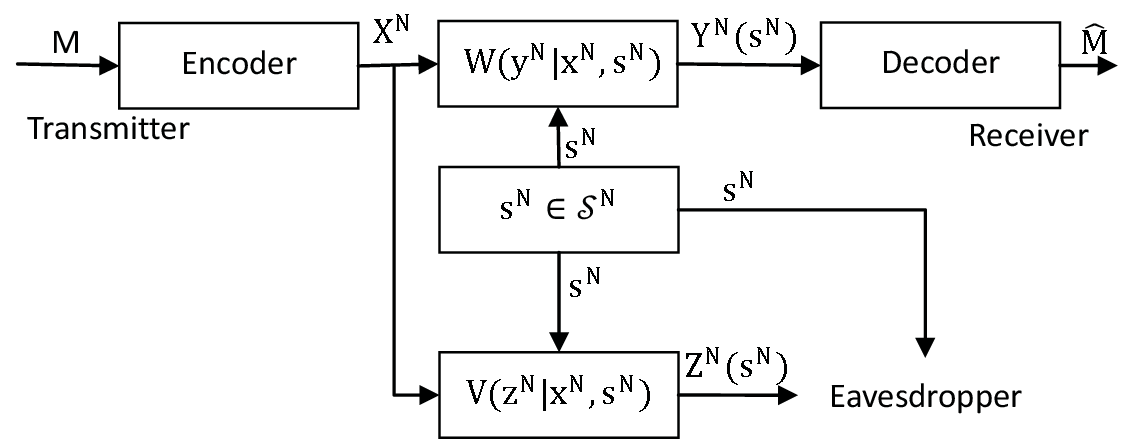}
     \caption{AVWC, Arbitrarily varying wiretap channel with channel state sequence unknown at the receiver.}\label{Fig:avwc}
 \end{figure}

 \textbf{The definitions}

\begin{defn}\label{def:avwc}
\emph{(AVWC, arbitrarily varying wiretap channel)} An AVWC, depicted in Fig. \ref{Fig:avwc}, is specified by a pair \((\mathcal{W}, \mathcal{V})\), where \(\mathcal{W} = \{W_s(y|x): x\in\mathcal{X}, y\in\mathcal{Y} \text{ and } s\in\mathcal{S}\}\) is a finite collection of transition probability matrices specifying the main AVC, and \(\mathcal{V} = \{V_s(z|x): x\in\mathcal{X}, z\in\mathcal{Z} \text{ and }  s\in\mathcal{S}\}\) specifies the wiretap AVC. Let \(X^N\) be the input of the channels, and \(Y^N(s^N)\) and \(Z^N(s^N)\) be the outputs of main AVC and wiretap AVC, respectively, when the state sequence is \(s^N\). We have
\begin{equation}\label{eq:def:avwc:y}
\operatorname{Pr}\{Y^N(s^N)=y^N|X^N=x^N\}
= W(y^N|x^N,s^N)
= \prod_{i=1}^N W_{s_i}(y_i|x_i)
\end{equation}
and
\begin{equation}\label{eq:def:avwc:z}
\operatorname{Pr}\{Z^N(s^N)=z^N|X^N=x^N\}
= V(z^N|x^N,s^N)
= \prod_{i=1}^N V_{s_i}(z_i|x_i).
\end{equation}
\end{defn}

\begin{rem}\label{rem1}
\emph{(Generality of Definition \ref{def:avwc})} In the definition of AVWC, we assume that the main AVC and the wiretap AVC share the same state set, which is possible to be false in general. However, any communication model of AVWC can be transformed into an equivalent model covered by Definition \ref{def:avwc}. To show this, consider an AVWC \((\mathcal{W}, \mathcal{V})\) with \(\mathcal{W} = \{W_s: s \in \mathcal{S}\}\) and \(\mathcal{V} = \{V_t: t \in \mathcal{T}\}\), where \(\mathcal{T}\) is a finite state set of the wiretap AVC \(\mathcal{V}\). Given the state sequences \(s^N\) of the main AVC and \(t^N\) of the wiretap AVC, the channel input and outputs satisfy that
\[
\operatorname{Pr}\{Y^N(s^N)=y^N|X^N=x^N\}
= W(y^N|x^N,s^N)
= \prod_{i=1}^N W_{s_i}(y_i|x_i)
\]
and
\[
\operatorname{Pr}\{Z^N(t^N)=z^N|X^N=x^N\}
= V(z^N|x^N,t^N)
= \prod_{i=1}^N V_{t_i}(z_i|x_i).
\]
The equivalent AVWC \((\tilde{\mathcal{W}}, \tilde{\mathcal{V}})\) can be constructed with \(\tilde{\mathcal{W}} = \{\tilde{W}_{s,t} = W_s: (s, t) \in \mathcal{S} \times \mathcal{T}\}\) and \(\tilde{\mathcal{V}} = \{\tilde{V}_{s,t} = V_t: (s, t) \in \mathcal{S} \times \mathcal{T}\}\), where the main AVC and the wiretap AVC share the same state set \(\mathcal{S} \times \mathcal{T}\). This indicates that the model of AVWC in Definition \ref{def:avwc} is general.
\end{rem}

\begin{defn}\label{def:avwc:sc}
\emph{(Secure achievability of stochastic code over AVWC)} For any given AVWC \((\mathcal{W}, \mathcal{V})\), a non-negative real number \(R\) is said to be achievable by stochastic code, with respect to the maximal decoding error probability and the strong secrecy criterion, if for every \(\epsilon > 0\), there exists a stochastic code \((F, \phi)\) over that AVWC such that
\[
\frac{1}{N} \log |\mathcal{M}| > R - \epsilon, \lambda(\mathcal{W}, F, \phi) < \epsilon
\text{ and } \max_{s^N \in \mathcal{S}^N} I(M; Z^N(s^N)) < \epsilon
\]
when \(N\) is sufficiently large, where \(M\) is the source message uniformly distributed over the message set \(\mathcal{M}\) and \(Z^N(s^N)\) is given by \eqref{eq:def:avwc:z}.
\end{defn}

\begin{defn}\label{def:avwc:rc}
\emph{(Secure achievability of random code over AVWC)} For any given AVWC \((\mathcal{W}, \mathcal{V})\), a non-negative real number \(R\) is said to be achievable by random code, with respect to the maximal decoding error probability and the strong secrecy criterion, if for every \(\epsilon > 0\), there exists a random code code \((F, \Phi)\), such that
\begin{equation}\label{eq:avwc:1}
\frac{1}{N} \log |\mathcal{M}| > R - \epsilon, \lambda(\mathcal{W}, F, \Phi) < \epsilon
\text{ and } \max_{s^N \in \mathcal{S}^N} I(M; Z^N(s^N)|\Phi) < \epsilon
\end{equation}
when \(N\) is sufficiently large, where \(M\) is the source message uniformly distributed over the message set \(\mathcal{M}\) and \(Z^N(s^N)\) is given by \eqref{eq:def:avwc:z}.
\end{defn}

The definition on secrecy of random code, namely \(\max_{s^N \in \mathcal{S}^N} I(M; Z^N(s^N)|\Phi) < \epsilon\), first introduced in \cite{Boch-2013}, is a bit different from that of stochastic code. To have a further explanation on the secrecy criterion, notice that
\[
I(M; Z^N(s^N)|\Phi) = I(M; Z^N(s^N)|\Phi) + I(M; \Phi) = I(M; Z^N(s^N),\Phi).
\]
Therefore, the requirement of \(I(M; Z^N(s^N)|\Phi) < \epsilon\) implies that the eavesdropper is almost ignorant of the source message, even if the full information of the random decoder is informed.
It is reasonable to consider the case where the information of the random decoder is exposed to the eavesdropper.
According to the definition of random code, the encoder knows exactly which deterministic decoder is used before each transmission. In consequence, when \(H(\Phi)\) is large enough, the specification of the random decoder \(\Phi\) can serve as the secret key for secure transmission. To be particular, we have Proposition \ref{prop1} as follows.

\begin{prop}\label{prop1}
For any given AVWC \((\mathcal{W}, \mathcal{V})\), real numbers \(R < \max_{X}\min_{q\in\mathcal{P}(\mathcal{S})} I(X;Y_q)\) and \(\epsilon > 0\), there exists a random code \((F, \Phi)\)such that
\[
\frac{1}{N} \log |\mathcal{M}| > R - \epsilon, \lambda(\mathcal{W}, F, \Phi) < \epsilon
\text{ and } \max_{s^N \in \mathcal{S}^N} I(M; Z^N(s^N)) = 0
\]
when \(N\) is sufficiently large, where \(M\) is the source message uniformly distributed over the message set \(\mathcal{M}\) and \(Z^N(s^N)\) is given by \eqref{eq:def:avwc:z}.
\end{prop}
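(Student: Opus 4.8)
The plan is to use the common randomness of a random code as a one-time-pad secret key, converting a merely reliable random code for the main AVC into a perfectly secret one. Since $R < \max_{X}\min_{q\in\mathcal{P}(\mathcal{S})} I(X;Y_q) = C^{RC}(\mathcal{W})$, formula \eqref{eq:pre:6} guarantees that for all sufficiently large $N$ there is a random code distributed over a family $\{(f_g,\phi_g): g\in\mathcal{G}\}$ of deterministic codes over the main AVC $\mathcal{W}$, with common randomness $G$ on $\mathcal{G}$, a message set $\mathcal{M}$ satisfying $\frac{1}{N}\log|\mathcal{M}| > R - \epsilon$, and maximal decoding error below $\epsilon$. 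I would identify $\mathcal{M}$ with the cyclic group $\mathbb{Z}_{|\mathcal{M}|}$ and fix the group operation $\oplus$ with inverse $\ominus$.

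Next I would enlarge the common randomness. Let $K$ be uniform on $\mathcal{M}$, with $(G, K, M)$ mutually independent, and define a new random code $(F,\Phi)$ whose common randomness is the pair $(G,K)$ (shared by transmitter and legitimate receiver, unknown to the eavesdropper): to send $M=m$ the transmitter emits $f_G(m \oplus K)$, and the receiver outputs $\phi_G(Y^N) \ominus K$. This is a legitimate random code indexed by $\mathcal{G}\times\mathcal{M}$. Reliability is inherited: for every $s^N$ and every $m$, conditioning on $(G,K)=(g,k)$ the error event coincides with the original code's misdecoding of the codeword indexed by $m\oplus k$, so averaging over $k$ gives
\[
\lambda_m(\mathcal{W}, F, \Phi, s^N) = \frac{1}{|\mathcal{M}|}\sum_{k\in\mathcal{M}} \lambda_{m\oplus k}^{\mathrm{old}}(\mathcal{W}, s^N) \le \lambda(\mathcal{W}, F, \Phi) < \epsilon ,
\]
where $\lambda_{m\oplus k}^{\mathrm{old}}$ denotes the error of the original code. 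Hence the maximal decoding error of the new code stays below $\epsilon$ for every state sequence.

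For secrecy, set $W = M \oplus K$. Because $K$ is uniform on $\mathcal{M}$ and independent of $(G,M)$, the pair $(G,W)$ is independent of $M$: one checks directly that $\Pr\{G=g, W=w, M=m\} = \Pr\{G=g\}\Pr\{M=m\}/|\mathcal{M}| = \Pr\{G=g,W=w\}\Pr\{M=m\}$. The eavesdropper's observation $Z^N(s^N)$ is obtained by passing the transmitted input $f_G(W)$ through the memoryless wiretap AVC $\mathcal{V}$ under $s^N$, so $M \rightarrow (G,W) \rightarrow Z^N(s^N)$ is a Markov chain. Combined with $M \perp (G,W)$ this yields $I(M; Z^N(s^N)) = 0$; since the computation never uses the value of $s^N$, it holds for all state sequences simultaneously and $\max_{s^N\in\mathcal{S}^N} I(M; Z^N(s^N)) = 0$.

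The construction itself is short, so the real content is conceptual rather than computational. The point I would emphasize is that the secrecy holds uniformly over all $s^N$ essentially for free, because the key $K$ masks $M$ before the channel ever acts and the eavesdropper, lacking the common randomness $(G,K)$, only ever sees a function of the state-independent variable $(G,W)$. The main thing to verify carefully is therefore the Markov/independence structure, namely that the wiretapper's output depends on the private data only through $(G,W)$ and never touches $M$ directly, together with the preservation of the \emph{maximal} (not merely average) error under the key addition, which is what the averaging identity above secures. This also makes transparent the role, discussed before the statement, of the common randomness serving as a secret key of positive rate.
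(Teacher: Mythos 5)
Your construction is correct, but it is not the route the paper takes: the paper disposes of Proposition \ref{prop1} in one line by citing Theorem 1 of \cite{Boch-2016}, whereas you give a self-contained proof from first principles. Your argument is the standard one-time-pad reduction that presumably underlies the cited theorem: start from the rate-$R$ random code for the main AVC guaranteed by \eqref{eq:pre:6}, enlarge the common randomness by an independent uniform key $K$ on $\mathcal{M}$, transmit $f_G(M\oplus K)$, and observe that the eavesdropper's output is a function of $(G, M\oplus K, s^N)$ alone, with $(G, M\oplus K)$ independent of $M$, so the unconditional mutual information vanishes identically for every $s^N$. This is exactly what the statement requires (note that the statement asks for $I(M;Z^N(s^N))=0$, not $I(M;Z^N(s^N)\mid\Phi)$, and indeed the latter would not be small for your scheme since $\Phi$ contains $K$ --- your proof correctly targets the unconditional quantity). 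Your approach buys transparency and makes visible why the CR rate must be at least $R$, which the paper only mentions in the remark following the proposition. One small notational slip: in your reliability chain the middle bound should read
\[
\lambda_m(\mathcal{W},F,\Phi,s^N)=\frac{1}{|\mathcal{M}|}\sum_{k\in\mathcal{M}}\lambda^{\mathrm{old}}_{m\oplus k}(\mathcal{W},s^N)
=\bar{\lambda}^{\mathrm{old}}(\mathcal{W},s^N)\leq \lambda^{\mathrm{old}}(\mathcal{W})<\epsilon,
\]
i.e.\ the new per-message error equals the \emph{old} average error and is dominated by the \emph{old} maximal error; as written you bounded it by $\lambda(\mathcal{W},F,\Phi)$, which is the quantity being estimated. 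This is cosmetic and does not affect the validity of the proof.
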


\begin{proof}
It is a direct consequence of Theorem 1 in \cite{Boch-2016}.
\end{proof}

\begin{rem}
According to the proposition, when the transmission of CR is absolutely secure, the presence of the wiretapper does not affect the secure transmission, and the capacity \(\max_{X}\min_{q\in\mathcal{P}(\mathcal{S})} I(X;Y_q)\) of the main AVC \(\mathcal{W}\) can be achieved, no matter what the wiretap AVC \(\mathcal{V}\) is. However, this requires the rate of CR to be sufficiently large. The authors of \cite{Boch-2016} considered a more general case where the rate of CR was limited, and obtained a multi-letter capacity.
\end{rem}

\begin{rem}
When the random decoder \(\Phi\) is deterministic, the random code is specialized as a stochastic code, in which case
\[
I(M; Z^N(s^N)|\Phi)=I(M; Z^N(s^N)).
\]
Therefore, Definition \ref{def:avwc:rc} is consistent with Definition \ref{def:avwc:sc}.
\end{rem}

\textbf{Lower bounds on secrecy capacities}

\begin{thm}\label{thm:avwc:rc}
\emph{(Lower bound on the secrecy capacity of random code over AVWC)} Given an AVWC \((\mathcal{W}, \mathcal{V})\), each real number \(R\) satisfying
\begin{equation}\label{eq:thm:avwc:rc}
R \leq \max_{U} [\min_{q\in\mathcal{P}(\mathcal{S})}I(U;Y_q) - \max_{s\in\mathcal{S}}I(U; Z_s)]
\end{equation}
is achievable by random code with respect to the maximal decoding error probability and the strong secrecy criterion, where \(U\) is an auxiliary random variable satisfying
\begin{equation}\label{eq:thm:avwc:rc:1}
\operatorname{Pr}\{U=u,Z_s = z\}
= \sum_{x\in\mathcal{X}}\operatorname{Pr}\{U=u, X=x\}V_s(z|x)
\end{equation}
and
\begin{equation}\label{eq:thm:avwc:rc:2}
\Pr\{U=u, Y_q=y\}=\sum_{x\in\mathcal{X}}\Pr\{U=u,X=x\}W_q(y|x)
\end{equation}
with \(W_q\) given by \eqref{eq:pre:4}. It suffices to let \(U\) be distributed over an alphabet \(\mathcal{U}\), whose size satisfies \(|\mathcal{U}| \leq |\mathcal{X}|\).
\end{thm}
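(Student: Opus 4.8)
The plan is to establish achievability of the rate
\[
R \leq \max_{U}\Bigl[\min_{q\in\mathcal{P}(\mathcal{S})}I(U;Y_q) - \max_{s\in\mathcal{S}}I(U;Z_s)\Bigr]
\]
by a random-coding argument that combines Ahlswede's elimination technique with Csisz\'{a}r's almost independent coloring scheme, using the state-dependent typical sequences of Subsection \ref{sec:typical} to control both reliability and secrecy simultaneously across all state sequences. Fix an auxiliary variable \(U\) with \(U\to X\to(Y_{\mathcal{S}},Z_{\mathcal{S}})\) achieving (or nearly achieving) the right-hand side. The first step is to build a good random codebook: generate codewords i.i.d.\ according to \(P_U\) (or the induced \(P_X\)), and organize \(|\mathcal{M}|\cdot|\mathcal{K}|\) codewords into a two-layer structure where the outer index \(m\in\mathcal{M}\) is the secret message and the inner index \(k\in\mathcal{K}\) is a randomizing (dummy) index whose rate is chosen near \(\max_{s\in\mathcal{S}}I(U;Z_s)\). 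The total rate \(\tfrac1N\log(|\mathcal{M}||\mathcal{K}|)\) is taken just below \(\min_{q}I(U;Y_q)\), so that by the capacity result for random codes over the AVC (formula \eqref{eq:pre:6}, with \(U\) playing the role of the channel input) the legitimate receiver can decode the pair \((m,k)\) reliably against every state sequence with maximal error probability vanishing in \(N\).

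The \textbf{reliability step} is handled by showing that a typical random codebook admits a decoder \(\phi\) whose error probability is below \(\epsilon\) uniformly in \(s^N\). Here the state-dependent typicality machinery is essential: using Proposition \ref{remarkTX} and Proposition \ref{remarkTXY}, the event that the transmitted codeword and its channel output are jointly typical under \(s^N\) has probability exceeding \(1-2^{-N\nu}\) with \(\nu>0\) \emph{unrelated to \(s^N\)} (Remark \ref{rem:important}), which is precisely what permits a single decoding rule to work for all \(s^N\in\mathcal{S}^N\) simultaneously. Corollary \ref{cor1} bounds the number of output sequences jointly typical with a wrong codeword, giving the usual exponential decay of the probability of confusion once the sum-rate stays below \(\min_q\bar{I}_{P_{s^N}}(U;Y_{\mathcal{S}})\ge\min_q I(U;Y_q)\) for every type \(P_{s^N}\). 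A union bound over the (polynomially many in the relevant type classes, exponentially few relative to \(N\nu\)) competing codewords and over messages yields reliability against the worst-case state sequence.

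The \textbf{secrecy step} is where the coloring scheme enters. For each fixed message \(m\), the inner codewords \(\{X^N(m,k):k\in\mathcal{K}\}\) form a subcode, and one must show that for every state sequence \(s^N\) the eavesdropper's observation \(Z^N(s^N)\) reveals almost nothing about \(m\), i.e.\ \(\max_{s^N}I(M;Z^N(s^N))<\epsilon\). The strategy is to invoke the almost-independent-coloring lemmas of Section \ref{sec:basic}: one shows that on a typical codebook the channel \(U\to Z_s\) has small enough ``image size'' (again controlled by the state-typical bounds, Proposition \ref{remarkTY} and Corollary \ref{cor2}, now applied to \(Z_{\mathcal{S}}\)) that a randomly colored partition into \(|\mathcal{M}|\) bins makes the output distribution nearly independent of the bin index. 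Because the dummy rate exceeds \(\bar{I}_{P_{s^N}}(U;Z_{\mathcal{S}})\ge\max_s I(U;Z_s)\) for every type, the soft-covering / channel-resolvability effect washes out the message dependence, delivering strong secrecy with an exponentially small bound that—critically—holds uniformly in \(s^N\).

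The \textbf{main obstacle}, and the reason the elimination technique is invoked, is the uniformity over the doubly-exponential family of state sequences combined with the transition from random to the required maximal-error guarantee: the coloring argument produces, with high probability, a single good realization of the codebook-and-coloring, but to satisfy Definition \ref{def:avwc:rc} one needs a genuine random code \((F,\Phi)\) whose small common randomness suffices to guarantee that \emph{some} fixed code in its support is simultaneously reliable and secure against \emph{all} \(s^N\). The plan is therefore to first establish the existence, via the union bound over the (only exponentially, not doubly-exponentially, many) joint types \(P_{s^N}\), of a codebook good for all states at once—this is exactly where the \(s^N\)-independence of \(\nu_1,\nu_2\) in Remark \ref{rem:important} is indispensable—and then, if necessary, apply Ahlswede's elimination to reduce the common randomness of the random code \(\Phi\) to a negligible rate while preserving \eqref{eq:avwc:1}. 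I expect the delicate bookkeeping to lie in simultaneously closing the reliability exponent and the secrecy exponent with a single choice of dummy rate in the interval \(\bigl(\max_s I(U;Z_s),\ \min_q I(U;Y_q)\bigr)\), and in verifying that the cardinality bound \(|\mathcal{U}|\le|\mathcal{X}|\) follows from a standard Fenchel--Eggleston--Carath\'{e}odory argument applied to the concave objective in \eqref{eq:thm:avwc:rc}.
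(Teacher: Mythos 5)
Your plan follows the paper's proof essentially step for step: i.i.d.\ codebook generation checked for the ``good'' property via the state-dependent typicality bounds, Ahlswede's elimination technique to extract $N^2$ deterministic codes (a majority of them ``good'') with uniformly small per-message error over all state sequences, and Csisz\'{a}r's almost independent coloring (Lemmas \ref{lem:goodp}, \ref{lemma8} and \ref{lemma9}) to partition each good codebook into message bins that are secure uniformly in $s^N$, exactly as in Subsection \ref{sec:thm:avwc:rc}. The one slip is the direction of an inequality: by \eqref{eq:pre:12} one has $\bar{I}_{P_{s^N}}(U;Z_{\mathcal{S}})\le\max_{s\in\mathcal{S}}I(U;Z_s)$, not $\ge$ as you wrote --- which is in fact the direction needed for a dummy rate of $\max_{s}I(U;Z_s)+\tau$ to dominate every type, so the argument stands.
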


The proof of Theorem \ref{thm:avwc:rc} is given in Subsection \ref{sec:thm:avwc:rc}.

\begin{rem}
When \(X\) is fixed, the mutual information \(I(X;Y)\) is a convex function of the conditional probability mass function \(P_{Y|X}\). Therefore, it follows that
\(\max_{s\in\mathcal{S}}I(U; Z_s) = \max_{q\in\mathcal{P}(\mathcal{S})}I(U; Z_q)\) for every fixed \(U\), and Formula \eqref{eq:thm:avwc:rc} can be rewritten as
\[
R \leq \max_{U} [\min_{q\in\mathcal{P}(\mathcal{S})}I(U;Y_q) - \max_{q'\in\mathcal{P}(\mathcal{S})}I(U; Z_{q'})],
\]
which coincides with the results in \cite{Bjelakovic-2013}.
\end{rem}

To establish the lower bound on secrecy capacity of stochastic code, we need the following proposition.

\begin{prop}\label{prop:avwc:rc:sc}
(Theorem 2 in \cite{Bjelakovic-2013}) For any AVWC \((\mathcal{W}, \mathcal{V})\), its secrecy capacity of stochastic code and that of random code are identical, if the capacity of stochastic code over the main AVC \(\mathcal{W}\) is positive.
\end{prop}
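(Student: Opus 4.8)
The statement asserts the equality of two secrecy capacities, so the plan is to bound each by the other. One inequality is immediate: as recorded after Definition~\ref{def:avwc:rc}, a stochastic code is a random code whose decoder $\Phi$ is degenerate, and in that case $I(M;Z^N(s^N)|\Phi)=I(M;Z^N(s^N))$, so every stochastically achievable rate is achievable by random code and the stochastic secrecy capacity never exceeds the random one; this needs no hypothesis. The real content is the reverse inequality, for which I would invoke the hypothesis $C^{SC}(\mathcal{W})>0$ and Ahlswede's elimination technique~\cite{Ahlswede-1978}, upgraded so that it controls the strong secrecy leakage as well as the decoding error.

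First I would fix a random code $(F,\Phi)$ over $\{(f_g,\phi_g):g\in\mathcal{G}\}$ achieving rate $R-\epsilon$ with $\lambda(\mathcal{W},F,\Phi)<\epsilon$ and $\max_{s^N}I(M;Z^N(s^N)|\Phi)<\epsilon$, and write $G$ for the common-randomness index (known to both parties, so that conditioning on $\Phi$ amounts to revealing $G$). The key observation is that both the per-message error $E[e_m(\mathcal{W},F,\Phi,s^N)]$ and the conditional leakage $I(M;Z^N(s^N)|\Phi)=\frac{1}{|\mathcal{G}|}\sum_g I(M;Z^N(s^N)|\Phi=g)$ are \emph{averages over $g$} of functionals attached to the individual deterministic codes. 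I would then draw $K=N^4$ indices $g_1,\dots,g_K$ i.i.d.\ from the law of $G$ to form a reduced ensemble $\mathcal{G}'$. For each fixed $(m,s^N)$ the empirical mean of $e_m$ concentrates around its expectation, and for each fixed $s^N$ the empirical mean of $I(M;Z^N(s^N)|\Phi=g_k)$ concentrates around its mean; since the summands lie in $[0,1]$ and $[0,\log|\mathcal{M}|]$ respectively, Hoeffding's inequality makes each deviation of order $\exp(-\Theta(N^2))$, so a union bound over the $2^{O(N)}$ pairs $(m,s^N)$ and sequences $s^N$ still leaves a non-empty event on which the uniform mixture over $\mathcal{G}'$ has maximal error below $2\epsilon$ and conditional leakage below $2\epsilon$ for \emph{every} $s^N$. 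The gain is that the residual common randomness is now only $\log|\mathcal{G}'|=O(\log N)$ bits.

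Next I would convert this low-randomness random code into a stochastic code by a two-phase scheme. Because $C^{SC}(\mathcal{W})>0$---and by \eqref{eq:pre:7} the maximal-error stochastic capacity equals the positive quantity $\bar{C}^{DC}(\mathcal{W})$, so stochastic codes attain vanishing \emph{maximal} error at any sufficiently small positive rate---I can prepend a stochastic prefix of block length $n_1$ with $n_1\to\infty$ and $n_1=o(N)$ that conveys the index $G$, drawn uniformly over $\mathcal{G}'$ by the encoder, with maximal error vanishing over all prefix state sequences; this is feasible since the prefix rate $\frac{\log|\mathcal{G}'|}{n_1}\to 0$. The receiver decodes $G$ from the prefix and then runs the reduced random code indexed by $G$ on the main block of length $N$. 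The combined block length is $N+n_1$, so the rate is $\frac{NR}{N+n_1}\to R$, the total maximal error is at most the prefix error plus $2\epsilon$, and the resulting code is stochastic because its decoder is now a fixed function (decode the prefix, then apply the indexed decoder) while all randomness resides in the encoder's choice of $G$ and its internal coin flips.

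The security analysis is the step I expect to be the main obstacle to state cleanly, and it is exactly where the choices above pay off. Under an arbitrary combined state sequence the eavesdropper sees the prefix output together with $Z^N(s^N)$. Since $M$ is independent of $G$ and the prefix output is a function of $G$, the prefix states, and the prefix encoder's private randomness only---never of $M$---I can adjoin $G$ to the observation and split $I(M;\text{prefix},Z^N(s^N))\le I(M;G)+I(M;\text{prefix},Z^N(s^N)\mid G)=I(M;Z^N(s^N)\mid G)$, using $I(M;G)=0$ and the conditional independence of the prefix output from $(M,Z^N(s^N))$ given $G$. The surviving term is $\frac{1}{|\mathcal{G}'|}\sum_{g\in\mathcal{G}'}I(M;Z^N(s^N)\mid\Phi=g)<2\epsilon$ by the elimination step, uniformly in $s^N$. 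The conceptual crux is that broadcasting the common randomness through the prefix costs nothing precisely because the random-code criterion of Definition~\ref{def:avwc:rc} is conditioned on $\Phi$, i.e.\ it already promises secrecy against an eavesdropper who knows the selected code; had the criterion been the unconditional $I(M;Z^N(s^N))$, disclosing $G$ could wreck secrecy and the reduction would break down. Sending $\epsilon\to 0$ then shows that every random-code-achievable $R$ is stochastic-code-achievable, which together with the first paragraph gives the claimed equality.
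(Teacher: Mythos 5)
The paper does not prove this proposition itself---it imports it verbatim as Theorem 2 of \cite{Bjelakovic-2013}---and your argument is a correct reconstruction of exactly that proof: the trivial inclusion of stochastic codes among random codes, Ahlswede's elimination to cut the common randomness to $O(\log N)$ bits, and a short stochastic prefix (feasible precisely because $C^{SC}(\mathcal{W})>0$) that broadcasts the code index, with secrecy surviving the disclosure because the random-code criterion of Definition \ref{def:avwc:rc} is already conditioned on $\Phi$. This matches the machinery the paper does develop (Lemma \ref{lem:avc:ele} and Remark \ref{rem:avc:ele}), so there is nothing to correct.
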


Combining Theorem \ref{thm:avwc:rc} and Proposition \ref{prop:avwc:rc:sc} immediately yields the following theorem.

\begin{thm}\label{thm:avwc:sc}
\emph{(Lower bound on the secrecy capacity of stochastic code over AVWC)} Given an AVWC \((\mathcal{W}, \mathcal{V})\), every real number \(R\) satisfying
\[
R \leq \max_{U} [\min_{q\in\mathcal{P}(\mathcal{S})}I(U;Y_q) - \max_{s\in\mathcal{S}}I(U; Z_s)] = \max_{U} [\min_{q\in\mathcal{P}(\mathcal{S})}I(U;Y_q) - \max_{q'\in\mathcal{P}(\mathcal{S})}I(U; Z_{q'})]
\]
is achievable by the stochastic code with respect to the maximal decoding error probability and the strong secrecy criterion, if the capacity of stochastic code over the main AVC \(\mathcal{W}\) is positive. The random variable \(U\) has the same features as that introduced in Theorem \ref{thm:avwc:rc}.
\end{thm}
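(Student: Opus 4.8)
The plan is to obtain Theorem \ref{thm:avwc:sc} as a direct corollary of the random-code lower bound in Theorem \ref{thm:avwc:rc} together with the stochastic/random equivalence recorded in Proposition \ref{prop:avwc:rc:sc}. First I would fix any rate \(R\) obeying the displayed inequality and invoke Theorem \ref{thm:avwc:rc}: for every \(\epsilon > 0\) there is a random code \((F,\Phi)\) meeting the three requirements of Definition \ref{def:avwc:rc}, namely \(\frac{1}{N}\log|\mathcal{M}| > R - \epsilon\), \(\lambda(\mathcal{W},F,\Phi) < \epsilon\), and \(\max_{s^N}I(M;Z^N(s^N)|\Phi) < \epsilon\). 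Hence \(R\) is achievable by random code against the maximal error and strong secrecy criteria.

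The second step is to transfer this achievability from random to stochastic codes. Here the hypothesis that the stochastic-code capacity of the main AVC \(\mathcal{W}\) is positive is exactly the trigger for Proposition \ref{prop:avwc:rc:sc} (Theorem 2 of \cite{Bjelakovic-2013}), which asserts that the two secrecy capacities coincide under precisely this condition. Applying it, every rate achievable by random code---in particular the \(R\) just produced---is also achievable by stochastic code, which establishes the claimed lower bound \(R \leq \max_U[\min_q I(U;Y_q) - \max_s I(U;Z_s)]\) for stochastic codes.

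It then remains to justify the equality of the two optimization expressions in the statement. For a fixed joint law \(P_{UX}\), the quantity \(I(U;Z)\) is a convex function of the wiretap transition law \(V = P_{Z|X}\): the effective channel \(P_{Z|U}(z|u) = \sum_{x} P_{X|U}(x|u)V(z|x)\) depends affinely on \(V\), and \(I(U;Z)\) is convex in \(P_{Z|U}\) for fixed \(P_U\), so the composition is convex. Since \(\{V_q : q \in \mathcal{P}(\mathcal{S})\}\) is the convex hull of \(\{V_s : s \in \mathcal{S}\}\), the maximum of a convex function over this polytope is attained at an extreme point, whence \(\max_{q'\in\mathcal{P}(\mathcal{S})} I(U;Z_{q'}) = \max_{s\in\mathcal{S}} I(U;Z_s)\). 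This is the observation already noted in the remark following Theorem \ref{thm:avwc:rc}, and it shows the two forms of the bound are identical.

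I expect the combination itself to pose no real obstacle; all the substance lives in the two inputs. The genuine difficulty is hidden in Proposition \ref{prop:avwc:rc:sc}: the positivity hypothesis is what lets Ahlswede's elimination technique succeed, since it guarantees enough reliable capacity on the main AVC to convey the vanishingly small index of the eliminated code, thereby replacing shared common randomness by a self-contained stochastic encoder without harming either reliability or secrecy. Because that proposition is taken as given, my proof of Theorem \ref{thm:avwc:sc} reduces to the two invocations and the convexity argument above.
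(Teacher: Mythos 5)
Your proposal matches the paper exactly: the paper obtains Theorem \ref{thm:avwc:sc} by stating that combining Theorem \ref{thm:avwc:rc} with Proposition \ref{prop:avwc:rc:sc} immediately yields the result, and the equality of the two optimization forms is precisely the convexity observation in the remark following Theorem \ref{thm:avwc:rc}. Your additional elaboration on why the positivity hypothesis matters is consistent with the paper's discussion and introduces no discrepancy.
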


\subsection{AVWC with channel state sequence known at the receiver}\label{sec:main:avwc:csr}\label{sec:avwc:csr}

This subsection discusses the secrecy capacity of the AVWC-CSR with respect to the strong secrecy criterion. Proposition \ref{prop:avwc:csr:rc:sc} claims that the secrecy capacity of stochastic code is identical to that of random code. A lower bound on the secrecy capacity is given in Theorem \ref{thm:avwc:csr:rc}.

\begin{figure}
     \centering
     \includegraphics[width=12cm]{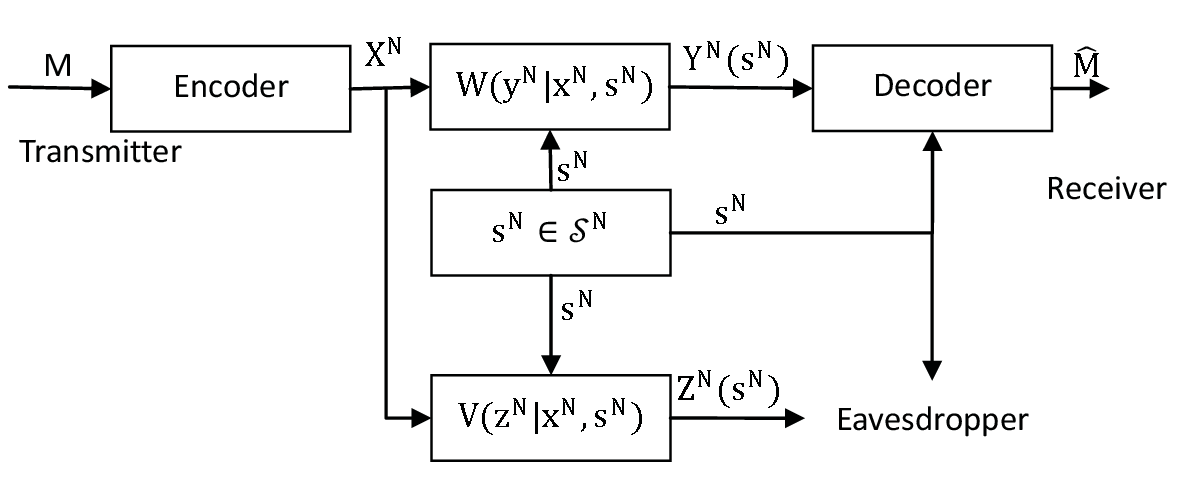}
     \caption{AVWC-CSR, Arbitrarily varying wiretap channel with channel state sequence known at the receiver.}\label{Fig:avwc:csr}
 \end{figure}

\textbf{The definitions}

\begin{defn}
\emph{(AVWC-CSR, Arbitrarily Varying Wiretap Channel with channel states known at the receiver)} Just like the AVWC, an AVWC-CSR can also be specified by a pair \((\mathcal{W}, \mathcal{V})\). The only difference is that the receiver has access to the channel state sequence in the case of AVWC-CSR. See Fig. \ref{Fig:avwc:csr}.
\end{defn}

\begin{defn}\label{def:avwc:csr:sc}
\emph{(Stochastic code over AVWC-CSR)} Let an AVWC-CSR \((\mathcal{W}, \mathcal{V})\) be given. A stochastic code over it is specified by a pair \((F, \phi)\), where \(\{F(m): m \in \mathcal{M}\}\) is a collection of random sequences distributed over \(\mathcal{X}^N\) and \(\phi: \mathcal{Y}^N \times \mathcal{S}^N \mapsto \mathcal{M}\) is a deterministic mapping.
\end{defn}

A random code \((F, \Phi)\) over an AVWC-CSR is defined accordingly.

\begin{defn}
\emph{(Achievability of stochastic code over AVWC-CSR)} For any given AVWC-CSR \((\mathcal{W}, \mathcal{V})\), a non-negative real number \(R\) is said to be achievable by stochastic code, with respect to the maximal decoding error probability and the strong secrecy criterion, if for every \(\epsilon > 0\), there exists a stochastic code \((F, \phi)\) over that AVWC-CSR such that
\[
\frac{1}{N} \log |\mathcal{M}| > R - \epsilon, \lambda^{CSR}(\mathcal{W}, F, \phi) < \epsilon
\text{ and } \max_{s^N \in \mathcal{S}^N} I(M; Z^N(s^N)) < \epsilon
\]
when \(N\) is sufficiently large, where \(M\) is the source message uniformly distributed over \(\mathcal{M}\) and \(Z^N(s^N)\) is given by \eqref{eq:def:avwc:z}.
\end{defn}

\begin{defn}
\emph{(Achievability of random code over AVWC-CSR)} For any given AVWC-CSR \((\mathcal{W}, \mathcal{V})\), a non-negative real number \(R\) is said to be achievable by random code, with respect to the maximal decoding error probability and the strong secrecy criterion, if for every \(\epsilon > 0\), there exists a random code \((F, \Phi)\) over that AVWC-CSR such that
\[
\frac{1}{N} \log |\mathcal{M}| > R - \epsilon, \lambda^{CSR}(\mathcal{W}, F, \Phi) < \epsilon
\text{ and } \max_{s^N \in \mathcal{S}^N} I(M; Z^N(s^N)|\Phi) < \epsilon
\]
when \(N\) is sufficiently large, where \(M\) is the source message uniformly distributed over \(\mathcal{M}\) and \(Z^N(s^N)\) is given by \eqref{eq:def:avwc:z}.
\end{defn}

\textbf{A lower bound of the secrecy capacity}

It has been shown in Subsection \ref{sec:avc:csr} that the capacity of stochastic code over an AVC-CSR is identical to that of random code. This directly yields that the secrecy capacity of stochastic code over an AVWC-CSR is also identical to that of random code, as claimed in the following Proposition.

\begin{prop}\label{prop:avwc:csr:rc:sc}
For any AVWC-CSR \((\mathcal{W}, \mathcal{V})\), its capacity of stochastic code and that of random code are identical.
\end{prop}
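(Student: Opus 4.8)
The plan is to establish the identity by proving inequalities in both directions, the forward one being immediate and the reverse one amounting to a derandomization of the decoder. Since a stochastic code is exactly the special case of a random code whose decoder is deterministic, every rate achievable by stochastic code is automatically achievable by random code, so the secrecy capacity of stochastic code is at most that of random code. The content of the proposition is the reverse inequality, and I would split its proof into two cases according to the value of \(\max_{X}\min_{s\in\mathcal{S}} I(X;Y_s)\), which by Theorem \ref{thm:avc:csr} equals the (common) stochastic- and random-code capacity of the underlying main AVC-CSR \(\mathcal{W}\).

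If \(\max_{X}\min_{s\in\mathcal{S}} I(X;Y_s)=0\), then by the converse part of Theorem \ref{thm:avc:csr} no positive rate can be transmitted reliably over the main channel even with full common randomness, so the maximal decoding error cannot be driven below \(\epsilon\) for any \(R>0\); consequently both secrecy capacities equal \(0\) and the identity holds trivially. Hence I would assume \(\max_{X}\min_{s\in\mathcal{S}} I(X;Y_s)>0\) for the remainder, which is precisely the regime in which the main AVC-CSR supports reliable transmission at some positive rate.

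In this positive-capacity case, fix a random code \((F,\Phi)\) achieving secrecy rate \(R\), that is, with \(\lambda^{CSR}(\mathcal{W},F,\Phi)<\epsilon\) and \(\max_{s^N} I(M;Z^N(s^N)|\Phi)<\epsilon\). First I would apply Ahlswede's elimination technique to replace \(\Phi\) by a uniform distribution over only polynomially many (in \(N\)) deterministic encoder--decoder pairs \((f_{g},\phi_{g})\), \(g\in\{1,\dots,K\}\), while preserving both a small maximal decoding error and a small conditional leakage for every state sequence. Then I would build a stochastic code by prefixing the block with a short phase of length \(n_1=o(N)\) that transmits the index \(g\) reliably over the main AVC-CSR: since the AVC-CSR capacity is positive and the common-randomness rate \(\tfrac{1}{N}\log K\to 0\), this phase costs negligible rate and, by Theorem \ref{thm:avc:csr}, lets the state-informed receiver recover \(g\) with vanishing error. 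In the remaining phase the transmitter uses \((f_g,\phi_g)\) and the receiver, now knowing \(g\) and \(s^N\), applies the deterministic decoder \(\phi_g\); a union bound over the two phases keeps the overall maximal error below \(2\epsilon\), the rate is \(R-o(1)\), and the decoder is deterministic, so the construction is a genuine stochastic code.

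The delicate point is secrecy. Because \(M\) is independent of the index \(G\) and the two phases are transmitted separately, the eavesdropper's observation \(Z^N(s^N)=(Z_1,Z_2)\) satisfies \(I(M;Z^N(s^N))\le I(M;Z_1,Z_2\mid G)=I(M;Z_2\mid G)\), and \(I(M;Z_2\mid G)\) is exactly the conditional leakage of the reduced ensemble, which elimination keeps below \(\epsilon\) uniformly in \(s^N\); thus revealing \(G\) to the wiretapper during the prefix is harmless precisely because the random-code secrecy criterion already conditions on \(\Phi\). I expect the main obstacle to be verifying that a single realization of the elimination simultaneously controls the maximal decoding error and the conditional leakage across all \(|\mathcal{S}|^N\) state sequences (and all messages): this requires Hoeffding-type concentration of the empirical averages over the \(K\) sampled codes, with \(K\) chosen polynomial in \(N\) but large enough that the per-sequence deviation probabilities beat the \(|\mathcal{S}|^{N}\)-fold union bound, which is what fixes the admissible growth of \(K\).
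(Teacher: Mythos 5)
Your proof is correct and follows essentially the route the paper intends: the paper omits the argument, deferring to the proof of Proposition~\ref{prop:avwc:rc:sc} (Bjelakovi\'{c} et al.'s elimination-plus-prefix derandomization of the common randomness), which is exactly your construction. Your case analysis on \(\max_X\min_{s\in\mathcal{S}} I(X;Y_s)\) correctly accounts for the one substantive difference from the AVWC version --- by Theorem~\ref{thm:avc:csr} the stochastic- and random-code capacities of an AVC-CSR coincide even when they are zero, so the positivity hypothesis of Proposition~\ref{prop:avwc:rc:sc} can be dropped --- and your closing caveat that the number \(K\) of retained deterministic codes must be taken large enough (still polynomial, e.g.\ cubic in \(N\)) for the leakage concentration to survive the union bound over all \(|\mathcal{S}|^N\) state sequences is the right technical point to flag.
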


The proof of Proposition \ref{prop:avwc:csr:rc:sc} is the same as that of Proposition \ref{prop:avwc:rc:sc}, and is hence omitted.

The following theorem gives a lower bound on the capacity of random code or stochastic code.

\begin{thm}\label{thm:avwc:csr:rc}
\emph{(Lower bound on the secrecy capacity of stochastic code over AVWC-CSR)} Given an AVWC-CSR \((\mathcal{W}, \mathcal{V})\), every real number \(R\) satisfying
\begin{equation}\label{eq:thm:avwc:csr:1}
R \leq \max_{U} [\min_{s\in\mathcal{S}}I(U;Y_s) - \max_{s'\in\mathcal{S}}I(U; Z_{s'})].
\end{equation}
is achievable by stochastic code with respect to the maximal decoding error probability and the strong secrecy criterion, where \(U\) is an auxiliary random variable satisfying
\begin{equation}\label{eq:thm:avwc:csr:rc:1}
\operatorname{Pr}\{U=u,Z_{s'} = z\}
= \sum_{x\in\mathcal{X}}\operatorname{Pr}\{U=u, X=x\}V_{s'}(z|x)
\end{equation}
and
\begin{equation}\label{eq:thm:avwc:csr:rc:2}
\Pr\{U=u, Y_s=y\}=\sum_{x\in\mathcal{X}}\Pr\{U=u,X=x\}W_s(y|x).
\end{equation}
It suffices to let \(U\) be distributed over an alphabet \(\mathcal{U}\), whose size satisfies \(|\mathcal{U}| \leq |\mathcal{X}|\).
\end{thm}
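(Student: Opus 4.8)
The plan is to mirror the achievability argument for the AVWC (Theorem~\ref{thm:avwc:rc}, proved in Subsection~\ref{sec:thm:avwc:rc}), replacing the reliability ingredient for the main channel by the AVC-CSR result of Theorem~\ref{thm:avc:csr} while leaving the eavesdropper analysis essentially untouched, since the wiretapper observes the same pair $(Z^N(s^N),s^N)$ in both models. I would construct a single stochastic code directly, and Proposition~\ref{prop:avwc:csr:rc:sc} then confirms that nothing is lost relative to random codes. First, fix an auxiliary variable $U$ and a prefix channel $P_{X|U}$ attaining the maximum in \eqref{eq:thm:avwc:csr:1}, and pick a message rate $R$ together with a randomization rate $R_J$ satisfying $R+R_J<\min_{s\in\mathcal S}I(U;Y_s)$ and $R_J>\max_{s'\in\mathcal S}I(U;Z_{s'})$, which is feasible exactly when $R<\min_sI(U;Y_s)-\max_{s'}I(U;Z_{s'})$.

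Next I would build a codebook of $\approx 2^{N(R+R_J)}$ codewords $u^N(m,j)$ drawn from $P_U$ and passed through $P_{X|U}$, and organize them into $2^{NR}$ bins indexed by the message $m$, each containing $2^{NR_J}$ codewords indexed by $j$. For reliability, the legitimate receiver knows $s^N$, so decoding the full index $(m,j)$ is precisely a coding problem over the AVC-CSR induced by the effective channel $U\to Y_s$; since $R+R_J$ lies below its capacity $\min_s I(U;Y_s)$, the direct part of Theorem~\ref{thm:avc:csr} (proved in Section~\ref{sec:thm:avc:csr}) supplies a state-dependent decoder $\phi(\cdot,s^N)$ whose maximal error probability tends to zero uniformly in $s^N$. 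That uniformity is exactly what the state-dependent typical sequences deliver: Propositions~\ref{remarkTX} and~\ref{remarkTXY} yield error exponents $\nu_1,\nu_2$ which, by Remark~\ref{rem:important}, do not depend on the state sequence, so a union bound over all $|\mathcal S|^N$ sequences still leaves the reliability failure probability exponentially small.

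For secrecy I would overlay Csisz\'{a}r's almost independent coloring scheme (the lemmas of Section~\ref{sec:basic}) onto the same codebook, treating the bin index as the message and the index $j$ as the local randomization used by the stochastic encoder. Because the eavesdropper observes $Z^N(s^N)$ and knows $s^N$ in both the AVWC and AVWC-CSR models, this step is identical to the corresponding part of the proof of Theorem~\ref{thm:avwc:rc}: for a state sequence of type $P_{s^N}$ the relevant leakage is governed by $\sum_{s\in\mathcal S}P_{s^N}(s)I(U;Z_s)\le\max_{s'}I(U;Z_{s'})<R_J$, and the output-typicality estimates of Corollary~\ref{cor1} and Proposition~\ref{remarkTY} (again with $s^N$-independent exponents) allow the coloring lemma to force $I(M;Z^N(s^N))<\epsilon$ simultaneously for all $s^N$ with high probability over the random coloring. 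Note that taking the maximum over pure states $s'$ rather than over mixtures $q'$ loses nothing, since $I(U;Z_{q'})$ is convex in the channel and is therefore maximized at an extreme point.

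Finally I would close the argument with the probabilistic method: averaging over the random codebook and the random coloring, the probability that reliability fails for some $s^N$ plus the probability that the secrecy bound fails for some $s^N$ is strictly below one, so there exists one fixed codebook and coloring that is simultaneously reliable and secure for every state sequence. The outcome is a genuine stochastic code---its encoder randomness is the local, uniform choice of $j$, and its decoder $\phi(\cdot,s^N)$ is deterministic given the state---of rate $R$ with vanishing maximal error and strong secrecy, as required; Proposition~\ref{prop:avwc:csr:rc:sc} confirms this also settles the random-code case. The \emph{main obstacle} I anticipate is securing simultaneous uniformity over the exponentially large family $\mathcal S^N$ in both estimates at once: the union bound over $|\mathcal S|^N=2^{N\log|\mathcal S|}$ sequences only survives because the state-dependent typicality machinery produces error and leakage exponents independent of $s^N$ (Remark~\ref{rem:important}), and the delicate point is verifying that the coloring lemma retains this $s^N$-uniform exponential decay when secrecy is measured against the worst-case eavesdropper state.
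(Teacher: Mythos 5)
Your overall architecture matches the paper's: a ``good'' codebook in the sense of Definition \ref{def:goodc}, reliability from the AVC-CSR coding theorem, Csisz\'{a}r's coloring scheme for the secure partition, and Proposition \ref{prop:avwc:csr:rc:sc} to equate stochastic and random codes. The secrecy half of your argument is essentially the paper's (Lemma \ref{lem:goodp} applied verbatim, since the eavesdropper's side is unchanged by the receiver's state knowledge), and your observation that maximizing $I(U;Z_{q'})$ over mixtures reduces to pure states is correct.

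The gap is in the reliability step. You assert that the direct part of Theorem \ref{thm:avc:csr} supplies, for a single deterministic codebook, a decoder whose \emph{maximal} error probability over the index $(m,j)$ vanishes uniformly in $s^N$. That is not what the construction in Subsection \ref{sec:thm:avc:csr} delivers: Lemma \ref{lem:avc:dc} controls only the \emph{average} decoding error $\bar e^{CSR}(\mathcal{C},s^N)$ over the codewords, and the maximal-error statements in Theorem \ref{thm:avc:csr} refer to stochastic and random codes, obtained ``by standard technique.'' For an AVC(-CSR) you cannot simply expurgate the bad messages, because which messages are bad depends on $s^N$ and the code must be fixed before the adversary chooses the state sequence; nor does the average-over-$(m,j)$ bound immediately give a bound on $\max_m$ of the average over $j$ within bin $m$. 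This is exactly the issue the paper's Lemma \ref{lem:avwc:csr:rc} is built to resolve: it takes the single good codebook with small average error, randomizes over all permutations of the message set to obtain a random code with small maximal error for every $m$ and every $s^N$, applies Ahlswede's elimination technique to cut the common randomness down to $N^2$ permutations (all of which are still ``good'' codebooks, being permutations of the same $\mathcal{C}$), applies the secure partition to each, and only then recovers a stochastic code via Proposition \ref{prop:avwc:csr:rc:sc}. Your single-fixed-codebook construction could perhaps be salvaged by a Chernoff-plus-union-bound argument showing that, within each exponentially large bin, the fraction of codewords with error exceeding $2\cdot 2^{-N\nu_2}$ is negligible simultaneously for all $s^N$ (in the spirit of Steps 1--2 of Appendix \ref{app:lem:avc:dc}), so that the stochastic encoder's averaging over $j$ controls $\lambda_m$ for every $m$ and $s^N$; but as written you neither make that argument nor take the paper's permutation route, so the maximal-error claim is unsupported.
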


The proof of Theorem \ref{thm:avwc:csr:rc} is given in Subsection \ref{sec:thm:avwc:csr:rc}.

\section{Basic Results for Secure Partitions}\label{sec:basic}

This section provides a general partitioning scheme to ensure secure transmission against wiretapping through AVC. To be particular, Lemma \ref{lem:goodc} claims that the ``good'' codebook, introduced in Definition \ref{def:goodc}, can be obtained by a randomly generating scheme. Lemmas \ref{lem:goodp} and \ref{lem:goodp2} construct secure partitions over the ``good'' codebook to ensure secure transmission. Those secure partitions can be readily used to prove the secrecy capacity results of AVWC and AVWC-CSR. See Section \ref{section:proofs} for details.

\textbf{Lemmas on secure partitions}

\begin{defn}\label{def:goodc}
\emph{(``good'' codebook)} Suppose that \(\mathcal{C}=\{x^N(l)\}_{l=1}^{L'}\) is a codebook of size \(L'=2^{NR'}\) for \(R'>0\). Let \(X\) be a random variable distributed over the alphabet \(\mathcal{X}\). We further assume that \(\delta\), \(\eta\) and \(\nu_1\) are real numbers such that Proposition \ref{remarkTX} holds. The codebook \(\mathcal{C}\) is called ``good'' with respect to the random variable \(X\) if it follows that
\[
|\tilde{T}^N(\mathcal{C}, s^N)| > (1-2\cdot2^{-N\nu_1})L' \text{ for all } s^N \in \mathcal{S}^N,
\]
where \(\tilde{T}^N(\mathcal{C}, s^N) = \mathcal{C} \cap \tilde{T}^N[X,s^N]_{\delta,\eta}\) is the collection of typical codewords with respect to the state sequence \(s^N\).
\end{defn}

If a codebook is ``good'', for any given state sequence \(s^N\), it follows that almost all the codewords are typical with respect to \(s^N\). The following lemma claims that a ``good'' codebook can be constructed by a random scheme with a high probability.

\begin{lem}\label{lem:goodc}
\emph{(The existence of ``good'' codebook)} Let \(\mathbf{C}=\{X^N(l)\}_{l=1}^{L'}\) be a random codebook of size \(L'=2^{NR'}\) for \(R'>0\), such that
\begin{equation}\label{eq:lem:goodc}
\operatorname{Pr}\{\mathbf{C} = \mathcal{C}\} =\prod_{l=1}^{L'} \operatorname{Pr}\{X^N(l)=x^N(l)\} = \prod_{l=1}^{L'} \prod_{i=1}^N P_X(x_i(l))
\end{equation}
for any specific codebook \(\mathcal{C}=\{x^N(l)\}_{l=1}^{L'}\), where \(P_X\) represents the probability mass function of the random variable \(X\). Then the probability of \(\mathbf{C}\) being ``good'' with respect to \(X\) is bounded by
\[
\operatorname{Pr}\{\mathbf{C} \text{ is ``good''}\} > 1 - \epsilon_1,
\]
where \(\epsilon_1 \rightarrow 0\) as \(N \rightarrow \infty\).
\end{lem}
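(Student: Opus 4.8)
The plan is to fix a single state sequence $s^N$, control the number of atypical codewords there by an exponential-moment (Chernoff-type) bound, and then pay for all state sequences at once with a union bound; the whole argument hinges on the fact, guaranteed by Proposition \ref{remarkTX}, that the exponent $\nu_1$ does not depend on $s^N$. Concretely, I would fix $s^N \in \mathcal{S}^N$ and, for each $1 \le l \le L'$, introduce the indicator $B_l = \mathbf{1}\{X^N(l) \in \tilde{T}^N[X,s^N]_{\delta,\eta}\}$. By the independent product form \eqref{eq:lem:goodc}, the codewords $X^N(1),\dots,X^N(L')$ are i.i.d., hence so are $B_1,\dots,B_{L'}$, and Proposition \ref{remarkTX} gives $\Pr\{B_l = 1\} > 1 - 2^{-N\nu_1}$ for each $l$. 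Writing $S(s^N) = \sum_{l=1}^{L'}(1 - B_l)$ for the number of atypical codewords, we have $|\tilde{T}^N(\mathbf{C},s^N)| = L' - S(s^N)$, so the ``good'' requirement of Definition \ref{def:goodc} fails for this $s^N$ exactly when $S(s^N) \ge 2\cdot 2^{-N\nu_1}L'$. Since $E[S(s^N)] < 2^{-N\nu_1}L'$, this is precisely the event that $S(s^N)$ exceeds twice a bound on its mean.

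Next I would estimate $\Pr\{S(s^N) \ge 2\cdot 2^{-N\nu_1}L'\}$ by a Chernoff bound. Using $E[e^{tS(s^N)}] \le \exp\bigl(2^{-N\nu_1}(e^t-1)L'\bigr)$ with, say, $t = 1$, the threshold $2\cdot 2^{-N\nu_1}L'$ produces a bound of the form $\exp\bigl(-c\,2^{-N\nu_1}L'\bigr) = \exp\bigl(-c\,2^{N(R'-\nu_1)}\bigr)$ for some constant $c > 0$ (the constant being positive because $e - 3 < 0$). The essential point is that the exponent $2^{N(R'-\nu_1)}$ grows exponentially in $N$ provided $\nu_1 < R'$; by Remark \ref{rem:important} we may indeed fix $\nu_1 \in (0, R')$ while keeping Proposition \ref{remarkTX} valid, so the per-sequence failure probability decays doubly exponentially.

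Finally I would take the union bound over all $|\mathcal{S}|^N$ state sequences, which is legitimate because the single exponent $\nu_1$, and hence the single threshold, applies simultaneously to every $s^N$ (Proposition \ref{remarkTX} together with Remark \ref{rem:important}). This gives
\[
\Pr\{\mathbf{C}\text{ is not ``good''}\} \le |\mathcal{S}|^N \exp\bigl(-c\,2^{N(R'-\nu_1)}\bigr) =: \epsilon_1,
\]
and since the single exponential factor $|\mathcal{S}|^N = 2^{N\log|\mathcal{S}|}$ is overwhelmed by the doubly exponential term, $\epsilon_1 \to 0$ as $N \to \infty$, as required.

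I expect the main obstacle to be exactly this union bound over exponentially many state sequences: a naive Markov estimate only bounds each per-sequence failure by $1/2$, which is useless after multiplication by $|\mathcal{S}|^N$. Overcoming it forces both the exponential-moment (Chernoff) estimate and the choice $\nu_1 < R'$, and it is crucial that the $\nu_1$ delivered by Proposition \ref{remarkTX} is independent of $s^N$, so that one threshold and one decay rate serve all state sequences at once.
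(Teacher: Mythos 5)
Your proposal is correct and follows essentially the same route as the paper's proof in Appendix \ref{app:lem:goodc}: indicator variables for atypicality of each codeword, an exponential-moment (Chernoff) bound yielding a doubly exponential per-sequence failure probability of order $\exp\bigl(-c\,2^{N(R'-\nu_1)}\bigr)$, and a union bound over the $|\mathcal{S}|^N$ state sequences, relying on the $s^N$-independence of $\nu_1$ from Proposition \ref{remarkTX}. The only cosmetic difference is your choice of base-$e$ exponential moments with $t=1$ versus the paper's base-$2$ computation, and you make explicit the condition $\nu_1 < R'$ that the paper leaves implicit via Remark \ref{rem:important}.
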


The proof of Lemma \ref{lem:goodc} is given in Appendix \ref{app:lem:goodc}.

\begin{lem}\label{lem:goodp}
 \emph{(Secure partition over ``good'' codebook)} Let \(\mathcal{V}=\{V_s: s \in \mathcal{S}\}\) be an AVC, \(X\) be a random variable over \(\mathcal{X}\), and \(Z_{\mathcal{S}}\) be a collection of random variables such that
 \begin{equation}\label{eq:goodp:25}
 \operatorname{Pr}\{Z_s=z|X=x\} = P_{Z_s|X}(z|x) = V_s(z|x)
 \end{equation}
 for \(z \in \mathcal{Z}\), \(x \in \mathcal{X}\) and \(s \in \mathcal{S}\).
 Suppose that a codebook \(\mathcal{C}\), containing \(L'=2^{NR'}\) codewords of length \(N\), is ``good'' with respect to \(X\), where \(R' > 0\) is a constant real number.
 Then for any \(\tau > 0, \epsilon > 0\) and
 \[
 L < L' \cdot 2^{-N[\max_{s\in\mathcal{S}}I(X;Z_s) + \tau]} = 2^{N[R' - \max_{s\in\mathcal{S}}I(X;Z_s) - \tau]},
 \]
 there exists a secure equipartition \(\{\mathcal{C}_m\}_{m=1}^L\) on it such that
\begin{equation}\label{eq:goodp:20}
I(\tilde{M}; Z^N(\mathcal{C}, s^N)) < \epsilon,
\end{equation}
for every \(s^N \in \mathcal{S}^N\), when \(N\) is sufficiently large. The random sequence \(Z^N(\mathcal{C}, s^N)\) is the output of the AVC when the state sequence is \(s^N\) and the channel input is \(X^N({\mathcal{C}})\). The random sequence \(X^N(\mathcal{C})\) is uniformly distributed over the codebook \(\mathcal{C}\). This indicates that
\begin{equation}\label{eq:goodp:26}
\displaystyle\operatorname{Pr}\{X^N({\mathcal{C}})=x^N, Z^N(\mathcal{C}, s^N)=z^N\}= L'^{-1}\cdot \prod_{i=1}^N P_{Z_{s_i}|X}(z_i|x_i)= L'^{-1}\cdot \prod_{i=1}^N V_{s_i}(z|x)
\end{equation}
 for \(z^N \in \mathcal{Z}^N\), \(x^N \in \mathcal{C}\) and \(s^N \in \mathcal{S}^N\), where \(P_{Z_{s_i}|X}\) is given in \eqref{eq:goodp:25}.
The random variable \(\tilde{M}\) is the index of subcode containing \(X^N(\mathcal{C})\), i.e., \(X^N(\mathcal{C}) \in \mathcal{C}_{\tilde{M}}\).
\end{lem}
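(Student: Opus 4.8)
The plan is to treat the construction as a channel-resolvability problem solved by a \emph{random coloring} of the fixed good codebook $\mathcal{C}$, in the spirit of Csisz\'ar's almost independent coloring. Concretely, I would assign to each codeword $x^N(l) \in \mathcal{C}$ an independent, uniformly random color $\Theta_l \in \{1,\dots,L\}$ and set $\mathcal{C}_m = \{x^N(l) : \Theta_l = m\}$; the only randomness is in the coloring, while $\mathcal{C}$ itself is fixed and ``good''. The target output law the eavesdropper should see, independently of $m$, is the product law $Q_{s^N}(z^N) = \prod_{i=1}^N P_{Z_{s_i}}(z_i)$. I would show that for a suitable coloring, $P_{Z^N|\tilde M = m}(\cdot \mid \mathcal{C}, s^N)$ is close to $Q_{s^N}$ \emph{simultaneously} for every color $m$ and every state sequence $s^N$; since the target then does not depend on $m$, $\tilde M$ and $Z^N(\mathcal{C}, s^N)$ are nearly independent, which is exactly \eqref{eq:goodp:20}.

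First I would fix $s^N$, $m$, and a \emph{typical} output $z^N \in \tilde T^N[Z_{\mathcal{S}}, s^N]_{\delta,\eta}$, and study the coloring-random sum $\Psi_m(z^N) = \sum_{l:\Theta_l = m} \tilde V(z^N \mid x^N(l), s^N)$, where $\tilde V$ is $V$ truncated to jointly typical pairs so that, by the $Z$-analogue of Proposition \ref{rem:joint}, each summand is at most $b = 2^{-N(1-\delta-\eta)\bar H_{P_{s^N}}(Z_{\mathcal{S}}\mid X)}$. Its mean over the coloring equals $\tfrac{1}{L}\sum_l \tilde V(z^N\mid x^N(l), s^N)$; since $\mathcal{C}$ is a good (typical i.i.d.) codebook with $R' > \max_{s}I(X;Z_s) \geq \bar I_{P_{s^N}}(X;Z_{\mathcal{S}})$, the $Z$-analogues of Proposition \ref{remarkTY} and Corollary \ref{cor1} pin this codebook-averaged law to $\approx \tfrac{L'}{L}Q_{s^N}(z^N)$ with $Q_{s^N}(z^N) \approx 2^{-N\bar H_{P_{s^N}}(Z_{\mathcal{S}})}$ on typical $z^N$. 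A Bernstein/Chernoff estimate then controls the two-sided relative deviation of $\Psi_m(z^N)$ from its mean $\mu$ by an exponent of order $\mu/b$, and here
\[
\frac{\mu}{b} \;\gtrsim\; \frac{L'}{L}\, 2^{-N\bar I_{P_{s^N}}(X;Z_{\mathcal{S}})} \;\geq\; 2^{N[\max_{s}I(X;Z_s) - \bar I_{P_{s^N}}(X;Z_{\mathcal{S}}) + \tau]} \;\geq\; 2^{N\tau},
\]
where the last inequality uses $\bar I_{P_{s^N}}(X;Z_{\mathcal{S}}) \le \max_s I(X;Z_s)$ from \eqref{eq:pre:12} together with the hypothesis $L < 2^{N[R'-\max_s I(X;Z_s)-\tau]}$ and $L' = 2^{NR'}$. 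Thus the deviation probability is \emph{doubly} exponentially small in $N$.

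Next I would remove all conditioning by a union bound. The number of triples $(s^N, z^N, m)$ is at most $|\mathcal{S}|^N |\mathcal{Z}|^N L$, which is only singly exponential, so the doubly exponential concentration yields, with probability tending to $1$, a single coloring for which $|P_{Z^N|\tilde M=m}(z^N\mid \mathcal{C}, s^N) - Q_{s^N}(z^N)| \le \theta\, Q_{s^N}(z^N)$ holds for \emph{all} $s^N$, \emph{all} $m$, and \emph{all} typical $z^N$ at once; this simultaneity over the exponentially many $s^N$ is precisely why the constants $\nu_1,\nu_2$ in Propositions \ref{remarkTX} and \ref{remarkTXY} were arranged to be independent of $s^N$ (Remark \ref{rem:important}). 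Fixing such a coloring and relocating the vanishing fraction of mis-sized codewords to obtain an exact equipartition $\{\mathcal{C}_m\}_{m=1}^L$ (a perturbation that alters the output laws negligibly), I would pass from distribution closeness to \eqref{eq:goodp:20} via $I(\tilde M; Z^N) = \tfrac{1}{L}\sum_m D\big(P_{Z^N|\tilde M=m}\,\|\,P_{Z^N}\big)$: on typical $z^N$ both $P_{Z^N|\tilde M=m}$ and the marginal $P_{Z^N}$ lie within $(1\pm\theta)Q_{s^N}$, so the log-ratio is $O(\theta)$, while the mass placed on atypical $z^N$ is exponentially small by the $Z$-analogue of Proposition \ref{remarkTXY} (the bin codewords are typical) and is multiplied by a log-ratio that grows only linearly in $N$; both contributions can be driven below $\epsilon$.

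The hardest part is making the concentration step \emph{uniform} in the state sequence: each individual deviation estimate is routine, but the argument only closes because (i) the truncation to jointly typical pairs supplies the per-codeword bound $b$ needed for Bernstein, (ii) the rate gap $\tau$ forces $\mu/b \ge 2^{N\tau}$ and hence a doubly exponential tail that survives the union bound over all $|\mathcal{S}|^N$ state sequences, and (iii) the good-codebook property together with the state-independent typicality constants keeps every estimate valid for all $s^N$ simultaneously. Discharging the atypical-output and exact-equipartition adjustments cleanly, so that the linear-in-$N$ log-ratios never overwhelm the exponentially small probability masses, is the remaining point that demands care.
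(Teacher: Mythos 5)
Your overall architecture (random uniform coloring of the fixed codebook, a Bernstein/Chernoff relative-deviation bound with exponent $\mu/b\gtrsim 2^{N\tau}$, a union bound over the singly exponential family of triples $(s^N,z^N,m)$, then an equipartition repair and a typical/atypical split in the divergence) is in the same spirit as the paper, which packages exactly this random-coloring-plus-concentration step into the cited Lemma~\ref{lemma8} and the equipartition repair into Lemma~\ref{lemma9}. However, there is a genuine gap at the step where you set the concentration target. You claim that, because $\mathcal{C}$ is ``good'' and $R'>\max_s I(X;Z_s)$, the codebook-averaged output law $\frac{1}{L'}\sum_l \tilde V(z^N\mid x^N(l),s^N)$ is pinned to $\approx Q_{s^N}(z^N)=\prod_i P_{Z_{s_i}}(z_i)$ by the $Z$-analogues of Proposition~\ref{remarkTY} and Corollary~\ref{cor1}. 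That does not follow. Definition~\ref{def:goodc} only says that almost every codeword is individually typical for every $s^N$; it carries no channel-resolvability information. A codebook of $L'$ typical codewords clustered in a small Hamming ball is ``good'' yet induces an output law concentrated far from $Q_{s^N}$, and Propositions~\ref{remarkTY} and \ref{cor1} only bound probabilities under the i.i.d.\ product law, not under the codebook-induced mixture. Your parenthetical ``(typical i.i.d.)'' is importing a soft-covering property of a randomly generated codebook that is not a hypothesis of the lemma. Since the coloring-randomness mean of $\Psi_m(z^N)$ is exactly $\frac{L'}{L}$ times the (truncated) codebook marginal, concentration can only take you to that marginal, never to $Q_{s^N}$; so both your lower bound $\mu/b\gtrsim 2^{N\tau}$ and your final comparison of $P_{Z^N\mid\tilde M=m}$ against $Q_{s^N}$ break down on outputs where the true marginal is much smaller than $Q_{s^N}$.

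The repair is the content of the paper's sets $\mathcal{B}_0(\mathcal{C},s^N)$ and $\mathcal{B}_1(\mathcal{C},s^N)$ and of precondition~\eqref{eq:goodp:7}: concentrate around the \emph{actual} marginal $P_{Z^N(\mathcal{C},s^N)}$, and excise the outputs $z^N$ for which $\Pr\{Z^N(\mathcal{C},s^N)=z^N\}<2^{-N\nu_3/2}\prod_i P_{Z_{s_i}}(z_i)$ --- a set whose probability is at most $2^{-N\nu_3/2}$ for \emph{any} codebook by a one-line Markov-type estimate. On the surviving outputs, the joint-typicality upper bound on $V(z^N\mid x^N,s^N)$ divided by this lower bound on the marginal gives exactly $\max_{x^N}P_{s^N,z^N}(x^N)\le l^{-1}$ with $l=2^{N[R'-\max_s I(X;Z_s)-\tau/2]}$ (formula~\eqref{eq:goodp:22}), which is the uniform $\mu/b\ge l/L\ge 2^{N\tau/2}$ you need; the final mutual-information bound must then be taken against the true marginal rather than $Q_{s^N}$. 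With that substitution your argument closes, but as written the resolvability claim is unsupported and the concentration step aims at the wrong target.
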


\begin{rem}
It is clear that the random variable \(\tilde{M}\) is uniformly distributed over \([1:L]\). When applying Lemma \ref{lem:goodp} to a specific coding scheme, \(\tilde{M}\) represents the source message, \(X^N({\mathcal{C}})\) is the channel input and \(Z^N(\mathcal{C}, s^N)\) is the output of the wiretap channel under the state sequence \(s^N\). Therefore, Formula \eqref{eq:goodp:20} indicates that the information exposed to the eavesdropper is vanishing for every state sequence. Notice that the result of Lemma \ref{lem:goodp} is independent of the main AVC. In fact, it will be shown in Section \ref{section:proofs} that the main channel determines the upper bound of \(L'\).
\end{rem}

Lemma \ref{lem:goodp} can be readily extended to the following lemma.

\begin{lem}\label{lem:goodp2}
\emph{(Secure partition over ``good'' codebook with constraint on states)} Let \(\mathcal{V}=\{V_s: s \in \mathcal{S}\}\) be an AVC, \(X\) be a random variable over \(\mathcal{X}\), and \(Z_{\mathcal{S}}\) be a collection of random variables satisfying \eqref{eq:goodp:25}.
 Suppose that a codebook \(\mathcal{C}\), containing \(L'=2^{NR'}\) codewords of length \(N\), is ``good'' with respect to \(X\), where \(R' > 0\) is a constant real number.
 Furthermore, let \(\mathfrak{S}_n, n \in \mathbb{N}\) be a series of sequence collections such that \(\mathfrak{S}_n \subseteq \mathcal{S}^n\).
 If there exists a real number \(R_d\) such that
 \[
 R_d = \limsup_{n\rightarrow\infty} \max_{s^n\in \mathfrak{S}_n} \bar{I}_{P_{s^n}}(X;Z_{\mathcal{S}}),
 \]
 then for any \(\tau > 0, \epsilon > 0\) and
 \[
 L < L' \cdot 2^{-N[R_d + \tau]} = 2^{N[R' - R_d - \tau]},
 \]
 there exists a secure partition \(\{\mathcal{C}_m\}_{m=1}^L\) on it such that
\[
I(\tilde{M}; Z^N(\mathcal{C}, s^N)) < \epsilon,
\]
for every \(s^N \in \mathfrak{S}_N\), when \(N\) is sufficiently large, where \(\tilde{M}\) and \(Z^N(\mathcal{C}, s^N)\) are the same as that introduced in Lemma \ref{lem:goodp}.
\end{lem}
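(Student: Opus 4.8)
The plan is to rerun the random colouring argument that proves Lemma \ref{lem:goodp} almost verbatim, changing only two ingredients: the family of state sequences over which secrecy is enforced, and the resolvability threshold that bounds the admissible number of subcodes. Recall that the partition $\{\mathcal{C}_m\}_{m=1}^{L}$ of the ``good'' codebook $\mathcal{C}$ is produced by Csisz\'ar's almost independent colouring scheme, assigning each of the $L'$ codewords one of $L$ colours; one then shows that for each fixed state sequence the conditional output distribution $P_{Z^N(\mathcal{C},s^N)\mid\tilde{M}=m}$ is, outside a doubly exponentially small failure event, close to the averaged output distribution, which yields $I(\tilde{M};Z^N(\mathcal{C},s^N))<\epsilon$. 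I would keep this colouring construction and its concentration estimate unchanged.

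The key point is that the soft-covering analysis is intrinsically type-dependent. Because $\mathcal{C}$ is ``good'', for any fixed $s^N$ almost all codewords of every subcode lie in $\tilde{T}^N[X,s^N]_{\delta,\eta}$, and the relevant output statistics are then controlled by Proposition \ref{rem:joint} and Corollary \ref{cor1} applied to the wiretap AVC (with $Z_{\mathcal{S}}$ in place of $Y_{\mathcal{S}}$), whose exponents are governed by $\bar{I}_{P_{s^N}}(X;Z_{\mathcal{S}})$ rather than by $\max_{s\in\mathcal{S}}I(X;Z_s)$. Hence resolvability for this particular $s^N$ succeeds as soon as each subcode contains more than $2^{N[\bar{I}_{P_{s^N}}(X;Z_{\mathcal{S}})+\tau']}$ codewords, i.e. as soon as $L<L'\cdot 2^{-N[\bar{I}_{P_{s^N}}(X;Z_{\mathcal{S}})+\tau']}$ for a small $\tau'>0$. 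In Lemma \ref{lem:goodp} one enforces this over all of $\mathcal{S}^N$, and since $\max_{s^N\in\mathcal{S}^N}\bar{I}_{P_{s^N}}(X;Z_{\mathcal{S}})=\max_{s\in\mathcal{S}}I(X;Z_s)$ by the upper bound in \eqref{eq:pre:12} applied to $Z_{\mathcal{S}}$, the threshold there reduces to $\max_{s\in\mathcal{S}}I(X;Z_s)$.

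For the present lemma I would instead impose secrecy only over $\mathfrak{S}_N$, so the binding constraint becomes $\max_{s^N\in\mathfrak{S}_N}\bar{I}_{P_{s^N}}(X;Z_{\mathcal{S}})$. By the definition of $R_d$ as a $\limsup$ there is an $N_0$ such that this maximum is below $R_d+\tau/2$ for every $N\geq N_0$; taking $\tau'=\tau/2$, the hypothesis $L<2^{N[R'-R_d-\tau]}$ then makes every subcode large enough for soft covering to apply to each $s^N\in\mathfrak{S}_N$. A union bound finishes the proof: there are at most $L\leq 2^{NR'}$ subcodes and at most $|\mathfrak{S}_N|\leq|\mathcal{S}|^N$ relevant state sequences, both only singly exponential, against a doubly exponentially small failure probability per pair, so the total failure probability tends to zero and a partition with $I(\tilde{M};Z^N(\mathcal{C},s^N))<\epsilon$ for all $s^N\in\mathfrak{S}_N$ exists.

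The only step requiring care beyond a literal transcription of Lemma \ref{lem:goodp} is confirming that the concentration estimate genuinely depends on $s^N$ only through its type $P_{s^N}$, so that the uniform threshold $\max_{s\in\mathcal{S}}I(X;Z_s)$ may legitimately be replaced by the type-dependent $\bar{I}_{P_{s^N}}(X;Z_{\mathcal{S}})$, and that the fixed exponent $\tau'$ can be chosen independently of $N$ while the $\limsup$ bound on $\max_{s^N\in\mathfrak{S}_N}\bar{I}_{P_{s^N}}(X;Z_{\mathcal{S}})$ is only available for $N\geq N_0$. Since the claim is asymptotic and $\tau,\epsilon$ are prescribed in advance, this presents no real difficulty.
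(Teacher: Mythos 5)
Your proposal is correct and takes essentially the same route as the paper: the appendix proof of this lemma reruns the proof of Lemma \ref{lem:goodp} with the family of conditional distributions fed to Csisz\'{a}r's coloring lemma restricted to \(s^N\in\mathfrak{S}_N\), and with the threshold \(\max_{s\in\mathcal{S}}I(X;Z_s)\) replaced by \(R_d+\tau/4\), which dominates \(\max_{s^N\in\mathfrak{S}_N}\bar{I}_{P_{s^N}}(X;Z_{\mathcal{S}})\) for large \(N\) exactly as you argue from the \(\limsup\) definition, the type-dependence of the exponents being guaranteed by Propositions \ref{rem:joint} and \ref{remarkTY}. The only cosmetic difference is that you unroll the union-bound and concentration content of Lemma \ref{lemma8} (over colours, state sequences and output sequences) rather than invoking it as a black box through its hypothesis \(k\log k \leq \varepsilon^2 l/(3\log(2|\mathcal{P}|))\).
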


Lemma \ref{lem:goodp2} can be used to deal with secure transmission over the communication model of AVWC with constrained state sequence. See Subsection \ref{sec:avwc:constrained:seq} for details.

The proofs of Lemma \ref{lem:goodp} and Lemma \ref{lem:goodp2} are similar. The proof of Lemma \ref{lem:goodp} is given below. The proof of Lemma \ref{lem:goodp2} is outlined in Appendix \ref{app:lem:goodp2}.

\textbf{Proof of Lemma \ref{lem:goodp}}

The proof is organized as the following three steps.

\begin{itemize}
\item Step 1 proves that
\begin{equation}\label{eq:goodp:1}
\operatorname{Pr}\{(X^N(\mathcal{C}), Z^N(\mathcal{C}, s^N)) \in \tilde{T}^N[XZ_{\mathcal{S}}]_{2\delta, \eta}\} > 1 - 2^{-N\nu_3}
\end{equation}
for every \(s^N \in \mathcal{S}^N\), where \((X^N(\mathcal{C}),Z^N(\mathcal{C}, s^N))\) is a pair of random sequences satisfying \eqref{eq:goodp:26} and \(\nu_3\) is some positive real number.

\item Step 2 proves the existence of a mapping \(f: \mathcal{C} \mapsto [1: L]\) such that
\begin{equation}\label{eq:goodp:5}
I(M_f; Z^N(\mathcal{C}, s^N)) < \epsilon_2
\end{equation}
for every \(s^N \in \mathcal{S}^N\), where \(\epsilon_2 \rightarrow 0\) as \(N \rightarrow \infty\) and \(M_f = f(X^N(\mathcal{C}))\). The proof is based on Lemma \ref{lemma8}.

 \item The mapping $f$ constructed in Step 2 will derive a secure partition on \(\mathcal{C}\). However, that partition is not necessarily equally divided. Step 3 constructs a desired secure equipartition to establish \eqref{eq:goodp:20} with the help of Lemma \ref{lemma9}.
\end{itemize}

\textbf{Proof of Step 1.} On account of the fact that \(X^N(\mathcal{C})\) is uniformly distributed over the ``good'' codebook \(\mathcal{C}\) (see Definition \ref{def:goodc}), it follows that
\[
\operatorname{Pr}\{X^N(\mathcal{C}) \in \tilde{T}^N[X,s^N]_{\delta,\eta}\} > 1 - 2\cdot 2^{-N\nu_1}
\]
for every \(s^N \in \mathcal{S}^N\). Combining Proposition \ref{remarkTXY} (see also Remark \ref{rem:important}), it follows that
\begin{equation}\label{eq:goodp:6}
\operatorname{Pr}\{(X^N(\mathcal{C}), Z^N(\mathcal{C}, s^N)) \in \tilde{T}^N[XZ_{\mathcal{S}}]_{2\delta, \eta}\} > 1 - 2\cdot 2^{-N\nu_1} - 2^{-N\nu_2} > 1-2^{-N\nu_3}
\end{equation}
for every \(s^N \in \mathcal{S}^N\) when \(\nu_3\) is sufficiently small. The proof of Step 1 is completed. \(\hfill\blacksquare\)

\textbf{Proof of Step 2.} The key idea of the proof is based on Csisz\'{a}r's almost independent coloring scheme in \cite{AI_SC}. The proof first realizes the parameters introduced in Lemma \ref{lemma8}, which is \eqref{eq:goodp:11}. Then Lemma \ref{lemma8} claims the existence of a mapping \(f:\mathcal{C}\mapsto[1:L']\) satisfying \eqref{eq:goodp:4} and \eqref{eq:goodp:3}. Formula \eqref{eq:goodp:5} is finally established from \eqref{eq:goodp:3}.

\begin{lem}\label{lemma8}
   (Lemma 3.1 in \cite{CR_C}) Let \(\mathcal{P}\) be a set of probability mass functions on \(\mathcal{A}\). If there exist \(0 < \varepsilon < \frac{1}{9}\) and \(l > 0\) such that
   \begin{equation}\label{eq:goodp:7}
   \begin{array}{c}
   \sum_{a: P(a) > l^{-1}} P(a) \leq \varepsilon
   \end{array}
   \end{equation}
   for all \(P \in \mathcal{P}\), then for any positive integer \(k\) satisfying \(k\log k \leq \frac{\epsilon^2l}{3\log(2|\mathcal{P}|)}\), there exists a function \(f: \mathcal{A} \mapsto [1: k]\) such that
   \[
   \begin{array}{c}
   \sum_{i=1}^k |P(f^{-1}(i)) - \frac{1}{k}| < 3\varepsilon
   \end{array}
   \]
   for all \(P \in \mathcal{P}\).
\end{lem}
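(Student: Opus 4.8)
The plan is to prove the lemma by the probabilistic method, producing the desired coloring at random. I would assign to each symbol $a \in \mathcal{A}$ an independent color $f(a)$ drawn uniformly from $[1:k]$, and then argue that with positive probability the resulting $f$ satisfies $\sum_{i=1}^k |P(f^{-1}(i)) - 1/k| < 3\varepsilon$ simultaneously for every $P \in \mathcal{P}$. Because the inequality must hold uniformly over the finite family $\mathcal{P}$ and over all $k$ color classes, the natural route is a concentration bound for each pair $(P,i)$ followed by a union bound over the $k|\mathcal{P}|$ resulting ``bad'' events.

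The first step is to isolate the heavy symbols. For a fixed $P$, set $\mathcal{A}_H = \{a : P(a) > l^{-1}\}$ and $\mathcal{A}_L = \mathcal{A} \setminus \mathcal{A}_H$, and split $P(f^{-1}(i)) = S_i + H_i$ with $S_i = \sum_{a \in \mathcal{A}_L} P(a)\mathbf{1}[f(a)=i]$ and $H_i = \sum_{a \in \mathcal{A}_H} P(a)\mathbf{1}[f(a)=i]$. The triangle inequality then gives
\[
\sum_{i=1}^k \left| P(f^{-1}(i)) - \tfrac{1}{k} \right| \leq \sum_{i=1}^k |S_i - E[S_i]| + \sum_{i=1}^k \left| E[S_i] - \tfrac{1}{k} \right| + \sum_{i=1}^k H_i.
\]
Here the hypothesis $\sum_{a \in \mathcal{A}_H} P(a) \leq \varepsilon$ carries the load: since $\sum_{i} H_i = P(\mathcal{A}_H)$ and $E[S_i] = (1 - P(\mathcal{A}_H))/k$, the last two sums are each exactly $P(\mathcal{A}_H) \leq \varepsilon$. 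Thus the heavy symbols and the induced mean shift contribute at most $2\varepsilon$ with no randomness involved, and it remains only to force the fluctuation term $\sum_i |S_i - E[S_i]|$ below $\varepsilon$; I would do this by guaranteeing $|S_i - E[S_i]| < \varepsilon/k$ for every $i$ and every $P$.

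The crux, and the step that fixes the exact shape of the hypothesis, is obtaining concentration sharp enough for $S_i$. Each $S_i$ is a sum of independent terms $P(a)\mathbf{1}[f(a)=i]$ bounded by $l^{-1}$, and a direct Hoeffding estimate using $\sum_{a \in \mathcal{A}_L} P(a)^2 \leq l^{-1}$ only yields a tail of order $\exp(-2\varepsilon^2 l / k^2)$, whose union bound would demand a wasteful $k^2$ in the hypothesis. The fix is to use a Bernstein-type inequality that exploits the true variance $\mathrm{Var}(S_i) \leq \tfrac{1}{k}\sum_{a\in\mathcal{A}_L} P(a)^2 \leq \tfrac{1}{kl}$, which is smaller by the factor $1/k$ coming from $\Pr\{f(a)=i\} = 1/k$. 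At the threshold $t = \varepsilon/k$ the variance term dominates the Bernstein denominator, producing a tail of order $\exp(-c\,\varepsilon^2 l / k)$ and hence only a linear-in-$k$ cost.

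Finally I would assemble the pieces. A union bound over the $k|\mathcal{P}|$ events $\{|S_i - E[S_i]| \geq \varepsilon/k\}$ bounds the overall failure probability by roughly $2k|\mathcal{P}|\exp(-c\,\varepsilon^2 l / k)$, and the assumption $k\log k \leq \varepsilon^2 l / (3\log(2|\mathcal{P}|))$ (together with $\varepsilon < \tfrac19$ to tidy the constants) supplies precisely the slack needed to push $c\,\varepsilon^2 l / k$ above $\log(2k|\mathcal{P}|) = \log(2|\mathcal{P}|) + \log k$, making this probability strictly less than $1$. Hence some realization of $f$ avoids every bad event, and for that $f$ the bound $\sum_i |P(f^{-1}(i)) - 1/k| < 3\varepsilon$ holds for all $P \in \mathcal{P}$. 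The only bookkeeping point to watch is that the heavy/light partition depends on $P$; but the random coloring is shared by all $P$ and each $P$ is analyzed with its own partition, so this causes no trouble.
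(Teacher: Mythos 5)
The paper does not actually prove this lemma---it is quoted as Lemma 3.1 of the Ahlswede--Csisz\'{a}r common-randomness paper \cite{CR_C}---and the proof in that reference is precisely the random-coloring argument you describe: independent uniform colors, a per-distribution split into heavy symbols (whose total mass is at most \(\varepsilon\) by \eqref{eq:goodp:7}) and light symbols, a Bernstein/Chernoff-type tail exploiting the variance bound \(1/(kl)\) rather than Hoeffding's \(1/l\), and a union bound over the \(k|\mathcal{P}|\) bad events, which the condition \(k\log k \leq \varepsilon^2 l/(3\log(2|\mathcal{P}|))\) is calibrated to absorb. Your three-term decomposition with each term at most \(\varepsilon\), the identification that plain Hoeffding is too weak by a factor of \(k\), and the remark that the \(P\)-dependent heavy/light partition is harmless because the coloring is shared, are all correct, so the proposal is sound and matches the cited proof in approach.
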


To realize the parameters introduced in Lemma \ref{lemma8}, the main job is to construct \(\mathcal{P}\), which is a collection of probability mass functions on the codebook \(\mathcal{C}\) in our proof. The construction depends on a collection of subsets \(\mathcal{B}(\mathcal{C}, s^N)\) of \(\mathcal{Z}^N\) for all \(s^N \in \mathcal{S}^N\), which are defined as
\begin{equation}\label{eq:goodp:16}
\mathcal{B}(\mathcal{C}, s^N) = \mathcal{B}_0(\mathcal{C}, s^N) \setminus \mathcal{B}_1(\mathcal{C}, s^N),
\end{equation}
where
\begin{equation}\label{eq:goodp:12}
\mathcal{B}_0(\mathcal{C}, s^N) = \{z^N \in \tilde{T}^N[Z_{\mathcal{S}},s^N]_{2\delta,\eta}: \Psi(\mathcal{C}, s^N, z^N) < 2^{-N\nu_3/2}\}
\end{equation}
with
\[
\Psi(\mathcal{C}, s^N, z^N) = \operatorname{Pr}\{X^N(\mathcal{C}) \notin \tilde{T}^N[XZ_{\mathcal{S}}|s^N, z^N]_{2\delta, \eta} | Z^N(\mathcal{C}, s^N) = z^N\},
\]
and
\begin{equation}\label{eq:goodp:15}
\mathcal{B}_1(\mathcal{C}, s^N) =  \{z^N \in \mathcal{Z}^N: \operatorname{Pr}\{Z^N(\mathcal{C}, s^N) = z^N\} <  2^{-N\nu_3/2} \prod_{i = 1}^N P_{Z_{s_i}}(z_i)\}.
\end{equation}

Firstly, we claim a useful property of \(\mathcal{B}(\mathcal{C}, s^N)\) in \eqref{eq:goodp:2}. For every \(s^N \in \mathcal{S}^N\), Formulas \eqref{eq:goodp:6} and \eqref{eq:goodp:12} give
\[
\operatorname{Pr}\{Z^N(\mathcal{C}, s^N) \in \mathcal{B}_0(\mathcal{C}, s^N)\} > 1 - 2^{-N\nu_3/2}.
\]
Meanwhile, it follows from \eqref{eq:goodp:15} that
\[
\operatorname{Pr}\{Z^N(\mathcal{C}, s^N) \in \mathcal{B}_1(\mathcal{C}, s^N)\} < 2^{-N\nu_3/2}
\]
for every \(s^N \in \mathcal{S}^N\). Combining the last two formulas above gives
\begin{equation}\label{eq:goodp:2}
\operatorname{Pr}\{Z^N(\mathcal{C}, s^N) \in \mathcal{B}(\mathcal{C}, s^N)\} > 1- 2\cdot 2^{-N\nu_3/2}
\end{equation}
for every \(s^N \in \mathcal{S}^N\).

Secondly, with the help of \(\mathcal{B}(\mathcal{C}, s^N)\), the parameters introduced in Lemma \ref{lemma8} are realized as
\begin{equation}\label{eq:goodp:11}
\begin{array}{c}
\mathcal{A}=\mathcal{C}, \varepsilon= 2^{-N\nu_3/2},\\
 l = 2^{N [R' - \max_{s\in\mathcal{S}}I(X;Z_s) - \tau/2]}, \\
 k = L < 2^{N [R' - \max_{s\in\mathcal{S}}I(X;Z_s) - \tau]},\\
\mathcal{P} = \{P_{s^N, z^N}: s^N \in \mathcal{S}^N, z^N \in \mathcal{B}(\mathcal{C}, s^N)\} \cup \{P_0\},
\end{array}
\end{equation}
where
\begin{equation}\label{eq:achievable:p_0}
P_0(x^N) = \operatorname{Pr}\{X^N(\mathcal{C}) = x^N\}
\end{equation}
and
\begin{equation}\label{eq:achievable:p_zn}
P_{s^N, z^N}(x^N) = \operatorname{Pr}\{X^N(\mathcal{C}) = x^N | Z^N(\mathcal{C}, s^N) = z^N\}.
\end{equation}
The verification that parameters in \eqref{eq:goodp:11} satisfy the preconditions in Lemma \ref{lemma8}, is given in Appendix \ref{app:goodp:11}.
Using Lemma \ref{lemma8} with parameters from \eqref{eq:goodp:11}, there exists \(f : \mathcal{C} \mapsto [1: k]\) satisfying that
\begin{equation}\label{eq:goodp:4}
\sum_{i=1}^k |\operatorname{Pr}\{X^N(\mathcal{C}) \in f^{-1}(i)\} - \frac{1}{k}| = \sum_{i=1}^k |\operatorname{Pr}\{M_f = i\} - \frac{1}{k}| < 3\varepsilon
\end{equation}
and
\begin{equation}\label{eq:goodp:3}
\sum_{i=1}^k |\operatorname{Pr}\{X^N(\mathcal{C}) \in f^{-1}(i) | Z^N(\mathcal{C}, s^N) = z^N\} - \frac{1}{k}| =  \sum_{i=1}^k |\operatorname{Pr}\{M_f = i | Z^N(\mathcal{C}, s^N) = z^N\} - \frac{1}{k}| <  3\varepsilon
\end{equation}
for all \(s^N \in \mathcal{S}^N\) and \(z^N \in \mathcal{B}(\mathcal{C}, s^N)\), where \(M_f = f(X^N(\mathcal{C}))\).

Finally, recall that \(L = k\). Therefore, Formula \eqref{eq:goodp:3} and the uniform continuity of entropy (see Lemma 1.2.7 in \cite{CT_DMS}) gives
\[
\begin{array}{l}
H(M_f|Z^N(\mathcal{C}, s^N) = z^N) \geq \log L - 3\varepsilon \log \frac{L}{3\varepsilon}
\end{array}
\]
for all \(s^N \in \mathcal{S}^N\) and \(z^N \in \mathcal{B}(\mathcal{C}, s^N)\). Combining \eqref{eq:goodp:2} and the formula above, it is satisfied that
\[
\begin{array}{lll}
H(M_f|Z^N(\mathcal{C}, s^N)) > (1-2\varepsilon)(\log L - 3\varepsilon \log \frac{L}{3\varepsilon})
\end{array}
\]
for every \(s^N \in \mathcal{S}^N\). Since \(H(M_f) \leq \log L\), we arrive at
\[
I(M_f; Z^N(\mathcal{C}, s^N)) < 5\varepsilon \log L - 3\varepsilon \log (3\varepsilon)
\]
for every \(s^N \in \mathcal{S}^N\). Formula \eqref{eq:goodp:5} is established by substituting \(\varepsilon=2^{-N\nu_3/4}\) and \(L < 2^{N[R' - \max_{s\in\mathcal{S}}I(X;Z_s) - \tau]}\) into the formula above. The proof of Step 2 is completed. \(\hfill\blacksquare\)

\textbf{Proof of Step 3.} The proof depends on the following lemma.
\begin{lem}\label{lemma9}
(Lemma 4 in \cite{dan-entropy}) For any given codebook \(\mathcal{C}\), if the function \(f: \mathcal{C} \mapsto [1: L]\) satisfies \eqref{eq:goodp:4}, then there exists a partition \(\{\mathcal{C}_{m}\}_{m=1}^{L}\) on \(\mathcal{C}\) such that
\begin{enumerate}
  \item \(|\mathcal{C}_{m}| = \frac{L'}{L}\) for all \(m \in [1: L]\),
  \item \(H(\tilde{M}|M_f) < 4 \sqrt{\varepsilon} \log L\),
\end{enumerate}
where \(\tilde{M}\) is the index of the bin containing \(X^N(\mathcal{C})\), i.e., \(X^N(\mathcal{C}) \in \mathcal{C}_{\tilde{M}}\).
\end{lem}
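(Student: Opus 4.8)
The plan is to build the equipartition $\{\mathcal{C}_m\}_{m=1}^L$ by rounding the partition induced by $f$ and then to control $H(\tilde{M}\mid M_f)$ through the fraction of codewords that the rounding displaces. First I would record that, since $X^N(\mathcal{C})$ is uniform over the $L'$ codewords of $\mathcal{C}$, we have $\operatorname{Pr}\{M_f = i\} = |f^{-1}(i)|/L'$, so condition \eqref{eq:goodp:4} reads $\sum_{i=1}^{L} \big| |f^{-1}(i)| - n \big| < 3\varepsilon L'$ with $n := L'/L$ (an integer, since in the intended application $L$ and $L'$ are both powers of two with $L<L'$). I would then construct $\mathcal{C}_m$ greedily: for each $i$ retain $\min(|f^{-1}(i)|, n)$ codewords of $f^{-1}(i)$ inside $\mathcal{C}_i$, collect the surplus codewords of the oversized classes into a pool $\mathcal{R}$, and use $\mathcal{R}$ to top up the undersized classes to exactly $n$ elements. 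By construction each $|\mathcal{C}_m| = n = L'/L$, which is conclusion (1), and $\tilde{M}$ is uniform on $[1:L]$.

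Next I would bound the displacement. A codeword lies in a bin $\mathcal{C}_m$ with $m \neq M_f$ only if it belongs to the pool, and
\[
|\mathcal{R}| = \sum_{i:\,|f^{-1}(i)|>n}\big(|f^{-1}(i)| - n\big) = \tfrac{1}{2}\sum_{i=1}^{L}\big| |f^{-1}(i)| - n \big| < \tfrac{3\varepsilon}{2}L',
\]
where the middle equality uses $\sum_i (|f^{-1}(i)| - n) = L' - Ln = 0$, so the positive and negative deviations coincide. Consequently $P_e := \operatorname{Pr}\{\tilde{M} \neq M_f\} = |\mathcal{R}|/L' < \tfrac{3\varepsilon}{2}$.

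Finally I would convert this error probability into the entropy bound via Fano's inequality applied with the identity predictor $\hat{M} = M_f$: since $\tilde{M}$ ranges over $[1:L]$,
\[
H(\tilde{M}\mid M_f) \le h(P_e) + P_e\log(L-1) \le h(P_e) + P_e\log L .
\]
The square root in the target bound enters here: using the elementary estimate $h(p)\le 2\sqrt{p}$ on $[0,\tfrac12]$ gives $h(P_e) \le 2\sqrt{3\varepsilon/2} = \sqrt{6}\,\sqrt{\varepsilon}$, while $P_e\log L < \tfrac{3}{2}\varepsilon\log L \le \tfrac{3}{2}\sqrt{\varepsilon}\log L$ since $\varepsilon\le\sqrt{\varepsilon}$. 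Combining these with $\log L \ge 1$ yields $H(\tilde{M}\mid M_f) < (\sqrt{6} + \tfrac{3}{2})\sqrt{\varepsilon}\log L < 4\sqrt{\varepsilon}\log L$, which is conclusion (2).

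The hard part will be matching the exact constant $4$ and the $\sqrt{\varepsilon}$ (rather than $\varepsilon$) scaling. The list-size penalty $P_e\log L$ from Fano is harmless because it already carries a factor $\log L$; the delicate term is the binary entropy $h(P_e)$, which has no $\log L$ attached and must be tamed by the inequality $h(p)\le 2\sqrt{p}$ and then absorbed into $4\sqrt{\varepsilon}\log L$ using only $\log L \ge 1$. A secondary point to verify is the integrality needed for an exact equipartition, which is precisely why the construction is phrased as a rounding of the classes $f^{-1}(i)$ rather than as a direct relabeling of $f$.
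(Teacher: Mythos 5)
The paper does not prove this lemma itself --- it imports it verbatim as Lemma 4 of \cite{dan-entropy} --- so there is no in-paper argument to compare against; I can only judge your proof on its own terms, and it is correct and complete. Your rebalancing construction (truncate each class $f^{-1}(i)$ to $n=L'/L$, pool the surplus, top up the deficient classes) gives conclusion (1) directly, the identity $|\mathcal{R}|=\tfrac12\sum_i\bigl||f^{-1}(i)|-n\bigr|<\tfrac{3\varepsilon}{2}L'$ correctly exploits that the signed deviations sum to zero, and the passage from $P_e=\Pr\{\tilde M\neq M_f\}<\tfrac{3\varepsilon}{2}$ to conclusion (2) via Fano together with $h(p)\le 2\sqrt{p}$ and $\varepsilon\le\sqrt\varepsilon$ yields the constant $\sqrt6+\tfrac32<4$ as claimed. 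The only caveats are ones you already flag and which are implicit in the lemma statement or trivial: $L'/L$ must be an integer for an exact equipartition (guaranteed in the paper's application), and the absorption of $h(P_e)$ into $4\sqrt{\varepsilon}\log L$ needs $L\ge 2$, the case $L=1$ being vacuous since then $H(\tilde M|M_f)=0$.
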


From Lemma \ref{lemma9} and Formula \eqref{eq:goodp:5}, we have
\begin{equation}\label{eq:goodp:21}
\begin{array}{lll}
I(\tilde{M}; Z^N(\mathcal{C}, s^N)) &\leq& I(\tilde{M}, M_f; Z^N(\mathcal{C}, s^N)) \\
& =& I(M_f; Z^N(\mathcal{C}, s^N)) + I(\tilde{M}; Z^N(\mathcal{C}, s^N) | M_f)\\
& \leq& I(M_f; Z^N(\mathcal{C}, s^N)) + H(\tilde{M} | M_f) \leq \epsilon_2 + 4 \sqrt{\varepsilon} \log L
\end{array}
\end{equation}
for all \(s^N \in \mathcal{S}^N\).
Since \(\varepsilon=2^{-N\nu_3/4}\) and \(L < 2^{N[R' - \max_{s\in\mathcal{S}}I(X;Z_s) - \tau]}\), it follows that \(4 \sqrt{\varepsilon} \log L \rightarrow 0\) as \(N \rightarrow \infty\). Therefore, the rightmost side of Formula \eqref{eq:goodp:21} is vanishing, accomplishing the proof of Step 3.

The proof of Lemma \ref{lem:goodp} is completed. \(\hfill\blacksquare\)

\section{Proofs of the Main Theorems}\label{section:proofs}

\subsection{Proof of Theorem \ref{thm:avc:csr}}\label{sec:thm:avc:csr}

This subsection proves the direct part of Theorem \ref{thm:avc:csr}. In particular, it suffices to establish the capacity of deterministic code over the AVC-CSR \(\mathcal{W}\) with respect to the average decoding error probability. The other results can be proved by standard technique in \cite{Ahlswede-1978}.
More precisely, we will prove that for any given random variables \((X, Y_{\mathcal{S}})\) satisfying \eqref{eq:measure:1}, and any real numbers \(0 < \tau < \min_{s\in\mathcal{S}}I(X; Y_s)\) and \(\epsilon > 0\), there exists a deterministic code \((f, \phi)\) over that AVC-CSR such that
\[
\frac{1}{N} \log |\mathcal{M}| > \min_{s\in\mathcal{S}}I(X; Y_s) - \tau \text{ and } \bar{\lambda}^{CSR}(\mathcal{W}, f, \phi) < \epsilon
\]
when the block length \(N\) is sufficiently large, where \(f: \mathcal{M} \mapsto \mathcal{X}^N\) and \(\phi: \mathcal{Y}^N \times \mathcal{S}^N \mapsto  \mathcal{M}\).

We first introduce the coding scheme over the AVC-CSR, and then Lemma \ref{lem:avc:dc} claims that the coding scheme is effective.

The coding scheme is designed as follows.
\begin{itemize}
\item \textbf{Codebook Genaration.} Let random variables \((X, Y_{\mathcal{S}})\) and positive real number \(\tau\) be given. Denote by \(\mathbf{C} = \{X^N(m)\}_{m=1}^L\) a random codebook satisfying \eqref{eq:lem:goodc} with \(L'=L\), where \(2^{N[\min_{s\in\mathcal{S}}I(X; Y_s) - \tau]} < L < 2^{N[\min_{s\in\mathcal{S}}I(X; Y_s) - \tau/2]}\). The final deterministic codebook used by the encoder is a sample value generated by \(\mathbf{C}\).

\item \textbf{Encoder.} Let \(\mathcal{M} = [1:L]\) be the message set, and the random variable \(M\) be the source message uniformly distributed over \(\mathcal{M}\). Suppose that the codebook \(\mathcal{C} = \{x^N(m)\}_{m=1}^L\) is specified. The source message \(M\) is then encoded as \(f(M) = x^N(M)\).

\item \textbf{Decoder.} The decoding scheme is constructed by an iteration scheme. Suppose that the codebook \(\mathcal{C} = \{x^N(m)\}_{m=1}^L\) is specified. Choose a sufficiently small real number \(\nu_2\) with \(\nu_2 < \tau/8\) such that Proposition \ref{remarkTXY} holds. For any given \(s^N \in \mathcal{S}^N\), the decoding process is manipulated as follows.

Denote \(\tilde{\mathcal{D}}_0(s^N) = \mathcal{D}_0(s^N) = \emptyset\).
For \(1 \leq m \leq L\), let \[\mathcal{D}_m(s^N) = \tilde{T}^N[XY_{\mathcal{S}}, s^N| x^N(m)]_{2\delta,\eta}\backslash\tilde{\mathcal{D}}_{m-1}(s^N),\] if it is satisfied that
\begin{equation}\label{eq:avc:csr:1}
x^N(m) \in \tilde{T}^N[X,s^N]_{\delta,\eta}
\end{equation}
and
\begin{equation}\label{eq:avc:csr:2}
W(\tilde{T}^N[XY_{\mathcal{S}}, s^N| x^N(m)]_{2\delta,\eta}/\tilde{\mathcal{D}}_{m-1}(s^N)|x^N(l), s^N) > 1-2\cdot 2^{-N{\nu_2}}.
\end{equation}
Otherwise, set \(\mathcal{D}_m(s^N) = \emptyset\). After the value of \(\mathcal{D}_m(s^N)\) is determined,
let \(\tilde{\mathcal{D}}_m(s^N) = \tilde{\mathcal{D}}_{m-1}(s^N) \cup \mathcal{D}_m(s^N)\).

 Suppose the state sequence \(s^N\) is given. The received sequence \(y^N\) is decoded as \(\phi(y^N,s^N) = \hat{m}\) if \(y^n \in \mathcal{D}_{\hat{m}}(s^N)\). If \(y^N\) is out of \(\tilde{\mathcal{D}}_L(s^N)\), declare a decoding error.
\end{itemize}

\begin{rem}
The coding scheme above is quite similar to that of DMCs except for the decoder. Here are some notes for a further explanation. For each source message $m$, $\mathcal{D}_m(s^N)$ is the decoding set of $m$, and $\tilde{\mathcal{D}}_m(s^N)$ represents the union of the decoding sets of messages $1$ to $m$. The decoding sets are related to $s^N$ since the receiver knows the state sequence. Further, when $s^N$ is fixed, the decoding sets of all the messages are disjoint. According to Formula \eqref{eq:avc:csr:2}, when $\mathcal{D}_m(s^N)$ is non-empty, the decoding error probability of message $m$ under the state sequence $s^N$ must be less than $2\cdot 2^{-N{\nu_2}}$. Therefore, the decoding error probability of a source message is either less than $2\cdot 2^{-N{\nu_2}}$ or exactly 1. Meanwhile, on account of \eqref{eq:avc:csr:2}, $\mathcal{D}_m(s^N)$ is non-empty only if the related codeword $x^N(m)$ is typical under the state sequence $s^N$. Finally, for a fixed message $m$, it is possible that its decoding error probability is small under a certain state sequence, and is 1 under another.
\end{rem}

The following lemma claims that the coding scheme above is effective.

 \begin{lem}\label{lem:avc:dc}
 For any given codebook \(\mathcal{C} = \{x^N(m)\}_{m=1}^L\), let \((f, \phi)\) be the pair of encoder and decoder produced by the coding scheme above. Denote by
 \begin{equation}\label{eq:lem:avc:dc}
 \bar{e}^{CSR}(\mathcal{C}, s^N) = \bar{e}^{CSR}(f,\phi, s^N) = \frac{1}{L}\sum_{m=1}^{L} [1-W(\phi^{-1}(m, s^N)|f(m),s^N)]
 \end{equation}
 the average decoding error probability of the coding scheme under the state sequence $s^N$, where \(f(m)=x^N(m)\) and \(\phi^{-1}(m, s^N) = \mathcal{D}_m(s^N)\) according to the coding scheme above. Then for any \(\epsilon > 0\), it follows that
 \begin{equation}\label{eq:thm:avc:dc}
 \operatorname{Pr}\Bigg\{ \max_{s^N \in \mathcal{S}^N} \bar{e}^{CSR}(\mathbf{C}, s^N) > \epsilon \Bigg\} < \epsilon_3
 \end{equation}
 where \(\mathbf{C}\) is the random codebook satisfying \eqref{eq:lem:goodc} with \(L'=L\), and \(\epsilon_3 \rightarrow 0\) as \(N \rightarrow \infty\).
 \end{lem}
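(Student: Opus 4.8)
The plan is to bound the average error under each fixed state sequence by a vanishing constant plus the \emph{fraction of ``bad'' messages} --- those $m$ for which $\mathcal{D}_m(s^N)=\emptyset$ --- and then to show that this fraction stays below $\epsilon$ for \emph{all} $s^N\in\mathcal{S}^N$ simultaneously, with probability tending to $1$ over the random codebook $\mathbf{C}$. As the Remark preceding the lemma records, condition \eqref{eq:avc:csr:2} forces the error of message $m$ to be below $2\cdot 2^{-N\nu_2}$ whenever $\mathcal{D}_m(s^N)\neq\emptyset$. Writing $U_m(s^N)=\mathbf{1}[\mathcal{D}_m(s^N)=\emptyset]$, I would first record the deterministic inequality
\[
\bar{e}^{CSR}(\mathbf{C},s^N)\;\le\;2\cdot 2^{-N\nu_2}+\frac{1}{L}\sum_{m=1}^{L}U_m(s^N),
\]
so that it suffices to control $\tfrac1L\sum_m U_m(s^N)$ uniformly in $s^N$.

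Second, I would estimate the conditional probability that a single message is bad. A message fails either because (a) $X^N(m)\notin\tilde{T}^N[X,s^N]_{\delta,\eta}$, or because (b) $X^N(m)$ is typical yet \eqref{eq:avc:csr:2} fails. For (a), Proposition \ref{remarkTX} gives probability $<2^{-N\nu_1}$ uniformly in $s^N$. For (b), the crucial observation is that $\tilde{\mathcal{D}}_{m-1}(s^N)$ depends only on $X^N(1),\dots,X^N(m-1)$ and is therefore independent of $X^N(m)$; combining Proposition \ref{remarkTXY} with \eqref{eq:avc:csr:2} shows that (b) can occur only if $W(\tilde{\mathcal{D}}_{m-1}(s^N)\mid X^N(m),s^N)\ge 2^{-N\nu_2}$. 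Averaging the channel over the i.i.d.\ codeword $X^N(m)$ converts this output probability into the product-of-marginals measure of $\tilde{\mathcal{D}}_{m-1}(s^N)$; since that set is a union of at most $m-1<L$ conditional typical sets centred at typical codewords, Corollary \ref{cor1}, together with $\bar{I}_{P_{s^N}}(X;Y_{\mathcal{S}})\ge\min_{s}I(X;Y_s)$ and $L<2^{N[\min_s I(X;Y_s)-\tau/2]}$, bounds it by $2^{-N\gamma}$ for some $\gamma>0$ (taking $\delta,\eta$ small enough that the typicality corrections are dominated by $\tau/2$). Markov's inequality then yields $\Pr\{(b)\mid X^N(1),\dots,X^N(m-1)\}<2^{-N(\gamma-\nu_2)}$, with $\gamma-\nu_2>0$ because the scheme fixes $\nu_2<\tau/8$. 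Hence $E[U_m(s^N)\mid X^N(1),\dots,X^N(m-1)]\le p_N$ with $p_N\to 0$ exponentially.

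Third --- the step I expect to be the main obstacle --- I must pass from this per-message, per-$s^N$ estimate to a bound holding simultaneously over the $|\mathcal{S}|^N$ state sequences. Applying Markov to the expectation alone is too weak, since the exponent of $p_N\approx 2^{-N\min(\nu_1,\,\gamma-\nu_2)}$ need not exceed $\log|\mathcal{S}|$. The resolution exploits the sequential structure: for fixed $s^N$ the variables $\{U_m(s^N)\}_{m}$ are $\{0,1\}$-valued with conditional means at most $p_N$ along the filtration generated by $X^N(1),\dots,X^N(m)$, so $\sum_m U_m(s^N)$ is stochastically dominated by a $\mathrm{Binomial}(L,p_N)$ random variable. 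A Chernoff bound then gives $\Pr\{\tfrac1L\sum_m U_m(s^N)>\epsilon/2\}\le \exp(-L\,D(\epsilon/2\,\|\,p_N))$, which is \emph{doubly} exponentially small in $N$ because $L=2^{NR}$ is itself exponentially large while $D(\epsilon/2\,\|\,p_N)$ grows linearly in $N$. A union bound over all $s^N\in\mathcal{S}^N$ multiplies this by $2^{N\log|\mathcal{S}|}$, which the doubly exponential factor absorbs, establishing \eqref{eq:thm:avc:dc} with $\epsilon_3\to 0$. The type-(a) contribution can alternatively be dispatched uniformly in one stroke by invoking the ``good''-codebook property of Lemma \ref{lem:goodc}; I would mention this, as it isolates the genuinely new difficulty in the type-(b) overlap estimate.
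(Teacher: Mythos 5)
Your proposal is correct and follows essentially the same route as the paper's proof: the same per-message conditional estimate (typicality failure plus the overlap event, handled via Corollary \ref{cor1} and Markov's inequality with the independence of $X^N(m)$ from $\tilde{\mathcal{D}}_{m-1}(s^N)$), followed by a Chernoff-type concentration of the sequentially dependent sum yielding a doubly exponential bound that absorbs the union over $|\mathcal{S}|^N$ state sequences. The only cosmetic difference is that you phrase the concentration step as binomial stochastic domination of the bad-message indicators, whereas the paper bounds the moment generating function of $\sum_m e_m$ directly via its Lemma \ref{lem:rv}; the two are interchangeable here.
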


 The proof of Lemma \ref{lem:avc:dc}, similar to that of Theorem 5 in \cite{Ahlswede-1982}, is given in Appendix \ref{app:lem:avc:dc}.

\subsection{Proof of Theorem \ref{thm:avwc:rc}}\label{sec:thm:avwc:rc}

In this subsection, we provide the proof of Theorem \ref{thm:avwc:rc}, which claims a lower bound on the secrecy capacity of random code over the AVWC \((\mathcal{W}, \mathcal{V})\). Let the collection of random variables \(\{(X, Y_q, Z_s): q\in\mathcal{P}(\mathcal{S}) \text{ and } s\in\mathcal{S}\}\) satisfy
\begin{equation}\label{eq:proof:thm:avwc:rc:6:y}
\operatorname{Pr}\{X=x,Y_q=y\} = P_X(x)W_q(y|x)
\end{equation}
and
\begin{equation}\label{eq:proof:thm:avwc:rc:6:z}
\operatorname{Pr}\{X=x,Z_s=z\} = P_X(x)V_s(z|x)
\end{equation}
with \(W_q\) given by \eqref{eq:pre:4}. It suffices to prove that for any real numbers \(0 < \tau < \min_{q\in\mathcal{P}(\mathcal{S})}(X; Y_{q}) - \max_{s\in\mathcal{S}}I(X; Z_{s})\) and \(\epsilon > 0\), there exists a pair of random encoder and decoder \((F, \Phi)\) over that AVWC \((\mathcal{W}, \mathcal{V})\) such that
\begin{equation}\label{eq:proof:thm:avwc:rc:1}
\frac{1}{N} \log |\mathcal{M}| > \min_{q\in\mathcal{P}(\mathcal{S})}(X; Y_{q}) - \max_{s\in\mathcal{S}}I(X; Z_{s}) - \tau, \lambda(\mathcal{W}, F, \Phi) < \epsilon
\text{ and } \max_{s^N \in \mathcal{S}^N} I(M; Z^N(s^N)|\Phi) < \epsilon,
\end{equation}
where \(M\) is the source message uniformly distributed over the message set \(\mathcal{M}\), and the random sequence \(Z^{N}(s^{N})\), which is defined in \eqref{eq:def:avwc:z}, is the output of the wiretap AVC under the state sequence \(s^{N}\).

In the remainder of this subsection, we first introduce two preliminary lemmas, and then present the formal proof.

\textbf{The preliminary lemmas.}

 The two lemmas introduced here are briefly described as follows. Lemma \ref{lem:avc:rc} claims the capacity of random code over the AVC. However, knowing the capacity result is not enough for the proof of Theorem \ref{thm:avwc:rc}. The more important thing is that the capacity can be achieved by a random coding method, detailed in Remark \ref{rem:avc:rc}. Lemma \ref{lem:avc:ele} comes from the well-known elimination technique introduced by Ahlswede in \cite{Ahlswede-1978}, which was a key tool constructing a stochastic code from a random code. The main conclusion is that it suffices to let the random code be distributed over a collection of deterministic codes, whose size is quadratic to the codeword length.

\begin{lem}\label{lem:avc:rc}
(Lemma 2.6.10 in \cite{CT_DMS}) Let an AVC \(\mathcal{W}\), a collection of random variables \((X, Y_{\mathcal{S}})\) satisfying Formula \eqref{eq:measure:1}, and real numbers \(0 < \tau' < \min_{q\in\mathcal{P}(\mathcal{S})}(X; Y_{q})\) and \(\epsilon' > 0\) be given. There exists a pair of random encoder and decoder \((F, \Phi)\) distributed over a certain family of deterministic encoder-decoder pairs \(\{(f_g, \phi_g): g \in \mathcal{G}\}\) with \(f_g: \mathcal{M}' \mapsto \mathcal{X}^N\) and \(\phi_g: \mathcal{Y}^N \mapsto \mathcal{M}'\) such that
\[\frac{1}{N} \log |\mathcal{M}'| > \min_{q\in\mathcal{P}(\mathcal{S})}(X; Y_{q}) - \tau' \text{ and } \lambda(\mathcal{W}, F, \Phi) < \epsilon'\]
when \(N\) is sufficiently large.
\end{lem}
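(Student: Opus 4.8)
The plan is to realize the random code as an i.i.d.\ random codebook paired with a state-oblivious joint-typicality decoder, and then to bound the expected maximal error uniformly over all state sequences. Concretely, I would fix a rate $R$ with $\min_{q\in\mathcal{P}(\mathcal{S})}I(X;Y_q)-\tau' < R < \min_{q\in\mathcal{P}(\mathcal{S})}I(X;Y_q)$, set $|\mathcal{M}'|=\lceil 2^{NR}\rceil$, and let the random encoder $F$ generate the codewords $\{X^N(m)\}_{m\in\mathcal{M}'}$ with all letters drawn independently from $P_X$, exactly as in \eqref{eq:lem:goodc}. Each realization $\mathcal{C}$ of this codebook, together with the decoder described next, is one deterministic pair $(f_g,\phi_g)$, and the index set $\mathcal{G}$ is simply the family of all such realizations; no reduction of $|\mathcal{G}|$ is needed here, since the lemma only asserts the existence of a random code (the elimination technique of Lemma \ref{lem:avc:ele} is reserved for the later theorems).

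Since the decoder cannot observe $s^N$, it must be oblivious to the state, and I would let it test joint typicality against the entire family of averaged channels. Given $y^N$, it decodes to $\hat m$ if $\hat m$ is the unique index for which $(x^N(\hat m),y^N)\in T^N_\delta(P_{XY_q})$ for some state type $q$, with $P_{XY_q}$ as in \eqref{eq:pre:9} and $W_q$ as in \eqref{eq:pre:4}, and declares an error otherwise. The observation that makes this oblivious test affordable is that the collection of state types has at most $(N+1)^{|\mathcal{S}|}$ elements, so unioning the test over all $q$ costs only a subexponential factor that leaves the rate intact.

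For the error analysis I would fix an arbitrary $s^N$ and exploit that $F$ is generated independently of $s^N$, which is precisely what separates random codes from deterministic ones over an AVC. Writing $q^\ast=P_{s^N}$, the pair $(X^N(m),Y^N(s^N))$ restricted to the positions with $s_i=a$ is i.i.d.\ $\sim P_{XY_a}$, so by Propositions \ref{remarkTX} and \ref{remarkTXY} it lands in the state-dependent jointly typical set with probability at least $1-2^{-N\nu_1}-2^{-N\nu_2}$; since state-dependent joint typicality forces the overall empirical distribution of $(X^N(m),Y^N(s^N))$ to concentrate on $P_{XY_{q^\ast}}$, the correct codeword passes the decoder's test under $q^\ast$ with high probability. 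For the spurious event, any wrong codeword $X^N(m')$ is independent of $Y^N(s^N)$ and i.i.d.\ $\sim P_X$, so the probability that it is jointly typical with $y^N$ under a fixed $W_q$ is at most $2^{-N(I(X;Y_q)-\gamma_\delta)}$, where $\gamma_\delta\to 0$ as $\delta\to 0$; a union bound over the $2^{NR}$ messages and the $(N+1)^{|\mathcal{S}|}$ types gives a misdecoding probability of order $2^{NR}(N+1)^{|\mathcal{S}|}2^{-N(\min_q I(X;Y_q)-\gamma_\delta)}$, which vanishes once $\delta$ is chosen so small that $\gamma_\delta<\min_q I(X;Y_q)-R$. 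Because the random codebook is exchangeable, $E[e_m(\mathcal{W},F,\Phi,s^N)]$ does not depend on $m$, and because the bounds above are uniform in $s^N$, taking the maximum over $m$ and over the (exponentially many) state sequences still leaves the expected error below $\epsilon'$.

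The main obstacle is the interplay between the state-oblivious decoder and the arbitrariness of $s^N$: a single decoder must decode correctly for every state sequence at once. I expect the delicate step to be the correct-decoding event, namely that the true output is jointly typical with the true codeword for the averaged channel $W_{q^\ast}$ even though the realized channel is the inhomogeneous product $\prod_i W_{s_i}$. This is exactly where the independence of the codebook from $s^N$ together with the typicality machinery of Subsection \ref{sec:typical} does the work, reducing the inhomogeneous product to a per-state i.i.d.\ analysis whose aggregate is the single memoryless channel $W_{q^\ast}$.
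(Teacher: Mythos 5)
The paper does not actually prove this lemma: it is imported as Lemma 2.6.10 of \cite{CT_DMS}, and the only feature of that proof the paper relies on is recorded in Remark \ref{rem:avc:rc}, namely that the random codebook consists of i.i.d.\ codewords drawn from \(P_X\) as in \eqref{eq:lem:goodc}. Your sketch is the standard random-coding argument for the AVC and is fully consistent with that description: i.i.d.\ codebook, a state-oblivious decoder testing typicality against the averaged channels \(W_q\) over the polynomially many state types, correct decoding via concentration of the joint empirical distribution of \((X^N(m),Y^N(s^N))\) on \(P_{XY_{q^*}}\) with \(q^*=P_{s^N}\), a packing bound for wrong codewords uniform in \(q\), and the key observation that the random-code error criterion places the expectation inside the maxima over \(m\) and \(s^N\), so no union bound over the exponentially many state sequences is required. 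The one detail to tighten is the form of the decoder's test: the relative-error set \(T^N_\delta(P_{XY_q})\) is too stringent for types \(q\) with components of order \(1/N\), and the per-state typicality machinery of Subsection \ref{sec:typical} only controls states occurring at least \(N\eta/|\mathcal{S}|\) times, leaving an additive \(\eta\) slack in the overall empirical distribution; replacing the test by an additive (total-variation) typicality test fixes both points, and with that adjustment the correct-decoding and packing estimates go through, so the argument is sound.
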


\begin{rem}\label{rem:avc:rc}
There exist several ways proving Lemma \ref{lem:avc:rc}. In particular, the random codebook \(\mathbf{C} = \{F(l)=X^N(l)\}_{l=1}^{L'}\) constructed in the proof of \cite{CT_DMS} satisfies \eqref{eq:lem:goodc} with \(L'=|\mathcal{M}'|\). This indicates that the codewords are generated independently based on the probability mass function \(P_X\) of the random variable \(X\). In consequence, Lemma \ref{lem:goodc} claims that the random codebook \(\mathbf{C}\) is ``good'' with high probability.
\end{rem}

\begin{lem}\label{lem:avc:ele}
(\emph{Elimination technique}) Suppose that \((F, \Phi)\) is a pair of random encoder and decoder of length \(N\) over an AVC (resp. AVC-CSR) \(\mathcal{W}\), the size of whose message set is \(|\mathcal{M}'| = 2^{NR'}\) for some constant real number \(R' > 0\), such that
\[
\lambda(\mathcal{W}, F, \Phi) < \epsilon' \text{ resp. } \lambda^{CSR}(\mathcal{W}, F, \Phi) < \epsilon'
\]
for some \(\epsilon' > 0\). Let \((F_i, \Phi_i), 1 \leq i \leq N^2\) be a series of i.i.d. random encoder-decoder pairs with the same probability mass function as that of \((F, \Phi)\). Then for any \(\epsilon > 4\epsilon'\), it follows that
\[
\operatorname{Pr}\Bigg\{\max_{m\in \mathcal{M}}\max_{s^N \in \mathcal{S}^N} \Bigg[ \sum_{k=1}^{N^2} \frac{1}{N^2} e_m(\mathcal{W}, F_k, \Phi_k, s^N) \Bigg] < \epsilon/2 \Bigg\}
> 1 - \epsilon_4
\]
resp.
\[
\operatorname{Pr}\Bigg\{\max_{m\in \mathcal{M}}\max_{s^N \in \mathcal{S}^N} \Bigg[ \sum_{k=1}^{N^2} \frac{1}{N^2} e_m^{CSR}(\mathcal{W}, F_k, \Phi_k, s^N) \Bigg] < \epsilon/2 \Bigg\}
> 1 - \epsilon_4,
\]
where \(\epsilon_4 \rightarrow 0\) as \(N \rightarrow \infty\).
\end{lem}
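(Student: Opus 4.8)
The plan is to run the standard Chernoff-plus-union-bound mechanism underlying Ahlswede's elimination technique, whose engine is the disparity between the exponential-in-$N^2$ concentration of an empirical average and the merely exponential-in-$N$ number of constraints that must be controlled. First I would fix a message $m \in \mathcal{M}$ and a state sequence $s^N \in \mathcal{S}^N$, and observe that the quantities $e_m(\mathcal{W}, F_k, \Phi_k, s^N)$, $1 \leq k \leq N^2$, are deterministic functions of the i.i.d.\ pairs $(F_k, \Phi_k)$, hence are themselves i.i.d.\ random variables taking values in $[0,1]$ with common mean $E[e_m(\mathcal{W}, F_k, \Phi_k, s^N)] = \lambda_m(\mathcal{W}, F, \Phi, s^N) \leq \lambda(\mathcal{W}, F, \Phi) < \epsilon'$. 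The identical observation applies to $e_m^{CSR}(\mathcal{W}, F_k, \Phi_k, s^N)$, so the \emph{resp.}\ statement requires no separate treatment.

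Next I would apply a Hoeffding/Chernoff bound to the bounded i.i.d.\ average $\frac{1}{N^2}\sum_{k=1}^{N^2} e_m(\mathcal{W}, F_k, \Phi_k, s^N)$. Because $\epsilon > 4\epsilon'$ yields $\frac{\epsilon}{2} - \lambda_m(\mathcal{W}, F, \Phi, s^N) > \frac{\epsilon}{2} - \epsilon' > \epsilon' > 0$, the threshold $\epsilon/2$ sits a fixed positive distance above the mean, so
\[
\operatorname{Pr}\Bigg\{\frac{1}{N^2}\sum_{k=1}^{N^2} e_m(\mathcal{W}, F_k, \Phi_k, s^N) \geq \frac{\epsilon}{2}\Bigg\} \leq 2^{-c N^2}
\]
for a constant $c > 0$ depending only on this gap. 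The essential feature is that the decay is exponential in $N^2$, not in $N$.

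Finally I would union-bound over all constraints simultaneously. There are $|\mathcal{M}| = 2^{NR'}$ messages and, since $\mathcal{W}$ is finite, exactly $|\mathcal{S}|^N$ state sequences, so the number of pairs $(m, s^N)$ is $2^{N(R' + \log|\mathcal{S}|)}$, which is only singly exponential in $N$. Thus
\[
\operatorname{Pr}\Bigg\{\max_{m \in \mathcal{M}}\max_{s^N \in \mathcal{S}^N} \frac{1}{N^2}\sum_{k=1}^{N^2} e_m(\mathcal{W}, F_k, \Phi_k, s^N) \geq \frac{\epsilon}{2}\Bigg\} \leq 2^{N(R' + \log|\mathcal{S}|)} \cdot 2^{-cN^2} =: \epsilon_4,
\]
and since the $N^2$ in the exponent dominates the $N$, we obtain $\epsilon_4 \rightarrow 0$ as $N \rightarrow \infty$, which is precisely the assertion (the CSR case being word-for-word identical).

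The one point that must be handled carefully is the bookkeeping of exponents: it is crucial that the family $\mathcal{W}$ is finite so that the union ranges over only $2^{N(R'+\log|\mathcal{S}|)}$ events, growing exponentially in $N$, while the concentration decays exponentially in $N^2$. This disparity is the whole mechanism that lets the common randomness be reduced to $N^2$ deterministic codes while still driving the maximal error over every message and every state sequence to zero. The hypothesis $\epsilon > 4\epsilon'$ is exactly what keeps the threshold $\epsilon/2$ a fixed distance above the mean $< \epsilon'$, so that the Chernoff exponent $c$ is a positive constant independent of $N$.
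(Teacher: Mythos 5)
Your proposal is correct and is essentially the argument the paper relies on: the paper gives no proof of its own but defers to Ahlswede's elimination technique in \cite{Ahlswede-1978}, which is exactly your mechanism of exponential-in-$N^2$ concentration of the i.i.d.\ empirical error average beaten against a union bound over only $2^{N(R'+\log|\mathcal{S}|)}$ message--state pairs, with the CSR case identical. (The only cosmetic difference is that Ahlswede's original bound is a Bernstein-type estimate exploiting the small mean $<\epsilon'$, which is where the factor $4$ in $\epsilon>4\epsilon'$ originates, whereas your Hoeffding bound already works under $\epsilon>2\epsilon'$; both are valid here.)
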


\begin{proof}
The result of AVWC was given in \cite{Ahlswede-1978}. The result of AVWC-CSR can be proved similarly.
\end{proof}

\begin{rem}\label{rem:avc:ele}
  Combining Lemma \ref{lem:avc:rc}, Remark \ref{rem:avc:rc} and Lemma \ref{lem:avc:ele}, it is concluded that there exists a series of deterministic encoder-decoder pairs \((f_k, \phi_k), 1 \leq k \leq N^2\) such that
 \[
 \max_{m\in \mathcal{M}}\max_{s^N \in \mathcal{S}^N} \Bigg[ \sum_{k=1}^{N^2} \frac{1}{N^2} \lambda_m(\mathcal{W}, f_k, \phi_k, s^N) \Bigg] = \max_{m\in \mathcal{M}}\max_{s^N \in \mathcal{S}^N} \Bigg[ \sum_{k=1}^{N^2} \frac{1}{N^2} e_m(\mathcal{W}, f_k, \phi_k, s^N) \Bigg] < \epsilon/2.
 \]
Let \((\bar{F}, \bar{\Phi})\) be a pair of random encoder and decoder uniformly distributed over \(\{(f_k, \phi_k): 1 \leq k \leq N^2\}\). It follows clearly that
\[
\lambda(\mathcal{W}, \bar{F}, \bar{\Phi}) < \epsilon/2.
\]
The CR in this new random code is a random variable uniformly distributed over the index set \([1: N^2]\). Therefore, the rate of the CR is \(\frac{1}{N} \log N^2 \rightarrow 0\) as \(N \rightarrow \infty\). This indicates that the rate of CR can be dramatically smaller than the rate of the message.
\end{rem}

\textbf{The formal proof.}

The proof is divided into the following steps.
\begin{itemize}
  \item Let \(N\) be a sufficiently large even integer and \(\tau'\) be a sufficiently small positive real number. Step 1 constructs a series of deterministic codes \((f_k, \phi_k), 1 \leq k \leq N^2/2\), such that the codebooks \(\mathcal{C}_k = \{f_k(m): 1\leq m \leq L'\}\), \(1 \leq k \leq N^2/2\) are ``good'' with respect to \(X\) and
      \begin{equation}\label{eq:proof:thm:avwc:rc:2}
      \max_{m\in \mathcal{M}}\max_{s^N \in \mathcal{S}^N} \Bigg[ \sum_{k=1}^{N^2/2} \frac{2}{N^2} \lambda_m(\mathcal{W}, f_k, \phi_k, s^N) \Bigg] < \epsilon,
      \end{equation}
      where \(f_k: \mathcal{M}' \mapsto \mathcal{X}^N\), \(\phi_k: \mathcal{Y}^N \mapsto \mathcal{M}'\), and the size \(L'\) of the message set \(\mathcal{M}'\) satisfies that \(L' > 2^{N[\min_{q\in\mathcal{P}(\mathcal{S})}I(X; Y_{q}) - \tau']}\).
  \item Step 2 constructs the random encoder and decoder \((\bar{F}, \bar{\Phi})\) satisfying \eqref{eq:proof:thm:avwc:rc:1}.
\end{itemize}

\textbf{Proof of Step 1.} On account of Lemma \ref{lem:avc:rc}, for any \(\tau' > 0\), there exists a pair of random encoder and decoder \((F, \Phi)\) distributed over a certain family of deterministic codes \(\{(f_g, \phi_g): g \in \mathcal{G}\}\) with \(f_g: \mathcal{M}' \mapsto \mathcal{X}^N\) and \(\phi_g: \mathcal{Y}^N \mapsto \mathcal{M}'\) such that \(\lambda(\mathcal{W}, F, \Phi) < \epsilon'\), where \(\epsilon'=\epsilon/5\) and the size \(L'\) of the message set \(\mathcal{M}'\) satisfies \(L' > 2^{N[\min_{q\in\mathcal{P}(\mathcal{S})}I(X; Y_{q}) - \tau']}\). Moreover, according to Remark \ref{rem:avc:rc}, the random codebook \(\mathbf{C}=\{F(m): 1\leq m \leq L'\}\) satisfies \eqref{eq:lem:goodc}.

Let \((F_k, \Phi_k), 1 \leq k \leq N^2\) be a series of i.i.d. random codes with the same probability mass function as that of \((F, \Phi)\). Denote \(\mathbf{C}_k=\{F_k(m): 1\leq m \leq L'\}\) and
\[
\psi(\mathbf{C}_k) =
\begin{cases}
0 & \text{ if } \mathbf{C}_k \text{ is ``good'' }, \cr
1 & \text{ otherwise.}
\end{cases}
\]
It follows from Lemma \ref{lem:goodc} that
\[
\operatorname{Pr}\{\sum_{k=1}^{N^2} \psi(\mathbf{C}_k) > N^2/2\} < 2\epsilon_1.
\]
Combining Lemma \ref{lem:avc:ele}, there exists a series of deterministic codes \((f_k, \phi_k), 1 \leq k \leq N^2\) such that
\[
 \max_{m\in \mathcal{M}}\max_{s^N \in \mathcal{S}^N} \Bigg[ \sum_{k=1}^{N^2} \frac{1}{N^2} \lambda_m(\mathcal{W}, f_k, \phi_k, s^N) \Bigg] < \epsilon/2
 \]
 and
 \[
  \sum_{k=1}^{N^2} \psi(\mathcal{C}_k) \leq N^2/2,
 \]
 where \(f_k: \mathcal{M}' \mapsto \mathcal{X}^N\), \(\phi_k: \mathcal{Y}^N \mapsto \mathcal{M}'\) and \(\mathcal{C}_k =\{f_k(m):m\in\mathcal{M}'\}\).

 Let \(\psi(\mathcal{C}_k) = 0\) for \(1\leq k \leq N^2/2\) without loss of generality, i.e., the first \(N^2/2\) (deterministic) codebooks are ``good''. It follows clearly that those codebooks satisfy Formula \eqref{eq:proof:thm:avwc:rc:2}, completing the proof of Step 1. \(\hfill\blacksquare\)

 \textbf{Proof of Step 2.} For every ``good'' codebook \(\mathcal{C}_k, 1 \leq k \leq N^2/2\), Lemma \ref{lem:goodp} claims that if the value of \(L\) satisfies that
 \begin{equation}\label{eq:proof:thm:avwc:rc:5}
 2^{N[\min_{p\in\mathcal{P}(\mathcal{S})}I(X;Y_p)-\max_{\mathcal{S}}I(X;Z_s) - 3\tau']} < L'\cdot 2^{-N[\max_{\mathcal{S}}I(X;Z_s) + 2\tau']} < L < L'\cdot2^{-N[\max_{\mathcal{S}}I(X;Z_s) + \tau']},
 \end{equation}
  then there exists a secure partition \(\{\mathcal{C}_{k, m}\}_{m=1}^L\) on it such that
  \begin{equation}\label{eq:proof:thm:avwc:rc:3}
  \max_{s^N \in \mathcal{S}^N} I(\tilde{M}_k; Z^N(\mathcal{C}_k, s^N)) < \epsilon
  \end{equation}
  where \(\tilde{M}_k\) is the index of subcode containing the random sequence \(X^N(\mathcal{C}_k)\) and \((X^N(\mathcal{C}_k),Z^N(\mathcal{C}_k, s^N))\) is a pair of random sequences satisfying \eqref{eq:goodp:26} with \(\mathcal{C}=\mathcal{C}_k\).

Now the random code can be defined as follows. For every given \(k\) and \(m\), denote \(\mathcal{C}_{k, m} = \{x^N(k, m, j): 1 \leq j \leq L'/L\}\). Let \(K\) and \(J\) be random variables uniformly distributed over \([1:N^2/2]\) and \([1: L'/L]\), respectively. Moreover, \(K, J\) and the source message \(M\) are mutually independent. The random encoder is defined as \(\bar{F}(M)=x^N(K, M, J)\) and the random decoder is \(\bar{\Phi} = \phi_{K}\). It follows clearly that the joint probability mass function of the source message \(M\) and the wiretap AVC output \(Z^N(s^N)\) is identical to that of \(\tilde{M}_k\) and \(Z^N(\mathcal{C}_k, s^N)\) when fixing \(K=k\). Therefore, for every \(s^N \in \mathcal{S}^N\) we have
 \begin{equation}\label{eq:proof:thm:avwc:rc:4}
 \begin{array}{lll}
 I(M; Z^N(s^N)|\bar{\Phi}) &\leq&  I(M; Z^N(s^N) | K)\\
 &=& \displaystyle\frac{2}{N^2}\sum_{k=1}^{N^2/2}I(M; Z^N(s^N) | K=k)\\
 &=& \displaystyle\frac{2}{N^2}\sum_{k=1}^{N^2/2}I(\tilde{M}_k; Z^N(\mathcal{C}, s^N_k))\\
 &<& \epsilon,
 \end{array}
 \end{equation}
 where the first equality follows because \(M\) is independent of \(K\) and \(\bar{\Phi}\) while \(\bar{\Phi}\) is a function of \(K\), and the last inequality follows from Formula \eqref{eq:proof:thm:avwc:rc:3}.

Furthermore, Formula \eqref{eq:proof:thm:avwc:rc:2} gives
\begin{equation}\label{eq:proof:thm:avwc:rc:4:2}
\lambda(\mathcal{W}, \bar{F}, \bar{\Phi}) = \max_{m\in \mathcal{M}}\max_{s^N \in \mathcal{S}^N} \Bigg[ \sum_{k=1}^{N^2/2} \frac{2}{N^2} \lambda_m(\mathcal{W}, f_k, \phi_k, s^N) \Bigg] < \epsilon,
\end{equation}
and Formula \eqref{eq:proof:thm:avwc:rc:5} gives
\begin{equation}\label{eq:proof:thm:avwc:rc:4:2}
\frac{1}{N}\log|\mathcal{M}| > \min_{q\in\mathcal{P}(\mathcal{S})}I(X; Y_{q})- \max_{z\in\mathcal{S}}I(X; Z_{s}) - \tau
\end{equation}
by setting \(\tau' < \tau/3\). Formula \eqref{eq:proof:thm:avwc:rc:1} is established by combining \eqref{eq:proof:thm:avwc:rc:4}-\eqref{eq:proof:thm:avwc:rc:4:2}, The Theorem is proved.\(\hfill\blacksquare\)

\subsection{Proof of Theorem \ref{thm:avwc:csr:rc}}\label{sec:thm:avwc:csr:rc}

After establishing the following lemma, the omitted proof of Theorem \ref{thm:avwc:csr:rc} is quite similar to that of Theorem \ref{thm:avwc:rc} introduced in the previous subsection, see Remark \ref{rem:avwc:csr} for more details.

\begin{lem}\label{lem:avwc:csr:rc}
Let an AVC-CSR \(\mathcal{W}\), a collection of random variables \((X, Y_{\mathcal{S}})\) satisfying \eqref{eq:avc:1}, and two real numbers \(0 < \tau' < \min_{s\in\mathcal{S}}I(X; Y_s)\) and \(\epsilon > 0\) be given. When the codeword length \(N\) is sufficiently large, there exists a series of deterministic codebooks \(\mathcal{C}_k, 1\leq k \leq N^2\) having the following properties.
\begin{itemize}
  \item all the codebooks are ``good'' with respect to \(X\);
  \item the size \(L'\) of each codebook satisfies \(L' > 2^{N(\min_{s\in\mathcal{S}}I(X; Y_s)-\tau')}\);
  \item Let \((f_k,\phi_k)\) be the pair of encoder and decoder derived from \(\mathcal{C}_k\) according to the coding scheme in Subsection \ref{sec:thm:avc:csr}. The decoding error probability satisfies
      \begin{equation}\label{eq:sec:thm:avwc:csr:rc}
      \max_{m\in \mathcal{M'}}\max_{s^N \in \mathcal{S}^N} \Bigg[ \sum_{k=1}^{N^2} \frac{1}{N^2} e_m^{CSR}(\mathcal{W}, f_k, \phi_k, s^N) \Bigg] < \epsilon/2,
      \end{equation}
      where \(\mathcal{M}'\) is the message set.
\end{itemize}
\end{lem}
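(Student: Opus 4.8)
The plan is to follow the template of Step 1 in the proof of Theorem \ref{thm:avwc:rc}, replacing every ingredient pertaining to the ordinary AVC by its AVC-CSR counterpart. Concretely, I would first produce a single random code over the AVC-CSR \(\mathcal{W}\) that achieves rate arbitrarily close to \(\min_{s\in\mathcal{S}} I(X;Y_s)\) with small \emph{maximal} decoding error and whose codewords are drawn i.i.d.\ from \(P_X\); then I would take \(N^2\) independent copies of it and invoke the AVC-CSR form of the elimination technique (the ``resp.'' statement of Lemma \ref{lem:avc:ele}) to extract \(N^2\) deterministic codes meeting the averaged per-message bound \eqref{eq:sec:thm:avwc:csr:rc}; finally I would guarantee that all \(N^2\) codebooks are simultaneously ``good''.

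For the first step, the coding scheme of Subsection \ref{sec:thm:avc:csr}, fed with a random codebook \(\mathbf{C}\) of size \(L'=L\) satisfying \eqref{eq:lem:goodc} and rate \(\tfrac{1}{N}\log L' > \min_{s}I(X;Y_s)-\tau'\), yields by Lemma \ref{lem:avc:dc} a realization whose \emph{average} decoding error \(\max_{s^N}\bar{e}^{CSR}(\mathcal{C},s^N)\) is below \(\epsilon'=\epsilon/5\), while Lemma \ref{lem:goodc} simultaneously makes that realization ``good'' with respect to \(X\) (both events hold together with probability tending to one). To upgrade the average-error guarantee to a maximal-error one, which is what the elimination lemma requires as input, I would apply Ahlswede's symmetrization: randomize the chosen deterministic code \((f,\phi)\) by a uniform relabeling \(\Pi\) of the message set, encoding \(m\) as \(f(\Pi(m))\) and decoding by \(\Pi^{-1}\circ\phi\). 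Since relabeling commutes with the error event, the error probability of message \(m\) for the relabeled code equals that of message \(\Pi(m)\) for \((f,\phi)\); averaging over \(\Pi\) turns every \(\lambda^{CSR}_m\) into the average error \(\bar{e}^{CSR}(\mathcal{C},s^N)\), uniformly in \(m\) and in \(s^N\). This gives a random code with maximal error \(<\epsilon'\), the CSR analogue of Lemma \ref{lem:avc:rc}.

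With this random code in hand, I would apply Lemma \ref{lem:avc:ele} (AVC-CSR version) to \(N^2\) i.i.d.\ copies \((F_k,\Phi_k)\), obtaining, on an event of probability at least \(1-\epsilon_4\), deterministic codes \((f_k,\phi_k)\) with \(\max_{m}\max_{s^N}\sum_{k}\tfrac{1}{N^2} e_m^{CSR}(\mathcal{W},f_k,\phi_k,s^N)<\epsilon/2\), establishing \eqref{eq:sec:thm:avwc:csr:rc}; the rate condition \(L'>2^{N(\min_s I(X;Y_s)-\tau')}\) is inherited directly. Goodness of all \(N^2\) codebooks is then immediate: each constituent code uses the codeword multiset of the single ``good'' codebook \(\mathcal{C}\) (only the labeling differs), and ``good'' in Definition \ref{def:goodc} is a property of that multiset, hence invariant under relabeling. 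Should one instead prefer fresh i.i.d.\ codebooks across \(k\), goodness for all of them follows from Lemma \ref{lem:goodc} together with a union bound, because its failure probability \(\epsilon_1\) is exponentially small and \(N^2\epsilon_1\to 0\).

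I expect the main obstacle to be precisely the passage from the \emph{average} decoding error delivered by Lemma \ref{lem:avc:dc} to the \emph{maximal} (per-message) error demanded by \eqref{eq:sec:thm:avwc:csr:rc} and by the elimination lemma, because the iterative decoder of Subsection \ref{sec:thm:avc:csr} is inherently asymmetric in the message index (later messages see a partially exhausted typical set). The symmetrization-by-relabeling step is what overcomes this; the point requiring care is that it replaces the coding-scheme decoder of the relabeled codebook by a relabeling of the original decoder, so one must verify, as above, that this does not alter the per-message error statistics. Everything else is a routine transcription of the AVC argument, with \(\min_{q\in\mathcal{P}(\mathcal{S})}I(X;Y_q)\) replaced by \(\min_{s\in\mathcal{S}}I(X;Y_s)\) and the decoder granted access to \(s^N\).
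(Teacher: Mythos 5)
Your proposal is correct and follows essentially the same route as the paper's own proof: a single ``good'' codebook with small average error from Lemmas \ref{lem:goodc} and \ref{lem:avc:dc}, symmetrization by a uniformly random permutation of the message set to convert average error into maximal error (the paper's $(f_{\pi},\phi_{\pi})$ construction), the AVC-CSR form of the elimination technique, and the observation that ``goodness'' is a property of the codeword multiset and hence preserved under relabeling. The subtlety you flag --- that the permuted code's decoder is a relabeling of the original decoder rather than the decoder the iterative scheme would produce from the permuted codebook --- is handled in exactly the same way in the paper.
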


The proof below shows an important technique constructing a random code with a vanishing maximal decoding error from an arbitrary code with a vanishing average decoding error.

\textbf{Proof of Lemma \ref{lem:avwc:csr:rc}.} Combining Lemmas \ref{lem:goodc} and \ref{lem:avc:dc}, it is concluded that there exists a ``good'' codebook \(\mathcal{C}\) such that
\[
\max_{s^N\in\mathcal{S}^N}\bar{e}^{CSR}(\mathcal{C}, s^N) < \epsilon,
\]
where \(\bar{e}^{CSR}(\mathcal{C}, s^N)\) is given by \eqref{eq:lem:avc:dc}, and the size of the codebook satisfies \(L' > 2^{N(\min_{s\in\mathcal{S}}I(X; Y_s)-\tau')}\). Denote by \((f,\phi)\) the pair of encoder and decoder derived from \(\mathcal{C}\) by the coding scheme introduced in Subsection \ref{sec:thm:avc:csr}, with the message set \(\mathcal{M}'=[1:L']\). Let \(\Pi\) be the collection of all possible permutations on \([1:L']\). For any \(\pi \in \Pi\), let \((f_{\pi}, \phi_{\pi})\) be the encoder-decoder pair such that \(f_{\pi}(m)=f(\pi(m))\) for \(m\in [1:L']\) and \(\phi_{\pi}(y^N, s^N)=\pi^{-1}(\phi(y^N, s^N))\) for \((y^N,s^N) \in \mathcal{Y}^N\times \mathcal{S}^N\). Let \((F,\Phi)\) be the random code uniformly distributed over \(\{(f_{\pi},\phi_{\pi}): \pi\in\Pi\}\). It follows clearly that
\[
\lambda^{CSR}(\mathcal{W},F,\Phi) < \epsilon.
\]
Furthermore, Lemma \ref{lem:avc:ele} claims the existence of an encoder-decoder series \((f_k,\phi_k), 1\leq k \leq N^2\) satisfying \eqref{eq:sec:thm:avwc:csr:rc}. Finally, codebooks \(\mathcal{C}_k=\{f_k(m): m\in\mathcal{M}'\}, 1\leq k \leq N^2\) are ``good'' since they are all permutations of \(\mathcal{C}\). The lemma is proved. \(\hfill\blacksquare\)

\begin{rem}\label{rem:avwc:csr}
Lemma \ref{lem:avwc:csr:rc} constructs a collection of ``good'' codebooks satisfying \eqref{eq:sec:thm:avwc:csr:rc}. With this, we can skip the proof of Step 1 in Section \ref{sec:thm:avwc:rc} and jump to Step 2 directly.
\end{rem}

\section{Discussions and Implications}\label{section:example}\label{sec:example}

This section presents some more detailed discussions on AVWC and AVWC-CSR. Subsection \ref{sec:outer} gives a pair of simple upper bounds on secrecy capacities of AVWC and AVWC-CSR. Subsection \ref{sec:less} introduces the concepts of less noisy AVCs. Three types of less noisiness are given there. With those definitions, we determine the secrecy capacity of AVWC with severely less noisiness, and the secrecy capacity of AVWC-CSR with strongly less noisiness. Subsection \ref{sec:avwc:constrained:seq} studies the models of AVWC with constrained cost or types on state sequence, and establishes a pair of lower bounds on the secrecy capacities, which include the result in Subsection \ref{sec:avwc} as a special case.

\subsection{Simple upper bounds on capacities of AVWC and AVWC-CSR}\label{sec:outer}

This subsection introduces a pair of upper bounds on the secrecy capacities of stochastic code over the AVWC and AVWC-CSR, with respect to the maximal decoding error probability, which is detailed in Lemma \ref{lem:outer1}. Moreover, Example \ref{example1} shows that the upper bounds are different from the corresponding lower bounds in Section \ref{sec:main}.

\begin{lem}\label{lem:outer1}
The secrecy capacity $C_s^{AVWC-SC}(\mathcal{W},\mathcal{V})$ reps. $C_s^{AVWC-CSR-SC}(\mathcal{W},\mathcal{V})$ of stochastic code over a given AVWC $(\mathcal{W},\mathcal{V})$ resp. AVWC-CSR $(\mathcal{W},\mathcal{V})$ must satisfy that
\[
C_s^{AVWC-SC}(\mathcal{W},\mathcal{V}) \leq \min_{q\in\mathcal{P}(\mathcal{S}),s\in\mathcal{S}} \max_{U_{q,s}} [I(U_{q,s};Y_q)-I(U_{q,s};Z_s)]
\]
resp.
\[
C_s^{AVWC-CSR-SC}(\mathcal{W},\mathcal{V}) \leq \min_{s,s'\in\mathcal{S}} \max_{U_{s,s'}} [I(U_{s,s'};Y_s)-I(U_{s,s'};Z_{s'})],
\]
where $\mathcal{W}=\{W_s:s\in\mathcal{S}\}$ and $\mathcal{V}=\{V_s:s\in\mathcal{S}\}$ are a pair of transition matrix collections, $\{(U_{q,s},Y_q,Z_s): q\in\mathcal{P}(\mathcal{S}), s\in\mathcal{S}\}$ is a collection of random variables satisfying \eqref{eq:thm:avwc:rc:1} and \eqref{eq:thm:avwc:rc:2}, and $\{(U_{s,s'},Y_s,Z_{s'}): s,s'\in\mathcal{S}\}$ is a collection of random variables satisfying \eqref{eq:thm:avwc:csr:rc:1} and \eqref{eq:thm:avwc:csr:rc:2}.
\end{lem}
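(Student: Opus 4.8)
The plan is to prove both inequalities as converses: I would show that every rate $R$ achievable by stochastic code forces $R\le\max_{U}[I(U;Y_q)-I(U;Z_s)]$ for each fixed $(q,s)$ (resp.\ $R\le\max_{U}[I(U;Y_s)-I(U;Z_{s'})]$ for each fixed $(s,s')$), and then intersect these bounds by taking the minimum over the state parameters. The key enabling observation is that, in Definition~\ref{def:avwc:sc}, both the reliability requirement $\lambda(\mathcal{W},F,\phi)=\max_{s^N}\lambda(\mathcal{W},F,\phi,s^N)<\epsilon$ and the secrecy requirement $\max_{s^N}I(M;Z^N(s^N))<\epsilon$ are imposed \emph{for every} state sequence; hence I may invoke reliability under one state scenario and secrecy under a different one, and afterwards reassemble the two marginal constraints on a single auxiliary discrete memoryless wiretap channel.

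For the AVWC bound, fix $(q,s)$. For reliability I would drive the main AVC with an i.i.d.\ state sequence $S^N\sim q^{\otimes N}$ independent of the input. Since the maximal error is below $\epsilon$ for every fixed $s^N$, the error averaged over this random state is also below $\epsilon$; and because the receiver does not observe the state, the induced channel from $X^N$ to $Y^N$ is exactly the memoryless $W_q^{\otimes N}$, whence Fano's inequality gives $\tfrac1N H(M\mid Y^N_q)\to0$. For secrecy I would instead use the constant sequence $s^N=(s,\dots,s)$, under which the eavesdropper observes $V_s^{\otimes N}$ and the strong-secrecy constraint reads $I(M;Z^N(s^N))<\epsilon$. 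For the AVWC-CSR bound the only modification is that, the receiver now knowing the state, I would impose reliability under the constant main-state sequence $(s,\dots,s)$ so that the main channel becomes $W_s^{\otimes N}$ and $Y_s$ replaces $Y_q$; the wiretap scenario is again a constant sequence $(s',\dots,s')$ giving $Z_{s'}$.

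Next I would note that $X^N=F(M)$ is a function of $M$ and the encoder's local randomness alone, so the two marginals $(M,X^N,Y^N_q)$ and $(M,X^N,Z^N(s^N))$ are both realized by a single joint law in which $Y^N_q$ and $Z^N(s^N)$ are conditionally independent given $X^N$ with transitions $W_q^{\otimes N}$ and $V_s^{\otimes N}$. That joint law is exactly a discrete memoryless wiretap channel $(W_q,V_s)$ whose reliability and strong-secrecy conditions are the two bounds just derived. Writing $NR=H(M)\le I(M;Y^N_q)+N\epsilon'$ and replacing $I(M;Y^N_q)$ by $I(M;Y^N_q)-I(M;Z^N(s^N))+\epsilon$, I would single-letterize the wiretap difference through the Csisz\'ar sum identity with $U_i=(M,Y^{i-1},Z_{i+1}^N)$ to obtain $I(M;Y^N_q)-I(M;Z^N(s^N))\le N\max_{U}[I(U;Y_q)-I(U;Z_s)]$, where the single-letter $Y_q,Z_s$ obey \eqref{eq:thm:avwc:rc:1}--\eqref{eq:thm:avwc:rc:2}. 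Dividing by $N$ and letting $N\to\infty$, $\epsilon\to0$ yields $R\le\max_{U}[I(U;Y_q)-I(U;Z_s)]$; minimizing over $(q,s)$ proves the AVWC inequality, and the AVWC-CSR inequality is identical with $(s,s')$ in place of $(q,s)$ and $Y_s,Z_{s'}$ obeying \eqref{eq:thm:avwc:csr:rc:1}--\eqref{eq:thm:avwc:csr:rc:2}. The cardinality restriction on $\mathcal{U}$ would follow from the standard support-lemma argument.

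The principal difficulty I anticipate is justifying that mixing two distinct state scenarios—an averaged $W_q$ (or single-state $W_s$) on the legitimate link and a single-state $V_s$ (or $V_{s'}$) on the eavesdropper's link—yields a legitimate converse. I would make this rigorous on two fronts: that the outer maxima over $s^N$ in both the error and the leakage permit each constraint to be asserted under its own state sequence, and that the state-independence of $X^N$ together with conditional independence of the two outputs given $X^N$ makes the reassembled law a bona fide memoryless wiretap channel, so the classical single-letter converse applies verbatim. A secondary point is the averaging step—vanishing maximal error over all fixed $s^N$ implies vanishing error under the random i.i.d.\ state $q^{\otimes N}$—which is where finiteness of $\mathcal{S}$ and the very definition of $\lambda(\mathcal{W},F,\phi)$ enter.
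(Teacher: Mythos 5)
Your proposal is correct and follows essentially the same route as the paper's proof in Appendix H: reliability is transferred to the averaged channel $W_q$ by convexly combining the per-state errors (Step 1 of the paper), secrecy is invoked under the constant state sequence $(s,\dots,s)$, and the difference $I(M;Y_q^N)-I(M;Z_s^N)$ is single-letterized via the Csisz\'ar--K\"orner identity with the auxiliary $U_i=(M,Y_q^{i-1},Z_{s,i+1}^N)$ before minimizing over $(q,s)$. Your explicit remark that the state-independence of $X^N$ lets the two marginal constraints be reassembled into one memoryless wiretap channel is exactly the (implicit) justification underlying the paper's combination of its formulas \eqref{eq:app:lem:outer1:1} and \eqref{eq:app:lem:outer1:2}.
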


The upper bound on the secrecy capacity of AVWC was established in \cite{Bjelakovic-2013}. We present the proof in Appendix \ref{app:lem:outer1} as a preliminary to the proof of Proposition \ref{prop:less:avwc} in Subsection \ref{sec:less}. The upper bound of AVWC-CSR can be proved similarly, and is hence omitted.

\begin{exmp}\label{example1}
It is clear that the upper bounds introduced in this subsection differ from the corresponding lower bounds in Subsections \ref{sec:avwc} and \ref{sec:avwc:csr}. As a simple example, consider an AVWC-CSR \((\mathcal{W},\mathcal{V})\), depicted in Fig. \ref{Fig:example1:avwc:csr1}, with \(\mathcal{W}=\{W_1,W_2\}\) and \(\mathcal{V}=\{V_1, V_2\}\), where
\[
W_1=
\Bigg[
\begin{matrix}
1 & 0\\
3p/2 & 1-3p/2
\end{matrix}
\Bigg]
,
W_2=
\Bigg[
\begin{matrix}
1-3p/2 & 3p/2\\
0 & 1
\end{matrix}
\Bigg]
 \text{ and }
 V_1 = V_2 =
\Bigg[
\begin{matrix}
1-p & p\\
p & 1-p
\end{matrix}
\Bigg].
\]
The upper bound and lower bound of the secrecy capacity is shown in Fig. \ref{Fig:example1:avwc:csr2}.
\end{exmp}

\begin{figure}
     \centering
     \includegraphics[width=8cm]{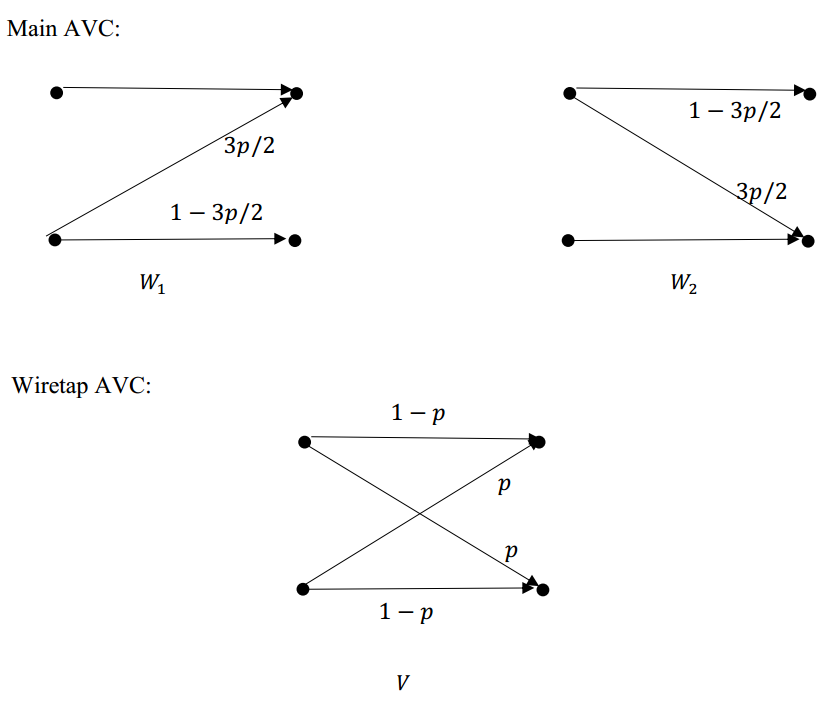}
     \caption{Channel model of the AVWC-CSR in Example \ref{example1}.}\label{Fig:example1:avwc:csr1}
 \end{figure}

\begin{figure}
     \centering
     \includegraphics[width=8cm]{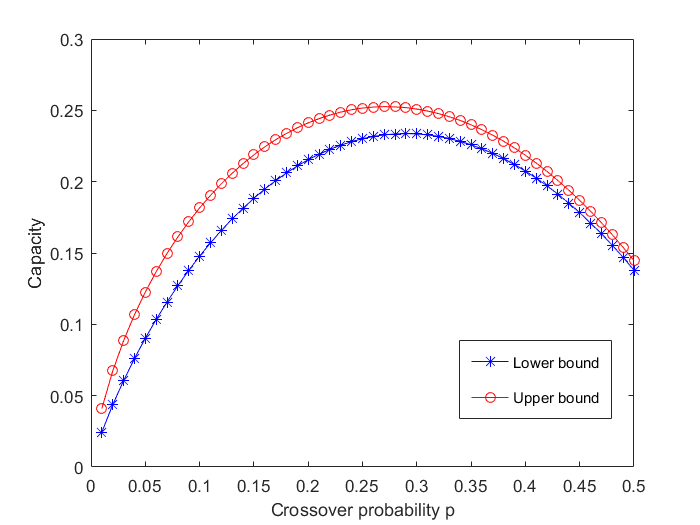}
     \caption{Lower and upper bounds on the secrecy capacity of the AVWC-CSR in Example \ref{example1}.}\label{Fig:example1:avwc:csr2}
 \end{figure}

\subsection{Capacities of less noisy AVWC and AVWC-CSR}\label{sec:less}

Example \ref{example1} shows that the upper bounds introduced in Subsection \ref{sec:outer} differ from the corresponding lower bounds established in Subsections \ref{sec:avwc} and \ref{sec:avwc:csr}. Therefore, the secrecy capacities of the general AVWC and AVWC-CSR are unknown. In this subsection, we introduce the concepts of degraded AVCs and less noisy AVCs, and establish the secrecy capacities of AVWC and AVWC-CSR in Propositions \ref{prop:less:avwc} and \ref{prop:less:avwc:csr}, when the main AVC is less noisy than the wiretap AVC.

\textbf{Definitions of degraded AVCs and less noisy AVCs.}

The concepts of degraded AVCs and less noisy AVCs come from those of DMCs. When considering DMCs, we would call DMC $V:\mathcal{X}\mapsto\mathcal{Z}$ is a degraded version of DMC $W:\mathcal{X}\mapsto\mathcal{Y}$, if there exists a transition matrix $V': \mathcal{Y} \mapsto \mathcal{Z}$ such that
\[
V(z|x) = \sum_{y\in\mathcal{Y}} W(y|x)V'(z|y)
\]
for $x\in\mathcal{X}$ and $z\in\mathcal{Z}$.

A more general relationship between two DMCs is less noisy channels. To be precise, DMC $W$ is said to be less noisy than DMC $V$ if
\[
I(U; Y) \geq I(U; Z)
\]
holds for any \((U,X,Y,Z)\) satisfying
\[
\Pr\{U=u,X=x,Y=y\}=\Pr\{U=u,X=x\}W(y|x)
\]
and
\[
\Pr\{U=u,X=x,Z=z\}=\Pr\{U=u,X=x\}V(z|x).
\]

In general, if DMC $V$ is a degraded version of DMC $W$, then DMC $W$ is less noisy than DMC $V$.

Based on the relationships between DMCs, three intuitive definitions on the relationship of degradation between AVCs \(\mathcal{V} = \{V_s(z|x):x\in \mathcal{X}, z\in \mathcal{Z}, s\in\mathcal{S}\}\)
and \(\mathcal{W} = \{W_s(y|x):x\in \mathcal{X}, y\in \mathcal{Y}, s\in\mathcal{S}\}\) are given as follows.

\begin{itemize}
  \item AVC $\mathcal{V}$ is said to be a weakly degraded version of AVC $\mathcal{W}$ if DMC $V_s$ is a degraded version of DMC $W_s$ for every $s\in\mathcal{S}$.
  \item AVC $\mathcal{V}$ is said to be a strongly degraded version of AVC $\mathcal{W}$ if DMC $V_{s'}$ is a degraded version of DMC $W_s$ for every $s',s\in\mathcal{S}$.
  \item AVC $\mathcal{V}$ is said to be a severely degraded version of AVC $\mathcal{W}$ if DMC $V_{q'}$ is a degraded version of DMC $W_q$ for every $q',q\in \mathcal{P}(\mathcal{S})$.
\end{itemize}

\begin{exmp}
It is clear that severe degradation implies strong degradation, while strong degradation implies weak degradation. However, the opposite is not true. To show this, let
\[
\mathcal{W}_1 = \Bigg\{
\Bigg[
\begin{matrix}
1 & 0\\
0 & 1
\end{matrix}
\Bigg]
,
\Bigg[
\begin{matrix}
1/3 & 2/3\\
2/3 & 1/3
\end{matrix}
\Bigg]
 \Bigg\},
\mathcal{V}_1 = \Bigg\{
\Bigg[
\begin{matrix}
1/4 & 3/4\\
3/4 & 1/4
\end{matrix}
\Bigg]
,
\Bigg[
\begin{matrix}
1/2 & 1/2\\
1/2 & 1/2
\end{matrix}
\Bigg]
 \Bigg\}
\]
and
\[
\mathcal{W}_2 = \Bigg\{
\Bigg[
\begin{matrix}
1 & 0\\
0 & 1
\end{matrix}
\Bigg]
,
\Bigg[
\begin{matrix}
0 & 1\\
1 & 0
\end{matrix}
\Bigg]
 \Bigg\},
\mathcal{V}_2 = \Bigg\{
\Bigg[
\begin{matrix}
1/3 & 2/3\\
2/3 & 1/3
\end{matrix}
\Bigg]
,
\Bigg[
\begin{matrix}
2/3 & 1/3\\
1/3 & 2/3
\end{matrix}
\Bigg]
\Bigg\}.
\]
It follows that $\mathcal{V}_1$ is a weakly degraded version of $\mathcal{W}_1$, while the strong degradation is not true. Meanwhile, $\mathcal{V}_2$ is a strongly degraded version of $\mathcal{W}_2$, while the severe degradation is not true.
\end{exmp}

Similarly, we have the definitions of less noisy AVCs as follows.
\begin{itemize}
  \item AVC $\mathcal{W}$ is said to be weakly less noisy than AVC $\mathcal{V}$ if DMC $W_s$ is less noisy than DMC $V_s$ for every $s\in\mathcal{S}$.
  \item AVC $\mathcal{W}$ is said to be strongly less noisy than AVC $\mathcal{V}$ if DMC $W_{s'}$ is less noisy than DMC $V_s$ for every $s',s\in\mathcal{S}$.
  \item AVC $\mathcal{W}$ is said to be severely less noisy than of AVC $\mathcal{V}$ if DMC $W_{q'}$ is less noisy than DMC $V_q$ for every $q',q\in \mathcal{P}(\mathcal{S})$.
\end{itemize}

Based on the definitions above, we have some direct results. First, severely less noisiness implies strongly less noisiness, and strongly less noisiness implies weakly less noisiness. Second, if AVC \(\mathcal{V}\) is a weakly degraded version of AVC \(\mathcal{W}\), then AVC \(\mathcal{W}\) is weakly less noisy than AVC \(\mathcal{V}\). Similar results hold respectively for strong and severe cases.

\textbf{The capacities.}

In the rest of this subsection, we determine the secrecy capacity of the AVWC where the main AVC is severely less noisy than the wiretap AVC, and the secrecy capacity of the AVWC-CSR where the main AVC is strongly less noisy than the wiretap AVC. The capacity of weakly less noisiness is not discussed here, but it is meaningful when considering the capacity results of AVWC with channel states known at the transmitter, which is beyond the topic of this paper.

\begin{prop}\label{prop:less:avwc}
\emph{(Secrecy capacity of severely less noisy AVWC)} Given an AVWC \((\mathcal{W},\mathcal{V})\), its secrecy capacity of stochastic code with respect to the maximal decoding error probability and strong secrecy criterion is
\begin{equation}\label{eq:sec:less:2}
C_s^{AVWC-SC}(\mathcal{W},\mathcal{V}) =\max_X \min_{q\in\mathcal{P}(\mathcal{S}),s\in\mathcal{S}} [I(X;Y_q)-I(X;Z_s)],
\end{equation}
if the capacity of stochastic code over the main AVC \(\mathcal{W}\) is positive, and the main AVC \(\mathcal{W}\) is severely less noisy than the wiretap AVC \(\mathcal{V}\), where \((X,Y_q,Z_s)\) satisfies
\begin{equation}\label{eq:sec:less:1}
\Pr\{Y_q=y|X=x\}=W_q(y|x), \Pr\{Z_s=z|X=x\}=V_s(y|x),
\end{equation}
and \(W_q\) is given by \eqref{eq:pre:4}.
\end{prop}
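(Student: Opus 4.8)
The plan is to sandwich the secrecy capacity between the achievable rate of Theorem~\ref{thm:avwc:sc} and a converse obtained by refining the argument behind Lemma~\ref{lem:outer1}, and then to show that under severe less noisiness both ends collapse to the single quantity $V^{*}:=\max_X\min_{q\in\mathcal{P}(\mathcal{S}),s\in\mathcal{S}}[I(X;Y_q)-I(X;Z_s)]$. I will use repeatedly the elementary identity $\min_{q,s}[I(X;Y_q)-I(X;Z_s)]=\min_q I(X;Y_q)-\max_s I(X;Z_s)$, and will write $\bar P$ for an averaged single-letter input distribution.

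For achievability I would argue as follows. Since the capacity of stochastic code over the main AVC $\mathcal{W}$ is positive, Theorem~\ref{thm:avwc:sc} applies, so every rate below $\max_U[\min_q I(U;Y_q)-\max_s I(U;Z_s)]$ is achievable by stochastic code. Specializing the auxiliary variable to $U=X$ already shows that $\max_X[\min_q I(X;Y_q)-\max_s I(X;Z_s)]=V^{*}$ is achievable, hence $C_s^{AVWC-SC}(\mathcal{W},\mathcal{V})\ge V^{*}$. This half uses only the positivity hypothesis and not the less-noisy structure.

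The converse is the substantive part. The key observation is that the generic bound $\min_{q,s}\max_U[I(U;Y_q)-I(U;Z_s)]$ of Lemma~\ref{lem:outer1} is a min--max and is in general strictly larger than the max--min $V^{*}$, so I would re-run that proof while keeping the input distribution fixed rather than optimizing it away. Fix any stochastic code of rate $R$ that is reliable and strongly secret against every state sequence, and fix arbitrary $q\in\mathcal{P}(\mathcal{S})$ and $s\in\mathcal{S}$. Applying Fano's inequality against the i.i.d.\ state of type $q$ (the receiver is ignorant of the state, so its effective channel is the mixed $W_q$) and the strong secrecy condition against the constant sequence $(s,\ldots,s)$ (the eavesdropper knows the state, so its channel is the single $V_s$) — which may be chosen independently because both constraints hold for the same code — gives
\[
NR\le I(M;Y^N(q))-I(M;Z^N(s))+N\delta_N,\qquad \delta_N\to 0 .
\]
Viewing $Y^N$ through $W_q$ and $Z^N$ through $V_s$ as conditionally independent given $X^N$, the Csisz\'ar sum identity single-letterizes the right side as $\sum_{i=1}^N[I(U_i;Y_{q,i})-I(U_i;Z_{s,i})]$ with $U_i=(M,Y^{i-1},Z_{i+1}^N)$ and $U_i\to X_i\to(Y_{q,i},Z_{s,i})$; this is valid even though $Y$ and $Z$ traverse different channels. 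Now, rather than bounding each term by $\max_U$, I invoke severe less noisiness: since $W_q$ is less noisy than $V_s$, the decomposition $I(U_i;Y_q)-I(U_i;Z_s)=[I(X_i;Y_q)-I(X_i;Z_s)]-[I(X_i;Y_q|U_i)-I(X_i;Z_s|U_i)]$ together with $I(X_i;Y_q|U_i)\ge I(X_i;Z_s|U_i)$ gives $I(U_i;Y_q)-I(U_i;Z_s)\le I(X_i;Y_q)-I(X_i;Z_s)$, retaining the true per-letter input. Because less noisiness also makes $P\mapsto I(P;Y_q)-I(P;Z_s)$ concave, Jensen's inequality applied to $\bar P=\frac1N\sum_i P_{X_i}$ yields $R\le I(\bar P;Y_q)-I(\bar P;Z_s)+o(1)$; as $\bar P$ depends only on the code and not on $(q,s)$, minimizing over $q$ and $s$ gives $R\le\min_{q,s}[I(\bar P;Y_q)-I(\bar P;Z_s)]\le V^{*}$.

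The main obstacle is exactly the pair of ingredients that upgrade the loose min--max of Lemma~\ref{lem:outer1} to the tight max--min $V^{*}$: the less-noisy inequality that discards the auxiliary $U_i$ in favor of the actual input $X_i$, and the concavity of $I(P;Y_q)-I(P;Z_s)$ in $P$ that collapses the per-letter distributions to a single $\bar P$. Establishing (or citing) this concavity, which is precisely the characterization of less noisiness, is the technical heart, and I would isolate it as an auxiliary lemma. It is also here that severe, rather than merely strong, less noisiness is essential: the reliability step forces the mixed main channel $W_q$ for an arbitrary $q\in\mathcal{P}(\mathcal{S})$ to be less noisy than the wiretap channel $V_s$, a requirement that strong less noisiness guarantees only at extreme points and under which the argument would fail to close.
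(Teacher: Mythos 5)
Your proposal is correct and follows essentially the same route as the paper's proof: achievability from Theorem~\ref{thm:avwc:sc} specialized to $U=X$, and a converse that re-runs the argument behind Lemma~\ref{lem:outer1} while retaining the code's input distribution, using the less-noisy inequality to replace the auxiliary variable by $X_i$ and only then minimizing over $(q,s)$ before maximizing over the input. The one presentational difference is that where you invoke concavity of $P\mapsto I(P;Y_q)-I(P;Z_s)$ plus Jensen to collapse the per-letter distributions to $\bar P$, the paper keeps the time-sharing index inside the conditioning variable $V_{q,s}$ and simply applies the less-noisy inequality a second time (to drop $-I(V_{q,s};Y_q)+I(V_{q,s};Z_s)\le 0$), which is the same fact in disguise and spares the need for a separate concavity lemma.
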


\begin{prop}\label{prop:less:avwc:csr}
\emph{(Secrecy capacity of strongly less noisy AVWC-CSR)} Given an AVWC-CSR \((\mathcal{W},\mathcal{V})\), its secrecy capacity of stochastic code with respect to the maximal decoding error probability and strong secrecy criterion is
\[
C_s^{AVWC-CSR-SC}(\mathcal{W},\mathcal{V}) =\max_X \min_{s,s'\in\mathcal{S}} [I(X;Y_s)-I(X;Z_{s'})],
\]
if the main AVC \(\mathcal{W}\) is strongly less noisy than the wiretap AVC \(\mathcal{V}\), where \((X,Y_s,Z_{s'})\) satisfies
\[
\Pr\{Y_s=y|X=x\}=W_s(y|x) \text{ and } \Pr\{Z_{s'}=z|X=x\}=V_{s'}(y|x).
\]
\end{prop}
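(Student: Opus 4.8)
The plan is to establish the two matching bounds \(C_s^{AVWC-CSR-SC}(\mathcal{W},\mathcal{V})\ge R^\ast\) and \(C_s^{AVWC-CSR-SC}(\mathcal{W},\mathcal{V})\le R^\ast\), where \(R^\ast=\max_X\min_{s,s'\in\mathcal{S}}[I(X;Y_s)-I(X;Z_{s'})]\). The lower bound will come from Theorem \ref{thm:avwc:csr:rc} and the upper bound from a refinement of Lemma \ref{lem:outer1}. The strongly less noisy hypothesis enters in two distinct ways: through its ``more capable'' consequence, which lets me replace the auxiliary variable \(U\) by \(X\); and through the standard characterization that \(W_s\) being less noisy than \(V_{s'}\) is equivalent to \(P_X\mapsto I(X;Y_s)-I(X;Z_{s'})\) being concave, which will license a minimax interchange.

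For achievability I would start from Theorem \ref{thm:avwc:csr:rc}, which makes every \(R\le\max_U[\min_s I(U;Y_s)-\max_{s'}I(U;Z_{s'})]\) achievable. Since \(\min_s I(U;Y_s)-\max_{s'}I(U;Z_{s'})=\min_{s,s'}[I(U;Y_s)-I(U;Z_{s'})]\), taking \(U=X\) already yields \(C_s^{AVWC-CSR-SC}\ge\max_X\min_{s,s'}[I(X;Y_s)-I(X;Z_{s'})]=R^\ast\). I would also record that this bound is tight: for any \(U\to X\) and any pair \((s,s')\), the identity \(I(U;Y_s)-I(U;Z_{s'})=I(X;Y_s)-I(X;Z_{s'})-[I(X;Y_s|U)-I(X;Z_{s'}|U)]\) together with \(I(X;Y_s|U)\ge I(X;Z_{s'}|U)\) (more capability applied to each conditional input law \(P_{X|U=u}\)) shows the auxiliary never helps, so the achievable region is exactly \(R^\ast\).

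For the converse I would begin with Lemma \ref{lem:outer1}, reducing each inner maximum to \(U=X\) by the same more-capability inequality, which gives \(C_s^{AVWC-CSR-SC}\le\min_{s,s'}\max_X[I(X;Y_s)-I(X;Z_{s'})]\). This pure-state bound is in general strictly larger than \(R^\ast\), so the essential step is to sharpen it. I would re-run the outer-bound argument underlying Lemma \ref{lem:outer1}, but feed it two possibly different and correlated state sequences \(s^N\) (for the reliability/Fano step) and \(\tilde s^N\) (for the secrecy step), both of which the code must handle simultaneously; choosing their joint empirical type to be an arbitrary \(\mu\) on \(\mathcal{S}\times\mathcal{S}\) and single-letterizing the difference \(I(M;Y^N(s^N))-I(M;Z^N(\tilde s^N))\) by the Csisz\'ar sum identity yields \(C_s^{AVWC-CSR-SC}\le\max_X\sum_{s,s'}\mu(s,s')[I(X;Y_s)-I(X;Z_{s'})]\) for every \(\mu\). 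Writing \(\bar g(P_X,\mu)\) for this weighted average, I obtain \(C_s^{AVWC-CSR-SC}\le\min_\mu\max_X\bar g(P_X,\mu)\). Because each summand \(I(X;Y_s)-I(X;Z_{s'})\) is concave in \(P_X\) under strong less noisiness, \(\bar g\) is concave in \(P_X\) and linear in \(\mu\), so Sion's minimax theorem applies and \(\min_\mu\max_X\bar g=\max_X\min_\mu\bar g=\max_X\min_{s,s'}[I(X;Y_s)-I(X;Z_{s'})]=R^\ast\), completing the converse.

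The main obstacle is exactly this last interchange. The naive converse of Lemma \ref{lem:outer1} delivers \(\min_{s,s'}\max_X\), which sits strictly above the achievable \(\max_X\min_{s,s'}\); closing the gap is impossible with pure single-state attacks and forces the correlated two-sequence attack realizing an arbitrary joint type \(\mu\). Verifying that the Csisz\'ar-sum single-letterization still goes through when the reliability and secrecy experiments ride on different (but jointly coupled) state sequences, and that the induced per-letter auxiliaries match, is the delicate part; once it is in place, concavity from the less noisy assumption is precisely what converts the min-max into the desired max-min. The argument for Proposition \ref{prop:less:avwc} is the same, with the main-channel index ranging over the convex hull \(\mathcal{P}(\mathcal{S})\) (hence the need for \emph{severe} rather than \emph{strong} less noisiness) in place of the single states \(s\in\mathcal{S}\).
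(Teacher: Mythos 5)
Your achievability argument is correct and is exactly the paper's: apply Theorem \ref{thm:avwc:csr:rc} with \(U=X\) and use \(\min_s I(U;Y_s)-\max_{s'}I(U;Z_{s'})=\min_{s,s'}[I(U;Y_s)-I(U;Z_{s'})]\). The converse, however, rests on a misdiagnosis. You treat Lemma \ref{lem:outer1} as a black box, conclude that collapsing \(U_{s,s'}\) to \(X\) only yields \(\min_{s,s'}\max_X[\cdot]\), and then declare that ``closing the gap is impossible with pure single-state attacks,'' which motivates your correlated two-sequence attack and the Sion minimax interchange. This is not so. In the converse one fixes the code first; the single-letterized input \(X=X_J\) is induced by the code and does not depend on the state pair. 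The strongly-less-noisy hypothesis is used only to strip the \((s,s')\)-dependent auxiliaries \(U_{s,s'},V_{s,s'}\) via the chain of inequalities in \eqref{eq:app:prop:less:avwc:2}, after which the bound \(R\le I(X;Y_s)-I(X;Z_{s'})+\epsilon+\delta(\epsilon)\) holds for \emph{one and the same} \(X\) simultaneously for every pair \((s,s')\). One may therefore take \(\min_{s,s'}\) first and only then relax to \(\max_X\), exactly as in \eqref{eq:app:prop:less:avwc:3} (the paper's closing remark ``the random variable \(X\) is unrelated to \(q\) and \(s\)'' is precisely this point). Pure constant-state attacks suffice; no joint type \(\mu\), no Csisz\'ar-sum argument over coupled sequences, and no minimax theorem is needed.

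Beyond being unnecessary, your proposed sharpening has its own gap. After the position-wise reduction you obtain \(NR\lesssim\sum_i[I(X_i;Y_{s_i})-I(X_i;Z_{\tilde s_i})]\), and to reach \(\max_{P_X}\sum_{s,s'}\mu(s,s')[I(X;Y_s)-I(X;Z_{s'})]\) you need a \emph{common} input law across all \((s,s')\)-cells. Concavity of each map \(P_X\mapsto I(X;Y_s)-I(X;Z_{s'})\) only lets you merge the positions \emph{within} one cell into that cell's own mixture \(\bar X_{s,s'}\); since different cells average different subsets of positions, these mixtures need not coincide, and the step from \(\sum_{s,s'}\mu(s,s')g_{s,s'}(\bar X_{s,s'})\) to \(\max_{P}\sum_{s,s'}\mu(s,s')g_{s,s'}(P)\) does not follow. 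The correct and far shorter route is the one in Appendix \ref{app:prop:less:avwc} for Proposition \ref{prop:less:avwc}, transcribed to the CSR setting with \(q\in\mathcal{P}(\mathcal{S})\) replaced by \(s\in\mathcal{S}\) (which is why strong, rather than severe, less noisiness suffices here).
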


The proofs of Propositions \ref{prop:less:avwc} and \ref{prop:less:avwc:csr} are similar. We only give the proof of Propositions \ref{prop:less:avwc} in Appendix \ref{app:prop:less:avwc}.

\begin{exmp}\label{example2}
In this example, we will compare the secrecy capacities of AVWC and AVWC-CSR where the main AVC \(\mathcal{W}\) is severely less noisy than the wiretap AVC \(\mathcal{V}\). Both of the wiretap channel models are specified by \((\mathcal{W},\mathcal{V})\), depicted in Fig. \ref{Fig:example2:avwc:less1}, with \(\mathcal{W}=\{W_1,W_2\}\) and \(\mathcal{V}=\{V_1,V_2\}\), where \(\mathcal{S}=\{1,2\}, \mathcal{X}=\{0,1\}\), \(\mathcal{Y}=\mathcal{Z}=\{0, e ,1\}\),
\[
W_1= \Bigg[
\begin{matrix}
1 & 0 & 0\\
0 & q & 1-q
\end{matrix}
\Bigg],
W_2= \Bigg[
\begin{matrix}
1-q & q & 0\\
0 & 0 & 1
\end{matrix}
\Bigg]
\text{ and }
V_1=V_2= \Bigg[
\begin{matrix}
1-p & 0 & p\\
p & 0 & 1-p
\end{matrix}
\Bigg].
\]
It follows clearly that the wiretap AVC \(\mathcal{V}\) is a strongly degraded version of the main AVC \(\mathcal{W}\), for every \(p\) and \(q\). Moreover, by the similar way of proving Property 2 in Theorem 3 of \cite{Nair-2010}, one can conclude that the main AVC \(\mathcal{W}\) is severely less noisy than the wiretap AVC \(\mathcal{V}\) when \(q \leq 2p(1-p)\). In this case, the secrecy capacities of the corresponding AVWC and AVWC-CSR are given in Fig. \ref{Fig:example2:avwc:less2}. Therefore, this example shows that the secrecy capacity of an AVWC-CSR can be strictly larger than that of the corresponding AVWC.
\end{exmp}

\begin{figure}
     \centering
     \includegraphics[width=8cm]{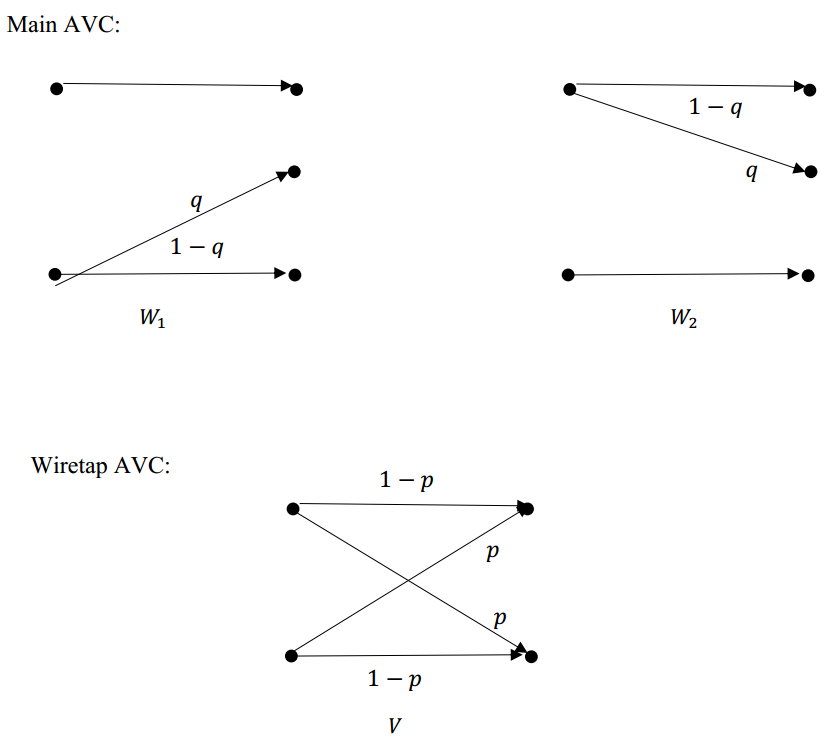}
     \caption{Channel model of main AVC and wiretap AVC in Example \ref{example2}.}\label{Fig:example2:avwc:less1}
 \end{figure}

\begin{figure}
     \centering
     \includegraphics[width=8cm]{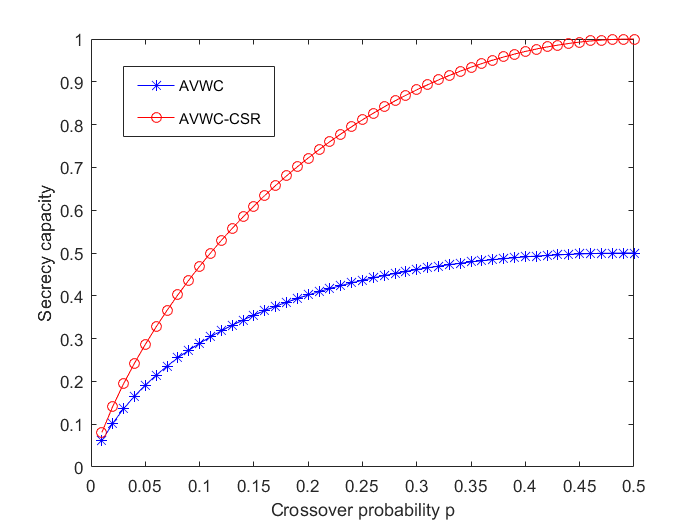}
     \caption{The secrecy capacities of the AVWC and AVWC-CSR in Example \ref{example2} with \(q=2p(1-p)\).}\label{Fig:example2:avwc:less2}
 \end{figure}

\subsection{AVWC with constrained state sequence}\label{sec:avwc:constrained:seq}

In the previous sections, we always assume that the state sequence \(s^N\) is able to run over all the possible values from \(\mathcal{S}^N\). In this subsection, we consider the cases where the state sequence is constrained by cost or types. All the results are about AVWC. The results of AVWC-CSR are not considered here since the capacity of AVC-CSR with constrained states is not established.

\textbf{AVWC with constrained cost on state sequence.}

\begin{defn}
\emph{(Cost of the state sequence)} Let \(c:\mathcal{S}\mapsto \mathbb{R}^*\) be a non-negative function on the state set \(\mathcal{S}\). We define the average cost of a state sequence \(s^N\) as
\begin{equation}\label{eq:avwc:rc:constraint:1}
c(s^N) = \frac{1}{N}\sum_{i=1}^N c(s_i).
\end{equation}
\end{defn}

\begin{defn}\label{def:avwc:sc:constraint}
\emph{(Secure achievability of random code over the AVWC with cost constraint on states)} For any given AVWC \((\mathcal{W}, \mathcal{V})\), a non-negative real number \(R\) is said to be achievable by random code, with respect to the maximal decoding error probability and the strong secrecy criterion, over that AVWC with cost constraint \(\Lambda > 0\) on the states, if for every \(\epsilon > 0\), there exists a random code \((F, \Phi)\) over that AVWC such that
\begin{equation}\label{eq:avwc:rc:constraint:3}
\frac{1}{N} \log |\mathcal{M}| > R - \epsilon, \lambda(\mathcal{W}, F, \Phi, \Lambda) < \epsilon
\text{ and } \max_{s^N \in \mathcal{S}^N: c(s^N)\leq\Lambda} I(M; Z^N(s^N) | \Phi) < \epsilon
\end{equation}
when \(N\) is sufficiently large, where \(M\) is the source message uniformly distributed over the message set \(\mathcal{M}\), \(Z^N(s^N)\) is given by \eqref{eq:def:avwc:z} and
\begin{equation}\label{eq:avwc:rc:constraint:4}
\lambda(\mathcal{W}, F, \Phi, \Lambda) = \max_{s^N \in \mathcal{S}^N: c(s^N)\leq\Lambda} \lambda(\mathcal{W}, F, \Phi, s^N)
\end{equation}
with \(\lambda(\mathcal{W}, F, \Phi, s^N)\) given by \eqref{eq:pre:15}.
\end{defn}

\begin{thm}\label{thm:avwc:rc:constraint}
\emph{(Lower bound on the secrecy capacity of random code over AVWC with cost constraint on states)} Given an AVWC \((\mathcal{W}, \mathcal{V})\) with cost constraint \(\Lambda > 0\) on the states, each real number \(R\) satisfying
\begin{equation}\label{eq:avwc:rc:constraint}
R \leq \max_{U} [\min_{q\in\mathcal{P}(\mathcal{S},\Lambda)}I(U;Y_q) - \max_{q'\in\mathcal{P}(\mathcal{S},\Lambda)}\bar{I}_{q'}(U; Z_{\mathcal{S}})]
\end{equation}
is achievable by random code, where \(\{(U,Y_q,Z_s): q\in\mathcal{P}(\mathcal{S}),s\in\mathcal{S}\}\) is a collection of random variables satisfying \eqref{eq:thm:avwc:rc:1} and \eqref{eq:thm:avwc:rc:2}, \(\bar{I}_{q'}\) is given by \eqref{eq:pre:14}, and
\begin{equation}\label{eq:avwc:rc:constraint:5}
\mathcal{P}(\mathcal{S},\Lambda)=\{q\in\mathcal{P}(\mathcal{S}): c(q)\leq \Lambda\}
\end{equation}
with
\begin{equation}\label{eq:avwc:rc:constraint:2}
c(q) = \sum_{s\in\mathcal{S}}c(s)q(s).
\end{equation}
\end{thm}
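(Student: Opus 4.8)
The plan is to mirror the proof of Theorem \ref{thm:avwc:rc} given in Subsection \ref{sec:thm:avwc:rc}, replacing its two main ingredients by their cost-constrained counterparts: the random coding step for the main AVC is carried out over the cost-restricted state set, and the secure partition is produced by Lemma \ref{lem:goodp2} rather than by Lemma \ref{lem:goodp}. The auxiliary variable $U$ is incorporated by channel prefixing: fixing a kernel $P_{X|U}$, I pass to the enlarged AVWC with input alphabet $\mathcal{U}$ and transition matrices $\tilde{W}_s(y|u)=\sum_{x} P_{X|U}(x|u)W_s(y|x)$ and $\tilde{V}_s(z|u)=\sum_{x} P_{X|U}(x|u)V_s(z|x)$, so that $I(U;Y_q)$ and $\bar{I}_{q'}(U;Z_{\mathcal{S}})$ become the mutual-information terms of this prefixed channel. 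Since the cost $c(s^N)$ depends only on the state and not on the channel input, prefixing does not interact with the constraint, and it suffices to establish achievability for the prefixed channel with $U$ playing the role of the input.

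\textbf{Step 1 (reliability).} I would invoke the cost-constrained random coding theorem for the main AVC (Csisz\'{a}r--Narayan \cite{Csiszar-1988-2}) in place of Lemma \ref{lem:avc:rc}, which yields, for every $\tau'>0$, a random code over a family $\{(f_g,\phi_g)\}$ of rate $R' > \min_{q\in\mathcal{P}(\mathcal{S},\Lambda)} I(U;Y_q)-\tau'$ whose maximal decoding error is uniformly small over all state sequences with $c(s^N)\le\Lambda$. As in Remark \ref{rem:avc:rc}, this code may be taken with i.i.d.\ $P_U$-generated codewords, so Lemma \ref{lem:goodc} guarantees that its codebooks are ``good'' with high probability. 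Applying the elimination technique (Lemma \ref{lem:avc:ele}), whose union-bound argument only becomes easier when the maximum over states is restricted to the cost-constrained subset, I extract a series of deterministic codebooks $\mathcal{C}_k$, $1\le k\le N^2/2$, that are simultaneously ``good'' and satisfy the analogue of \eqref{eq:proof:thm:avwc:rc:2} with $\max_{s^N}$ replaced by $\max_{s^N:\, c(s^N)\le\Lambda}$.

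\textbf{Step 2 (security).} I set $\mathfrak{S}_n=\{s^n\in\mathcal{S}^n: c(s^n)\le\Lambda\}$ and identify the quantity $R_d$ of Lemma \ref{lem:goodp2}. Since every $s^n\in\mathfrak{S}_n$ has type $P_{s^n}\in\mathcal{P}(\mathcal{S},\Lambda)$, and since the cost-constrained $n$-types become dense in $\mathcal{P}(\mathcal{S},\Lambda)$ while $\bar{I}_{q}(U;Z_{\mathcal{S}})$ is linear, hence continuous, in $q$, I obtain $R_d=\max_{q'\in\mathcal{P}(\mathcal{S},\Lambda)}\bar{I}_{q'}(U;Z_{\mathcal{S}})$. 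Lemma \ref{lem:goodp2} then furnishes, on each ``good'' $\mathcal{C}_k$, a secure partition into $L<2^{N[R'-R_d-\tau]}$ bins with $I(\tilde{M}_k;Z^N(\mathcal{C}_k,s^N))<\epsilon$ for every $s^N\in\mathfrak{S}_N$. Assembling the random code exactly as in the proof of Theorem \ref{thm:avwc:rc} (uniform over the index $k$ and over the sub-bin index $J$, with $\bar{\Phi}=\phi_K$) and repeating the estimate \eqref{eq:proof:thm:avwc:rc:4} gives $\max_{s^N:\, c(s^N)\le\Lambda} I(M;Z^N(s^N)|\bar{\Phi})<\epsilon$, while Step 1 gives $\lambda(\mathcal{W},\bar{F},\bar{\Phi},\Lambda)<\epsilon$. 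The resulting message rate satisfies $\frac{1}{N}\log L > \min_{q\in\mathcal{P}(\mathcal{S},\Lambda)} I(U;Y_q)-\max_{q'\in\mathcal{P}(\mathcal{S},\Lambda)}\bar{I}_{q'}(U;Z_{\mathcal{S}})-(\tau+\tau')$; letting $\tau,\tau'\to 0$ and maximizing over $U$ and $P_{X|U}$ yields \eqref{eq:avwc:rc:constraint}.

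The main obstacle I anticipate is the reliability step, not the security step: one must secure a constrained-AVC random code whose codebooks are i.i.d.-generated (so that Lemma \ref{lem:goodc} applies) and whose decoding error is controlled uniformly over the cost-restricted state set at the sharper rate $\min_{q\in\mathcal{P}(\mathcal{S},\Lambda)}I(U;Y_q)$, and then verify that the elimination technique preserves both ``goodness'' and this restricted uniform reliability. By contrast, the security side is essentially mechanical once Lemma \ref{lem:goodp2} is available; the only delicate point there is the boundary case $c(q^*)=\Lambda$ for the maximizer $q^*$ in identifying $R_d$, which is handled by approaching $q^*$ through interior cost-constrained types and invoking continuity of $\bar{I}_{q}(U;Z_{\mathcal{S}})$.
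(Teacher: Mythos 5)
Your proposal matches the paper's proof essentially step for step: the paper likewise replaces Lemma \ref{lem:avc:rc} by the Csisz\'{a}r--Narayan cost-constrained random coding theorem (its Lemma \ref{lem:avc:rc:constraint} together with Remark \ref{rem:avc:rc:constraint} for i.i.d.\ generation and ``goodness''), applies the elimination technique over the cost-restricted state set, then takes $\mathfrak{S}_N=\{s^N: P_{s^N}\in\mathcal{P}(\mathcal{S},\Lambda)\}$ with $R_d=\max_{q'\in\mathcal{P}(\mathcal{S},\Lambda)}\bar{I}_{q'}(X;Z_{\mathcal{S}})$ in Lemma \ref{lem:goodp2} and reassembles the random code as in Step 2 of Subsection \ref{sec:thm:avwc:rc}. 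The only cosmetic difference is that you make the channel prefixing by $U$ explicit, whereas the paper argues directly with $X$ and leaves the prefixing implicit; this does not change the argument.
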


The proof of Theorem \ref{thm:avwc:rc:constraint} is outlined at the end of this subsection.

\begin{rem}
Let \(c_M=\max_{s\in\mathcal{S}}c(s)\). Then, when \(\Lambda \geq c_M\), we have \(\mathcal{P}(\mathcal{S},\Lambda) = \mathcal{P}(\mathcal{S})\). In this case, the bound formulated in \eqref{eq:avwc:rc:constraint} is specialized as
\[
\begin{array}{lll}
R &\leq& \max_{U} [\min_{q\in\mathcal{P}(\mathcal{S},\Lambda)}I(U;Y_q) - \max_{q'\in\mathcal{P}(\mathcal{S},\Lambda)}\bar{I}_{q'}(U; Z_{\mathcal{S}})]\\
&=& \max_{U} [\min_{q\in\mathcal{P}(\mathcal{S})}I(U;Y_q) - \max_{q'\in\mathcal{P}(\mathcal{S})}\bar{I}_{q'}(U; Z_{\mathcal{S}})]\\
&=& \max_{U} [\min_{q\in\mathcal{P}(\mathcal{S})}I(U;Y_q) - \max_{s\in\mathcal{S}}I(U; Z_{s})].
\end{array}
\]
Therefore, Theorem \ref{thm:avwc:rc:constraint} includes Theorem \ref{thm:avwc:rc} as a special case.
\end{rem}

\textbf{AVWC with constrained types on state sequence.}

By the definition in \eqref{eq:avwc:rc:constraint:1}, the cost of a state sequence \(s^N\) is totally determined by its type \(P_{s^N}\). To be precise, we have
\[
c(s^N)=c(P_{s^N}),
\]
where \(c(P_{s^N})\) is given by \eqref{eq:avwc:rc:constraint:2}. On account of that, Equations \eqref{eq:avwc:rc:constraint:3} and \eqref{eq:avwc:rc:constraint:4} in Definition \ref{def:avwc:sc:constraint} can be rewritten as
\[
\frac{1}{N} \log |\mathcal{M}| > R - \epsilon, \lambda(\mathcal{W}, F, \Phi, \Lambda) < \epsilon
\text{ and } \max_{s^N \in \mathcal{S}^N: P_{s^N}\in\mathcal{P}(\mathcal{S},\Lambda)} I(M; Z^N(s^N)) < \epsilon
\]
and
\[
\lambda(\mathcal{W}, F, \Phi, \Lambda) = \max_{s^N \in \mathcal{S}^N: P_{s^N}\in\mathcal{P}(\mathcal{S},\Lambda)} \lambda(\mathcal{W}, F, \Phi, s^N).
\]
where \(\mathcal{P}(\mathcal{S},\Lambda)\) is given in \eqref{eq:avwc:rc:constraint:5}. This indicates the definition of achievability is totally characterized by a collection \(\mathcal{P}(\mathcal{S},\Lambda)\) of probability mass functions on \(\mathcal{S}\). Aroused by this property, we have a slightly more general definition on AVWC with constrained state sequence as follows.

\begin{defn}
\emph{(Secure achievability of random code over AVWC with general constraint on states)} For any given AVWC \((\mathcal{W}, \mathcal{V})\), a non-negative real number \(R\) is said to be achievable by random code, with respect to the maximal decoding error probability and the strong secrecy criterion, over that AVWC with states constrained by \(\mathfrak{P}\subseteq\mathcal{P}(\mathcal{S})\), if for every \(\epsilon > 0\), there exists a random code \((F, \Phi)\) over that AVWC such that
\[
\frac{1}{N} \log |\mathcal{M}| > R - \epsilon, \lambda(\mathcal{W}, F, \Phi, \mathfrak{P}) < \epsilon
\text{ and } \max_{s^N \in \mathcal{S}^N: P_{s^N}\in\mathfrak{P}} I(M; Z^N(s^N)|\Phi) < \epsilon
\]
when \(N\) is sufficiently large, where where \(M\) is the source message uniformly distributed over the message set \(\mathcal{M}\), \(Z^N(s^N)\) is given by \eqref{eq:def:avwc:z}, and
\[
\lambda(\mathcal{W}, F, \Phi, \mathfrak{P}) = \max_{s^N \in \mathcal{S}^N: P_{s^N}\in\mathfrak{P}} \lambda(\mathcal{W}, F, \Phi, s^N)
\]
with \(\lambda(\mathcal{W}, F, \Phi, s^N)\) given by \eqref{eq:pre:15}.
\end{defn}

The following result can be proved with little effort.
\begin{prop}
\emph{(Lower bound on the secrecy capacity of random code over AVWC with general constraint on states)} Given an AVWC \((\mathcal{W}, \mathcal{V})\) with channel states constrained by \(\mathfrak{P}\), each real number \(R\) satisfying
\[
R \leq \max_{U} [\inf_{q\in\mathfrak{P}}I(U;Y_q) - \sup_{q'\in\mathfrak{P}}\bar{I}_{q'}(U; Z_{\mathcal{S}})]
\]
is achievable by random code, where $\{(U,Y_q,Z_s): q\in\mathcal{P}(\mathcal{S}), s\in\mathcal{S}\}$ is a collection of random variables satisfying \eqref{eq:thm:avwc:rc:1} and \eqref{eq:thm:avwc:rc:2}.
\end{prop}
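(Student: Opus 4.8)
The plan is to follow the two-step architecture of the proof of Theorem~\ref{thm:avwc:rc} (equivalently, of its cost-constrained refinement Theorem~\ref{thm:avwc:rc:constraint}), replacing the full simplex $\mathcal{P}(\mathcal{S})$ by the prescribed set $\mathfrak{P}$ and the secure-partition Lemma~\ref{lem:goodp} by its constrained counterpart Lemma~\ref{lem:goodp2}. I would first fix an auxiliary $U$, i.e.\ a joint law $P_{UX}$, that nearly attains the maximum in the statement, and absorb $U$ into the channel through the prefix reduction: set $\tilde{W}_s(y|u)=\sum_{x}P_{X|U}(x|u)W_s(y|x)$ and $\tilde{V}_s(z|u)=\sum_{x}P_{X|U}(x|u)V_s(z|x)$, so that $(\tilde{\mathcal{W}},\tilde{\mathcal{V}})$ is again an AVWC with input alphabet $\mathcal{U}$ whose induced laws are exactly \eqref{eq:thm:avwc:rc:2} and \eqref{eq:thm:avwc:rc:1}. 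It then suffices to code over $(\tilde{\mathcal{W}},\tilde{\mathcal{V}})$ with $U$ as the effective input and push the resulting $u^N$ through the memoryless prefix channel $P_{X|U}$; every reliability and secrecy estimate transfers verbatim. Throughout I write $\mathfrak{S}_N=\{s^N\in\mathcal{S}^N: P_{s^N}\in\mathfrak{P}\}$ for the admissible state sequences.

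First I would handle reliability. Invoking the constrained random-coding theorem for the AVC (the constrained analogue of Lemma~\ref{lem:avc:rc}; cf.\ the reliability half of Theorem~\ref{thm:avwc:rc:constraint} and \cite{Csiszar-1988-2}), there is a random code over $\tilde{\mathcal{W}}$ of message-set size $L'>2^{N[\inf_{q\in\mathfrak{P}}I(U;Y_q)-\tau']}$ whose codewords are drawn i.i.d.\ from $P_U$ and whose error obeys $\max_{s^N\in\mathfrak{S}_N}\lambda<\epsilon'$. Because the codewords are generated as in \eqref{eq:lem:goodc}, Lemma~\ref{lem:goodc} (see Remark~\ref{rem:avc:rc}) shows each realized codebook is ``good'' with respect to $U$ with high probability. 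Applying the elimination technique of Lemma~\ref{lem:avc:ele}---now with the union bound taken over the at most $|\mathcal{S}|^N$ sequences of $\mathfrak{S}_N$ rather than all of $\mathcal{S}^N$, which is a fortiori fine---I would extract $N^2/2$ deterministic codebooks $\mathcal{C}_1,\dots,\mathcal{C}_{N^2/2}$ that are simultaneously ``good'' and satisfy $\max_{m}\max_{s^N\in\mathfrak{S}_N}\big[\tfrac{2}{N^2}\sum_k\lambda_m(\tilde{\mathcal{W}},f_k,\phi_k,s^N)\big]<\epsilon$, exactly as in Step~1 of Subsection~\ref{sec:thm:avwc:rc}.

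Next I would secure each ``good'' codebook. Apply Lemma~\ref{lem:goodp2} to $\tilde{\mathcal{V}}$ with the state collection $\mathfrak{S}_n$; the relevant exponent is $R_d=\limsup_{n\to\infty}\max_{s^n\in\mathfrak{S}_n}\bar{I}_{P_{s^n}}(U;Z_{\mathcal{S}})$. Since $\bar{I}_{P}(U;Z_{\mathcal{S}})=\sum_{s}P(s)I(U;Z_s)$ is continuous in the type $P$ and depends on $s^n$ only through $P_{s^n}\in\mathfrak{P}$, one has $R_d\le\sup_{q'\in\mathfrak{P}}\bar{I}_{q'}(U;Z_{\mathcal{S}})$, so any $L<L'\cdot 2^{-N[\sup_{q'\in\mathfrak{P}}\bar{I}_{q'}(U;Z_{\mathcal{S}})+\tau]}$ also satisfies the hypothesis $L<L'\cdot2^{-N[R_d+\tau]}$ of Lemma~\ref{lem:goodp2}. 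The lemma then yields, for each $k$, a partition $\{\mathcal{C}_{k,m}\}_{m=1}^{L}$ with $\max_{s^N\in\mathfrak{S}_N}I(\tilde{M}_k;Z^N(\mathcal{C}_k,s^N))<\epsilon$. Assembling the random code precisely as in Step~2 of Subsection~\ref{sec:thm:avwc:rc}---drawing $K$ uniform over $[1:N^2/2]$, $J$ uniform over $[1:L'/L]$, encoder $\bar{F}(M)=x^N(K,M,J)$ and decoder $\bar{\Phi}=\phi_K$---then gives, for every $s^N\in\mathfrak{S}_N$, the secrecy bound $I(M;Z^N(s^N)|\bar{\Phi})\le\tfrac{2}{N^2}\sum_k I(\tilde{M}_k;Z^N(\mathcal{C}_k,s^N))<\epsilon$ and the reliability bound $\lambda(\mathcal{W},\bar{F},\bar{\Phi},\mathfrak{P})<\epsilon$, while $\tfrac1N\log L\to\inf_{q\in\mathfrak{P}}I(U;Y_q)-\sup_{q'\in\mathfrak{P}}\bar{I}_{q'}(U;Z_{\mathcal{S}})$ as $\tau,\tau'\to0$. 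Taking the supremum over $U$ closes the argument.

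I expect the only genuine work---and hence the main obstacle---to lie in the reliability step: establishing that rate $\inf_{q\in\mathfrak{P}}I(U;Y_q)$ is achievable over \emph{all} state sequences whose type lies in $\mathfrak{P}$ (and not merely over a single worst-case type), together with the accompanying continuity and denseness check that identifies the secrecy exponent as $\sup_{q'\in\mathfrak{P}}\bar{I}_{q'}(U;Z_{\mathcal{S}})$. Both are routine given the constrained AVC random-coding theorem and Lemma~\ref{lem:goodp2}; this is precisely why the statement generalizes Theorem~\ref{thm:avwc:rc:constraint} ``with little effort'', the only change being the replacement of $\mathcal{P}(\mathcal{S},\Lambda)$ by $\mathfrak{P}$.
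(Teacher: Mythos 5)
Your proposal is correct and follows essentially the route the paper intends: the paper gives no separate proof for this proposition, remarking only that it follows "with little effort" from the argument for Theorem~\ref{thm:avwc:rc:constraint}, i.e.\ the constrained AVC random-coding lemma plus Lemma~\ref{lem:goodc}, the elimination technique of Lemma~\ref{lem:avc:ele}, and Lemma~\ref{lem:goodp2} applied with $\mathfrak{S}_N=\{s^N:P_{s^N}\in\mathfrak{P}\}$, which is exactly your two-step construction (with the prefix-channel reduction for $U$ made explicit). Your identification of $R_d\le\sup_{q'\in\mathfrak{P}}\bar{I}_{q'}(U;Z_{\mathcal{S}})$ and of the reliability step over all types in $\mathfrak{P}$ as the only points needing care matches the paper's treatment.
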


As an extremely special case, when \(\mathfrak{P}=\{q\}\) has only one single element, the model is specialized as the model in \cite{Goldfeld-1}. In that case, it follows that

\begin{cor}\label{cor3}
 Given an AVWC \((\mathcal{W}, \mathcal{V})\) with channel states constrained on \(\{q\}\), each real number \(R\) satisfying
\[
R \leq \max_{U} [I(U;Y_q) - \bar{I}_{q}(U; Z_{\mathcal{S}})]
\]
is achievable by random code, where $\{(U,Y_q,Z_s): q\in\mathcal{P}(\mathcal{S}), s\in\mathcal{S}\}$ is a collection of random variables satisfying \eqref{eq:thm:avwc:rc:1} and \eqref{eq:thm:avwc:rc:2}.
\end{cor}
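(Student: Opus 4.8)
The plan is to obtain Corollary \ref{cor3} as the degenerate, single-type instance of the preceding Proposition on the AVWC with a general state constraint $\mathfrak{P}$. Setting $\mathfrak{P}=\{q\}$, the infimum and supremum in that bound each range over a one-point set, so $\inf_{q\in\mathfrak{P}}I(U;Y_q)=I(U;Y_q)$ and $\sup_{q'\in\mathfrak{P}}\bar{I}_{q'}(U;Z_{\mathcal{S}})=\bar{I}_{q}(U;Z_{\mathcal{S}})$, which reproduces the claimed rate $\max_U[I(U;Y_q)-\bar{I}_{q}(U;Z_{\mathcal{S}})]$ verbatim. Hence the only real task is to establish the Proposition in this special case, and I would do so by mirroring the two-step proof of Theorem \ref{thm:avwc:rc} with the admissible state sequences narrowed to $\mathfrak{S}_N=\{s^N\in\mathcal{S}^N:P_{s^N}=q\}$ and with the security lemma upgraded from Lemma \ref{lem:goodp} to Lemma \ref{lem:goodp2}.

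As a preliminary reduction I would dispose of the auxiliary variable $U$ by channel prefixing: fix the maximizing $P_{U}$ and $P_{X|U}$, absorb $P_{X|U}$ into the encoder, and replace $\mathcal{W},\mathcal{V}$ by the effective AVCs $\tilde{W}_s(y|u)=\sum_x P_{X|U}(x|u)W_s(y|x)$ and $\tilde{V}_s(z|u)=\sum_x P_{X|U}(x|u)V_s(z|x)$, which are exactly the relations \eqref{eq:thm:avwc:rc:1} and \eqref{eq:thm:avwc:rc:2}. It then suffices to prove the $U$-free statement $R\le I(X;Y_q)-\bar{I}_{q}(X;Z_{\mathcal{S}})$ for arbitrary channels; applying it to $(\tilde{W},\tilde{V})$ and optimizing over the prefix $(P_U,P_{X|U})$, i.e. over $U$, recovers the full bound.

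For the reliability half I would reuse Step 1 of Section \ref{sec:thm:avwc:rc} almost unchanged. Since the state sequence is promised to have type $q$, the relevant figure of merit is the random-code capacity of the type-$q$–constrained AVC, which is $\max_{P_X}I(X;Y_q)$ rather than the unconstrained $\max_{P_X}\min_{q}I(X;Y_q)$; this is the only place the constraint touches the main channel. Invoking this type-constrained analogue of Lemma \ref{lem:avc:rc} together with Lemma \ref{lem:goodc} and the elimination technique of Lemma \ref{lem:avc:ele}, I obtain a series of ``good'' codebooks $\mathcal{C}_k$, $1\le k\le N^2/2$, of size $L'>2^{N[I(X;Y_q)-\tau']}$ whose deterministic codes have vanishing maximal error uniformly over $s^N\in\mathfrak{S}_N$. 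For the security half I would apply Lemma \ref{lem:goodp2} to each $\mathcal{C}_k$ with the collection $\mathfrak{S}_n=\{s^n:P_{s^n}=q\}$; because every such $s^n$ has type exactly $q$, the leakage exponent collapses to $R_d=\limsup_n\max_{s^n\in\mathfrak{S}_n}\bar{I}_{P_{s^n}}(X;Z_{\mathcal{S}})=\bar{I}_{q}(X;Z_{\mathcal{S}})$, so each codebook admits a secure partition into $L<L'\cdot 2^{-N[\bar{I}_{q}(X;Z_{\mathcal{S}})+\tau]}$ bins with $I(\tilde{M};Z^N(\mathcal{C}_k,s^N))<\epsilon$ for all $s^N\in\mathfrak{S}_N$. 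Assembling the random code exactly as in Step 2 of Section \ref{sec:thm:avwc:rc} (uniform choice of $k$ and of the intra-bin index, message as the bin index) then yields net rate $I(X;Y_q)-\bar{I}_{q}(X;Z_{\mathcal{S}})-\tau$ with the required maximal-error and strong-secrecy guarantees restricted to type-$q$ states.

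The main obstacle is conceptual rather than computational and lies in the asymmetry between the two terms. Reliability only needs the mixed channel $W_q$, giving the $I(X;Y_q)$ term, because the legitimate decoder faces a type-$q$ AVC and may average over positions; secrecy, by contrast, must defeat an eavesdropper who knows the exact realization $s^N$ and not merely its type, which forces the leakage to be the per-state average $\bar{I}_{q}(X;Z_{\mathcal{S}})=\sum_s q(s)I(X;Z_s)$ (dominating $I(X;Z_q)$ by convexity of mutual information in the channel). Controlling exactly this quantity is what the state-partitioned typicality of Subsection \ref{sec:typical} and Lemma \ref{lem:goodp2} are engineered for. The two points that deserve genuine care are verifying that $R_d$ reduces to $\bar{I}_{q}(X;Z_{\mathcal{S}})$ for the singleton constraint and that the constrained reliability term is $I(X;Y_q)$ (treating block lengths $N$ for which $Nq$ is integral); everything else is a direct transcription of the proof of Theorem \ref{thm:avwc:rc}.
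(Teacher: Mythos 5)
Your proposal is correct and follows essentially the same route as the paper: Corollary \ref{cor3} is obtained there simply by specializing the preceding proposition on general state constraints to $\mathfrak{P}=\{q\}$, and that proposition is in turn proved exactly as you describe, by transcribing the two-step argument of Theorem \ref{thm:avwc:rc} with the state sequences restricted to $\mathfrak{S}_N=\{s^N:P_{s^N}\in\mathfrak{P}\}$, the constrained-state random-code capacity result in place of Lemma \ref{lem:avc:rc}, and Lemma \ref{lem:goodp2} (with $R_d$ collapsing to $\bar{I}_{q}(X;Z_{\mathcal{S}})$) in place of Lemma \ref{lem:goodp}. Your added observations on channel prefixing for $U$ and on block lengths with $Nq$ integral are consistent with the paper's (more terse) treatment.
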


\begin{rem}
We should point out that the result in Corollary \ref{cor3} does not cover the results in \cite{Goldfeld-1}. That's because the capacity in \cite{Goldfeld-1} is with respect to semantic secrecy criterion, which is severer than the strong secrecy criterion in this paper.
\end{rem}

\textbf{Proof of Theorem \ref{thm:avwc:rc:constraint}.}

It suffices to prove that for any collection of random variables \(\{(X,Y_q,Z_s): q\in\mathcal{P}(\mathcal{S}) \text{ and } s\in\mathcal{S}\}\) satisfying \eqref{eq:proof:thm:avwc:rc:6:y} and \eqref{eq:proof:thm:avwc:rc:6:z}, and real numbers \(0 < \tau < \min_{q\in\mathcal{P}(\mathcal{S},\Lambda)}(X; Y_{q}) - \max_{q'\in\mathcal{P}(\mathcal{S},\Lambda)}\bar{I}_{q'}(X; Z_{\mathcal{S}})\) and \(\epsilon > 0\), there exists a pair of random encoder and decoder \((F, \Phi)\) over that AVWC \((\mathcal{W}, \mathcal{V})\) such that
\[
\frac{1}{N} \log |\mathcal{M}| > \min_{q\in\mathcal{P}(\mathcal{S},\Lambda)}(X; Y_{q}) - \max_{q'\in\mathcal{P}(\mathcal{S},\Lambda)}\bar{I}_{q'}(X; Z_{\mathcal{S}}) - \tau,
\]
\[
\lambda(\mathcal{W}, F, \Phi,\Lambda) < \epsilon
\]
and
\[
\max_{s^N: P_{s^N}\in\mathcal{P}(\mathcal{S},\Lambda)} I(M; Z^N(s^N)|\Phi) < \epsilon,
\]
where \(M\) is the source message uniformly distributed over the message set \(\mathcal{M}\), the random sequence \(Z^{N}(s^{N})\), which is defined in \eqref{eq:def:avwc:z}, is the output of wiretap AVC under the state sequence \(s^{N}\), \(\mathcal{P}(\mathcal{S},\Lambda)\) is give by \eqref{eq:avwc:rc:constraint:5}, \(\lambda(\mathcal{W}, F, \Phi,\Lambda)\) is given by \eqref{eq:avwc:rc:constraint:4}, and \(\bar{I}_{q'}\) is given by \eqref{eq:pre:14}.

The proof of Theorem \ref{thm:avwc:rc:constraint} is similar to that of Theorem \ref{thm:avwc:rc} in Subsection \ref{sec:thm:avwc:rc}. We first introduce a preliminary lemma and then present the outline of proof.

\begin{lem}\label{lem:avc:rc:constraint}
Let an AVC \(\mathcal{W}\) with cost constraint \(\Lambda\) on states, a collection of random variables \((X, Y_{\mathcal{S}})\) satisfying Formula \eqref{eq:measure:1}, and real numbers \(0 < \tau' < \min_{q\in\mathcal{P}(\mathcal{S},\Lambda)}(X; Y_{q})\) and \(\epsilon' > 0\) be given. There exists a pair of random encoder and decoder \((F, \Phi)\) distributed over a certain family of deterministic encoder-decoder pairs \(\{(f_g, \phi_g): g \in \mathcal{G}\}\) with \(f_g: \mathcal{M}' \mapsto \mathcal{X}^N\) and \(\phi_g: \mathcal{Y}^N \mapsto \mathcal{M}'\) such that
\[\frac{1}{N} \log |\mathcal{M}'| > \min_{q\in\mathcal{P}(\mathcal{S},\Lambda)}(X; Y_{q}) - \tau' \text{ and } \lambda(\mathcal{W}, F, \Phi,\Lambda) < \epsilon'\]
when \(N\) is sufficiently large, where \(\lambda(\mathcal{W}, F, \Phi,\Lambda)\) is given by \eqref{eq:avwc:rc:constraint:4} and \(\{(X,Y_q): q\in\mathcal{P}(\mathcal{S})\}\) is a collection of random variables satisfying \eqref{eq:pre:9}.
\end{lem}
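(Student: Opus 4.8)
The plan is to reprove Lemma~\ref{lem:avc:rc} (Lemma~2.6.10 in \cite{CT_DMS}) with the single extra ingredient that the cost budget on the state sequence is translated into a restriction on the state \emph{types} the adversary can realize. The governing identity, from \eqref{eq:avwc:rc:constraint:1} and \eqref{eq:avwc:rc:constraint:2}, is $c(s^N)=\sum_{s\in\mathcal{S}}P_{s^N}(s)c(s)=c(P_{s^N})$, so $c(s^N)\leq\Lambda$ holds if and only if $P_{s^N}\in\mathcal{P}(\mathcal{S},\Lambda)$, with $\mathcal{P}(\mathcal{S},\Lambda)$ as in \eqref{eq:avwc:rc:constraint:5}. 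Consequently the only state types that enter $\lambda(\mathcal{W},F,\Phi,\Lambda)$ of \eqref{eq:avwc:rc:constraint:4} are those in $\mathcal{P}(\mathcal{S},\Lambda)$, which is exactly what permits raising the reliable rate from $\min_{q\in\mathcal{P}(\mathcal{S})}I(X;Y_q)$ to $\min_{q\in\mathcal{P}(\mathcal{S},\Lambda)}I(X;Y_q)$.

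Concretely, I would fix the given $P_X$ and draw a random codebook $\mathbf{C}=\{X^N(m)\}_{m=1}^{L'}$ of i.i.d.\ codewords obeying \eqref{eq:lem:goodc}, with $L'=2^{NR'}$ and $R'=\min_{q\in\mathcal{P}(\mathcal{S},\Lambda)}I(X;Y_q)-\tau'$, so that $\frac{1}{N}\log L'>\min_{q\in\mathcal{P}(\mathcal{S},\Lambda)}I(X;Y_q)-\tau'$. The decoder is the universal (maximum empirical mutual information) rule relative to the convex family $\{W_q:q\in\mathcal{P}(\mathcal{S},\Lambda)\}$: it maps $y^N$ to the unique $m$ whose joint type with $x^N(m)$ is close to $P_X\cdot W_q$ for some $q\in\mathcal{P}(\mathcal{S},\Lambda)$ with empirical mutual information above $R'$, and it never consults $s^N$, as required for an AVC. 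Correctness is straightforward: for any $s^N$ with $P_{s^N}=q\in\mathcal{P}(\mathcal{S},\Lambda)$, the per-state concentration of Propositions~\ref{remarkTX} and \ref{remarkTXY}, summed over $s$ with weights $q(s)$, shows that the overall joint type of $(X^N(m),Y^N(s^N))$ concentrates on $P_X\cdot\sum_{s}q(s)W_s=P_X\cdot W_q$ by \eqref{eq:pre:4}; hence the transmitted codeword passes the test with probability tending to $1$ uniformly in such $s^N$.

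For the error side I would bound, for each competing codeword $X^N(m')$ independent of $y^N$, the probability that $(X^N(m'),y^N)$ is declared typical under some $q\in\mathcal{P}(\mathcal{S},\Lambda)$. Since $X^N(m')$ is i.i.d.\ $P_X$ and independent of $y^N$, for a fixed $q$ this probability is at most $2^{-N[I(X;Y_q)-o(1)]}$ by the standard joint-typicality count for the averaged channel $W_q$ (the averaged-channel analogue of Corollary~\ref{cor1}). A union bound over the polynomially many ($\leq(N+1)^{|\mathcal{S}|}$) state types and over the $L'=2^{NR'}$ codewords then gives a total misdecoding probability at most $(N+1)^{|\mathcal{S}|}2^{-N[\min_{q\in\mathcal{P}(\mathcal{S},\Lambda)}I(X;Y_q)-R'-o(1)]}=(N+1)^{|\mathcal{S}|}2^{-N[\tau'-o(1)]}$, which vanishes uniformly over all $s^N$ with $c(s^N)\leq\Lambda$. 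Passing to the expectation over $\mathbf{C}$, extracting a realization with small error, and applying the usual message-relabeling randomization so that the maximal error over messages matches the average error, yields a random code $(F,\Phi)$ over a finite family $\{(f_g,\phi_g):g\in\mathcal{G}\}$ with $\lambda(\mathcal{W},F,\Phi,\Lambda)<\epsilon'$ at the desired rate.

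The main obstacle is the error side: one must make the universal decoder's false-confusion bound uniform over the entire (continuum) cost-constrained family $\{W_q:q\in\mathcal{P}(\mathcal{S},\Lambda)\}$, and verify that the achievable exponent is the averaged-channel quantity $I(X;Y_q)$ and not the larger averaged quantity $\bar I_q(X;Y_{\mathcal{S}})$ of \eqref{eq:pre:14}. That gap is precisely what separates the legitimate receiver, which is ignorant of $s^N$ and hence governed by this lemma, from the wiretapper, which is aware of $s^N$ and hence governed by the finer per-state typicality used in Lemma~\ref{lem:goodp2}. Everything else, namely the continuum-to-types reduction and the relabeling argument, is routine and identical to the unconstrained proof of Lemma~\ref{lem:avc:rc}.
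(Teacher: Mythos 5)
The paper disposes of this lemma in one line, as a direct corollary of Theorem~3.1 in \cite{Csiszar-1988-2}; you instead attempt a self-contained proof by rerunning the random-coding argument of Lemma~\ref{lem:avc:rc} with the state types restricted to $\mathcal{P}(\mathcal{S},\Lambda)$. The core of your argument is sound: the reduction $c(s^N)\leq\Lambda \Leftrightarrow P_{s^N}\in\mathcal{P}(\mathcal{S},\Lambda)$ correctly limits the adversary to polynomially many feasible state types, the concentration of the joint type of $(X^N(m),Y^N(s^N))$ on $P_X\cdot W_{P_{s^N}}$ follows from the per-state typicality as you say, and the union bound over feasible types and competing codewords does yield an expected misdecoding probability of order $2^{-N[\tau'-o(1)]}$ for each fixed feasible $s^N$, with the correct (averaged-channel) exponent $I(X;Y_q)$ rather than $\bar I_q(X;Y_{\mathcal{S}})$.

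The gap is in your last step, ``passing to the expectation over $\mathbf{C}$, extracting a realization with small error, and applying the usual message-relabeling randomization.'' A first-moment bound of order $2^{-N\tau'}$ per state sequence does not let you extract a single deterministic codebook that is simultaneously good for all feasible $s^N$: there are exponentially many such sequences, so the union bound over $s^N$ fails unless you have doubly-exponential concentration of the codebook error (which is exactly what the paper's Lemma~\ref{lem:avc:dc} establishes via a Chernoff bound over the $L$ codewords in the CSR setting, and what makes the deterministic-code result genuinely harder). Fortunately, for the statement you are proving no extraction is needed: the lemma asks for a \emph{random} code, and in the definition of $\lambda(\mathcal{W},F,\Phi,\Lambda)$ the expectation over the code sits inside the maximum over $s^N$. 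Keep the entire i.i.d.\ ensemble (codebook together with its induced universal decoder) as the random code $(F,\Phi)$; by the exchangeability of the i.i.d.\ construction, $E[e_m]$ is the same for every message $m$, so the maximum over $m$ of the expected error equals the average, and your per-$s^N$ expectation bound already gives $\lambda(\mathcal{W},F,\Phi,\Lambda)<\epsilon'$. With that correction (and a quantization argument to make the decoder's acceptance region well-defined over the continuum $\mathcal{P}(\mathcal{S},\Lambda)$, which you flag but do not carry out), your route is a valid, more elementary alternative to invoking \cite{Csiszar-1988-2}; note, however, that the paper later needs the additional property (Remark~\ref{rem:avc:rc:constraint}) that the codewords are generated i.i.d.\ from $P_X$ so that Lemma~\ref{lem:goodc} applies, which your construction does preserve.
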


\begin{proof}
It is a direct corollary of Theorem 3.1 in \cite{Csiszar-1988-2}.
\end{proof}

\begin{rem}\label{rem:avc:rc:constraint}
 Lemma \ref{lem:avc:rc:constraint} is an extension of Lemma \ref{lem:avc:rc}. The proofs of those two lemmas are similar. In other words, the random codebook \(\mathbf{C} = \{F(l)=X^N(l)\}_{l=1}^{L'}\) constructed in the the proof of Theorem 3.1 in \cite{Csiszar-1988-2} (which is more general than Lemma \ref{lem:avc:rc:constraint} in this paper) also satisfies \eqref{eq:lem:goodc} with \(L'=|\mathcal{M}'|\), i.e., the codewords are generated independently based on the probability mass function \(P_X\) of the random variable \(X\). Therefore, Lemma \ref{lem:goodc} claims that the random codebook \(\mathbf{C}\) is ``good'' with high probability.
\end{rem}

The proof of Theorem \ref{thm:avwc:rc:constraint}, similar to that of Theorem \ref{thm:avwc:rc} in Subsection \ref{sec:thm:avwc:rc}, is outlined as follows.

Similar to Step 1 of the proof in Subsection \ref{sec:thm:avwc:rc}, Lemma \ref{lem:avc:rc:constraint}, Remark \ref{rem:avc:rc:constraint} and Lemma \ref{lem:avc:ele} claim that there exists a series of deterministic codes \((f_k, \phi_k), 1 \leq k \leq N^2/2\), such that the codebooks \(\mathcal{C}_k = \{f_k(m): 1\leq m \leq L'\}\), \(1 \leq k \leq N^2/2\) are ``good'' with respect to \(X\) and
      \[
      \max_{m\in \mathcal{M}}\max_{s^N:P_{s^N}\in\mathcal{P}(\mathcal{S},\Lambda)} \Bigg[ \sum_{k=1}^{N^2/2} \frac{2}{N^2} e_m(\mathcal{W}, f_k, \phi_k, s^N) \Bigg] < \epsilon,
      \]
      where \(f_k: \mathcal{M}' \mapsto \mathcal{X}^N\), \(\phi_k: \mathcal{Y}^N \mapsto \mathcal{M}'\), and the size \(L'\) of the message set \(\mathcal{M}'\) satisfies that
      \[
      L' > 2^{N[\min_{q\in\mathcal{P}(\mathcal{S},\Lambda)}I(X; Y_{q}) - \tau']}.
      \]

 Let \(\mathfrak{S}_N=\{s^N: P_{s^N}\in\mathcal{P}(\mathcal{S},\Lambda)\}\) and
 \[
 R_d = \limsup_{n\rightarrow\infty} \max_{s^n\in \mathfrak{S}_n} \bar{I}_{P_{s^n}}(X;Z_{\mathcal{S}}) = \max_{q'\in\mathcal{P}(\mathcal{S},\Lambda)}\bar{I}_{q'}(X; Z_{\mathcal{S}}).
 \]
 For every ``good'' codebook \(\mathcal{C}_k, 1 \leq k \leq N^2/2\), Lemma \ref{lem:goodp2} claims that if the value of \(L\) satisfies that
 \[
 2^{N[\min_{q\in\mathcal{P}(\mathcal{S},\Lambda)}(X; Y_{q}) - \max_{q'\in\mathcal{P}(\mathcal{S},\Lambda)}\bar{I}_{q'}(X; Z_{\mathcal{S}}) - 3\tau'] }< L'\cdot 2^{-N[R_d + 2\tau']} < L <  L'\cdot 2^{-N[R_d + \tau']},
 \]
  then there exists a secure partition \(\{\mathcal{C}_{k, m}\}_{m=1}^L\) on it such that
  \[
  \max_{s^N \in \mathfrak{S}_N} I(\tilde{M}_k; Z^N(\mathcal{C}_k, s^N)) = \max_{s^N: P_{s^N}\in\mathcal{P}(\mathcal{S},\Lambda)} I(\tilde{M}_k; Z^N(\mathcal{C}_k, s^N))  < \epsilon,
  \]
  where \(\tilde{M}_k\) is the index of subcode containing the random sequence \(X^N(\mathcal{C}_k)\), and \((X^N(\mathcal{C}_k),Z^N(\mathcal{C}_k, s^N))\) is a pair of random sequence satisfying \eqref{eq:goodp:26} with \(\mathcal{C}=\mathcal{C}_k\).

  With the secure partitions \(\{\mathcal{C}_{k, m}\}_{m=1}^L\) over the ``good'' codebooks \(\mathcal{C}_k, 1\leq k \leq N^2/2\), the theorem is finally established by the coding scheme introduced in Step 2 of the proof in Subsection \ref{sec:thm:avwc:rc}. \(\hfill\blacksquare\)

\section{Conclusion}\label{section:conclusion}

This paper discusses the secrecy capacity results of general AVWC and AVWC-CSR. Lower bounds on the secrecy capacities of stochastic code and random code with respect to the maximal decoding error probability and strong secrecy criterion are given. It is concluded that the secrecy capacity of stochastic code over an AVWC may be strictly smaller than that of random code, but this situation happens only when the capacity of stochastic code over the main AVC is 0. Meanwhile, the secrecy capacity of stochastic code over AVWC-CSR is identical to that of random code in general.

The secrecy capacities of general AVWC and AVWC-CSR are unknown. However, we determine the secrecy capacity of AVWC where the main AVC is severely less noisy than the wiretap AVC, and that of AVWC-CSR where the main AVC is strongly less noisy than the wiretap AVC.

A new secure partitioning scheme, based on Csisz\'{a}r's almost independent coloring scheme, is proposed, which serves as a fundamental tool to prove the secrecy capacity results in this paper. This powerful scheme can be used to ensure secure transmission against wiretapping through AVC with or without constraint on state sequence.

\section{Acknowledgement}

This work was supported in part by China Program of International S\&T Cooperation 2016YFE0100300, and National Natural Science Foundation of China under Grants 61301178 and 61571293.

\begin{appendices}

\section{Proofs on Typicality}\label{app:typical}

This appendix gives the proofs on the properties of typicality with respect to the state sequence introduced in Subsection \ref{sec:typical}. Since the proofs are quite similar to those in Chapter 1 of \cite{MU_IT}, we only provide the outlines here.

\textbf{Proof of Proposition \ref{remarkTX}.} On account of Theorem 1.1 in \cite{MU_IT}, there exists a series of \(\sigma_a > 0, a \in \mathcal{S}\) such that
\[
\operatorname{Pr}\{X^N \in \tilde{T}^N[X,s^N]_{\delta,\eta}\}
= \displaystyle\prod_{a \in \mathcal{S}(s^N, \delta, \eta)} \operatorname{Pr}\{X_{\mathcal{I}(a:s^N)} \in T^{\mu_a}_{\delta}(P_X)\}
> \displaystyle\prod_{a \in \mathcal{S}(s^N, \delta, \eta)} (1-2^{-N\sigma_a})
>  1 - 2^{-N\nu_1}
\]
for some \(\nu_1 > 0\), where \(\mathcal{I}(a:s^N)\) is given by \eqref{eq:pre:16}, \(\mathcal{S}(s^N, \eta)\) is given by \eqref{eq:pre:17}, \(\mu_a=|\mathcal{I}(a:s^N)|\) and \(m_{Y_{\mathcal{S}}} = \min_{(y,s)\in \mathcal{Y}\times\mathcal{S}: P_{Y_s}(y) > 0} P_{Y_s}(y)\). The property that \(\nu_1\) is independent of \(N\) and \(s^N\), comes from the fact \(\sigma_a\) is only related to \(m_X=\min_{x\in\mathcal{X}:P_X(x)>0}P_X(x)\), \(\delta\) and \(\eta\). \(\hfill\blacksquare\)

\textbf{Proof of Proposition \ref{remarkTY}.} The proposition follows because for every \(y^N \in \tilde{T}^N[Y_{\mathcal{S}}]_{\delta, \eta}\), it is satisfied that
\[
2^{-\mu_a(1+\delta)H(Y_a)} \leq \operatorname{Pr}\{Y_{\mathcal{I}(a:s^N)}(s^N) = y_{\mathcal{I}(a:s^N)}\}
 \leq 2^{-\mu_a(1-\delta)H(Y_a)}
\]
for \(a \in \mathcal{S}(s^N, \eta)\), and
\[
2^{\mu_a\log m_{Y_{\mathcal{S}}}} \leq \operatorname{Pr}\{Y_{\mathcal{I}(a: s^N)}(s^N) = y_{\mathcal{I}(a: s^N)}\} \leq 1
\]
for \(a \notin \mathcal{S}(s^N, \eta)\). \(\hfill\blacksquare\)

Corollary \ref{cor2} is a direct consequence of Proposition \ref{remarkTY}, Proposition \ref{remarkConsist} is obtained immediately from definitions, Proposition \ref{rem:joint} can be proved similarly to Proposition \ref{remarkTY}, Corollary \ref{cor1} is a direct consequence of Proposition \ref{remarkTY} and Corollary \ref{cor2}, and Proposition \ref{remarkTXY} can be proved similarly to Proposition \ref{remarkTX}. The proofs are completed. \(\hfill\blacksquare\)

\section{Proof of Proposition \ref{thm:avc:csr:dc}} \label{app:thm:avc:csr:dc}

This appendix establishes the necessary and sufficient condition of positivity on the capacity of deterministic code over the AVC-CSR \(\mathcal{W}\) with respect to the maximal decoding error probability. To be particular, we would prove that the capacity is positive if and only if there exist a pair of \(x,x' \in \mathcal{X}\), such that for all channel states \(s \in \mathcal{S}\), there exists \(y \in \mathcal{Y}\) satisfying \(W(y|x,s) \neq W(y|x',s).\)

\textbf{Necessity.} We prove this by contradiction. for every pair of \(x\) and \(x'\), assume that there exists \(s(x, x')\in\mathcal{S}\), such that \(W(y|s(x, x'),x) = W(y| s(x, x'), x')\) for all \(y \in \mathcal{Y}\). Suppose that \((f, \phi)\) is a deterministic code of length \(N\) over the AVC-CSR  \(\mathcal{W}\) with positive transmission rate. This indicates that the size of the message set \(\mathcal{M}\) is at least two. We would prove that the maximal decoding error probability \({\lambda}^{CSR}(\mathcal{W}, f, \phi) \geq \frac{1}{2}\).

If \(\lambda_m^{CSR}(\mathcal{W}, f, \phi) \geq \frac{1}{2}\) for all \(m \in \mathcal{M}\), then the proof has been finished. Therefore, we can assume that \(\lambda_{m_0}^{CSR}(\mathcal{W}, f, \phi) = \epsilon < \frac{1}{2}\) for some \(m_0 \in \mathcal{M}\). Let the codeword of the message \(m_0\) be \(f(m_0)=x^N(0)\). Then, for any \(m_1 \in \mathcal{M}\) other than message \(m_0\), set \(x^N(1) = f(m_1)\) and choose the state sequence \(s^N\) with \(s_i = s(x_i(0), x_i(1))\) for \(1\leq i \leq N\). It follows that \(W(y^N|x^N(0), s^N) = W(y^N|x^N(1), s^N)\) for every \(y^N \in \mathcal{Y}^N\). This indicates that
\[
\begin{array}{lll}
\lambda_{m_1}^{CSR}(\mathcal{W}, f, \phi) & \geq & \lambda_{m_1}^{CSR}(\mathcal{W}, f, \phi, s^N)\\
&=& 1 - W(\phi^{-1}(m_1) | x^N(1), s^N)\\
&=& 1 - W(\phi^{-1}(m_1) | x^N(0), s^N)\\
&\geq& W(\phi^{-1}(m_0) | x^N(0), s^N)\\
&=& 1 - \lambda_{m_0}^{CSR}(\mathcal{W}, f, \phi, s^N)\\
&\geq& 1 - \lambda_{m_0}^{CSR}(\mathcal{W}, f, \phi)\\
&=& 1 - \epsilon > \frac{1}{2}.
\end{array}
\]
Therefore, we have \({\lambda}^{CSR}(\mathcal{W}, f, \phi) > \frac{1}{2}\). \(\hfill\blacksquare\)

\textbf{Sufficiency.} Let \(x_0\) and \(x_1\) be a pair of letters from \(\mathcal{X}\) satisfying that for all \(s \in \mathcal{S}\) there exists \(y\) such that \(W(y|s, x_0) \neq W(y|s, x_1)\). Let \(k\) be a sufficiently large integer, and \(\delta\) be sufficiently small such that for every \(s \in \mathcal{S}\),
\begin{itemize}
  \item \(T^k_{\delta}(W(\cdot|s, x_0))\) and \(T^k_{\delta}(W(\cdot|s, x_1))\) are disjoint,
  \item \(W(T^k_{\delta}(W(\cdot|s, x_0))| s^{\bigotimes k}, x_0^{\bigotimes k}) > 1 - 2^{-k\nu_1} \text{ and } W(T^k_{\delta}(W(\cdot|s, x_1))| s^{\bigotimes k}, x_1^{\bigotimes k}) > 1 - 2^{-k\nu_1}\),
\end{itemize}
 where \(\nu_1\) is some positive real number, \(T^k_{\delta}(W(\cdot|s,x_i))\) is the \(\delta\)-letter typical set with respect to the probability mass function \(W(\cdot|s,x_i)\) on \(\mathcal{Y}\), and \(a^{\bigotimes k}\) is a sequence with \(k\) copies of the letter \(a\).

Set \(k' = k\cdot |\mathcal{S}|\) and let \(\mathcal{E} = \{x_0^{\bigotimes k'}, x_1^{\bigotimes k'}\}\) be a message set with two elements. We will prove that the decoder is able to recover the message with the maximal decoding error probability \(< 2^{-k\nu_1}\) if choosing the message randomly from \(\mathcal{E}\) and transmitting it directly to the AVC-CSR. To this end, suppose that the state sequence of the transmission is \(s^{k'} \in \mathcal{S}^{k'}\). There must exist an \(s_0 \in \mathcal{S}\) occurring more than \(k\) times in the sequence \(s^{k'}\). Assume that the first \(k\) letters of \(s^{k'}\) are \(s_0\) without loss of generality. The decoder estimates the source message as \(x_0^{\bigotimes k'}\) if \(y^k \in W(T^k(W(\cdot|s_0, x_0))_{\delta}| s_0^{\bigotimes k}, x_0^{\bigotimes k})\), and estimates it as \(x_1^{\bigotimes k'}\) otherwise, where \(y^k\) is the first \(k\) letters received by the decoder. One can easily verify that the maximal decoding error probability is \(< 2^{-k\nu_1}\). In fact, this coding process constructs a virtual binary channel with crossover probability \(< 2^{-k\nu_1} < \frac{1}{2}\) when \(k\) is sufficiently large, whose capacity is positive. Therefore, a positive transmission rate with the maximal decoding error probability can be achieved. \(\hfill\blacksquare\)

\section{Proof of Lemma \ref{lem:goodc}}\label{app:lem:goodc}

This appendix proves that the ``good'' codebook defined in Definition \ref{def:goodc} can be generated by a random scheme. This is a direct consequence of the Chernoff bound. We first bound the value of \(\operatorname{Pr}\{|\tilde{T}^N(\mathbf{C}, s^N)| < (1-2\cdot2^{-N\nu_1})L'\}\) in \eqref{eq:app:lem:goodc:1}, where \(\tilde{T}^N(\mathbf{C}, s^N) = \mathbf{C} \cap \tilde{T}^N[X,s^N]_{\delta,\eta}\) is the collection of typical codewords with respect to the state sequence \(s^N\). Then, the lemma is established by the union bound in \eqref{eq:app:lem:goodc:2}.

For any \(1 \leq l \leq L'\) and \(s^N \in \mathcal{S}^N\), denote \(U(l, s^N) = 0\) if \(X^N(l) \in T^N[X]_{\delta,\eta}\), and \(U(l, s^N) = 1\) otherwise. By Formula \eqref{eq:lem:goodc} and Proposition \ref{remarkTX} (see also Remark \ref{rem:important}), it follows that
\[
E[U(l, s^N)] < 2^{-N\nu_1}
\]
for \(1\leq l \leq L'\) and \(s^N\in\mathcal{S}^N\).

On account of the Chernoff bound, it holds that
\begin{equation}\label{eq:app:lem:goodc:1}
\begin{array}{lll}
\displaystyle\operatorname{Pr}\{|\tilde{T}^N(\mathbf{C}, s^N)| < (1-2\cdot2^{-N\nu_1})L'\} &= & \displaystyle\operatorname{Pr}\Bigg\{\sum_{l=1}^{L'}U(l, s^N) > 2\cdot2^{-N\nu_1}L'\Bigg\}\\
&\leq& \displaystyle\operatorname{exp_2}\Bigg(-2\cdot2^{-N\nu_1}L'\Bigg) \cdot E\Bigg[\exp_2\Bigg(\sum_{l=1}^{L'}U(l, s^N)\Bigg)\Bigg]\\
&=& \displaystyle\operatorname{exp_2}\Bigg(-2\cdot2^{-N\nu_1}L'\Bigg) \cdot \prod_{l=1}^{L'} E\Bigg[\exp_2\Bigg(U(l, s^N)\Bigg)\Bigg]\\
&\overset{(*)}{\leq}& \displaystyle\operatorname{exp_2}\Bigg(-2\cdot2^{-N\nu_1}L'\Bigg) \cdot \prod_{l=1}^{L'}\exp_e\Bigg(E[U(l, s^N)]\Bigg)\\
&<& \displaystyle\operatorname{exp_2}\Bigg(-2\cdot2^{-N\nu_1}L'\Bigg) \cdot \prod_{l=1}^{L'}\exp_e\Bigg(2^{-N\nu_1}\Bigg)\\
&=& \operatorname{exp_2}\Bigg(-(2-\log e)\cdot2^{-N\nu_1}L'\Bigg),
\end{array}
\end{equation}
where \(\operatorname{exp}_2(x)\) represents \(2^x\), \(\operatorname{exp}_e(x)\) represents \(e^x\), and (*) follows because \(2^t \leq 1+t \leq e^t\) for \(0 \leq t \leq 1\). Thus, summing \(\operatorname{Pr}\{|\tilde{T}^N(\mathbf{C}, s^N)| < (1-2\cdot2^{-N\nu_1})M'\}\) over all \(s^N \in \mathcal{S}^N\), we have
\begin{equation}\label{eq:app:lem:goodc:2}
\begin{array}{lll}
\operatorname{Pr}\{\mathbf{C} \text{ is not ``good''}\} &\leq& \displaystyle\sum_{s^N\in\mathcal{S}^N}\operatorname{Pr}\{|\tilde{T}^N(\mathbf{C}, s^N)| < (1-2\cdot2^{-N\nu_1})L'\}\\
&\leq& \operatorname{exp_2}\Bigg(N\log|\mathcal{S}|-(2-\log e)\cdot2^{-N\nu_1}L'\Bigg).
\end{array}
\end{equation}
The proof is completed by letting \(\epsilon_1 = \operatorname{exp_2}(N\log|\mathcal{S}|-(2-\log e)\cdot2^{-N\nu_1}L')\). \(\hfill\blacksquare\)

\section{Proof of Lemma \ref{lem:goodp2}}\label{app:lem:goodp2}

This appendix constructs a secure partition over a ``good'' codebook, to ensure secure transmission against wiretapping through an AVC with constrained channel states. To be precise, let \(\mathcal{V}=\{V_s: s \in \mathcal{S}\}\) be an AVC, \(X\) be a random variable over \(\mathcal{X}\), and \(Z_{\mathcal{S}}\) be a collection of random variables satisfying \eqref{eq:goodp:25}. Suppose that a ``good'' codebook \(\mathcal{C}\) with respect to \(X\) (defined in Definition \ref{def:goodc}) of size \(L'=2^{R'}\) is given. This appendix proves that for any \(\epsilon > 0, \tau > 0\), and
\[
L < L' \cdot 2^{-N[R_d + \tau]} = 2^{N[R' - R_d - \tau]},
\]
there exists a equipartition \(\{\mathcal{C}_m\}_{m=1}^L\) on it such that
\[
I(\tilde{M}; Z^N(\mathcal{C}, s^N)) < \epsilon
\]
for every \(s^N \in \mathfrak{S}_N\) when \(N\) is sufficiently large, where \(\mathfrak{S}_N\) is a sub-collection of \(\mathcal{S}^N\), and
\[
R_d = \limsup_{n\rightarrow\infty} \max_{s^n\in \mathfrak{S}_n} \bar{I}_{P_{s^n}}(X;Z_{\mathcal{S}}).
\]

The proof is similar to that of Lemma \ref{lem:goodp} with slight adjustments on the parameters. We present the proof outline as follows.

Let \(N\) be sufficiently large such that
\begin{equation}\label{eq:app:goodp2:2}
R_d \geq \max_{s^n\in \mathfrak{S}_N} \bar{I}_{P_{s^N}}(X;Z_{\mathcal{S}}) - \tau/4.
\end{equation}
For any \(s^N\in \mathfrak{S}_N\), let \(\mathcal{B}(\mathcal{C}, s^N)\) be defined as \eqref{eq:goodp:16}. Then it is clear that Formula \eqref{eq:goodp:2} follows. With the help of \(\mathcal{B}(\mathcal{C}, s^N)\), the parameters introduced in Lemma \ref{lemma8} are realized as
\begin{equation}\label{eq:app:goodp2:1}
\begin{array}{c}
\mathcal{A}=\mathcal{C}, \varepsilon= 2^{-N\nu_3/2},\\
 l = 2^{N [R' - R_d - \tau/4]}, \\
 k = L < 2^{N [R' - R_d - \tau]},\\
\mathcal{P} = \{P_{s^N, z^N}: s^N \in \mathfrak{S}_N, z^N \in \mathcal{B}(\mathcal{C}, s^N)\} \cup \{P_0\},
\end{array}
\end{equation}
where \(P_0\) and \(P_{s^N, z^N}\) are given by \eqref{eq:achievable:p_0} and \eqref{eq:achievable:p_zn}, respectively. The verification that parameters in \eqref{eq:app:goodp2:1} satisfy the preconditions in Lemma \ref{lemma8}, is given in Appendix \ref{app:goodp:11}.

The remainder of the proof is now the same as that of Lemma \ref{lem:goodp} in Section \ref{sec:basic}, by replacing the parameters in \eqref{eq:goodp:11} with those in \eqref{eq:app:goodp2:1} and the references of \(s^N\in\mathcal{S}^N\) with \(s^N\in\mathfrak{S}_N\). \(\hfill\blacksquare\)

\section{Verification on the rationality of parameters in Formulas \eqref{eq:goodp:11} and \eqref{eq:app:goodp2:1} }\label{app:goodp:11}

This subsection proves that the parameters realized in \eqref{eq:goodp:11} and \eqref{eq:app:goodp2:1} are rational and they satisfy the preconditions of Lemma \ref{lemma8},  when the codeword length \(N\) is sufficiently large.

\textbf{Verification on the rationality of \eqref{eq:goodp:11}.}

We rewrite the parameters in \eqref{eq:goodp:11} here for convenience.
\[
\begin{array}{c}
\mathcal{A}=\mathcal{C}, \varepsilon= 2^{-N\nu_3/2},\\
 l = 2^{N [R' - \max_{s\in\mathcal{S}}I(X;Z_s) - \tau/2]}, \\
 k = L < 2^{N [R' - \max_{s\in\mathcal{S}}I(X;Z_s) - \tau]},\\
\mathcal{P} = \{P_{s^N, z^N}: s^N \in \mathcal{S}^N, z^N \in \mathcal{B}(\mathcal{C}, s^N)\} \cup \{P_0\},
\end{array}
\]
where the size of \(\mathcal{C}\) is \(L'=2^{NR'}\), and \(\mathcal{B}(\mathcal{C}, s^N)\), \(P_0\) and \(P_{s^N,z^N}\) are given by \eqref{eq:goodp:16}, \eqref{eq:achievable:p_0} and \eqref{eq:achievable:p_zn}, respectively.

\emph{Proof of \(\varepsilon < \frac{1}{9}\).} It follows because \(\varepsilon = 2^{-N\nu_3/2}\) can be arbitrarily small as \(N \rightarrow \infty\).

\emph{Proof of \(k \log k \leq \frac{\epsilon^2l}{3\log(2|\mathcal{P}|)}\).} It follows clearly that \(|\mathcal{P}| < (|\mathcal{S}||\mathcal{Z}|)^N\). Therefore, we have
\[
\begin{array}{lll}
\frac{\varepsilon^2 l} {3 \log (2 |\mathcal{P}|)} & =&\frac{1} {3 \log (2 |\mathcal{P}|)} 2^{N [R'-\max_{s\in\mathcal{S}}I(X; Z_{s}) - \tau/2 - \nu_3]}\\
& > & 2^{N [R'-\max_{s\in\mathcal{S}}I(X; Z_{s}) - \tau/2 - \nu_3 - \frac{\log [3N\cdot \log (2|\mathcal{S}||\mathcal{Z}|)]}{N}]}\\
& \overset{(a)}{>} & 2^{N [R'-\max_{s\in\mathcal{S}}I(X; Z_{s}) - 3\tau/4]} \overset{(b)}{>} L \log L = k \log k,
\end{array}
\]
where the inequalities (a) and (b) follow when \(\nu_3\) is sufficiently small and \(N\) is sufficiently large.

\emph{Proof of \eqref{eq:goodp:7}.} Notice that \(X^N(\mathcal{C})\) is uniformly distributed over the codebook \(\mathcal{C}\) of size \(L'=2^{NR'}\). This indicates that
\[P_0(x^N) = \Pr\{X^N(\mathcal{C})=x^N\} = L'^{-1} = 2^{-NR'} < l^{-1}\]
for any \(x^N \in \mathcal{C}\).  Therefore
\begin{equation}\label{eq:goodp:24}
\sum_{x^N \in \mathcal{C}: P_0(x^N) > l^{-1}} P_0(x^N) = 0 < \varepsilon,
\end{equation}
which is of \eqref{eq:goodp:7} for \(P_0\).

To establish \eqref{eq:goodp:7} for \(P_{s^N, z^N}\) with \(s^N \in \mathcal{S}^N\) and \(z^N \in \mathcal{B}(\mathcal{C}, s^N)\), let
\begin{equation}\label{eq:goodp:27}
\tilde{T}(\mathcal{C}, s^N, z^N) = \mathcal{C} \cap \tilde{T}^N[XZ_{\mathcal{S}}|s^N, z^N]_{2\delta, \eta}
\end{equation}
be the set of codewords which are jointly typical with \(z^N\) under the state sequence \(s^N\). It follows that for any \(x^N \in \tilde{T}(\mathcal{C}, s^N, z^N)\),
\begin{equation}\label{eq:goodp:22}
\begin{array}{lll}
P_{s^N, z^N}(x^N) &=& \frac{\operatorname{Pr}\{X^N(\mathcal{C}) = x^N, Z^N(\mathcal{C}, s^N) = z^N\}}{\operatorname{Pr}\{Z^N(\mathcal{C}, s^N) = z^N\}} \\
& \overset{(a)}{=}& \frac{\prod_{i=1}^N P_{Z_{s_i}|X}(z_i | x_i)}{L' \cdot \operatorname{Pr}\{Z^N(\mathcal{C}, s^N) = z^N\}} \\
& \overset{(b)}{\leq}& \frac{\prod_{i=1}^N P_{Z_{s_i}|X}(z_i | x_i)}{2^{-N\nu_3/2}L' \cdot \prod_{i =1}^N P_{Z_{s_i}}(z_i)} \\
& \overset{(c)}{\leq}& \frac{2^{-N(1-2\delta-\eta)\bar{H}_{P_{s^N}}(Z_{\mathcal{S}}|X)}}{2^{-N\nu_3/2}L' \cdot 2^{-N[(1+2\delta)\bar{H}_{P_{z^N}}(Z_{\mathcal{S}}) - \eta \log m_{Y_{\mathcal{S}}}]}} \\
&\overset{(d)}{\leq}& 2^{-N[R' - \bar{I}_{P_{s^N}}(X; Z_{\mathcal{S}}) - (4\delta + \eta) \bar{H}_{P_{s^N}}(Z_{\mathcal{S}}) + \eta \log m_{Y_{\mathcal{S}}} + \nu_3/2] }\\
&\overset{(e)}{\leq}& 2^{-N[R' - \max_{s\in\mathcal{S}}I(X; Z_{s}) - (4\delta + \eta) \bar{H}_{P_{s^N}}(Z_{\mathcal{S}}) + \eta \log m_{Y_{\mathcal{S}}} - \nu_3/2] }\\
&\overset{(f)}{\leq}& 2^{-N[R' - \max_{s\in\mathcal{S}}I(X; Z_{s}) - \tau / 2] } = l^{-1},
\end{array}
\end{equation}
where
\begin{itemize}
  \item (a) follows from \eqref{eq:goodp:26};
  \item (b) follows because \(z^N \in \mathcal{B}(\mathcal{C}, s^N)\) implies \(z^N \notin \mathcal{B}_1(\mathcal{C}, s^N)\) (see Formulas \eqref{eq:goodp:16} and \eqref{eq:goodp:15});
  \item (c) follows from the fact of \((x^N, z^N) \in \tilde{T}^N[XZ_{\mathcal{S}},s^N]_{2\delta, \eta}\) along with Propositions \ref{remarkTX} and \ref{rem:joint};
  \item (d) follows from the facts that \(L'=2^{NR'}\) and \(\bar{H}_{P_{s^N}}(Z_{\mathcal{S}}|X) \leq \bar{H}_{P_{s^N}}(Z_{\mathcal{S}})\);
  \item (e) follows from \eqref{eq:pre:12};
  \item and (f) follows when \(\delta\), \(\eta\) and \(\nu_3\) are sufficiently small.
\end{itemize}

Recalling that \(z^N \in \mathcal{B}(\mathcal{C}, s^N)\) implies \(z^N \in \mathcal{B}_0(\mathcal{C}, s^N)\) (cf. \eqref{eq:goodp:16}), it follows from \eqref{eq:goodp:22} and \eqref{eq:goodp:12} that
\[
\begin{array}{lll}
\displaystyle \sum_{x^N \in \mathcal{C}: P_{s^N, z^N}(x^N) > l^{-1}} P_{s^N, z^N}(x^N) & \leq& 1  - \operatorname{Pr}\{X^N(\mathcal{C}) \in \tilde{T}(s^N, z^N, \mathcal{C}) | Z^N(\mathcal{C}, s^N) = z^N\}\\
& <& 2^{-N\nu_3/2} = \varepsilon,
\end{array}
\]
which is of \eqref{eq:goodp:7} for \(P_{s^N, z^N}\). The verification is completed. \(\hfill\blacksquare\)

\textbf{Verification on the rationality of \eqref{eq:app:goodp2:1}.}

The verification of \eqref{eq:app:goodp2:1} is quite similar to the verification of \eqref{eq:goodp:11}. The only difference is that the state sequence \(s^N\) runs over a subset \(\mathfrak{S}_N\) of \(\mathcal{S}^N\) when verifying \eqref{eq:app:goodp2:1}. We only prove that \(P_{s^N,z^N}(x^N) < l^{-1}\) for \(s^N\in\mathfrak{S}_N\), \(z^N\in\mathcal{B}(\mathcal{C},z^N)\) and \(x^N\in \tilde{T}(\mathcal{C}, s^N, z^N)\), where \(\mathcal{B}(\mathcal{C},z^N)\) is defined in \eqref{eq:goodp:16} and \(\tilde{T}(\mathcal{C}, s^N, z^N)\) is defined in \eqref{eq:goodp:27}.
In fact,
\[
\begin{array}{lll}
P_{s^N, z^N}(x^N) & \overset{(a)}{\leq}& 2^{-N[R' - \bar{I}_{P_{s^N}}(X; Z_{\mathcal{S}}) - (4\delta + \eta) \bar{H}_{P_{s^N}}(Z_{\mathcal{S}}) + \eta \log m_{Y_{\mathcal{S}}} + \nu_3/2] }\\
&\overset{(b)}{\leq}& 2^{-N[R' - R_d - \tau/4 - (4\delta + \eta) \bar{H}_{P_{s^N}}(Z_{\mathcal{S}}) + \eta \log m_{Y_{\mathcal{S}}} - \nu_3/2] }\\
&\overset{(c)}{\leq}& 2^{-N[R' - R_d - \tau / 2] } = l^{-1},
\end{array}
\]
where (a) follows from the inequalities (a)-(d) in \eqref{eq:goodp:22}, (b) follows from \eqref{eq:app:goodp2:2}, and (c) follows when \(\delta\), \(\eta\) and \(\nu_3\) are sufficiently small. The verification is completed. \(\hfill\blacksquare\)

\section{Proof of Lemma \ref{lem:avc:dc}}\label{app:lem:avc:dc}

In this appendix, we prove that the random coding scheme introduced in Subsection \ref{sec:thm:avc:csr} is able to create a deterministic codebook with high probability, such that the average decoding error probability over the AVC-CSR is vanishing. This establishes the capacity of deterministic code over the AVC-CSR with respect to the average decoding error probability.

To simplify the notation, let
\[
X^N([1:m]) = (X^N(1), X^N(2),...,X^N(m)),
\]
and
\[
x^N([1:m]) = (x^N(1), x^N(2),...,x^N(m)).
\]
According to the decoding scheme in Subsection \ref{sec:thm:avc:csr}, for every given codebook \(\mathcal{C} = \{x^N(m)\}_{m=1}^L\), the decoding error probability of the message \(m\) is totally determined by the codewords \(x^N([1:m])\) and the state sequence $s^N$. Let \(e_m[x^N(m), x^N([1:m-1]), s^N]\) be the decoding error probability of the message \(m\) under the state sequence \(s^N\) when the first \(m\) codewords are \(x^N([1:m])\). Then the average decoding error probability under the state sequence $s^N$ can be rewritten as
\[
\bar{e}^{CSR}(\mathcal{C}, s^N) = \frac{1}{L} \sum_{m=1}^L e_m[x^N(m), x^N([1:m-1]), s^N].
\]

The proof of Lemma \ref{lem:avc:dc} is organized as the following 3 steps.
\begin{itemize}
  \item Step 1 shows that there exists \(\nu > 0\) such that
  \begin{equation}\label{eq:app:avc:csr:dc:1}
  \operatorname{Pr}\{e_m[X^N(m), x^N([1:m-1]), s^N] > 2\cdot 2^{-N\nu_2}\} < 2^{-N\nu}
  \end{equation}
  for every \(1\leq m \leq L\), \(x^N([1:m-1]) \subseteq \mathcal{X}^{N\times (m-1)}\) and \(s^N \in \mathcal{S}^N\), when \(N\) is sufficiently large. This means no matter what values the first $m-1$ codewords are, if choosing the $m$-th codewords randomly, one can get a ``good'' codeword with a high probability, such that the decoding error probability of message $m$ is very small.
  \item Step 2 proves that
  \begin{equation}\label{eq:avc:csr:10}
  \operatorname{Pr}\Bigg\{\frac{1}{L} \sum_{m=1}^N e_m[X^N(m), X^N([1:m-1]), s^N]  > \epsilon \Bigg\} < \exp_2(-L\epsilon / 2)
  \end{equation}
  for every \(s^N \in \mathcal{S}^N\), when \(N\) is sufficiently large. This means for every sequence $s^N$, the probability that random codebook fails to achieve a vanishing average decoding error probability is doubly exponential small.
  \item Step 3 establishes \eqref{eq:thm:avc:dc} by the union bound.
\end{itemize}

\textbf{Proof of Step 1.} In this part, random events \(\operatorname{Err}_1\) and \(\operatorname{Err}_2\) are introduced, and the value of \(\operatorname{Pr}\{e_m[X^N(m), x^N([1:m-1]), s^N] > 2\cdot 2^{-N\nu_2}\}\) is bounded by Formula \eqref{eq:avc:csr:6}. Formula \eqref{eq:app:avc:csr:dc:1} is finally proved by substituting Formulas \eqref{eq:avc:csr:3} and \eqref{eq:avc:csr:4} into \eqref{eq:avc:csr:6}.

According to the decoding scheme, if formulas \eqref{eq:avc:csr:1} and \eqref{eq:avc:csr:2} hold, it must follow that \[e_m[x^N(m), x^N([1:m-1]), s^N] \leq 2 \cdot 2^{-N\nu_2}.\]
Therefore,
\begin{equation}\label{eq:avc:csr:6}
\operatorname{Pr}\{e_m[X^N(m), x^N([1:m-1]), s^N] > 2\cdot 2^{-N\nu_2}\}
\leq \operatorname{Pr}\{\operatorname{Err}_1 = 1\} + \operatorname{Pr}\{\operatorname{Err}_2 = 1 | \operatorname{Err}_1 = 0\},
\end{equation}
where \(\operatorname{Err}_1 = 1\) (or \(= 0\)) represents that the random event \(X^N(m) \in \mathfrak{X}_1\)
happens (or not), and \(\operatorname{Err}_2 = 1\) (or \(= 0\)) represents that the random event \(X^N(m)\in\mathfrak{X}_2\) happens (or not). The sets \(\mathfrak{X}_1\) and \(\mathfrak{X}_2\) are defined as
\[
\mathfrak{X}_1=\mathcal{X}^N \setminus \tilde{T}^N[X,s^N]_{\delta,\eta}
\]
and
\begin{equation}\label{eq:avc:csr:11}
\mathfrak{X}_2=\{x^N: W(\tilde{T}^N[XY_{\mathcal{S}}, s^N| x^N]_{2\delta,\eta}\setminus\tilde{\mathcal{D}}_{m-1}(s^N)|x^N, s^N) < 1-2 \cdot 2^{-N{\nu_2}}\}.
\end{equation}

On account of Proposition \ref{remarkTX}, we have
\begin{equation}\label{eq:avc:csr:3}
\operatorname{Pr}\{\operatorname{Err}_1 = 1\} < 2^{-N\nu_1}.
\end{equation}
It remains to bound the value of \(\operatorname{Pr}\{\operatorname{Err}_2 = 1 | \operatorname{Err}_1 = 0\}\). To achieve this, notice that for every \(x^N \in \tilde{T}^N[X,s^N]_{\delta,\eta}\) (or, equivalently, \(x^N \in \mathcal{X}^N \setminus \mathfrak{X}_1\)), Proposition \ref{remarkTXY} gives
\begin{equation}\label{eq:avc:csr:12}
W(\tilde{T}^N[XY_{\mathcal{S}}, s^N| x^N(m)]_{2\delta,\eta}|x^N, s^N) > 1-2^{-N{\nu_2}}.
\end{equation}
Therefore, if \(x^N\in \mathfrak{X}_2 \setminus \mathfrak{X}_1\), it follows from \eqref{eq:avc:csr:12} and \eqref{eq:avc:csr:11} that
\[
W(\tilde{\mathcal{D}}_{m-1}(s^N)| x^N, s^N) > 2^{-N{\nu_2}}.
\]
This indicates that
\[
(\mathfrak{X}_2 \setminus \mathfrak{X}_1) \subseteq (\mathfrak{X}_3 \setminus \mathfrak{X}_1),
\]
where
\[
\mathfrak{X}_3=\{x^N\in\mathcal{X}^N: W(\tilde{\mathcal{D}}_{m-1}(s^N)| x^N, s^N) > 2^{-N{\nu_2}}\}.
\]
Therefore, we have
\[
\operatorname{Pr}\{\operatorname{Err}_2 = 1 | \operatorname{Err}_1 = 0\} \leq \operatorname{Pr}\{\operatorname{Err}_3 = 1 | \operatorname{Err}_1 = 0\},
\]
where \(\operatorname{Err}_3 = 1\) (or \(= 0\)) represents that the random event \(X^N(m)\in\mathfrak{X}_3\) happens (or not).

To bound the value of \(\operatorname{Pr}\{\operatorname{Err}_3 = 1 | \operatorname{Err}_1 = 0\}\), notice that no matter what the value of \(\tilde{\mathcal{D}}_{m-1}(s^N)\) is, Corollary \ref{cor1} always indicates that
\[
\begin{array}{lll}
\operatorname{Pr}\{Y^N(s^N) \in \tilde{\mathcal{D}}_{m-1}(s^N)\} &=& \displaystyle\sum_{1\leq l \leq m-1} \operatorname{Pr}\{Y^N(s^N) \in \mathcal{D}_{l}(s^N)\} \\
&\leq&  \displaystyle\sum_{1\leq l \leq m-1} \operatorname{Pr}\{Y^N(s^N) \in \tilde{T}^N[XY_{\mathcal{S}}, s^N|x^N(l)]_{2\delta,\eta}\}\\
&\overset{(*)}{<}& L \cdot 2^{-N[\bar{I}_{P_{s^N}}(X; Y_{\mathcal{S}}) - (4\delta +\eta) \bar{H}_{P_{s^N}}(Y_{\mathcal{S}}) + \eta \log m_{XY_{\mathcal{S}}}]} \\
&<& 2^{-N[\tau/2 - (4\delta +\eta) \bar{H}_{P_{s^N}}(Y_{\mathcal{S}}) + \eta \log m_{XY_{\mathcal{S}}}]} \\
&<& 2^{-N\tau/4}
\end{array}
\]
when \(\delta\) and \(\eta\) are sufficiently small, where (*) follows because
\(
L < 2^{N[\min_{s\in\mathcal{S}}I(X;Y_s) - \tau/2]} < 2^{N[\bar{I}_{P_{s^N}}(X; Y_{\mathcal{S}})-\tau/2]}.
\)
Combining the formula above and the following inequality
\[
\begin{array}{lll}
\operatorname{Pr}\{Y^N(s^N) \in \tilde{\mathcal{D}}_{m-1}(s^N)\} &=& \displaystyle\sum_{x^N\in\mathcal{X^N}} \operatorname{Pr}\{X^N(l) = x^N\}W(\tilde{\mathcal{D}}_{m-1}(s^N)|x^N, s^N)\\
&>& \operatorname{Pr}\{X^N(l)\in\mathfrak{X}_3\}\cdot 2^{-N{\nu_2}}= \operatorname{Pr}\{\operatorname{Err}_3 = 1\}\cdot 2^{-N{\nu_2}},
\end{array}
\]
it is concluded that
\[
\operatorname{Pr}\{\operatorname{Err}_3 = 1 \} < 2^{-N(\tau/4 - \nu_2)}.
\]
Therefore,
\begin{equation}\label{eq:avc:csr:4}
\operatorname{Pr}\{\operatorname{Err}_2 = 1 | \operatorname{Err}_1 = 0\} \leq \operatorname{Pr}\{\operatorname{Err}_3 = 1 | \operatorname{Err}_1 = 0\} \leq \frac{\operatorname{Pr}\{\operatorname{Err}_3 = 1 \}}{\operatorname{Pr}\{\operatorname{Err}_1 = 0 \}} < 2\cdot2^{-N(\tau/4 - \nu_2)}
\end{equation}
when \(N\) is sufficiently large.
Substituting Formulas \eqref{eq:avc:csr:3} and \eqref{eq:avc:csr:4} into \eqref{eq:avc:csr:6} gives
\[
\operatorname{Pr}\{e_m[X^N(m), x^N([1:m-1]), s^N] > 2\cdot 2^{-N\nu_2}\} < 2^{-N\nu_1} + 2\cdot2^{-N(\tau/4 - \nu_2)} < 2^{-N\nu}
\]
for some \(\nu > 0\), where \(\nu_2\) can be set as arbitrarily small without violating Proposition \ref{remarkTXY}. The proof of Step 1 is completed. \(\hfill\blacksquare\)

\textbf{Proof of Step 2.} The proof depends on the following lemma.

\begin{lem}\label{lem:rv}
Let \(A_m, 1 \leq m \leq L\) be a series of (not necessarily mutually independent) random variables with \(A_m\) distributed over \(\mathcal{A}_m\) such that
\begin{equation}\label{eq:lem:rv}
E[f_m(A_m)| A_1=a_1, A_2=a_2,...,A_{m-1}=a_{m-1}] < b
\end{equation}
for arbitrary \(a_1 \in \mathcal{A}_1, a_2 \in \mathcal{A}_2,...,a_{m-1} \in \mathcal{A}_{m-1}\) with \(\operatorname{Pr}\{A_1=a_1, A_2=a_2,...,A_{m-1}=a_{m-1}\} > 0\), where \(f_m: \mathcal{A}_m \mapsto (0, \infty), 1 \leq m \leq L\) are bounded and \(b > 0\) is a constant real number. Then it follows that
\[
E\Bigg[\prod_{m=1}^L f_m(A_m) \Bigg] < b^{L}.
\]
\end{lem}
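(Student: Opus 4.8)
The plan is to prove Lemma \ref{lem:rv} by induction on the number $L$ of random variables, peeling off the last factor and invoking the tower property of conditional expectation together with the strict positivity of the functions $f_m$. The boundedness of each $f_m$ is used only to guarantee that all the expectations appearing below are finite, so that the manipulations are legitimate.

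For the base case $L=1$, the hypothesis \eqref{eq:lem:rv} with $m=1$ has empty conditioning and reads $E[f_1(A_1)] < b = b^1$, which is precisely the claim. For the inductive step, I would assume the statement holds for $L-1$ and write $\prod_{m=1}^L f_m(A_m) = \left(\prod_{m=1}^{L-1} f_m(A_m)\right) f_L(A_L)$. Since $\prod_{m=1}^{L-1} f_m(A_m)$ is a deterministic function of $(A_1,\dots,A_{L-1})$, conditioning on $(A_1,\dots,A_{L-1})$ and applying the tower property gives
\[
E\left[\prod_{m=1}^L f_m(A_m)\right]
= E\left[\left(\prod_{m=1}^{L-1} f_m(A_m)\right) E\big[f_L(A_L)\mid A_1,\dots,A_{L-1}\big]\right].
\]
Hypothesis \eqref{eq:lem:rv} with $m=L$ asserts $E[f_L(A_L)\mid A_1=a_1,\dots,A_{L-1}=a_{L-1}] < b$ for every realization of positive probability, so the inner conditional expectation is strictly below $b$ on the entire support.

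Because each $f_m$ takes values in $(0,\infty)$, the factor $\prod_{m=1}^{L-1} f_m(A_m)$ is strictly positive, so multiplying the strict bound on the conditional expectation preserves the inequality after taking expectations:
\[
E\left[\prod_{m=1}^L f_m(A_m)\right] < b\cdot E\left[\prod_{m=1}^{L-1} f_m(A_m)\right] < b\cdot b^{L-1} = b^L,
\]
where the final step is the induction hypothesis applied to $f_1,\dots,f_{L-1}$ (whose conditional bounds are inherited directly from \eqref{eq:lem:rv}).

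The step I expect to require the most care is justifying the strict inequality rather than merely ``$\le$''. Here the point is that the bound in \eqref{eq:lem:rv} is strict while $\prod_{m=1}^{L-1} f_m(A_m)$ is a strictly positive random variable with strictly positive expectation; hence the pointwise strict inequality $\left(\prod_{m=1}^{L-1}f_m(A_m)\right)E[f_L\mid A_1,\dots,A_{L-1}] < b\prod_{m=1}^{L-1}f_m(A_m)$ integrates to a strict inequality over a sample space of positive mass. The remaining bookkeeping—finiteness of the expectations from boundedness of the $f_m$ and the use of the tower property for discrete conditioning—is routine.
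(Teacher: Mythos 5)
Your proposal is correct and follows essentially the same route as the paper: induction on $L$, peeling off the last factor and applying the tower property (which the paper writes out as explicit sums over the discrete sample space) together with the conditional bound \eqref{eq:lem:rv} and the positivity of the remaining product. Your additional care about why the inequality remains strict is a minor refinement of the same argument, not a different approach.
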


The proof of Lemma \ref{lem:rv} is deferred to the end of this appendix.

On account of the Chernoff bound, for any given \(s^N\), it follows that
\begin{equation}\label{eq:avc:csr:8}
\begin{array}{lll}
&& \displaystyle\operatorname{Pr}\Bigg\{\frac{1}{L} \sum_{m=1}^N e_m[X^N(m), X^N([1:m-1]), s^N]  > \epsilon \Bigg\} \\
& < & \displaystyle\operatorname{exp}_2(-L\epsilon) E \Bigg[ \exp_2\Bigg( \sum_{m=1}^L e_m[X^N(m), X^N([1:n-1]), s^N] \Bigg) \Bigg]\\
&=& \displaystyle \operatorname{exp}_2(-L\epsilon) E \Bigg[ \prod_{m=1}^L  \exp_2\Bigg(e_m[X^N(m), X^N([1:n-1]), s^N] \Bigg) \Bigg].
\end{array}
\end{equation}
To proceed with the upper bounding, let \(A_m = X^N([1:m])\) and \(f(A_m) = \exp_2(e_m[X^N(m), X^N([1:n-1]), s^N])\). This indicates that
\begin{equation}\label{eq:avc:csr:7}
\begin{array}{lll}
&& E\Bigg[f(A_m)|X^N([1:m-1]) = x^N([1:m-1])\Bigg] \\
&=& E\Bigg[\exp_2\Bigg(e_m[X^N(m), x^N([1:m-1]), s^N]\Bigg)\Bigg] \\
&\leq & 2\cdot \operatorname{Pr}\Bigg\{e_m[X^N(m), x^N([1:m-1]), s^N] > 2\cdot 2^{-N\nu_2}\Bigg\} + \exp_2(2\cdot 2^{-N\nu_2}) \\
& \overset{(a)}{<} & 2 \cdot 2^{-N\nu} + \exp_2(2\cdot 2^{-N\nu_2}) \\
& < & \exp_2(2\cdot 2^{-N\nu_2}) \cdot (1 + 2^{1-N\nu})\\
& \overset{(b)}{<} & \exp_2(2^{1-N\nu_2}) \cdot \exp_e(2^{1-N\nu})
\end{array}
\end{equation}
for every \(x^N([1:m-1])\), where (a) follows from \eqref{eq:app:avc:csr:dc:1} and (b) follows from the inequality \(1 + t \leq e^t\) for \(t \geq 0\).

Consequently, Lemma \ref{lem:rv} gives
\begin{equation}\label{eq:avc:csr:9}
\begin{array}{lll}
\displaystyle E \Bigg[ \prod_{m=1}^L  \exp_2\Bigg(e_m[X^N(m), X^N([1:n-1]), s^N] \Bigg) \Bigg] &=&
\displaystyle E \Bigg[  \prod_{m=1}^L f(A_m)\Bigg]\\
&<&\Bigg[ \exp_2(2^{1-N\nu_2}) \cdot \exp_e(2^{1-N\nu}) \Bigg]^L.
\end{array}
\end{equation}
Substituting \eqref{eq:avc:csr:9} into \eqref{eq:avc:csr:8}, we have
\[
\begin{array}{lll}
&& \displaystyle\operatorname{Pr}\Bigg\{\frac{1}{L} \sum_{m=1}^N e_m[X^N(m), X^N([1:m-1]), s^N]  > \epsilon \Bigg\} \\
& < & \displaystyle \operatorname{exp}_2(-L\epsilon) E \Bigg[ \prod_{m=1}^L  \exp_2\Bigg(e_m[X^N(m), X^N([1:n-1]), s^N] \Bigg) \Bigg]\\
& < & \exp_2(-L(\epsilon - 2^{1-N\nu_2} - 2^{1-N\nu} \log e))\\
& < & \exp_2(-L\epsilon / 2)
\end{array}
\]
when \(N\) is sufficiently small. The proof of Step 2 is completed. \(\hfill\blacksquare\)

\textbf{Proof of Step 3.} According to the union bound, Formula \eqref{eq:avc:csr:10} gives
\[
\begin{array}{lll}
\displaystyle\operatorname{Pr}\Bigg\{ \max_{s^N \in \mathcal{S}^N} \bar{e}^{CSR}(\mathbf{C}, s^N) > \epsilon \Bigg\} &=&
\displaystyle\operatorname{Pr}\Bigg\{ \max_{s^N \in \mathcal{S}^N} \Bigg[ \frac{1}{L} \sum_{m=1}^N e_m[X^N(m), X^N([1:m-1]), s^N]  \Bigg] > \epsilon \Bigg\} \\
& \leq & \displaystyle \sum_{s^N \in \mathcal{S}^N} \operatorname{Pr}\Bigg\{ \frac{1}{L} \sum_{m=1}^N e_m[X^N(m), X^N([1:m-1]), s^N] > \epsilon \Bigg\} \\
& < & |\mathcal{S}|^N \cdot \exp_2(-L \epsilon / 2) = \epsilon_3.
\end{array}
\]
The proof of Lemma \ref{lem:avc:dc} is completed. \(\hfill\blacksquare\)

\textbf{Proof of Lemma \ref{lem:rv}.} The lemma is proved by induction on the number \(L\) of random variables. It is clear that the lemma is true when \(L=1\). Suppose that the lemma holds for \(L = L_0 \geq 1\). When \(L=L_0+1\) we have
\[
\begin{array}{lll}
\displaystyle E[\prod_{m=1}^{L_0+1}f(A_m)] &=& \displaystyle\sum_{a[1:L_0+1]}\Bigg\{p\{a[1:L_0+1]\} \prod_{m=1}^{L_0+1}f(a_m)\Bigg\}\\
&=& \displaystyle\sum_{a[1:L_0]}\Bigg\{\Bigg[p\{a[1:L_0]\} \prod_{m=1}^{L_0}f(a_m)\Bigg]\Bigg[\sum_{a_{L_0+1}}p\{a_{L_0+1} | a[1:L_0]\}f(a_{L_0+1})\Bigg]\Bigg\}\\
&=& \displaystyle\sum_{a[1:L_0]}\Bigg\{\Bigg[p\{a[1:L_0]\} \prod_{m=1}^{L_0}f(a_m)\Bigg]E\{f(A_{L_0+1}) | A[1:L_0]=a[1:L_0]\}\Bigg\}\\
&\overset{(a)}{\leq}& b\cdot \displaystyle\sum_{a[1:L_0]}\Bigg\{p\{a[1:L_0]\} \Bigg[\prod_{m=1}^{L_0}f(a_m)\Bigg]\Bigg\}\\
&=& \displaystyle b\cdot E[\prod_{m=1}^{L_0}f(A_m)]\\
&\overset{(b)}{\leq}& b^{L_0+1},
\end{array}
\]
where \(p\{a[1:L]\}\) is short for \(\Pr\{A[1:L]=a[1:L]\}\), \(p\{a_{L_0+1} | a[1:L_0]\}\) is short for \(\Pr\{A_{L_0+1} = a_{L_0+1} | A[1:L_0]= a[1:L_0]\}\), and \(\sum_{a[1:L]}\) is short for \(\sum_{a_1\in\mathcal{A}_1,a_2\in\mathcal{A}_2,...,a_L\in\mathcal{A}_L}\). Inequality (a) follows from \eqref{eq:lem:rv} and Inequality (b) follows from the induction assumption. The proof is completed. \(\hfill\blacksquare\)

\section{Proof of Lemma \ref{lem:outer1}}\label{app:lem:outer1}

This appendix establishes a pair of upper bounds on the secrecy capacities of stochastic code over the AVWC and AVWC-CSR. Since the proofs of AVWC and AVWC-CSR are similar, we only give the proof of AVWC. To be precise, for any \(\epsilon > 0\), suppose that there exists a stochastic code \((F,\phi)\) of length \(N\), satisfying
\begin{equation}\label{eq:app:lem:outer1:0}
\lambda(\mathcal{W},F,\phi) < \epsilon \text{ and } \max_{s^N\in\mathcal{S}^N} I(M; Z^N(s^N)) < \epsilon,
\end{equation}
where \(M\) is the source message uniformly distributed over the message set \(\mathcal{M}\), and \(Z^N(s^N)\), defined in \eqref{eq:def:avwc:z}, is the output of wiretap AVC under the state sequence \(s^N\). We need to show that its transmission rate follows that
\[
R = \frac{1}{N}\log|\mathcal{M}| \leq \min_{q\in\mathcal{P}(\mathcal{S}),s\in\mathcal{S}}\max_{U_{q,s}}[I(U_{q,s};Y_q)-I(U_{q,s};Z_s)] + \epsilon',
\]
where \(\epsilon'\rightarrow 0\) as \(\epsilon \rightarrow 0\), and \((U_{q,s},Y_q,Z_s)\) satisfies \eqref{eq:thm:avwc:rc:1} and \eqref{eq:thm:avwc:rc:2} with \(U=U_{q,s}\).

The proof is organized as the following two steps.
\begin{itemize}
  \item The first step proves that for any stochastic code satisfying \eqref{eq:app:lem:outer1:0}, its transmission rate should satisfy \eqref{eq:app:lem:outer1:1}. The key idea is to show that if a stochastic code achieves a vanishing decoding error over the AVC \(\mathcal{W}\), it also achieves a vanishing decoding error over the AVC \(\bar{\mathcal{W}}\), where \(\bar{\mathcal{W}}\) is the convex hull of \(\mathcal{W}\), defined in \eqref{eq:pre:13}.
  \item The final step establishes the upper bound using the standard technique. To be concrete, we first obtain \eqref{eq:app:lem:outer1:3} by definition, which is further derived to \eqref{eq:app:lem:outer1:5} by the technique of single letterization introduced in \cite{BCC}.
\end{itemize}

\textbf{Proof of Step 1.} Let $(F, \phi)$ be a pair of stochastic encoder and decoder over the AVWC $(\mathcal{W},\mathcal{V})$ satisfying \eqref{eq:app:lem:outer1:0}.
For any \(q\in\mathcal{P}(\mathcal{S})\), let the random sequence \(Y^N_q\) satisfy
\[
\Pr\{Y^N_q=y^N|X^N=x^N\}=W_q(y^N|x^N)=\prod_{i=1}^N W_q(y_i|x_i)
\]
for \(y^N \in \mathcal{Y}^N\) and \(x^N\in\mathcal{X}^N\), where \(W_q\) is defined in \eqref{eq:pre:4}. Then for any \(m\in\mathcal{M}\), the decoding error probability of the message \(m\) achieved by the code \((F,\phi)\) over the DMC \(W_q\) is given by
\[
\begin{array}{lll}
\lambda_{m}(\{W_q\},F,\phi) &=& \displaystyle\sum_{x^N\in\mathcal{X}^N} \Bigg[\Pr\{F(m)=x^N\} (1-W_q(\phi^{-1}(m)|x^N))\Bigg]\\
&=& \displaystyle\sum_{x^N\in\mathcal{X}^N} \sum_{s^N\in\mathcal{S}^N} \Bigg[ \Pr\{F(m)=x^N\} q(s^N) (1-W(\phi^{-1}(m)|x^N, s^N))\Bigg]\\
&=& \displaystyle\sum_{s^N\in\mathcal{S}^N} \Bigg\{ q(s^N)\Bigg[\sum_{x^N\in\mathcal{X}^N} \Pr\{F(m)=x^N\} (1-W(\phi^{-1}(m)|x^N, s^N))\Bigg]\Bigg\}\\
&=& \displaystyle\sum_{s^N\in\mathcal{S}^N} \Bigg[ q(s^N)\lambda_{m}(\mathcal{W},F,\phi, s^N)\Bigg]\\
&\leq& \displaystyle\sum_{s^N\in\mathcal{S}^N} \Bigg[ q(s^N)\lambda(\mathcal{W},F,\phi, s^N)\Bigg]\\
&\leq& \lambda(\mathcal{W},F,\phi) < \epsilon,
\end{array}
\]
where \(\lambda_{m}(\mathcal{W},F,\phi, s^N)\) is given in \eqref{eq:pre:11}, \(\lambda(\mathcal{W},F,\phi, s^N)\) is given in \eqref{eq:pre:10}, \(\lambda(\mathcal{W},F,\phi)\) is given in \eqref{eq:pre:8}, and
\[
q(s^N)=\prod_{i=1}^N q(s_i).
\]

On account of the Fano's inequality, the transmission rate satisfies that
\begin{equation}\label{eq:app:lem:outer1:1}
NR \leq I(M;Y^N_q) + N\delta(\epsilon)
\end{equation}
for every \(q\in\mathcal{P}(\mathcal{S})\), where \(\delta(\epsilon)\rightarrow 0\) as \(\epsilon\rightarrow 0\). \(\hfill\blacksquare\)

\textbf{Proof of Step 2.} For any \(s\in\mathcal{S}\), let \(Z^N_s\) be the output of the wiretap AVC when the state sequence is \(N\) copies of \(s\), i.e.
\[
\Pr\{Z^N_s=z^N|X^N=x^N\}=\prod_{i=1}^N V_s(z_i|x_i)
\]
for \(z^N \in \mathcal{Z}^N\) and \(x^N\in\mathcal{X}^N\).
By the definition of achievability in Definition \ref{def:avwc:sc}, it follows that
\begin{equation}\label{eq:app:lem:outer1:2}
I(M;Z^N_s) < \epsilon
\end{equation}
for any \(s\in\mathcal{S}\). Combining \eqref{eq:app:lem:outer1:1} and \eqref{eq:app:lem:outer1:2} gives
\begin{equation}\label{eq:app:lem:outer1:3}
NR \leq I(M;Y^N_q) - I(M;Z^N_s) + N\delta(\epsilon) + \epsilon
\end{equation}
for any \(q\in\mathcal{P}(\mathcal{S})\) and \(s\in\mathcal{S}\). By the similar way of establishing Equations (38) and (39) in \cite{BCC} along with Lemma 7 in \cite{BCC}, the value of \(I(M;Y^N_q) - I(M;Z^N_s)\) follows that
\[
I(M;Y^N_q) - I(M;Z^N_s) = \sum_{i=1}^N\Bigg[I(M;Y_{q,i}|Y^{i-1}_q,Z^N_{s,i+1})-I(M;Z_{s,i}|Y^{i-1}_q,Z^N_{s,i+1})\Bigg],
\]
where \(Y^{i-1}_q=(Y_{q,1},Y_{q,2},...,Y_{q,i-1})\) and \(Z^N_{s,i+1}=(Z_{s,i+1},Z_{s,i+2},...,Z_{s,N})\).
Let \(V_{q,s,i}=(Y^{i-1}_q,Z^N_{s,i+1})\). The formula above can be rewritten as
\[
\begin{array}{lll}
I(M;Y^N_q) - I(M;Z^N_s) &=& \displaystyle\sum_{i=1}^N\Bigg[I(M;Y_{q,i}|V_{q,s,i})-I(M;Z_{s,i}|V_{q,s,i})\Bigg] \\
&=& \displaystyle\sum_{i=1}^N\Bigg[I(M,V_{q,s,i};Y_{q,i}|V_{q,s,i})-I(M,V_{q,s,i};Z_{s,i}|V_{q,s,i})\Bigg].
\end{array}
\]
Let \(J\) be a random variable uniformly distributed over \([1:N]\) and independent of \(M, Y^N_q, Z^N_s\) and \(V^N_{q,s}=(V_{q,s,1},V_{q,s,2},...,V_{q,s,N})\). We obtain that
\[
\begin{array}{lll}
I(M;Y^N_q) - I(M;Z^N_s) &=& N[I(M,V_{q,s,J};Y_{q,J}|V_{q,s,J},J)-I(M,V_{q,s,J};Z_{s,J}|V_{q,s,J},J)]\\
&=& N[I(M,V_{q,s,J},J;Y_{q,J}|V_{q,s,J},J)-I(M,V_{q,s,J},J;Z_{s,J}|V_{q,s,J},J)]
\end{array}
\]

Set
\begin{equation}\label{eq:app:lem:outer1:6}
V_{q,s}=(V_{q,s},J), U_{q,s}=(M,V_{q,s}),X=X_J,Y_q=Y_{q,J} \text{ and } Z_s=Z_{s,J}.
\end{equation}
It follows that
\begin{equation}\label{eq:app:lem:outer1:4}
I(M;Y^N_q) - I(M;Z^N_s) = N\cdot[I(U_{q,s};Y_q|V_{q,s}) - I(U_{q,s};Z_s|V_{q,s})] \leq \max_{U_{q,s}} N\cdot[I(U_{q,s};Y_q) - I(U_{q,s};Z_s)],
\end{equation}
where \(V_{q,s}\rightarrow U_{q,s} \rightarrow X \rightarrow (Y_q,Z_s)\) forms a Markov chain.

Substituting \eqref{eq:app:lem:outer1:4} into \eqref{eq:app:lem:outer1:3} gives
\[
R \leq \max_{U_{q,s}} [I(U_{q,s};Y_q) - I(U_{q,s};Z_s)] + \epsilon + \delta(\epsilon).
\]
Since \(q\) and \(s\) can be arbitrary, we have
\begin{equation}\label{eq:app:lem:outer1:5}
R \leq \min_{q\in\mathcal{P}(\mathcal{S}),s\in\mathcal{S}}\max_{U_{q,s}} [I(U_{q,s};Y_q) - I(U_{q,s};Z_s)] + \epsilon + \delta(\epsilon).
\end{equation}
The proof is completed by setting \(\epsilon'=\epsilon+\delta(\epsilon)\). \(\hfill\blacksquare\)

\section{Proof of Proposition \ref{prop:less:avwc}}\label{app:prop:less:avwc}

This appendix establishes the secrecy capacity of stochastic code over the AVWC, when the main AVC is severely less noisy than the wiretap AVC. To be precise, for any \(\epsilon > 0\), suppose that there exists a stochastic code \((F,\phi)\) of length \(N\), satisfying
\begin{equation}\label{eq:app:prop:less:avwc:0}
\lambda(\mathcal{W},F,\phi) < \epsilon \text{ and } \max_{s^N\in\mathcal{S}^N} I(M; Z^N(s^N)) < \epsilon,
\end{equation}

where \(M\) is the source message uniformly distributed over the message set \(\mathcal{M}\), and \(Z^N(s^N)\), defined in \eqref{eq:def:avwc:z}, is the output of the wiretap AVC under the state sequence \(s^N\). It suffices to show that its transmission rate follows that
\[
R = \frac{1}{N}\log|\mathcal{M}| \leq \max_{X}\min_{q\in\mathcal{P}(\mathcal{S}),s\in\mathcal{S}}[X;Y_q)-I(X;Z_s)] + \epsilon',
\]
where \(\epsilon'\rightarrow 0\) as \(\epsilon \rightarrow 0\) and \((X,Y_q,Z_s)\) satisfies \eqref{eq:sec:less:1}.

The proof is organized as follows. Equation \eqref{eq:app:prop:less:avwc:1} is first obtained by the technique introduced in Appendix \ref{app:lem:outer1}. Then, Equation \eqref{eq:app:prop:less:avwc:2} is obtained by the definition of severely less noisy AVCs. The proposition is finally established by \eqref{eq:app:prop:less:avwc:3}.

Let \((F,\phi)\) be a pair of stochastic encoder and decoder satisfying \eqref{eq:app:prop:less:avwc:0}. By the same way of establishing \eqref{eq:app:lem:outer1:3} and \eqref{eq:app:lem:outer1:4}, for any \(q\in\mathcal{P}(\mathcal{S})\) and \(s\in\mathcal{S}\), we have
\begin{equation}\label{eq:app:prop:less:avwc:1}
R \leq I(U_{q,s};Y_q|V_{q,s})-I(U_{q,s};Z_s|V_{q,s}) + \epsilon + \delta(\epsilon),
\end{equation}
where \((V_{q,s},U_{q,s}, X, Y_q, Z_s)\) is a collection of random variables introduced in \eqref{eq:app:lem:outer1:6}, and \(V_{q,s}\rightarrow U_{q,s}\rightarrow X \rightarrow (Y_q,Z_s)\) forms a Markov chain.

The value of \(I(U_{q,s};Y_q|V_{q,s})-I(U_{q,s};Z_s|V_{q,s})\) can be bounded by
\begin{equation}\label{eq:app:prop:less:avwc:2}
\begin{array}{lll}
&&I(U_{q,s};Y_q|V_{q,s})-I(U_{q,s};Z_s|V_{q,s}) \\
&=& I(U_{q,s},X;Y_q|V_{q,s})-I(U_{q,s}, X;Z_s|V_{q,s}) - I(X;Y_q|U_{q,s},V_{q,s}) + I(X;Z_s|U_{q,s},V_{q,s})\\
&\leq& I(U_{q,s},X;Y_q|V_{q,s})-I(U_{q,s}, X;Z_s|V_{q,s})\\
&=& I(U_{q,s},V_{q,s},X;Y_q)-I(U_{q,s}, V_{q,s}, X;Z_s) - I(V_{q,s};Y_q) + I(V_{q,s};Z_s)\\
&\leq& I(U_{q,s},V_{q,s},X;Y_q)-I(U_{q,s}, V_{q,s}, X;Z_s)\\
&=& I(X;Y_q)-I(X;Z_s),
\end{array}
\end{equation}
where the inequalities follow from the definition of severely less noisy AVCs, and the last equation follows from the Markov chain \(V_{q,s}\rightarrow U_{q,s}\rightarrow X \rightarrow (Y_q,Z_s)\).

Substituting \eqref{eq:app:prop:less:avwc:2} into \eqref{eq:app:prop:less:avwc:1} gives
\[
R \leq I(X;Y_q)-I(X;Z_s) + \epsilon + \delta(\epsilon).
\]
Since \(q\) and \(s\) are arbitrary, we have
\begin{equation}\label{eq:app:prop:less:avwc:3}
R \leq \min_{q\in\mathcal{P}(\mathcal{S}),s\in\mathcal{S}}[I(X;Y_q)-I(X;Z_s)] + \epsilon + \delta(\epsilon) \leq \max_{X}\min_{q\in\mathcal{P}(\mathcal{S}),s\in\mathcal{S}}[I(X;Y_q)-I(X;Z_s)] + \epsilon + \delta(\epsilon).
\end{equation}
Notice that the random variable \(X\) is unrelated to \(q\) and \(s\). The proof is completed by setting \(\epsilon'=\epsilon+\delta(\epsilon)\). \(\hfill\blacksquare\)

\end{appendices}

\begin {thebibliography}{99}
\bibitem{wiretap} A. D. Wyner, ``The wire-tap channel,'' \emph{Bell syst. tech. J.,} vol. 54, no. 8, pp. 1355-1387, 1975.
\bibitem{wyner-1975b} A. D. Wyner, ``The common information of two dependent random variables,'' \emph{IEEE Trans. Inf. Theory,} vol. 21 no. 2, pp. 163-179, 1975.
\bibitem{Winshtok-2006} A. Winshtok  and Y. Steinberg, ``The arbitrarily varying degraded broadcast channel with states known at the encoder,'' \emph{IEEE Int. Symp. Inf. Theory,} Seattle, USA, Jul. 9-14, 2006.
\bibitem{dai-2012} B. Dai, A. J. Han Vinck, Y. Luo and Z. Zhuang, "Capacity region of non-degraded wiretap channel with noiseless feedback,'' \emph{IEEE Int. Symp. on Inf. Theory,} Cambridge, USA, July 1 to July 6, 2012.
\bibitem{dai-2017} B. Dai, Z. Ma and Y. Luo, ``Finite state markov wiretap channel with delayed feedback,'' \emph{IEEE Trans. Inf. Forensics \& Security,} vol. 12, no. 3, pp.746-759, 2017.
\bibitem{Shannon-1958} C. E. Shannon, ``Channels with side information at the transmitter,'' \emph{IBM J. Research Development,} vol. 2, pp. 289-293, 1958.
\bibitem{Mitrpant-2004} C. Mitrpant, Y. Luo and A. J. Han Vinck, ``Achieving the perfect secrecy for the Gaussian wiretap channel with side information,'' \emph{IEEE Int. Symp. Inf. Theory,} Chicago, USA, June 27-July 2, 2004.
\bibitem{Nair-2010} C. Nair, ``Capacity regions of two new classes of two-receiver broadcast channels,'' \emph{IEEE Trans. Inf. Theory,} vol. 56, no. 9, pp. 4207-4214, 2010.
\bibitem{BBT-1960} D. Blackwell, L. Breiman and A. J. Thomasian, ``The capacities of certain channel classes under random coding,'' \emph{Ann. Math. Stat.,} vol. 31, pp. 558-567, 1960.
\bibitem{dan-ISIT} D. He, Y. Luo and N. Cai, ``Strong secrecy capacity of the wiretap channel II with DMC main channel,'' \emph{IEEE Int. Symp. Inf. Theory,} Barcelona, Spain, Jul., 2016.
\bibitem{dan-entropy} D. He and W. Guo, ``Strong secrecy capacity of a class of wiretap networks,'' \emph{Entropy,} vol. 18, no. 7, article no. 238, 2016.
\bibitem{dan-entropy2} D. He, W. Guo and Y. Luo, ``Secrecy capacity of the extended wiretap channel II with noise,'' \emph{Entropy,} vol.18, no. 11, article no. 377, 2016.
\bibitem{MU_IT} G. Kramer, ``Topics in multi-user information theory,'' \emph{Foundations and Trends \textregistered in Communications and Information Theory,} vol 4, nos. 4-5, pp 265-444, 2007.
\bibitem{Boch-2013} H. Boche and R. F. Schaefer, ``Capacity results and super-activation for wiretap channels with active wiretappers,'' \emph{IEEE Trans. Inf. Forensics  Security,} vol. 8, no. 9, pp. 1482-1496, 2013.
\bibitem{boch-2015} H. Boche, R. F. Schaefer, and H. V. Poor, ``On the continuity of the secrecy capacity of compound and arbitrarily varying wiretap channels,'' \emph{IEEE Trans. Inf. Forensics Security,} vol. 10, no. 12, pp. 2531-2546, Dec. 2015.
\bibitem{Bjelakovic-2013} I. Bjelakovi\'{c}, H. Boche, and J. Sommerfeld, ``Capacity results for arbitrarily varying wiretap channels,'' in \emph{Information Theory, Combinatorics, and Search Theory.} New York, NY, USA: Springer, pp.123-144, 2013.
\bibitem{Bjelakovic-2013a} I. Bjelakovi\'{c}, H. Boche, and J. Sommerfeld, ``Secrecy results for compound wiretap channels,'' \emph{Probl. Inf. Transmission,} vol. 49, no. 1, pp. 73-98, 2013.
\bibitem{AI_SC} I. Csisz\'{a}r, ``Almost independence and secrecy capacity,'' \emph{Prob. Inf. Transmission,} vol. 32, no. 1, pp. 40-47, 1996.
\bibitem{BCC} I. Csisz\'{a}r and J. K\"{o}rner, ``Broadcast channels with confidential messages,'' \emph{IEEE Trans. Inf. Theory,} vol. 24, no. 3, pp. 339-348, 1978.
\bibitem{CT_DMS} I. Csisz\'{a}r and J. K\"{o}rner, \emph{Information theory: coding theorems for discrete memoryless systems.}  Akademiai Kiado, Subsequent, 1981.
\bibitem{Csiszar-1988-2} I. Csisz\'{a}r and P. Narayan, ``Arbitrarily varying channels with constrained inputs and states,'' \emph{IEEE Trans. Inf. Theory,} vol. 34, no. 1, pp. 27-34, 1988.
\bibitem{Csiszar-1988} I. Csisz\'{a}r and P. Narayan, ``The capacity of the arbitrarily varying channel revisited: positivity, constraints,'' \emph{IEEE Trans. Inf. Theory,} vol. 34, no. 2, pp. 181-193, 1988.
\bibitem{Boch-2016} J. N\"{o}tzel, M. Wiese and H. Boche, ``The arbitrarily varying wiretap channels: secret randomness, stability and super-activation,'' \emph{IEEE Trans. Inf. Theory,} vol. 62, no. 6, pp. 3504-3531, 2016.
\bibitem{wiretap2} L. H. Ozarow, A. D. Wyner, ``Wire-tap channel II,'' \emph{AT \& T Bell Lab. Tech.
J.,} vol. 63, pp. 10, pp. 2135-2157, 1984.
\bibitem{bloch-2008} M. Bloch, and J. N. Laneman, ``On the secrecy capacity of arbitrary wiretap channels,'' \emph{Proc. 46th Annu. Allerton Conf. Commun., Control, Comput.,} pp. 818-825, 2008.
\bibitem{wiretap2_1994} M. J. Mihaljevi\'{c}, ``On message protection in crypto-systems modelled as the generalized wire-tap channel II,'' \emph{Error Control, Cryptology, and Speech Compression (Lecture Notes in Computer Science).} Berlin, Germany: Springer-Verlag, pp. 13 - 24, 1994.
\bibitem{wiretap2_1} M. Nafea, A. Yener, ``Wiretap channel II with a noisy main channel,'' \emph{IEEE Int. Symp. Inf. Theory,} Hong Kong, China, Jun. 2015.
\bibitem{wiese-2015} M. Wiese, J. N\"{o}tzel, and H. Boche, ``The arbitrarily varying wiretap channel-communication under uncoordinated attacks,'' \emph{IEEE Int. Symp. Inf. Theory,} Hong Kong, Jun. 2015.
\bibitem{wiese-2016} M. Wiese, J. N\"{o}tzel, and H. Boche, ``A channel under simultaneous jamming and eavesdropping attack--correlated random coding capacities under strong secrecy criterion,'' \emph{IEEE Trans. Inf. Theory,} vol. 62, no. 7, pp. 3844-3862, 2016.
\bibitem{Ahlswede-1978} R. Ahlswede, ``Elimination of correlation in random codes for arbitrarily varying channels,'' \emph{Z. Wahrscheinlichkeitstheorieverw. Gebiete,} vol. 44, pp. 159-175, 1978.
\bibitem{Ahlswede-1986}R. Ahlswede, ``Arbitrarily varying channels with states sequence known to the sender,'' \emph{IEEE Trans. Inf. Theory,} vol. 32, no. 5, pp. 621-629, 1986.
\bibitem{Ahlswede-1982} R. Ahlswede and G. Dueck, ``Good codes can be produced by a few permutations,'' \emph{IEEE Trans. Inf. Theory,} vol. 28, no. 3, pp. 430-443, 1982.
\bibitem{CR_C} R. Ahlswede and I. Csisz\'{a}r, ``Common randomness in information theory and cryptography--part II: CR capacity,'' \emph{IEEE Trans. Inf. Theory,} vol. 44, no. 1, pp. 225-240, 1998.
\bibitem{schaefer-2014} R. F. Schaefer and H. Boche, ``Robust broadcasting of common and confidential messages over compound channels: Strong secrecy and decoding performance,'' \emph{IEEE Trans. Inf. Forensics Security,} vol. 9, no. 10, pp. 1720-1732, 2014.
\bibitem{schaefer-2015} R. F. Schaefer and S. Loyka, ``The secrecy capacity of compound gaussian MIMO wiretap channels,'' \emph{IEEE Trans. Inf. Theory,} vol. 61, no. 10, pp. 5535-5552, 2015.
\bibitem{Gelfand-1980} S. I. Gel'fand and M. S. Pinsker, ``Coding for channel with random parameters,'' \emph{Probl.  Control Inf. Theory,} vol. 9, no. 1, pp. 19-31, 1980.
\bibitem{stambler-1975} S. Z. Stambler, ``Shannon's theorems for a complete class of discrete channels whose states are known at the output,'' \emph{Probl.peredachi Inf.}, vol. 4, pp. 3-12, 1975 (in Russian).
\bibitem{Maurer-1994} U. M. Maurer, ``The strong secret key rate of discrete random triples,'' in \emph{Communication and Cryptography -Two Sides of One Tapestry,} Kluwer Academic Publishers, pp. 271-285, 1994.
\bibitem{He-2011} X. He and A. Yener, ``Secrecy when the eavesdropper controls its channel states,'' \emph{IEEE Int. Symp. Inf. Theory,} Saint Petersburg, Russia, Jul. 2011.
\bibitem{liang-2008}Y. Liang, G. Kramer, H. Poor and S. Shamai, ``Compound wiretap channel,'' \emph{EURASIP J. Wireless Commun. Networking,} 2008.
\bibitem{luo-2006} Y. Luo, C. Mitrpant, A. J. H. Vinck, and K. Chen, ``Some new characters on the wire-tap channel of type II,'' \emph{IEEE Trans. Inf. Theory,} vol. 51, no. 3, pp. 1222 - 1229, 2005.
\bibitem{wiretap2_2}Z. Goldfeld, P. Cuff and H. H. Permuter, ``Semantic-security capacity for wiretap channels of type II,'' \emph{IEEE Trans. Inf. Theory,} vol. 62, no. 7, pp. 3863-3879, 2016.
\bibitem{Goldfeld-1}Z. Goldfeld, P. Cuff and H. H. Permuter, ``Arbitrarily varying wiretap channels with type constrained states,'' submitted to \emph{IEEE Trans. Inf. Theory.} Available at arXiv:1601.03660.
\end{thebibliography}

\end{document}